\newcolumntype{Y}{>{\centering\arraybackslash} X}
\newcommand{\comnospace}{\mbox{$\triangleright$}}
\newcommand{\com}{\mbox{\comnospace\ }}
\newcommand{\func}[1]{\mbox{\sc #1}}
\newcommand{\cas}{\mbox{CAS}}
\newcommand{\true}{\textsc{True}}
\newcommand{\false}{\textsc{False}}
\newcommand{\op}{\mbox{\sct -record}}
\newcommand{\rec}{\mbox{Data-record}}
\newcommand{\listrec}{\mbox{Node}}
\newcommand{\llresults}{\info Fields}
\newcommand{\info}{\textit{info}}
\newcommand{\help}{\func{Help}}
\newcommand{\validate}{\vlt}
\newcommand{\llt}{\func{LLX}}
\newcommand{\sct}{\func{SCX}}
\newcommand{\vlt}{\func{VLX}}
\newcommand{\del}{\func{Delete}}
\newcommand{\ins}{\func{Insert}}
\newcommand{\search}{\func{Search}}
\newcommand{\freezing}{\mbox{InProgress}}
\newcommand{\retry}{\mbox{Aborted}}
\newcommand{\done}{\mbox{Committed}}
\newcommand{\freezingdone}{allFrozen}
\newcommand{\fail}{\textsc{Fail}}
\newcommand{\finalized}{\textsc{Finalized}}
\newcommand{\fcas}{{freezing~\cas}}
\newcommand{\astep}{{abort step}}
\newcommand{\fstep}{{frozen step}}
\newcommand{\fcstep}{{frozen check step}}
\newcommand{\cstep}{{commit step}}
\newcommand{\upcas}{{update~\cas}}
\newcommand{\markstep}{{mark step}}
\newcommand{\nil}{\textsc{Nil}}
\newcommand{\presctfld}{(3)}
\newcommand{\presctinfo}{(2)}
\def\pwidth{2.5cm}
\lstdefinestyle{nonumbers}{numbers=none}
\newcommand{\preplisting}{\lstset{gobble=1, numbers=left, numberstyle=\tiny, numberblanklines=false, numbersep=-8.5pt, firstnumber=last, escapeinside={//}{\^^M}, breaklines=true, keywordstyle=\bfseries, morekeywords={type,subtype,break,if,else,end,loop,while,do,done,exit, when,then,return,read,and,or,not,,for,each,boolean,procedure,invoke,next,iteration,until}}}
\newcommand{\prepnewlisting}{\lstset{gobble=1, numbers=left, numberstyle=\tiny, numberblanklines=false, numbersep=-8.5pt, firstnumber=1, escapeinside={//}{\^^M}, breaklines=true, keywordstyle=\bfseries, morekeywords={type,subtype,break,if,else,end,loop,while,do,done,exit, when,then,return,read,and,or,not,for,each,boolean,procedure,invoke,next,iteration,until}}}
\newtheorem{thm}{Theorem}
\newtheorem{obs}[thm]{Observation}
\newtheorem{lem}[thm]{Lemma}
\newtheorem{cor}[thm]{Corollary}
\newtheorem{con}[thm]{Constraint}
\newtheorem{defn}[thm]{Definition}
\newtheorem{prop}[thm]{Proposition}
\newtheorem{inv}[thm]{Invariant}
\newcommand{\after}[1]{}
\begin{document}

\title{Pragmatic Primitives for Non-blocking Data Structures}
\author{Trevor Brown\inst{1} \and Faith Ellen\inst{1} \and Eric Ruppert\inst{2}}
\institute{
    University of Toronto, Department of Computer Science \\
    27 King's College Cir, Toronto, ON M5S, Canada \\
    \email{\{tabrown,faith\}@cs.toronto.edu}
    \and
    York University, Dept. of Elec. Eng. and Comp. Sci. \\
    4700 Keele St, Toronto, ON M3J 1P3, Canada \\
    \email{ruppert@cs.yorku.ca}
}

\maketitle

\begin{abstract}
We define a new set of primitive operations that greatly simplify
the implementation of  
non-blocking data structures in asynchronous shared-memory systems.
The new operations operate on a set of \rec s, each of which contains
multiple fields.  The operations are generalizations of the well-known
load-link (LL) and store-conditional (SC) operations called \llt\ and \sct. 
The \llt\ operation takes a snapshot of one \rec.  An \sct\ operation
by a process $p$ succeeds 
only if no \rec\ in a specified set 
has been changed since $p$ last performed an \llt\ on it.
If successful, the \sct\ atomically
updates one specific field of a
\rec\
in the set and prevents any future changes to some
specified subset of those \rec s.
We provide a provably correct implementation of these new primitives from single-word
compare-and-swap.
As a simple example, we 
show how to
implement a non-blocking multiset data structure in a 
straightforward way using \llt\ and \sct.
\end{abstract}

\section{Introduction}

Building a library of
concurrent data structures 
is an essential way to simplify the difficult task of developing concurrent software. 
There are many lock-based data structures,
but locks 
are not fault-tolerant and are susceptible to
problems such as deadlock \cite{Fra07}.
It is often preferable to use hardware synchronization primitives like compare-and-swap 
(\textsc{CAS}) instead of locks.  However, the difficulty of this task has inhibited the development of 
{\it non-blocking} data structures.
These are data structures
which guarantee that some operation will eventually complete even if some processes crash.

Our goal is to facilitate the implementation of
high-per\-form\-ance, provably correct, non-blocking data structures  on any system that supports a hardware \textsc{CAS} instruction.
We introduce three new operations,
\textit{load-link-extended} (\llt), \textit{validate-extended} (\vlt) and \textit{store-conditional-extended} (\sct), which are natural generalizations of the well known \textit{load-link} (\textsc{LL}), \textit{validate} (\textsc{VL}) and \textit{store-conditional} (\textsc{SC}) operations.
We  provide a practical implementation of our new operations
from \textsc{CAS}.
Complete proofs of correctness
appear in Appendix~\ref{sec-proof}.
We also show how
these operations
make the implementation of non-blocking 
data structures 
and their proofs of correctness
substantially less difficult, as compared to using \textsc{LL},
\textsc{VL}, \textsc{SC}, and \textsc{CAS} directly.

\llt, \sct\ and \vlt\ operate on {\it \rec s}.
Any number of types of \rec s can be defined, each type containing a fixed number of {\it mutable} fields (which can be updated), and a fixed number of {\it immutable} fields (which cannot).
Each \rec\ can represent a natural unit of a data structure, such as a node of 
a tree or a table entry.
A successful \llt\ operation returns a snapshot of the mutable fields of one \rec. (The immutable fields can be read directly, since they never change.)
An \sct\ operation by a process $p$ is used to atomically
store a value in
one mutable field of one \rec\ 
{\it and}
{\it finalize} a set of \rec s, meaning that those \rec s cannot undergo any further changes. 
The \sct\ succeeds only if each \rec\ in a specified set has not
changed since $p$ last performed an $\llt$ on it.
A successful \vlt\ on a set of \rec s simply assures the caller that each of these
\rec s has not changed since the caller last performed an \llt\ on it. 
A more formal specification of the
behaviour of these operations is given in Section \ref{sec-operations}.

Early on, researchers recognized that operations 
accessing multiple locations atomically make
the design of non-blocking data structures much easier \cite{Barnes,IR94,ST97}.
Our new primitives 
do this
in three ways.  First, they operate
on \rec s, rather than individual words,
to allow the data structure designer to 
think at a higher level of abstraction.  
Second, and more importantly, a \vlt\ or \sct\ can depend upon
multiple \llt s.
Finally, the effect of an \sct\ can apply to multiple \rec s,
modifying one and finalizing others.

The precise specification of our operations was  chosen to balance ease of use
and efficient implementability.
They are more restricted than multi-word CAS \cite{IR94}, multi-word RMW \cite{AMTT97}, or 
transactional memory \cite{ST97}.
On the other hand, the ability to finalize \rec s makes \sct\ more 
general than $k$-compare-single-swap \cite{LMS09},
which can only change one word.
We found that atomically changing one pointer and finalizing a collection of \rec s provides 
just enough power
to implement numerous pointer-based data structures in which operations replace a small
portion of the data structure.
To demonstrate the usefulness of our new operations,
in Section \ref{sec-multiset}, we give
an implementation of
a simple, linearizable, non-blocking multiset based on an ordered,
singly-linked list.

Our implementation of
\llt, \vlt, and \sct\ 
is designed for an asynchronous system where processes may crash.
We assume shared memory locations can be accessed by single-word CAS, read and write instructions.
We assume a safe garbage collector
(as in the Java environment) that will not reallocate a memory location if any process can reach it by following pointers.
This allows records to be reused. 

Our implementation has some desirable performance properties.
A \vlt\ on $k$ \rec s only requires reading $k$ words of memory.
If \sct s being performed concurrently depend on \llt s
of disjoint sets of \rec s, they all succeed.
If an \sct\ encounters no contention with any other \sct\ and finalizes  $f$ \rec s, then
a total of $k+1$ CAS steps and $f+2$ writes are used for
the \sct\ and the $k$ \llt s on which it depends.
We also prove progress properties that suffice for building
non-blocking data structures using \llt\ and \sct.

\section{Related work} \label{sec-rel}

Transactional memory \cite{HM93,ST97} is a
general approach to simplifying the design of concurrent algorithms by
providing atomic access to multiple objects.
It allows
a block of code
designated as a transaction to be executed 
atomically, with respect to other transactions.
Our \llt/\vlt/\sct\ primitives may be viewed as
implementing a restricted kind of transaction, in which 
each transaction can perform any number of reads
followed by a single write and then finalize any number of words.
It is possible to implement general transactional memory in a non-blocking manner (e.g., \cite{Fra07,ST97}).
However, at present, 
implementations of transactional memory in software incur significant overhead,
so there is still a need for more specialized techniques 
for designing
shared data structures
that combine ease of use and efficiency.

Most shared-memory systems  provide CAS operations in hardware.
However, LL and SC operations have often been seen
as more convenient primitives for building algorithms.
Anderson and Moir gave the first
wait-free implementation of small LL/SC objects from CAS 
using $O(1)$ steps per operation~\cite{AM95}.
See \cite{JP05:opodis} for a survey of 
other implementations that use less space
or handle larger LL/SC objects.    

Many non-blocking implementations of primitives that access multiple objects
use the {\it cooperative technique}, first described by Turek, Shasha and Prakash \cite{TSP92}
and Barnes \cite{Barnes}.
Instead of using locks that give a process exclusive access
to a part of the data structure, this approach gives exclusive access to {\it operations}.  If the
process performing an operation that holds a lock is slow, other processes can {\it help}
complete the operation and release the lock.

The cooperative technique was also used recently
for a wait-free universal construction \cite{CER10} and
to obtain non-block\-ing binary
search trees \cite{EFRB10:podc} and Patricia tries \cite{Shafiei}.  
The approach used here is similar.

Israeli and Rappoport \cite{IR94} used a version of the cooperative technique to implement multi-word CAS from single-word CAS (and sketched how this could be used to implement multi-word SC operations).
However, their approach applies single-word CAS to very large words.
The most efficient implementation of $k$-word CAS \cite{Sun11}
first uses single-word
CAS to replace each of the $k$ words with a pointer to a record
containing information about the operation, and then uses 
single-word CAS to replace each of
these pointers with the desired new value and update the status field
of the record.
In the absence of contention, this
takes $2k+1$ CAS steps.
In contrast, in our implementation, an \sct\ that depends on \llt s
of $k$ \rec s performs $k+1$ single-word CAS steps when there is
no contention, no matter how many words each record contains.
So, our weaker primitives can be significantly more
efficient 
than multi-word CAS or multi-word RMW \cite{AMTT97,AH11},
which is even more general.

If $k$ \rec s are removed from a data structure by a multi-word CAS,
then the multi-word CAS must depend on every mutable field
of these records to prevent another process from concurrently
updating any of them.
It is possible to use $k$-word CAS to apply to $k$ \rec s instead
of $k$ words with indirection:
Every \rec\ is represented by a single word containing a pointer to
the contents of the record.
To change any fields of the \rec, a process swings the pointer to a
new copy of its contents containing the updated values.
However, the extra level of indirection affects all reads,
slowing them down considerably.

Luchangco, Moir and Shavit \cite{LMS09} defined the $k$-compare-single-swap (KCSS) primitive,
which atomically tests
wheth\-er $k$ specified memory locations contain specified values and, if all tests succeed,
writes a value to one of the locations.
They provided an {\it obstruction-free} implementation of KCSS,  meaning
that  a process
performing a KCSS is guaranteed to terminate if it runs alone.
They implemented KCSS using an obstruction-free implementation of LL/SC
from CAS.
Specifically, to try to update location $v$ using KCSS, a process performs
LL($v$), followed by two collects of the other $k-1$ memory locations.
If $v$ has its specified value,
both collects return their specified values, and the contents of these
memory locations do not change between the two collects, the process
performs SC to change the value of $v$.
Unbounded version numbers are used both in their implementation
of LL/SC and to avoid the ABA problem between the two collects.

Our \llt\ and \sct\ primitives can be viewed as multi-\rec-LL and
single-\rec-SC primitives, with the additional power to finalize \rec s.
We shall see that this extra ability
is  extremely useful for implementing 
pointer-based data structures.
In addition, our implementation of \llt\ and \sct\ allows us to develop shared
data structures that satisfy the non-blocking progress condition, which is stronger
than obstruction-freedom.

\section{The primitives} \label{sec-operations}

Our primitives operate on a collection of \rec s of various user-defined types.
Each type of \rec\ has a fixed number of mutable fields (each fitting into a single word), and a fixed number of immutable fields (each of which can be large).  Each field is given a value when the \rec\ is created.  Fields can contain pointers that refer to other \rec s.  
\rec s are accessed using \llt, \sct\ and \validate,
and reads of individual mutable or immutable fields of a \rec.
Reads of mutable fields are permitted because a snapshot of a \rec's fields is sometimes excessive, 
and it is sometimes sufficient (and more efficient) to use reads instead of \llt s.

An implementation 
of LL and SC from \cas\ has to ensure that,
between when a process performs LL and when it next performs SC
on the same word, the value of the word has not changed.
Because the value of the word could change and then change back
to a previous value, it is not sufficient to check that the word
has the same value when the LL and the SC are performed.
This is known as the ABA problem.
It also arises for implementations of \llt\ and \sct\ from \cas.
A general technique to overcome this  problem is described
in Section~\ref{sec-impl-aba}.
However, if the data structure designer can guarantee that the 
ABA problem will not arise
(because each \sct\ never attempts to store a value into
a field that previously contained that value),
our implementation can be used in a more efficient manner.

Before giving the precise specifications of the behaviour of \llt\ and \sct,
we describe how to use them,
with the implementation of a multiset as a running example.
The multiset abstract data type supports three operations: \func{Get}$(key)$,
which returns the number of occurrences of $key$ in the multiset, \func{Insert}$(key, count)$, which inserts $count$ occurrences of $key$ into the multiset, and \func{Delete}$(key, count)$, which deletes $count$ occurrences of $key$ from the multiset and returns \true, provided there are at least $count$ occurrences of $key$ in the multiset. Otherwise, it simply returns \false. 

Suppose we would like to implement a multiset using a sorted, singly-linked list.
We represent each node in the list by a \rec \ with an immutable field $key$, which contains a key in the multiset, and mutable fields: $count$, which records the number of times $key$ appears in the multiset, and $next$, which points to the next node in the list.
The first and last elements of the list are sentinel nodes
with count 0 and
with special keys $-\infty$ and $\infty$, respectively, which never occur in the multiset.

Figure~\ref{fig-example-multiset} shows how updates to the list are handled.  
Insertion behaves differently depending
on whether the key is already present.
Likewise, deletion behaves differently depending
on whether it removes all copies of the key.
For example,
consider the operation \func{Delete}$(d, 2)$ depicted in Figure~\ref{fig-example-multiset}(c).
This operation removes node $r$ by changing $p.next$ 
to point to a new copy of $rnext$.
A new copy is used to avoid the ABA problem, since $p.next$
may have pointed to $rnext$ in the past.
To perform the \func{Delete}$(d,2)$,
a process first invokes \llt s on $p$, $r$, and $rnext$. 
Second, it creates a copy $rnext'$ of $rnext$.
Finally, it performs an \sct\ that depends on these three \llt s.  This \sct\ attempts to change $p.next$ to point to $rnext'$.
This \sct \ will succeed only if none of $p$, $r$ or $rnext$ have changed since the aforementioned \llt s.
Once $r$ and $rnext$ are removed from the list, we want subsequent invocations of \llt \ and \sct \ to be able to detect this, so that we can avoid, for example, erroneously inserting a key into a deleted part of the list.
Thus, we specify in our invocation of \sct \ that $r$ and $rnext$ should be \textit{finalized} if the \sct \ succeeds.
Once a \rec\ is finalized, it can never be changed again.

\llt\ takes (a pointer to) a \rec\ $r$ as its argument.  Ordinarily, it returns either a snapshot of $r$'s mutable fields or \finalized. 
If an \llt$(r)$ is concurrent with an \sct\ involving $r$, it is also allowed to fail
and return \fail.
\sct\ takes four arguments:
a sequence $V$ of (pointers to) \rec s upon which the \sct\ depends, 
a subsequence $R$ of $V$ containing (pointers to) the \rec s to be finalized,
a mutable field $fld$ of 
a \rec\ in $V$ to be modified, and a value $new$ to store in this field.
\vlt\ takes a sequence $V$ of (pointers to) \rec s as its only argument.
Each \sct\ and \vlt\ and returns a Boolean value.
 
For example, in Figure~\ref{fig-example-multiset}(c), the \func{Delete}($d,2$) operation invokes \sct($V, R, fld, new$), where $V = \langle p, r, rnext \rangle$, $R = \langle r, rnext\rangle$, $fld$ is the next pointer of $p$, and $new$  points to the 
node $rnext'$.

A terminating \llt\  is called {\it successful} if it returns a snapshot or \finalized, and {\it unsuccessful} if it returns \fail.
A terminating \sct\ or \vlt\ is called {\it successful} if it returns \true, and {\it unsuccessful} if it returns \false.
Our operations are wait-free, but an operation may not terminate if the process performing
it fails, in which case the operation is neither successful nor unsuccessful.
We say an invocation $I$ of \llt$(r)$ by a process $p$
 is \textit{linked to} an invocation $I'$ of \sct$(V, R, fld, new)$ or \vlt$(V)$ by process $p$ if  $r$ is in $V$,
$I$ 
returns a snapshot, and between $I$ and $I'$, process $p$ performs
no invocation of \llt$(r)$ or \sct$(V', R', fld', new')$
and no unsuccessful invocation of \vlt$(V')$, for any $V'$ that
contains $r$. 
Before invoking \vlt$(V)$ or \sct$(V, R, fld, new)$, a process must
{\it set up} the operation by performing an \llt$(r)$ linked to the invocation
for each $r$ in $V$.

\subsection{Correctness Properties}
An implementation of \llt, \sct\ and \vlt\ is {\it correct} if,
for every execution, there is a linearization of all successful \llt s,
all successful \sct s, a subset of the non-terminating \sct s,
all successful \vlt s, and all reads, such that the following conditions
are satisfied.
\begin{compactenum}[{\bf C\arabic{enumi}}:]
	\item 
		Each read of a field $f$ of a \rec\ $r$ returns
		the last value stored in $f$ by an \sct\ linearized before the read
		(or $f$'s initial value, if no such \sct\ has modified $f$).
	\item 
Each linearized \llt($r$) that does not return \finalized\
		returns	the last value stored in each mutable field $f$ of $r$
		by an \sct\ linearized before the \llt\ (or $f$'s initial value, if no such \sct\ has modified~$f$).
	\item 
		Each linearized \llt$(r)$ returns \finalized\ if and only if it is linearized
		after an 
		\sct($V, R, fld,$ $new$) with $r$ in $R$.
	\item 
For each linearized invocation $I$ of \sct($V, R,$ $fld, new$) or \vlt$(V)$,
and for each $r$ in $V$, 
no \sct($V'$, $R'$, $fld'$, $new'$) with $r$ in $V'$ is linearized between the
\llt$(r)$ linked to $I$ and $I$.
\end{compactenum}

The first three properties assert that successful reads and \llt s return
correct answers.
The last property says that an invocation of \sct\ or \vlt\
does not succeed when it should not.
However, an \sct\ can fail if it is concurrent with another \sct\ 
that accesses some \rec\ in common.
LL/SC also exhibits analogous failures in real systems.
Our progress properties limit
the situations in which
this
can occur.

\subsection{Progress Properties} 
\label{progress-spec}

In our implementation, \llt, \sct \ and \vlt\ are technically wait-free, but this is only because they may fail.
The first progress property guarantees that 
\llt s on finalized \rec s succeed.
\begin{compactenum}[{\bf P\arabic{enumi}}:]
	\item 
        Each terminating \llt$(r)$ returns \finalized\ if it begins after the end of a successful \sct$(V,$ $R,$ $fld,$ $new)$ with $r$ in $R$ or after another \llt$(r)$ has returned \finalized. 
\end{compactenum}
The next progress property guarantees non-blocking progress of invocations of our primitives.
\begin{compactenum}[{\bf P\arabic{enumi}}:]
\setcounter{enumi}{1}
\item If operations are performed infinitely often, then operations succeed infinitely often.
\end{compactenum}
However,
this progress property
leaves open the possibility that only \llt s succeed.
So, we
want
an additional progress property:
\begin{compactenum}[{\bf P\arabic{enumi}}:]
\setcounter{enumi}{2}
\item If \sct\ and \vlt\ operations are performed infinitely often, then \sct\ or \vlt\ operations succeed infinitely often.
\end{compactenum}
Finally, the following progress property ensures that {\it update} operations that are built using
\sct\ can be made non-blocking.
\begin{compactenum}[{\bf P\arabic{enumi}}:]
\setcounter{enumi}{3}
\item If \sct\ operations are performed infinitely often, then \sct\ operations succeed infinitely often.
\end{compactenum}

When the progress properties defined here are used to prove that an application built
from the primitives is
non-blocking, there is an important, but subtle point:
an \sct\ can be invoked only after it has been properly set up by a sequence of \llt s.
However, if processes repeatedly perform \llt\ on \rec s that have been finalized,
they may never be able to invoke an \sct.
One way to prevent this from happening is to have
each process keep track of the
\rec s it knows are finalized.
However, in many natural applications, for example, the multiset
implementation in Section \ref{sec-multiset}, explicit bookkeeping can be avoided.
In addition, to ensure that changes to a data structure can continue to occur,
there must always be at least one 
non-finalized
\rec. 
For example, in our multiset, $head$ is never finalized and,
if a node is reachable from $head$ by following $next$ pointers,
then it is not finalized.

Our implementation of \llt, \sct\ and \vlt\ in Section \ref{sec-impl}
actually satisfies stronger progress properties
than the ones described above.  For example, a \vlt($V$) or
\sct($V,R,fld$, $new$)
is guaranteed to succeed if there is no concurrent \sct( $V',R',fld',new')$
such that $V$ and $V'$ have one or more elements in common.
However, for the purposes of the specification of the primitives, 
we decided to give progress guarantees that are sufficient to prove that algorithms that
use the primitives are non-blocking, but weak enough that it may be possible to design
other, even more efficient implementations of the primitives.  For example, our
specification would allow some spurious failures of the type that occur in common
implementations of ordinary LL/SC operations (as long as there is some 
guarantee that not all operations can fail spuriously).

\section{Implementation of Primitives} \label{sec-impl}

\newcommand{\wcnarrow}[2]{\parbox{\namewidth}{#1} \com \mbox{#2}}
\begin{figure}[tb]
\def\namewidth{18mm}
\preplisting
\small
\begin{lstlisting}[mathescape=true,style=nonumbers]
 type $\rec$
   //\com User-defined fields 
   //\wcnarrow{$m_1, \ldots, m_y$}{mutable fields}
   //\wcnarrow{$i_1, \ldots, i_z$}{immutable fields}
   //\com Fields used by  \llt /\sct\ algorithm
   //\wcnarrow{$\info$}{pointer to an \op}
   //\wcnarrow{$marked$}{Boolean}
\end{lstlisting}
\def\namewidth{17mm}
\preplisting
\begin{lstlisting}[mathescape=true,style=nonumbers]
 type $\op$
   //\wcnarrow{$V$}{sequence of \rec s}
   //\wcnarrow{$R$}{subsequence of $V$ to be finalized}
   //\wcnarrow{$fld$}{pointer to a field of a \rec\ in $V$}
   //\wcnarrow{$new$}{value to be written into the field $fld$}
   //\wcnarrow{$old$}{value previously read from the field $fld$} 
   //\wcnarrow{$state$}{one of \{\freezing, \done, \retry\}}
   //\wcnarrow{$\freezingdone$}{Boolean}
   //\wcnarrow{$\llresults$}{sequence of pointers, one read from the}
   //\wcnarrow{\mbox{ }}{\info\ field of each element of $V$}
 \end{lstlisting}
	\caption{Type definitions for shared objects used to implement \llt, \sct, and \vlt.
		}
	\label{code1}
\end{figure}

The shared data structure used
to implement \llt, \sct\ and \vlt\ 
consists of a set of \rec s and a set of \op s. (See Figure~\ref{code1}.)  
Each \rec\ contains  user-defined mutable and immutable fields.  It also contains
a 
$marked$ bit, which is used to finalize the \rec,
and an $\info$ field.
The marked bit is initially \false\ and only ever changes from
\false\ to \true.
The $\info$ field points to an \op\ that describes the last \sct\ that accessed the \rec.
Initially, it points to a {\it dummy} \op.
When an \sct\ accesses a \rec, it changes the $\info$ field of the \rec\ to point to its \op.
While this \sct\ is  active, the $\info$ field acts as a kind of lock on the \rec,
granting exclusive access to this \sct, rather than to a process.
(To avoid confusion, we call this {\it freezing}, rather than locking, a \rec.)
We ensure that an \sct\ $S$ does not change a \rec\ for its own purposes while it is 
frozen for another \sct\ $S'$.
Instead, $S$
uses the information in the \op\ of $S'$ to help  $S'$ complete 
(successfully or unsuccessfully),
so that the \rec\ can be unfrozen. 
This cooperative approach is used to ensure progress.


An \op\ contains enough information to allow any process to complete an \sct\ operation 
that is in progress. 
$V, R, fld$ and $new$ store the arguments of the \sct\ operation that created the \op.
Recall that $R$ is a subsequence of $V$ and $fld$ points to a mutable field $f$ of some \rec\ $r'$ in $V$.
The value that was read from $f$ by the \llt$(r')$ linked to the \sct\ is 
stored in $old$.
The \op\ has one of three states, \freezing, \done\ or \retry,
which is stored in its $state$ field. This field is
initially \freezing.
The \op\ of each \sct\ that terminates is eventually set
to \done\ or \retry, depending on whether or not it successfully
makes its desired update.
The dummy \op\ always has $state$ = \retry.
The $\freezingdone$ bit, which is initially \false, gets 
set to \true\ after all \rec s in $V$ have been frozen for the \sct.
The values of $state$ and $\freezingdone$ change in accordance with the diagram in
Figure~\ref{fig-state-transitions}.
The steps in the 
algorithm
that cause these changes are also 
indicated.
The $\llresults$ field stores, for each $r$ in $V$, the value of $r$'s \info\ field that was
read by the \llt$(r)$ linked to the \sct.


\begin{figure}[t]
	\begin{minipage}{\textwidth}
        \centering
    	\includegraphics[scale=0.065]{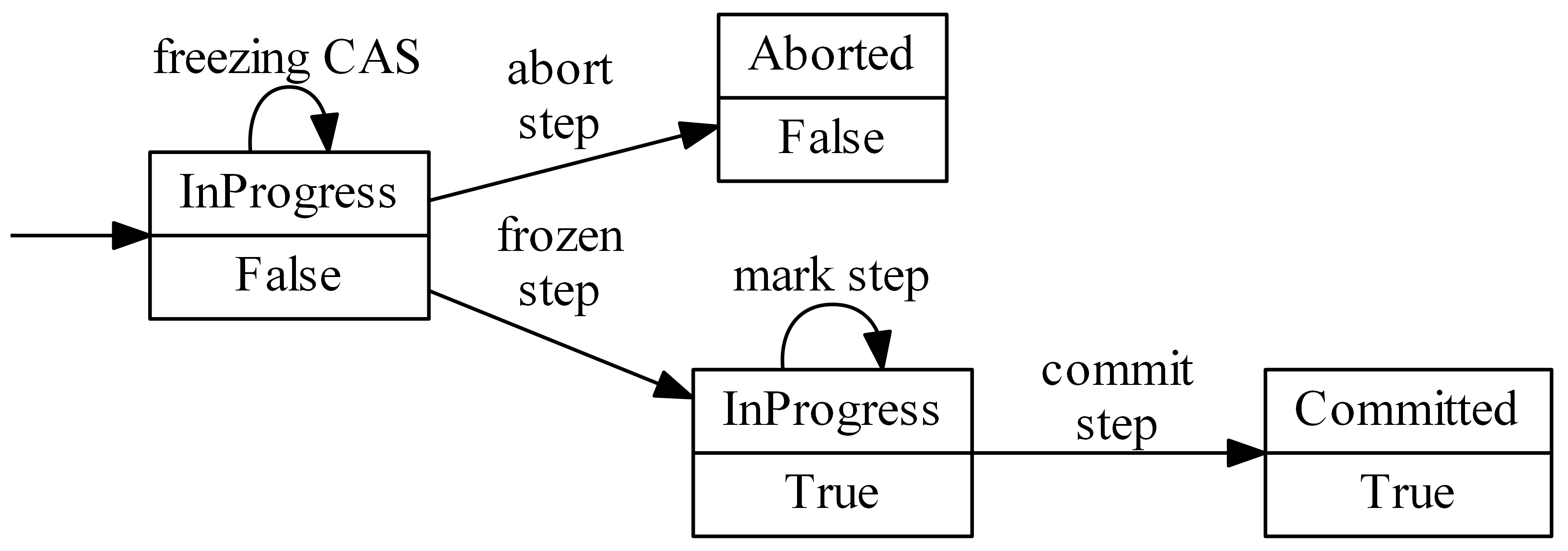} \\
    \end{minipage}
\caption{Possible 
[$state$, $\freezingdone$] field transitions
of an \op.}
\label{fig-state-transitions}
\end{figure}

We say that a \rec\ $r$ is \textit{marked} when $r.marked = \true$.
A \rec\ $r$ is \textit{frozen} for an \op\ $U$ if $r.\info$ points to $U$ and either $U.state$ is \freezing, or $U.state$ is \done\ and $r$ is marked. 
While a \rec\ $r$ is frozen for an \op\ $U$, a mutable field $f$ of $r$ can be changed 
only if $f$ is the field pointed to by $U.fld$ (and it can only be changed by a process 
helping the \sct\ that created $U$).
Once a \rec\ $r$ is marked and $r.\info.state$ becomes \done, $r$
 will never be modified again in any way.
Figure~\ref{fig-state-transitions2} shows how 
a \rec\ can change between frozen and unfrozen.
The three bold boxes represent frozen \rec s.
The other two boxes represent \rec s that are not frozen.
A \rec\ $r$ can only become frozen when $r.\info$ is changed (to point to a new \op\ whose state is \freezing). This is represented by the grey edges.
The black edges represent changes to $r.\info.state$ or $r.marked$.
A frozen \rec\ $r$ can only become unfrozen when $r.\info.state$ is changed.

\begin{figure}[h]
	\begin{minipage}{\textwidth}
        \centering
    	\includegraphics[scale=0.065]{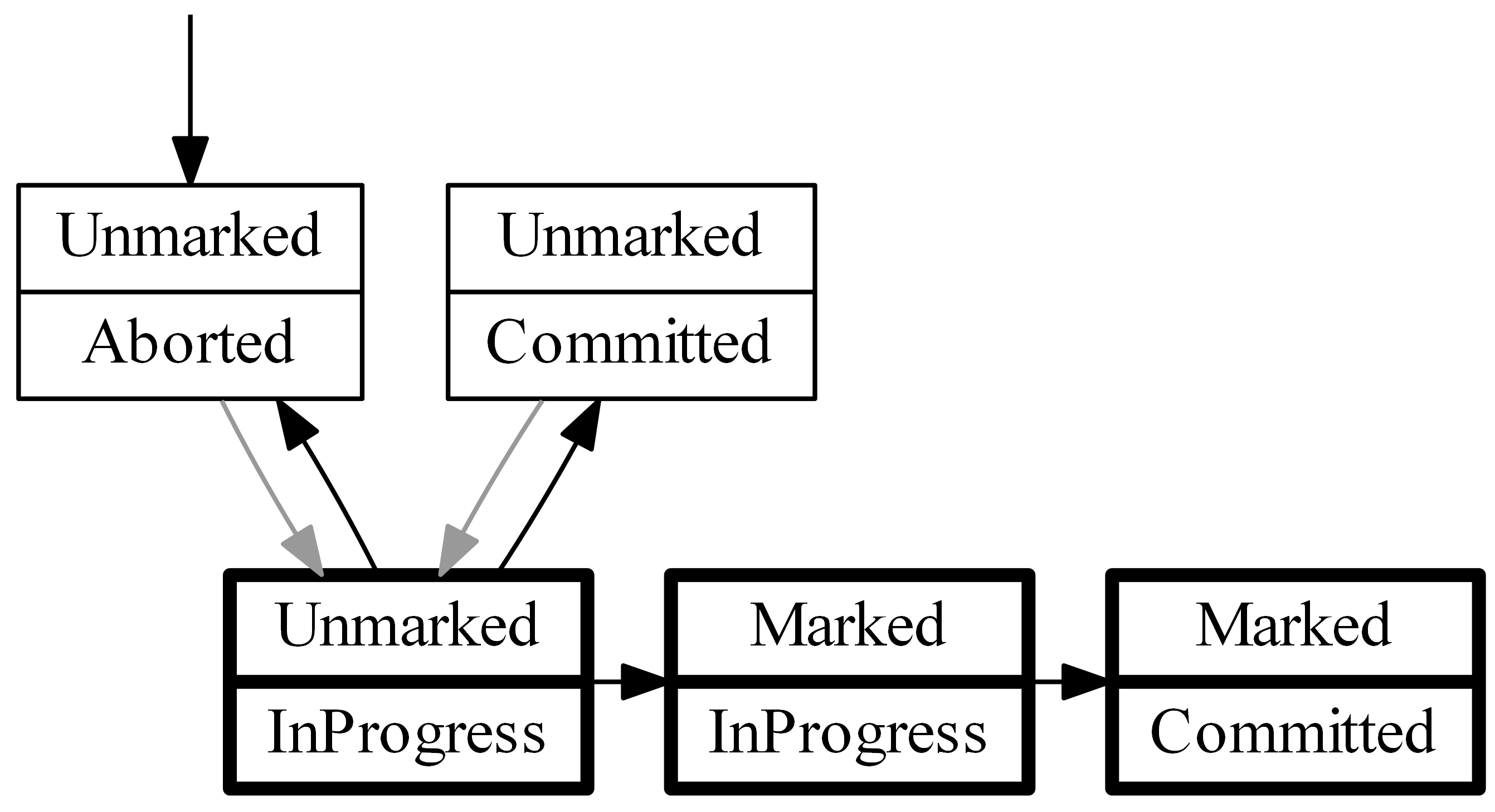} \\
    \end{minipage}
\caption{
Possible transitions for the
$marked$ field of a \rec\ and the $state$ of the \op\ pointed to by
the {\it info} field of the \rec .}
\label{fig-state-transitions2}
\end{figure}

\subsection{Constraints} 
\label{constraints}
\label{sec-impl-aba}

For the sake of efficiency, we have designed our implementation of \llt, \vlt\ and \sct\ to work
only if the primitives are used in a way that satisfies certain constraints, described in this section.  We also describe general (but somewhat inefficient) ways to ensure these constraints
are satisfied.
However, there are often quite natural ways to ensure
the constraints are satisfied
without resorting to the extra work required by the general solutions.

Since our implementation of \llt, \sct\ and \vlt\ uses helping to guarantee progress, each \cas\ of an \sct\ might be repeatedly performed by several helpers, possibly after the \sct\ itself has terminated.
To avoid difficulties, we must show there is no ABA problem in the fields affected by these CAS steps.

The \info\ field of a \rec\ $r$ is modified by \cas\ steps that attempt to freeze
$r$ for an \sct.
All such steps performed by processes helping one invocation of 
\sct\ try to \cas\ the \info\ field of $r$
from the same old value to the same new value, and that new value is 
a pointer to a newly created \op.  Because the
\op\ is allocated a location that has 
never been used before, the ABA problem will not arise in the \info\ field.
(This approach is compatible with safe garbage collection schemes that 
only reuse an old address once no process can reach it by following pointers.)
 
A similar approach could be used to avoid the ABA problem
in a mutable field
of a \rec:
the new value
could be placed inside a wrapper object that is allocated a new
location in memory.  (This is referred to as Solution 3 of the ABA problem in \cite{DPS10}.) 
However, the extra level of indirection slows down accesses to fields.

To avoid the ABA problem, it suffices to prove the following constraint is satisfied.
\begin{compactitem}
\item
{\bf Constraint}: For every invocation $S$
of  \sct$(V, R$, $fld, new)$,
$new$ is not the initial value of $fld$ and no invocation of
\sct$(V', R', fld, new)$  was linearized before the $\llt(r)$ linked to
$S$ was linearized,  where $r$ is the \rec\ that contains $fld$.
\end{compactitem}
The multiset in Section~\ref{sec-multiset} provides an example
of a simple, more efficient way  to
ensure that this constraint is always satisfied.

To ensure property P4,
we put a constraint on the way \sct\ is used.
Our implementation of \sct($V, R, fld$, $new$)
does something akin to acquiring locks on each \rec\ in $V$.
Livelock could occur if different invocations of \sct \ do not process \rec s in the same order.
To prevent this, we could define a way of ordering all \rec s (for example,
by their locations in memory)
and 
each sequence
passed to an invocation of  \sct\ could  be sorted using this ordering. 
However,
this could be expensive.
Moreover, to prove our progress properties,
we do not require that {\it all} \sct s order their
sequences $V$ consistently.
It suffices that,
if all the \rec s stop changing, then 
the sequences passed to later invocations of \sct\
are all consistent with some total order.
This property is often easy to satisfy
in a natural way.
More precisely, use of our implementation of \sct\ requires adherence to the following constraint.
\begin{compactitem}
\item
{\bf Constraint}:
Consider each execution that contains
a configuration $C$ after which the value of no
field of any \rec\ changes.
There must be a total order
on all \rec s created during this execution such that,
if \rec\ $r_1$ appears before
\rec\ $r_2$ in the sequence $V$ passed to an invocation 
of \sct\ whose linked \llt s begin after $C$,
then $r_1 < r_2$.
\end{compactitem}
For example, if one was using \llt\ and \sct\ to implement
an {\it unsorted} singly-linked list,
this constraint would be satisfied if the 
nodes in each sequence $V$ occur 
in the order they are encountered by following next pointers
from the beginning of the list, {\it even if} some operations
could reorder the nodes in the list.
While the list is changing, such a sequence may have repeated elements
and might not be consistent with any total order.

\subsection{Detailed Algorithm Description\\and Sketch of Proofs}


\begin{figure*}[tbp]
\def\pwidth{4cm}
\prepnewlisting
\hrule
\small
\vspace{-2mm}
\begin{lstlisting}[mathescape=true]
    //\llt$(r)$ by process $p$
    //\com Precondition: $r \neq \nil$.
      $marked_1 := r.marked$ // \label{ll-read-marked1} \sidecom{order of lines~\ref{ll-read-marked1}--\ref{ll-read-marked2} matters}
      $r\info := r.\info$ // \label{ll-read} 
      $state := r\info.state$ // \label{ll-read-state}
      $marked_2 := r.marked$ // \label{ll-read-marked2}
      if $state = \retry$ or $(state = \done$ and not $marked_2)$ then //  \label{ll-check-frozen} \sidecom{if $r$ was not frozen at line~\ref{ll-read-state}}
        read $r.m_1,...,r.m_y$ //and record the values in local variables $m_1,...,m_y$%
        \label{ll-collect}
        if $r.\info = r\info$ then//\label{ll-reread}\sidecom{if $r.\info$ points to the same} 
          //store $\langle r, r\info, \langle m_1, ..., m_y \rangle \rangle$ in $p$'s local table %
\sidecom{\op\ as on line~\ref{ll-read}}\label{ll-store}
          return $\langle m_1, ..., m_y \rangle$ // \label{ll-return}  \vspace{2mm}
      if ($r\info.state = \done$ or ($r\info.state = \freezing$ and $\help(r\info)))$ and $marked_1$ then//%
      \label{ll-check-finalized}
        return $\finalized$ // \label{ll-return-finalized}
      else
        if $r.\info.state = \freezing$ then $\help(r.\info)$ // \label{ll-help-fail} 
        return $\fail$ // \label{ll-return-fail}\vspace{2mm} \hrule %
\vspace{2mm}    
    //\sct$(V, R, fld, new)$ by process $p$
    //\tline{\com Preconditions: (\presctlinked) for each $r$ in $V$, $p$ has performed an invocation $I_r$ of \llt$(r)$ linked to this \sct}%
            {\hspace{19.5mm}(\presctabainit) $new$ is not the initial value of $fld$}%
            {\hspace{19.5mm}(\presctaba) for each $r$ in $V$, no $\sct(V', R', fld, new)$ was linearized before $I_r$ was linearized}%
%            {\hspace{19.5mm}(\presctdistinct) if, for each $r \in V$, no $\sct$ has been linearized since $I_r$, then $V$'s elements are distinct}

      //\dline{Let $\llresults$ be a pointer to a newly created table in shared memory containing,}%
              {for each $r$ in $V$, a copy of $r$'s \info\ value in $p$'s local table of \llt\ results}%
              \label{sct-create-llresults}
      //Let $old$ be the value for $fld$ stored in $p$'s local table of \llt\ results\label{sct-create-old}
      return $\help(\mbox{pointer to new \op} (V, R, fld, new, old, \freezing,  \false, \llresults ))$ // \label{sct-create-op}\label{sct-call-help} \vspace{2mm} \hrule %
\vspace{2mm}    
    //\help$(scxPtr)$ 
      //\com \mbox{Freeze all \rec s in $scxPtr.V$ to protect their mutable fields from being changed by other \sct s}
      for each $r$ in $scxPtr.V \mbox{enumerated in order}$ do
        //Let $r$\info\ be the pointer indexed by $r$ in $scxPtr.\llresults$ \label{help-rinfo}
        if not $\cas(r.\info,r\info,scxPtr)$ then //\sidecom{\textbf{\fcas}}\label{help-fcas}
          if $r.\info \neq scxPtr$ then // \label{help-check-frozen} 
            //\com \mbox{Could not freeze $r$ because it is frozen for another \sct}
            if $scxPtr.\freezingdone = \true$ then//\sidecom{\textbf{\fcstep}}\label{help-fcstep}
              //\com the \sct\ has already completed successfully 
              return $\true$ // \label{help-return-true-loop} 
            else
              //\com Atomically unfreeze all nodes frozen for this \sct 
              $scxPtr.state := \retry$ //\sidecom{\textbf{\astep}}\label{help-astep}
              return $\false$ // \label{help-return-false} \vspace{2mm}
      //\com Finished freezing \rec s (Assert: $state \in \{\freezing, \done\}$) 
      $scxPtr.\freezingdone := \true$//\sidecom{\textbf{\fstep}}\label{help-fstep}
      for each $r$ in $scxPtr.R$ do $r.marked := \true$ //\sidecom{\textbf{\markstep}}\label{help-markstep}
      //$\cas(scxPtr.fld, scxPtr.old, scxPtr.new)$ \sidecom{\textbf{\upcas}}\label{help-upcas} \vspace{2mm}
      //\com Finalize all $r$ in $R$, and unfreeze all $r$ in $V$ that are not in $R$ 
      $scxPtr.state := \done$//\sidecom{\textbf{\cstep}}\label{help-cstep}
      return $\true$ // \label{help-return-true} \vspace{2mm} \hrule %
\vspace{2mm} 
    //\validate$(V)$ by process $p$ 
    //\mbox{\com Precondition: for each \rec\ $r$ in $V$, $p$ has performed an \llt$(r)$ linked to this \vlt}
      for each $r$ in $V$ do
        //Let $r\info$ be the \info\ field for $r$ stored in $p$'s local table of \llt\ results\label{vlt-info} 
        if $r\info \neq r.\info$ then return $\false$ //\tabto{8cm}\mbox{\com $r$ changed since \llt$(r)$ read $\info$}%
        \label{vlt-reread} 
      return $\true$ //\tabto{4.5cm}\mbox{\com At some point during the loop, all $r$ in $V$ were unchanged} %
      \vspace{2mm} \hrule %
\end{lstlisting}
    \vspace{-3mm}
	\caption{Pseudocode for \llt, \sct\ and \validate.}
	\label{code-main}
\end{figure*}

Pseudocode for our implementation of \llt, \vlt\ and \sct\ appears in Figure~\ref{code-main}. 
If $x$ contains a pointer to a record, then $x.y := v$
assigns the value $v$ to field $y$ of this record,
\&$x.y$ denotes the address of this field and all other
occurrences of $x.y$ denote the value stored in this field.
\begin{thm}
The algorithms in Figure \ref{code-main} satisfy properties C1 to C4 and P1 to P4 in every execution where the constraints of Section \ref{constraints} are satisfied.
\end{thm}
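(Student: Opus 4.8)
The plan is first to fix a linearization and then to verify \textbf{C1}--\textbf{C4} and \textbf{P1}--\textbf{P4} against it. A read of a field is linearized at the read instruction; a successful \llt$(r)$ that returns a snapshot at its successful re-read of $r.\info$ (line~\ref{ll-reread}); a successful \llt$(r)$ that returns \finalized{} at the moment it returns; a \vlt$(V)$ at the read of $r.\info$ on its last loop iteration, or, if it fails, at the read that returns \false{}; and a successful \sct{}---together with every non-terminating \sct{} whose \op{} ever has $\freezingdone$ equal to \true{}---at the first execution of its \upcas{} (line~\ref{help-upcas}) by any helper. Two facts need to be checked up front: that the \sct{}s just listed are exactly those whose \op{} ever has $\freezingdone$ set (a successful \sct{} must set it, by inspecting the return points of \help{}; and once $\freezingdone$ is \true{} no \astep{} can fire afterwards, so a terminating \help{} on that \op{} returns \true{}), and that this first \upcas{} actually succeeds. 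The latter is where the ABA-freedom constraint first enters: it guarantees that no other \sct{} could have written $fld$ back to $old$ between the \fstep{} and the \upcas{}, so the field still contains $old$ when the first helper reaches it.

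The core of the proof is a family of invariants, proved simultaneously by induction on the steps of an arbitrary execution, that turn the informal state machines of Figures~\ref{fig-state-transitions} and~\ref{fig-state-transitions2} into theorems. For an \op{}: $state$ changes only from \freezing{} to \done{} or from \freezing{} to \retry{}; $\freezingdone$ changes only from \false{} to \true{}, can be set only while $state$ is \freezing{}, and once it is \true{} $state$ is never again \retry{}; and whenever $state$ is \done{}, $\freezingdone$ is \true{}. For a \rec{} $r$: $marked$ changes only from \false{} to \true{}; $r.\info$ changes only via a successful \fcas{} (line~\ref{help-fcas}), which sets it to a pointer to a freshly created \op{} whose $state$ is \freezing{}; while $r$ is frozen for an \op{} $U$, the only field of $r$ that can change is the one pointed to by $U.fld$, and only via $U$'s \upcas{}; and once $r$ is marked with $r.\info.state$ equal to \done{}, no field of $r$ ever changes again. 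I would draw two consequences and use them constantly: a mutable field of $r$ can change only when $r.\info$ changes, or by the \upcas{} of the \op{} currently pointed to by $r.\info$; and, since newly created \op s receive addresses never used before, $r.\info$ is ABA-free, so two reads of $r.\info$ returning the same pointer certify that $r.\info$ held that value throughout the interval between them. Combining these with a careful case analysis of the \llt{} code---organised around the order in which $marked_1$, $r.\info$, $state$ and $marked_2$ are read---gives the central certification lemma: when a helper of \sct{} $I$ performs a successful \fcas{} on $r$, the mutable fields of $r$ still equal the snapshot returned by the \llt$(r)$ linked to $I$, and $r.\info$ has not changed since that \llt{} read it; symmetrically, whenever an \llt$(r)$ returns a snapshot or a \vlt$(V)$ returns \true{}, the relevant \rec s are unchanged at its linearization point.

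From the certification lemma, \textbf{C1} and \textbf{C2} follow once one notes that an \sct{} is linearized precisely at the \upcas{} that makes its update visible. \textbf{C4} holds because the \llt$(r)$ linked to a linearized \sct{} or \vlt{} $I$ read $r.\info$, and by ABA-freedom of \info{} that value persisted until $I$'s helper froze or re-found $r$; hence no other \sct{} performed a \fcas{} on $r$ in that interval, so none could have reached its \upcas{}, which is its linearization point. \textbf{C3} follows from the observation that if $r$ is ever marked then it was marked by some \sct{} $W$ with $r$ in $W.R$, after which $r.\info$ stays equal to $W$'s \op{}, together with the fact that an \llt$(r)$ returns \finalized{} exactly when it finds $marked_1$ equal to \true{} and then observes or drives the $state$ of $W$'s \op{} to \done{}---which forces $W$'s \op{} to have $\freezingdone$ equal to \true{} (so $W$ is linearized) with $W$'s \upcas{} preceding the \llt{}'s return. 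On the progress side, \textbf{P1} is immediate from the ``no field changes again'' invariant and lines~\ref{ll-check-finalized}--\ref{ll-return-finalized}. For \textbf{P2}--\textbf{P4} I would use that any process meeting a \rec{} frozen for an \op{} $U$ with $\freezingdone$ still \false{} calls \help$(U)$, that the \op{} of a pending \sct{} can only be blocked by another pending \sct{}, and that after the configuration $C$ of the ordering constraint all sequences $V$ are consistent with a single total order; a standard ``leftmost pending operation'' argument then shows some \sct{} must complete, while an \llt{} or \vlt{} not concurrent with a conflicting \sct{} succeeds outright. I expect the hardest part to be the certification lemma together with justifying the \sct{} linearization point: one must reconcile the fact that helpers may re-execute \upcas{} and \cstep{} arbitrarily late---so their idempotence and ABA-freedom have to be argued with care---with the requirement that the single field update and the finalization of every \rec{} in $R$ appear atomic to all reads and all \llt s.
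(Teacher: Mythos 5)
Your overall strategy is the same as the paper's: the same linearization points for reads and \llt s, \sct s linearized at the first \upcas\ performed by any helper, the same core invariants (ABA-freedom of \info\ fields, the state machine for \op s, the fact that a mutable field can change only while its \rec\ is frozen and only via the owning \op's \upcas), the same ``certification'' reasoning behind C1--C4, and the same use of the ordering constraint for P2--P4 (the paper formalizes your ``leftmost pending operation'' idea as a blame graph whose paths have bounded length). There is, however, a concrete error in your linearization of \vlt. You place a successful \vlt$(V)$ at the read of $r.\info$ in its \emph{last} loop iteration; the paper places it at the \emph{first} execution of line~\ref{vlt-reread}, and only the first works. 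For each $r_i$ in $V$, the two equal reads of $r_i.\info$ (one by the linked \llt, one in iteration $i$) certify that $r_i$ is unchanged only up to iteration $i$'s own check; between iteration $1$'s check and iteration $|V|$'s check a concurrent \sct\ may freeze $r_1$, perform its \upcas, and hence be linearized, so C4 fails at your chosen point. The intersection of the per-record ``unchanged'' intervals ends at the \emph{earliest} check, not the latest.

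Two smaller problems concern which operations you put into the linearization at all. You also linearize \emph{failed} \vlt s (at the read that returns \false), but the specification's linearization contains only successful \vlt s, and C4 would in any case be violated for a \vlt\ that fails precisely because a conflicting \sct\ was linearized after its linked \llt. And your criterion for linearizing non-terminating \sct s (``$\freezingdone$ is ever set'') is a strict superset of the paper's (``some helper performs an \upcas''): if every helper crashes between the \fstep\ and the \upcas, then $\freezingdone$ is \true\ but $fld$ is never modified and there is no \upcas\ at which to linearize, so such an \sct\ must not be linearized. With these three corrections your proposal matches the paper's proof in structure and substance.
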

The detailed proof of correctness 
in Appendix \ref{sec-proof}
is quite involved, so we only sketch the main ideas here.


An \llt$(r)$ returns a snapshot, \fail, or \finalized.
At a high level, it works as follows.
If the \llt\ determines that $r$ is not frozen and $r$'s $\info$ field does not change
while the \llt\ reads the mutable fields of $r$, 
the \llt\ returns the values read as a snapshot.
Otherwise, the \llt\ helps the \sct\ that
last froze $r$, if it is frozen,
and returns \fail\ or \finalized.
If the \llt\ returns \fail, it is not linearized.
We now discuss in more detail how \llt\ operates and is linearized in the other two cases.

First, suppose the \llt($r$) returns a snapshot at line~\ref{ll-return}.
Then, the test at line~\ref{ll-check-frozen} evaluates to \true.  So,
either $state=\retry$, which means $r$ is not frozen at line~\ref{ll-read-state},
or $state=\done$ and $marked_2=\false$. This also means $r$ is not frozen at 
line~\ref{ll-read-state}, since $r.marked$ cannot change from \true\ to \false.
The \llt\ reads $r$'s mutable fields (line~\ref{ll-collect}) and rereads
$r.\info$ at line~\ref{ll-reread}, finding it
the same as on 
line~\ref{ll-read}.
In Section~\ref{sec-impl-aba}, we
explained 
why this implies that
$r.\info$ did not change between lines~\ref{ll-read} and \ref{ll-reread}.
Since $r$ is not frozen at line~\ref{ll-read-state}, we know from Figure~\ref{fig-state-transitions2} that $r$ is unfrozen at all times between line~\ref{ll-read-state} and~\ref{ll-reread}.
We prove that mutable fields can change only while $r$ is frozen,
so the values read by line \ref{ll-collect}
constitute a snapshot of $r$'s mutable fields.  Thus, we can linearize the 
\llt\ at line~\ref{ll-reread}.

Now, suppose the \llt($r$) returns \finalized.  Then, 
the test on line~\ref{ll-check-finalized} evaluated to \true.
In particular, 
$r$ was already marked when
line~\ref{ll-read-marked1} was performed.
If {\it rinfo.state} = \freezing\ when line~\ref{ll-check-finalized}
was performed, $\help(r\info)$ was called and returned \true.
Below, we argue that {\it rinfo.state} was changed to \done\
before the return occurred.
By Figure~\ref{fig-state-transitions2}(a), the $state$ of an \op\ never changes after 
it is set to \done. 
So, after line~\ref{ll-check-finalized}, 
{\it rinfo.state} = \done\ and, thus, $r$ has been finalized.
Hence, the \llt\ can be linearized at line \ref{ll-return-finalized}.

When a process performs an \sct, it first creates a new \op\ and then invokes \help\ (line~\ref{sct-create-op}).
The \help\ routine performs the real work of the \sct. It is also used by a process to help other processes
complete their \sct s (successfully or unsuccessfully).
The values in an \op's $old$ and $\llresults$
come from a table in the local memory of the process that invokes the \sct,
which stores the results of the last \llt\ it performed on each \rec.  (In
practice, the memory required for this table could be greatly reduced
when a process knows which of these values
are needed for future \sct s.)

Consider an invocation of \help($U$) by process $p$ to carry out the work of the invocation 
$S$ of \sct($V,$ $R,$ $fld,$ $new$) that is described by the \op\ $U$.
First, $p$
attempts to freeze each $r$ in $V$ by performing a {\it \fcas} to store a pointer to $U$ in $r.\info$ (line~\ref{help-fcas}).
Process $p$ uses the value read from $r.\info$ by the $\llt(r)$ linked to $S$ 
as the old value for this CAS
and, hence, it will succeed only if $r$ has not been frozen for any other \sct\ since then.
If $p$'s \fcas\ fails, it checks whether some other helper has successfully frozen
the \rec\ with a pointer to $U$ (line~\ref{help-check-frozen}).

If every $r$ in $V$ is successfully frozen, $p$ performs a {\it \fstep} to set $U.\freezingdone$ to 
\true\ (line~\ref{help-fstep}).
After this \fstep, the \sct\ is guaranteed not to fail, meaning that no process
will  perform an \astep\ while
helping this 
\sct.
Then, for each $r$ in $R$,
$p$ performs a {\it \markstep} to set $r.marked$ to \true\ 
(line~\ref{help-markstep}) and,
from Figure~\ref{fig-state-transitions2},
$r$ remains frozen from then on.
Next, $p$ performs an {\it \upcas}, storing $new$ in the field pointed to by $fld$ (line~\ref{help-upcas}), if successful.
We prove that, among all the \upcas\ steps on $fld$ performed by the helpers of $U$,
only the first can succeed.
Finally, $p$ unfreezes all $r$ in $V$ that are not in $R$ by performing a {\it \cstep} that changes $U.state$ to \done\ (line~\ref{help-cstep}).

Now suppose that, when $p$ performs line~\ref{help-check-frozen}, it
finds that some \rec\ $r$ in $V$ is already frozen for another invocation $S'$ of \sct.
If $U.\freezingdone$ is \false\ at line~\ref{help-fcstep},
then we can prove that
no helper of $S$ will ever reach line~\ref{help-fstep}, so $p$ can
abort $S$.  To do so,
it unfreezes each $r$ in $V$ that it has frozen by performing an {\it \astep}, which changes $U.state$ to \retry\ (line~\ref{help-astep}), and then returns \false\ (line~\ref{help-return-false}) to indicate that $S$ has been aborted.
If $U.\freezingdone$ is \true\ at line~\ref{help-fcstep}, 
it means that each element of $V$, including $r$, was
successfully frozen by some helper of $S$ and then, later, a process
froze $r$ for $S'$.
Since $S$ cannot be aborted after $U.\freezingdone$ was set to \true,
its state must have changed from \freezing\ to \done\ before $r$
was frozen for another \op.
Therefore, $S$ was successfully completed and
$p$ can return \true\ at line~\ref{help-return-true-loop}.

We linearize an invocation of \sct\ at the first \upcas\ performed
by one of its helpers.
We prove that this \upcas\ always succeeds.
Thus, all \sct s that return \true\ are linearized,
as well as possibly some non-terminating \sct s.
The first \upcas\ of  \sct($V,R,fld,new$) modifies the value of $fld$, so
a read($fld$) that occurs immediately after the \upcas\ will return
the value of $new$.
Hence, the linearization point of an \sct\ must occur at its first \upcas.
There is one subtle issue about this linearization point: 
If an \llt($r$)  is linearized 
between the \upcas\ and \cstep\ of an \sct\ that finalizes $r$,
it might not return \finalized, violating condition C3.
However, this cannot happen, because,
before the \llt\ is linearized on line \ref{ll-return-finalized},
the \llt\ either
sees that the \cstep\ has been performed or helps the \sct\ perform its \cstep .

An invocation $I$ of \vlt$(V)$ is executed by a process $p$ after $p$ has performed an invocation of \llt$(r)$ linked to $I$, for each $r$ in $V$.
\vlt$(V)$ simply checks, for each $r$ in $V$, that the \info\ field of $r$ is the same as when it was read by $p$'s last
\llt$(r)$ and, if so, \vlt$(V)$ returns \true.
In this case, we prove that each \rec\ in $V$ does not change between the linked \llt\ and the time its \info\ field is reread.  Thus, the \vlt\ can be linearized at the first time it executes  line \ref{vlt-reread}.
Otherwise, the \vlt\ returns \false\ to indicate that the \llt\ results may not constitute a snapshot.

We remark that our use of the cooperative method avoids costly recursive helping.
If, while $p$ is helping $S$, it cannot freeze all of $S$'s \rec s because one of them is already frozen for a third \sct, then $p$ will simply  perform an \astep,  which unfreezes all \rec s that $S$ has frozen.


\begin{figure*}[tb]
	\centering
    \input{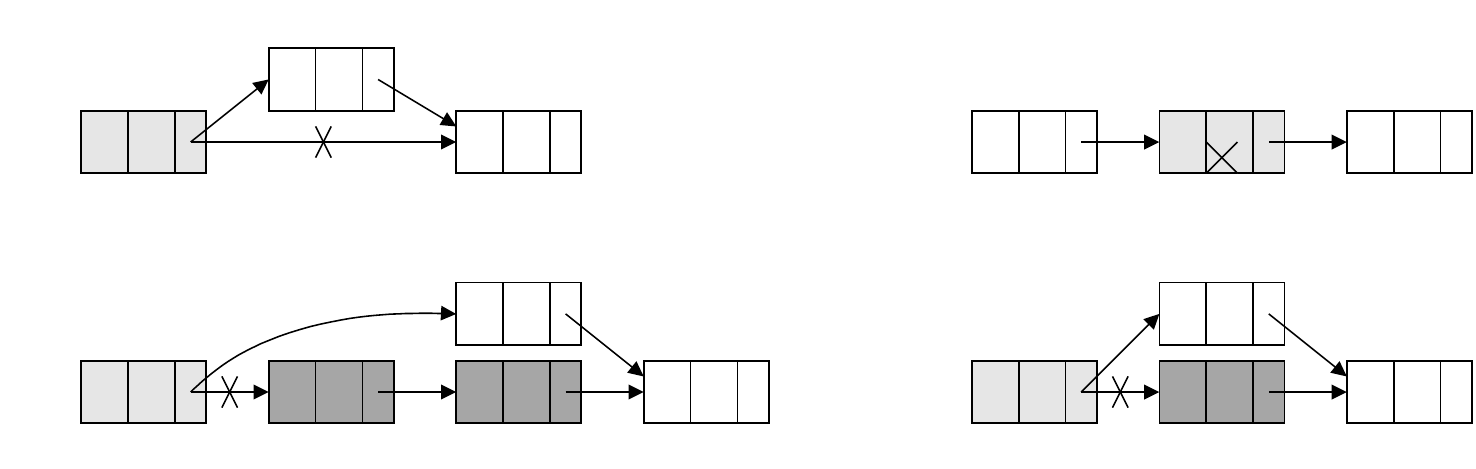_t}
	\caption{Using \sct\ to update a multiset.  \llt s of all shaded nodes are linked to the \sct. Darkly shaded nodes are finalized by the \sct.
Where a field has changed, the old value is crossed out.}
	\label{fig-example-multiset}
\end{figure*}


We briefly sketch why the progress properties 
described in Section~\ref{progress-spec} are satisfied.
It follows easily from the code that
an invocation of \llt($r$) returns \finalized\ 
if it begins after the end of an \sct\ that finalized $r$ or another \llt\ sees that $r$ is finalized.
To prove the progress properties P2, P3 and P4, we consider two cases.

First, consider an execution where only a finite number of \sct s are invoked.
Then, only finitely many \op s are created.
Each process calls \help($U$) if it sees that $U.state = \freezing$,
which it can do at most 
once for each \op\ $U$.
Since every \cas\ is performed inside the \help\ routine, there is some point after which
no process performs a \cas,
calls \help, or sees a \op\ whose $state$
is \freezing.
A \vlt\ can fail only when an \info\ field is modified by a concurrent operation and an \llt\ can only fail for the same reason or when
it sees a \op\ whose $state$ is \freezing.
Therefore, all \llt s and \vlt s  that begin after this point
will succeed,
establishing P2 and P3.
Moreover,
P4 is vacuously satisfied in this case.

Now, consider an execution where infinitely many \sct s are invoked.
To derive a contradiction, suppose only finitely many \sct s succeed.
Then, there is a time after which no more \sct s succeed.
The constraint on the sequences passed to invocations of \sct s ensures that
all \sct s whose linked \llt s begin after this time will
attempt to freeze their sequences of \rec s in a consistent order.
Thus, one of these \sct s will succeed in freezing all of the \rec s
that were passed to it and will
successfully complete.
This is a contradiction.
Thus, infinitely many of the \sct s do succeed, establishing properties P2, P3 and P4.

\subsection{Additional Properties}

Our implementation of \sct\ satisfies some additional properties, which are helpful
for designing certain kinds of non-blocking data structures so that query operations
can run efficiently.
Consider a pointer-based data structure with a fixed set of \rec s called 
{\it entry points}.
An operation on the data structure starts at an entry point and follows pointers to
visit other \rec s.  (For example, in our multiset example, the head of the linked list
is the sole entry point for the data structure.)
We say that a \rec\ is {\it in the data structure} if it can be reached by following
pointers from an entry point, and a \rec\ $r$ is {\it removed from the data structure}
by an \sct\ if $r$ is in the data structure immediately prior to the
linearization point of the \sct\ and is not in the data structure immediately afterwards.

If the data structure is designed so that a \rec\ is finalized when (and only when) 
it is removed from the data structure, then we have the following additional properties.
\begin{prop}
\label{searches-work}
Suppose each linearized \sct$(V,R,fld$, $new)$ removes precisely the \rec s in $R$ from the data structure.
\begin{compactitem}
\item
If \llt$(r)$ returns a value different from \fail\ or \finalized, $r$ is in the data structure just before the \llt\ is linearized.
\item
If an \sct$(V,R,fld,new)$ is linearized and $new$ is (a pointer to) a \rec, then 
this \rec\ 
is in the data structure just after the \sct\ is linearized.
\item
If an operation reaches a \rec\ $r$ by following pointers read from other \rec s, starting from an entry point, then $r$ was in the data structure at some earlier time during the operation.
\end{compactitem}
\end{prop}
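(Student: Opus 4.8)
The plan is to prove the three statements simultaneously, by induction over the execution ordered by the times of linearization points; the ``engine'' is the second statement, and the first and third feed into it, so this is a mutual induction. The ingredients are properties C3 and C4, the design hypothesis that a \rec\ is finalized exactly when it is removed from the data structure (so the removed \rec s of a linearized \sct\ are exactly the \rec s in its $R$), and two facts read off Figure~\ref{fig-state-transitions2}: finalization is permanent (once $r$ is marked and $r.\info.state = \done$, $r$ never changes again), and the only way a \rec\ can stop being in the data structure is to be removed by some linearized \sct\ (since reachability changes only at an \upcas). These give the working invariant carried through the induction: a \rec\ that is ever in the data structure stays there continuously until it is finalized, and no finalized \rec\ is ever in the data structure. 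The one delicate point in maintaining this invariant is that a removed \rec\ must never re-enter the data structure; this uses C4, since re-linking a removed \rec\ $r$ would require an \sct\ that depends (via a linked \llt) on some \rec\ through which $r$ is reachable, whereas C4 forbids any \sct\ from being preceded, between its linked \llt s and itself, by a finalizing \sct\ for one of its \rec s.

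For the second statement, recall that \sct$(V,R,fld,new)$ is linearized at its first \upcas, which writes $new$ into the field $fld$ of the \rec\ $r' \in V$ containing it. By the first statement applied to the \llt$(r')$ linked to this \sct\ (which returned a snapshot, being linked), $r'$ is in the data structure just before that \llt\ is linearized; by C4 no \sct\ with $r'$ among its dependencies is linearized between that \llt\ and the current \sct, so $r'$ is not removed in this interval and, by the invariant, is still in the data structure just before --- and, since $r' \notin R$ whenever $new$ is a \rec\ actually being spliced in, just after --- the current \sct's linearization. Since $r'.fld$ now points to the \rec\ $new$, that \rec\ is reachable from $r'$, hence in the data structure just after the linearization point. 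The first statement is then short: if \llt$(r)$ returns something other than \fail\ or \finalized, it is linearized, and by C3 no \sct\ with $r$ in its $R$ is linearized before it, so by the design hypothesis $r$ has not been removed and is not finalized at that point; as the operation reached $r$ by a traversal from an entry point, the third statement gives an earlier time at which $r$ was in the data structure, and the invariant then puts $r$ in the data structure just before the \llt\ is linearized.

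For the third statement, let $r_0, r_1, \ldots, r_k = r$ be the \rec s the operation visits, $r_0$ an entry point and each $r_{i+1}$ read from a field of $r_i$; I claim there are times $t_0 \le t_1 \le \cdots \le t_k$, each occurring during the operation, with $r_i$ in the data structure at $t_i$ and $t_i$ before the operation's read of the pointer out of $r_i$. The base case is immediate since an entry point is always in the data structure. For the step, let $\rho$ be the read of $r_{i+1}$ from a field $f$ of $r_i$ (so $t_i < \rho$). If $r_i$ is in the data structure at $\rho$, take $t_{i+1} = \rho$, since $r_{i+1}$ is then reachable from $r_i$. Otherwise $r_i$ was removed, hence finalized, by some linearized \sct\ $S$; since $r_i$ was in the data structure at $t_i$ but not just after $S$, the invariant forces $S$ to be linearized in $(t_i,\rho)$, hence during the operation, and by permanence $r_i.f$ has not changed since $S$'s linearization. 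Thus $r_{i+1}$ is the value of $r_i.f$ just after $S$, which either equals its value just before $S$ --- when $r_i$, being about to be removed by $S$, is still in the data structure, so $r_{i+1}$ is there too --- or, if $f$ is the field $S$ modifies, is $S$'s value $new$, placed in the data structure just after $S$ by the second statement; either way a suitable $t_{i+1} \in (t_i,\rho)$ exists. The main obstacle is organizing the mutual induction so that every appeal is to a strictly earlier event (second $\to$ first via an earlier linked \llt, first $\to$ third via an earlier traversal, third $\to$ second via an earlier \sct) while keeping the witness times $t_i$ monotone and inside the operation --- the key being that a \rec\ visited after its removal was removed only after the previous witness time; one should also confirm the ``no re-entry'' clause of the invariant and dispatch the degenerate possibility that $fld$ lies in a finalized \rec, which cannot arise when $new$ is a freshly allocated \rec\ being spliced in.
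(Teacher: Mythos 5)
Your proposal is correct in outline and follows essentially the same route as the paper's Appendix~\ref{sec-properties}: bullet~1 from ``$r$ was reachable earlier, leaving the data structure requires removal, removal implies $r\in R$, and C3 would then force \finalized''; bullet~3 by induction along the traversal chain, taking the witness time just before the first removal of $r_i$ and using the fact that a removed \rec\ never changes again. Two of your justifications need repair, though. First, bullet~2 hinges on the \rec\ $r'$ containing $fld$ not being in $R$ (otherwise $r'$ is removed by the \sct\ itself and $new$, reachable only through $r'$, need not be in the data structure afterwards); you assert this ``whenever $new$ is a \rec\ actually being spliced in,'' which is not an argument. The paper derives it from a small but essential observation: the \sct's only write is to a field of $r'$, and changing a field of $r'$ cannot affect whether $r'$ itself is reachable from an entry point, so this \sct\ cannot remove $r'$, and under the hypothesis that it removes precisely the \rec s in $R$ we get $r'\notin R$. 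The same observation is what drives the paper's standalone permanence lemma (Lemma~\ref{lem-if-initiated-rec-not-in-data-structure-then-does-not-change}); proving that lemma first lets the paper derive the three bullets in a linear order, avoiding your mutual induction and its well-foundedness bookkeeping. Second, your ``no re-entry'' clause is attributed to C4, but C4 only forbids linearizing an \sct\ after a conflicting \sct\ intervenes between it and its linked \llt s; it says nothing about a removed \rec\ being re-linked by an \sct\ operating on other \rec s. In the paper, no-re-entry is a data-structure-level fact (new pointers target freshly allocated \rec s, cf.\ Lemma~\ref{lem-multiset-only-one-finalized-llt}), and in any case none of the three bullets needs it: bullet~1 only needs that \emph{some} removal precedes the \llt, and bullet~3 only needs that $r_i$ is unchanged from just before its first removal onward, so you can simply drop that clause from your invariant.
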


The first two properties are straightforward to prove.
The last property is proved by induction on the \rec s reached.
For the base case, entry points are always reachable.  For the induction step,
consider the time when an operation reads a pointer to $r$ from another 
\rec\ $r'$ that the operation
reached earlier.  By the induction hypothesis, there was an earlier time $t$ during the operation
when $r'$ was in the
data structure.  If $r'$ already contained a pointer to $r$ at $t$, then $r$ was also
in the data structure at that time.  Otherwise, an \sct\ wrote a pointer to $r$ in $r'$ after $t$,
and just after that update occurred, $r'$ and $r$ were in the data structure 
(by the second part of the proposition).

The last property is a particularly useful one for linearizing query operations.  It means that 
operations that search through a data structure can use simple reads of pointers 
instead of the more expensive \llt\ operations.  
Even though the \rec\ that such a search operation reaches
may have been removed from the data structure by the time it is reached, the lemma guarantees
that there {\it was} a time during the search when the \rec\ was in the data structure.
For example, we use this property to linearize searches in our multiset algorithm in Section \ref{sec-multiset}.

\section{An example: multiset} \label{sec-multiset}

We now give a detailed description of the implementation of a multiset
using \llt\ and \sct.
We assume that keys stored in the multiset are drawn from a totally ordered set and $-\infty < k < \infty$ for every key $k$ in the multiset.
As described in Section \ref{sec-operations}, we use a singly-linked
list of nodes, sorted by key.
To avoid special cases, it always has a sentinel node, $head$,
with key $-\infty$ at its beginning and a sentinel node, $tail$,
with key $\infty$ at its end.
The definition of \listrec, the \rec\ used to represent a node,
and the pseudocode are presented in
Figure~\ref{code-list}.

\func{Search}$(key)$ traverses the list, starting from 
$head$,
by reading $next$ pointers until reaching the first node $r$
whose key is at least $key$.
This node and the preceding node $p$ 
are returned.
\func{Get}$(key)$ performs \func{Search}($key$), outputs $r$'s count
if $r$'s key matches $key$, and outputs 0, otherwise.

\begin{figure*}[tb]
\hrule
\begin{minipage}{0.365\textwidth}
\prepnewlisting
\def\namewidth{12mm}
\vspace{-1mm}
\scriptsize
\begin{lstlisting}[mathescape=true,style=nonumbers]
 type// \listrec
    //\com Fields from sequential data structure
    //\wcnarrow{$key$}{key (immutable)}
    //\wcnarrow{$count$}{occurrences of $key$ (mutable)}
    //\wcnarrow{$next$}{next pointer (mutable)}
    //\com Fields defined by \llt /\sct\ algorithm
    //\wcnarrow{$\info$}{a pointer to an \op}
    //\wcnarrow{$marked$}{a Boolean value} \vspace{-0.5mm}
          
 //\textbf{shared} \listrec\ $tail := \mbox{new }\listrec ( \infty, 0, \nil )$
 //\textbf{shared} \listrec\ $head := \mbox{new }\listrec ( -\infty, 0, tail )$
 \end{lstlisting}%
 \vspace{-1mm}
\prepnewlisting
\hrule
\begin{lstlisting}[mathescape=true]
    //\func{Get}$(key)$
      $\langle r, - \rangle := \search(key)$
      if $key = r.key$ then
        return $r.count$ 
      else return $0$// \vspace{2mm} \hrule %
\vspace{2mm}   
    //\search$(key)$
      //\com Postcondition: $p$ and $r$ point to \listrec s with $p.key < key \le r.key$.
      $p := head$ // \label{multiset-search-p}
      $r := p.next$
      while $key > r.key$ do // \label{multiset-search-loop}
        $p := r$ // \label{multiset-search-advance-p}
        $r := r.next$ // \label{multiset-search-advance-r}
      return $\langle r, p \rangle$ // %\vspace{2mm} \hrule
\end{lstlisting}
\vspace{-2mm}
\end{minipage}
\begin{minipage}{0.635\textwidth}
\preplisting
\scriptsize
\begin{lstlisting}[mathescape=true]
    //\ins$(key, count)$ \tabto{4cm} \com Precondition: $count > 0$
      while $\true$ do
        $\langle r, p \rangle := \search(key)$ // \label{multiset-insert-search}
        if $key = r.key$ then // \label{multiset-insert-check-key}
          $localr := \llt(r)$//\label{multiset-insert-llt-r}
          if $localr \notin \{\fail, \finalized\}$ then 
            //\mbox{\textbf{if} $\sct$($\langle r \rangle, \langle \rangle, \&r.count,$ $localr.count + count$) \textbf{then return}}\label{multiset-insert-sct1}
        else
          $localp := \llt(p)$//\label{multiset-insert-llt-p}
          if $localp \notin \{\fail, \finalized\}$ and $r = localp.next$ then
            //\mbox{\textbf{if} $\sct$($\langle p \rangle, \langle \rangle, \&p.next,$ $\mbox{new}\listrec(key, count, r)$) \textbf{then return}} \label{multiset-insert-sct2} \vspace{2mm} \hrule \vspace{2mm}
    //\del$(key, count)$ \tabto{4cm} \com Precondition: $count > 0$
      while $\true$ do
        $\langle r, p \rangle := \search(key)$ // \label{multiset-delete-search}
        $localp := \llt(p)$//\label{multiset-delete-llt-p}
        $localr := \llt(r)$//\label{multiset-delete-llt-r}
        if $localp, localr \notin \{\fail, \finalized\}$ and $r = localp.next$ then
          if $key \neq r.key$ or $localr.count < count$ then return $\false$//\label{multiset-delete-return-false}
          else if $localr.count > count$ then // \label{multiset-delete-check-count-less}
            if $\sct(\langle p \rangle, \langle r \rangle, \&p.next, \mbox{new }$ $\listrec( r.key, localr.count - count,$ $localr.next))$ then return $\true$//\label{multiset-delete-sct1}
          else //\com assert: $localr.count = count$  \label{multiset-delete-check-count-else}
            if $\llt(localr.next) \notin \{\fail, \finalized\}$ then//\label{multiset-delete-llt-rnext}
              if $\sct(\langle p,r,localr.next \rangle,$ $\langle r,localr.next\rangle,$ $\&p.next, \mbox{new copy of } localr.next)$//\mbox{\textbf{ then return} $\true$}\label{multiset-delete-sct2}
\end{lstlisting}
\end{minipage}
\hrule
\vspace{-2mm}
	\caption{Pseudocode for a multiset, implemented with a singly linked list. 
	}
	\label{code-list}
\end{figure*}


An invocation $I$ of \func{Insert}($key,count)$
starts by calling \func{Search}$(key)$.  Using the nodes $p$ and $r$
that are returned, it updates the data structure.
It decides whether $key$ is already in the multiset
(by checking whether $r.key = key$) and, if so,
it invokes $\llt(r)$ followed by an \sct\ linked to $r$ to
increase $r.count$ by $count$, as depicted in
Figure~\ref{fig-example-multiset}(b).
Otherwise, $I$ performs the update depicted in
Figure~\ref{fig-example-multiset}(a):
It invokes $\llt(p)$, checks that $p$ still points to $r$,
creates a node, $new$, and invokes an \sct\ linked to $p$
to insert $new$ between $p$ and $r$.
If $p$ no longer points to $r$, the \llt\ 
returns \fail\ or \finalized, or the \sct\ returns \false,
then $I$ restarts.

An invocation $I$ of \func{Delete}($key,count)$ also begins by calling
\func{Search}$(key)$.  It invokes \llt\ on the nodes $p$ and $r$
and then checks that $p$ still points to $r$. If
$r$ does not contain at least $count$ copies of $key$, then $I$ returns \false.
If $r$ contains exactly $count$ copies,
then $I$ performs the update depicted in Figure~\ref{fig-example-multiset}(c)
to remove node $r$ from the list.
To do so, it invokes \llt\ on the node, $rnext$, that $r.next$ points to,
makes a copy $rnext'$ of $rnext$,
and invokes an \sct\ linked to $p, r$ and $rnext$
to change $p.next$ to point to $rnext'$.
This \sct\ also finalizes the nodes $r$ and $rnext$,
which are thereby removed from the data structure.
The node $rnext$ is replaced by a copy to
avoid the ABA problem in $p.next$.
If $r$ contains more than $count$ copies,
then $I$ replaces $r$ by a new copy $r'$ with an appropriately reduced count
using an \sct\ linked to $p$ and $r$,
as shown in Figure~\ref{fig-example-multiset}(d).
This \sct\ finalizes $r$.
If
an
\llt\ returns \fail\ or \finalized, or the \sct\ returns \false\,
then $I$ restarts.


A detailed proof of correctness appears in Appendix~\ref{app-multiset}. 
It begins by showing that this multiset implementation
satisfies some basic properties.
\begin{inv} The following are true at all times.
\begin{compactitem}
\item $head$ always points to a node.
\item If a node has key $\infty$, then its $next$ pointer is $\nil$.
\item If a node's key is not $\infty$, then its $next$ pointer
points to some node with a strictly larger key.
\end{compactitem}
\end{inv}
It follows that the data structure is always a sorted list.

We prove the following lemma by considering
the \sct s performed by update operations shown in Figure~\ref{fig-example-multiset}.
\begin{lem}
\label{R-sets-correct}
The \rec s removed from the data structure by
a linearized invocation of \sct$(V$, $R$, $fld$, $new)$
are exactly the \rec s in $R$.
\end{lem}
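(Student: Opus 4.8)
The strategy is a case analysis over the four kinds of linearized \sct\ that the multiset code can perform, as enumerated in the pseudocode of Figure~\ref{code-list} and illustrated in Figure~\ref{fig-example-multiset}. For each kind, I will identify the sequence $V$, the subsequence $R$, the field $fld$ being modified, and the new value $new$, and then argue that the set of nodes that leave the ``reachable from $head$ by following $next$ pointers'' set across the linearization point is exactly $R$. Throughout I will use the Invariant above (so the list is always a sorted list), Proposition~\ref{searches-work} (so a \rec\ returned non-\fail/\finalized\ by an \llt\ was in the data structure just before the \llt\ was linearized, and \sct s only touch reachable \rec s), and condition~C4 together with the linked-\llt\ setup requirement (so between the linked \llt s and the \sct\ the relevant \rec s did not change).

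\textbf{The four cases.} (i) \textbf{\ins\ with $key = r.key$} (line~\ref{multiset-insert-sct1}): here $R=\langle\rangle$ and $fld=\&r.count$, so only a $count$ field changes; no $next$ pointer is altered, hence the set of reachable nodes is unchanged, and indeed $R$ is empty. (ii) \textbf{\ins\ with $key\neq r.key$} (line~\ref{multiset-insert-sct2}): $R=\langle\rangle$, $fld=\&p.next$, and $new$ is a fresh node pointing to $r$; this \emph{adds} a node but removes none, so again the removed set is empty $=R$. (iii) \textbf{\del\ with $localr.count>count$} (line~\ref{multiset-delete-sct1}): $R=\langle r\rangle$, $fld=\&p.next$, $new$ is a fresh copy $r'$ of $r$ with reduced count and $next$ equal to $localr.next$; I must show $r$ becomes unreachable and nothing else does. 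Since the \sct\ is linked to \llt s on $p$ and $r$ with $r=localp.next$, condition~C4 guarantees $p.next$ still points to $r$ at the linearization point, so before the \upcas\ the path through $head$ passes $\dots\to p\to r\to localr.next\to\dots$, and afterwards it passes $\dots\to p\to r'\to localr.next\to\dots$; the only node dropped is $r$ (and $r'$ is newly added, so it was never ``in'' before), and by the sorted-list invariant no node can be reached by two distinct routes, so nothing downstream of $r$ is affected. (iv) \textbf{\del\ with $localr.count=count$} (line~\ref{multiset-delete-sct2}): $R=\langle r, localr.next\rangle$, $fld=\&p.next$, $new$ is a fresh copy $rnext'$ of $localr.next$. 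Again C4 and the $r=localp.next$ check ensure $p\to r\to localr.next\to (localr.next).next$ just before linearization; afterwards $p\to rnext'\to (localr.next).next$. So $r$ and $localr.next$ are dropped, $rnext'$ is added (and was not ``in'' before, being fresh), and everything from $(localr.next).next$ onward is still reached; the removed set is exactly $\{r, localr.next\} = R$.

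\textbf{Main obstacle.} The delicate point is case~(iv): I must be sure that $localr.next$ is genuinely removed and is not somehow still reachable by another path, and that the node $(localr.next).next$ (the successor of $rnext$) is \emph{not} inadvertently removed. This needs the sorted-list invariant (a node with a given key is unique, so there is a single path from $head$), plus the fact that the \sct\ on line~\ref{multiset-delete-sct2} is linked to \llt s on all three of $p$, $r$, $localr.next$ with the checks $r=localp.next$ (and $r.key=key$, $localr.count=count$): by~C4 these relationships persist to the linearization point, so the pre-linearization list segment is exactly $p\to r\to localr.next$ and $rnext'$ faithfully preserves $(localr.next).next$. A minor subtlety to mention is that ``removed by the \sct'' is defined relative to the linearization point of the \sct\ (the first \upcas), and we must note that any finalized node is thereafter never modified, so once $r$ and $localr.next$ leave the reachable set they never return; but this is immediate from the finalization semantics established earlier. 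I would close by remarking that cases (i)--(ii) are immediate and (iii) is a one-pointer special case of (iv), so essentially all the work is the bookkeeping in (iv).
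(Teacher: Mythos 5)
Your case analysis over the four \sct\ call sites, and the pointer-surgery reasoning within each case, is the same decomposition the paper uses, and the combinatorial content (what each \upcas\ does to the chain of $next$ pointers) is correct. However, there is a circularity you have not addressed. You invoke Proposition~\ref{searches-work} to place the nodes returned by \search\ in the data structure, but the hypothesis of that proposition is precisely the statement of Lemma~\ref{R-sets-correct} (equivalently, Constraint~\ref{con-mark-all-removed-recs}); you cannot assume it outright while proving it. More concretely, your argument in cases (iii) and (iv) begins ``before the \upcas\ the path through $head$ passes $\dots\to p\to r\to\dots$'', which presupposes that $p$ is still reachable from $head$ immediately before the linearization point. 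If $p$ had already been removed, changing $p.next$ would remove nothing, and the conclusion ``$r$ is removed'' would be false even though $R=\langle r\rangle$. Establishing that $p$ is still in the data structure when its $next$ field is changed is exactly the content of Lemma~\ref{lem-if-initiated-rec-not-in-data-structure-then-does-not-change}, whose proof in turn relies on the removed-equals-finalized property having held at all earlier steps.

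The paper breaks this circle by proving the four-case claim, the sorted-list invariant, and the \llt/\sct\ preconditions simultaneously, by induction on the prefix of the execution: at each step one assumes Constraint~\ref{con-mark-all-removed-recs} held for all earlier steps, which licenses Lemma~\ref{lem-if-rec-traversed-then-rec-in-data-structure} and Lemma~\ref{lem-if-initiated-rec-not-in-data-structure-then-does-not-change} and hence the reachability of $p$ at the linearization point. Your proof needs this induction (or an equivalent well-founded argument) to be complete; with it in place, the rest of your bookkeeping goes through essentially as written.
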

This lemma allows us to apply Proposition \ref{searches-work}
to prove that there is a time during each \search\ when the nodes $r$ 
and $p$ that it returns
are both in the list and $p.next = r$.

Each \func{Get} and each
\func{Delete} that returns \false\ is linearized at the linearization
point of the \func{Search} it performs. 
Every other \func{Insert} or \func{Delete}
is linearized at its successful \sct. 
Linearizability of all operations then follows from the next invariant.

\begin{lem}
At every time $t$, the multiset of keys
in the data structure is equal to the multiset of keys that would result
from the atomic execution of the sequence of operations linearized up to time $t$.
\end{lem}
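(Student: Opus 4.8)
The plan is to prove the lemma by induction on the prefix of the execution, using the observation that the multiset of keys represented by the data structure changes only at the linearization points of \ins\ and \del. Concretely, let the \emph{concrete multiset} at a time $t$ be $\bigcup_n \big(n.count \cdot \{n.key\}\big)$, the union taken over all nodes $n$ reachable from $head$ by following $next$ pointers at time $t$; the sentinels $head$ and $tail$ have $count = 0$, so they contribute nothing. Since $key$ is immutable and the only mutable fields are $count$ and $next$, the concrete multiset can change only when the $count$ field of a reachable node changes, or when a \upcas\ on a $next$ field alters the set of nodes reachable from $head$. Every such step is the first successful \upcas\ of some \sct, which is the linearization point of an \ins\ or a \del. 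Hence it suffices to check the base case and, at each such linearization point, that the concrete multiset changes exactly as dictated by the sequential semantics of the operation being linearized (applied to the concrete multiset, which by the inductive hypothesis equals the abstract one). For the base case, initially only $head$ and $tail$ are reachable, both with $count = 0$, so the concrete multiset is empty, matching the abstract multiset before any operation is linearized.

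For the inductive step, a \func{Get} or a \del\ that returns \false\ is linearized at the linearization point of its \search, where no \upcas\ occurs; the concrete multiset is unchanged, and the sequential \func{Get}/\del\ leaves the abstract multiset unchanged as well, so the correspondence is preserved. The interesting case is an operation $O$ whose linearization point is the first \upcas\ of a \sct\ $S$. The crux is a stability claim: at that \upcas, the nodes $p$, $r$ (and, in the node-removal case of \del, the node $rnext$) are still reachable from $head$, still satisfy $p.next = r$ (and $r.next = rnext$), and still carry the field values that $O$ read through its linked \llt s. This follows by chaining the properties of the primitives: $S$ is set up by an \llt$(x)$ linked to it for each $x$ in its sequence $V$, and each such \llt\ returned a snapshot (the multiset code checks the result is not in $\{\fail, \finalized\}$); so by C3 no \sct\ with $x$ in its finalized set $R$ was linearized before that \llt, and by C4 no \sct\ with $x$ in its $V$ is linearized strictly between that \llt\ and $S$. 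Therefore the mutable fields of each such $x$ are unchanged from the linked \llt\ until $S$'s \upcas, so the pointer tests performed by $O$ (such as $r = localp.next$) and the values $localr.count$, $localr.next$ used to build the new node are still accurate at $S$'s \upcas. Moreover $x$ is still in the list at $S$'s \upcas: by Proposition~\ref{searches-work} (third bullet), $x$ was in the data structure at some earlier time during $O$, and by Lemma~\ref{R-sets-correct} the only way to remove $x$ is a linearized \sct\ with $x$ in its $R$ (hence in its $V$), which C4 forbids until $S$ itself.

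Granting the stability claim, the step is completed by a case analysis over the four updates of Figure~\ref{fig-example-multiset}. Lemma~\ref{R-sets-correct} says $S$ removes from the list exactly the nodes in its $R$, and Proposition~\ref{searches-work} (second bullet) says that when the stored value $new$ is a node it enters the list just after $S$. For \ins\ with $r.key = key$: $S$'s \upcas\ sets $r.count$ from $localr.count$ (= its current value) to $localr.count + count$, changing the concrete multiset by $+\,count\cdot\{key\}$. For \ins\ with $r.key \neq key$: the \search\ postcondition $p.key < key \le r.key$ together with $p.next = r$ and the sortedness invariant gives $p.key < key < r.key$ and that $key$ is absent, and the fresh node entering between $p$ and $r$ has key $key$ and count $count$, a change of $+\,count\cdot\{key\}$. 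For \del\ with $localr.count > count$: $r$ (contributing $localr.count$ copies of $key$) leaves and a fresh node with key $key$ and count $localr.count - count$ enters, a net change of $-\,count\cdot\{key\}$. For \del\ with $localr.count = count$: $r$ (contributing $count$ copies of $key$) and $rnext$ leave and a copy of $rnext$ (same key and count) enters, a net change of $-\,count\cdot\{key\}$. Each of these matches the sequential semantics of \ins$(key,count)$ and of \del$(key,count)$; and since in the two \del\ cases $r$ is in the list with $r.key = key$ and $r.count \ge count$, the inductive hypothesis guarantees the abstract multiset already contains at least $count$ copies of $key$, so the sequential \del\ indeed succeeds and removes $count$ copies.

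I expect the main obstacle to be precisely this stability claim --- marshalling C3, C4, Lemma~\ref{R-sets-correct}, and Proposition~\ref{searches-work} into a clean statement that, for every $x$ in $S$'s sequence $V$, nothing relevant about $x$ changed between the \llt$(x)$ linked to $S$ and $S$'s first \upcas, and that $x$ is still in the list there. Once that is in hand, the per-case multiset arithmetic is routine. (Return-value correctness of \func{Get} and of \del, which is also needed to conclude linearizability from the chosen linearization points, follows separately from C1, C2, the \search\ postcondition, Proposition~\ref{searches-work}, and this lemma.)
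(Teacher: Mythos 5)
Your proof is correct and follows essentially the same route as the paper's: induction over the execution, the observation that only the first \upcas\ of a linearized \sct\ can change the represented multiset, a stability argument (the paper phrases it as ``since $S$ is linearized, no mutable field of $r$ changes between the linked \llt\ and $S$,'' i.e.\ C4, combined with the added/removed characterization of Lemma~\ref{R-sets-correct} and the fact that the nodes are still in the list), and then the same four-way case analysis on the \sct\ call sites with the corresponding multiset arithmetic. The only cosmetic difference is that the paper packages the sortedness invariant, the \search\ postconditions, and the added/removed characterization into one preliminary joint induction before proving this lemma, whereas you invoke the already-stated Lemma~\ref{R-sets-correct} and Proposition~\ref{searches-work} directly.
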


To prove the algorithm is non-blocking, suppose there is some infinite execution
in which only finitely many operations terminate.
Then, eventually, no more \ins\ or \del\ operations 
perform a successful \sct, so there is a time after which the pointers that form the linked
list stop changing.
This implies that all calls to the \search\ subroutine must terminate.
Since a \func{Get} operation merely calls \search, all \func{Get} operations must also
terminate.
Thus, there is some collection of \ins\ and \del\ operations that take steps forever
without terminating.
We show that each such operation sets up and performs an \sct\ infinitely often.
For any  \ins\ or \del\ operation, consider any iteration of the loop
that begins after the last successful \sct\ changes the list.
By Lemma \ref{R-sets-correct} and Proposition \ref{searches-work}, the nodes $p$ and $r$
reached by the \search\ in that iteration
were in the data structure at some time during the \search\
and, hence, throughout the \search.
So when the \ins\ or \del\ performs \llt s on $p$ or $r$,
they cannot return \finalized.
Moreover, they must succeed infinitely often by property P2, and this allows the \ins\
or \del\ to perform an \sct\ infinitely often.  By property P4, \sct s will succeed infinitely often, a contradiction.

Thus, we have the following theorem.
\begin{thm}
The algorithms in Figure \ref{code-list} implement a non-blocking, linearizable multiset.
\end{thm}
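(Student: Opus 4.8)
The plan is to split the argument into two independent parts: (i) linearizability, and (ii) the non-blocking progress guarantee, relying throughout on Theorem~1 (so the primitives satisfy C1--C4 and P1--P4) and on Proposition~\ref{searches-work} and Lemma~\ref{R-sets-correct}. First I would establish the structural invariant stated just before Lemma~\ref{R-sets-correct} by induction on steps: each \sct\ performed by \ins\ or \del\ either only updates a $count$ field (leaving $next$ pointers untouched) or splices in a freshly created node whose $next$ pointer is set to a node with a strictly larger key, so sortedness and the ``$\infty\Rightarrow next=\nil$'' property are preserved; one must also check that \sct s that remove a node do so by redirecting the predecessor's $next$ to a node with key $\ge$ the removed node's key, which follows from how $rnext'$ and $r'$ are constructed. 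Since the primitives satisfy the Constraint of Section~\ref{constraints} here (because each \sct\ stores a pointer to a brand-new \listrec, which cannot equal any previously-stored value of that field), Theorem~1 applies and I may freely use C1--C4 and P1--P4.

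For linearizability I would first invoke Lemma~\ref{R-sets-correct} together with Proposition~\ref{searches-work} to show that for every terminating \search\ there is a single instant during its execution at which both returned nodes $p$ and $r$ are simultaneously in the list with $p.next = r$ and $p.key < key \le r.key$; this is the linearization point I assign to \func{Get} and to any \func{Delete} that returns \false. For an \ins\ or \del\ that performs a successful \sct, I linearize it at that \sct's linearization point (its first \upcas, by Theorem~1). I then need the key claim that the \llt/\vlt-checks guard against stale reads: by property C4, no \sct\ touching $p$ or $r$ is linearized between the linked \llt s and the successful \sct, and since the successful \sct\ is set up via those \llt s, the snapshot values ($localp.next = r$, $localr.count$, etc.) are exactly the current values at the \sct's linearization point (C2). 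Combining this with the structural invariant shows each such \sct\ modifies the abstract multiset exactly as the corresponding multiset operation demands. The final piece is the invariant ``at every time $t$ the multiset of keys in the data structure equals the multiset obtained by atomically executing the linearized prefix'': prove it by induction on linearized operations, with the step handled case-by-case according to which update of Figure~\ref{fig-example-multiset} the \sct\ performs, using C1 for the abstract-state correspondence of reads. Linearizability of \func{Get} and failing \func{Delete} then follows because at their linearization point the returned $r$ determines the correct answer.

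For the non-blocking property I would argue by contradiction following the sketch already in the excerpt: assume an infinite execution in which only finitely many operations terminate. Then only finitely many \sct s succeed (a terminating update must succeed its \sct\ to return, and a \func{Get} always terminates once \search\ does), so after some point the $next$ pointers forming the list are frozen and every \search\ terminates; hence all \func{Get}s terminate, so the non-terminating operations are all \ins/\del s taking infinitely many loop iterations. For such an operation, consider an iteration beginning after the last list-changing successful \sct: by Lemma~\ref{R-sets-correct} and Proposition~\ref{searches-work} the nodes $p$ and $r$ it obtains were in the data structure throughout that \search, so by property P1 their \llt s cannot return \finalized, and by P2 (and P3) those \llt s succeed infinitely often, letting the operation set up and invoke an \sct\ infinitely often. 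Then property P4 forces some \sct\ to succeed, contradicting that only finitely many do. I expect the main obstacle to be the careful bookkeeping in the linearizability argument --- specifically, verifying in each of the several update cases that the snapshot read by the linked \llt (e.g.\ $localp.next = r$, or the exact value of $localr.count$) still holds at the \sct's linearization point and that the corresponding abstract transition is the intended multiset operation; the progress part is essentially a direct application of P1--P4 plus Proposition~\ref{searches-work} and is comparatively routine.
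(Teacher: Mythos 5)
Your overall decomposition --- structural invariant by induction, linearize updates at their successful \sct, linearize queries and failing \del s inside the \search\ window, multiset-correspondence invariant by induction, and a contradiction argument for progress via Lemma~\ref{R-sets-correct}, Proposition~\ref{searches-work}, P2 and P4 --- is the same route the paper takes. However, two of your concrete commitments do not survive scrutiny as stated. First, your verification of the ABA constraint (``each \sct\ stores a pointer to a brand-new \listrec'') covers only the \sct s that modify $next$ pointers; the \sct\ at line~\ref{multiset-insert-sct1} writes $localr.count + count$ into the mutable field $r.count$, which is not a pointer to a fresh node. The constraint still holds there, but for a different reason: since $count>0$ and no \sct\ on $r$ is linearized between the linked \llt\ and this \sct, the new value is strictly larger than every value $r.count$ has held so far, so it is neither the initial value nor a previously written one. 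You need this separate case before Theorem~1 can be invoked.

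Second, and more seriously, linearizing \func{Get} (and a \del\ that returns \false\ because $localr.count < count$) at ``the single instant during the \search\ at which $p$ and $r$ are both in the list with $p.next=r$'' is wrong: the value \func{Get} returns is read from $r.count$ \emph{after} the \search\ completes, and a concurrent \ins$(key,c)$ can increment $r.count$ in between without removing $r$. In that execution your linearization point sees the old count while the operation returns the new one, violating the correspondence invariant. The fix (which is what the paper does) is to tie the linearization point to the read of $r.count$ rather than to the \search: choose the last time, after the start of the \search\ and at or before the read returning $v$, at which $r$ is in the data structure and $r.count=v$. Such a time exists because either $r$ is still in the list when the read occurs, or $r$ was removed earlier and then, by Lemma~\ref{R-sets-correct} and the fact that removed (hence finalized) \rec s never change again, $r.count$ already equalled $v$ just before the removal, while $r$ was still in the list. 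With these two repairs your argument goes through and coincides with the paper's proof; the progress half is correct as written (modulo citing C3 rather than P1 for why the \llt s on non-removed nodes cannot return \finalized).
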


\section{Conclusion} \label{sec-conclusion}

The \llt, \sct\ and \vlt\ primitives we introduce in this paper
can also be used to produce practical, non-blocking
implementations of a wide variety of tree-based data structures.
In \cite{paper2}, we describe a general method for obtaining such
implementations and use it to design a provably correct,
non-blocking implementation
of a chromatic tree, which is a relaxed variant of a red-black tree.
Furthermore, we provide an experimental performance analysis,
comparing our Java implementation of the
chromatic search tree to leading concurrent implementations of
dictionaries.
This demonstrates that our primitives enable efficient non-blocking
implementations of more complicated data structures to be built
(and added to standard libraries), together with manageable
proofs of their correctness.

Our implementation of  \llt, \sct\ and \vlt\ relies on the existence
of efficient garbage collection, which is provided in
managed languages such as Java and C\#.
However, in other languages, such as C++,
memory management is an issue. This can be addressed, for example, by the
new, efficient memory reclamation method of Aghazadeh, Golab and Woelfel \cite{AGW13}.


\paragraph{Acknowledgements}
Funding was provided by the Natural Sciences and Engineering Research Council of Canada.

\bibliographystyle{abbrv}
\bibliography{bibliography}

\begin{thebibliography}{10}

\bibitem{AMTT97}
Y.~Afek, M.~Merritt, G.~Taubenfeld, and D.~Touitou.
\newblock Disentangling multi-object operations.
\newblock In {\em Proc.\ 16th ACM Symposium on Principles of Distributed
  Computing}, pages 111--120, 1997.

\bibitem{AGW13}
Z.~Aghazadeh, W.~Golab, and P.~Woelfel.
\newblock Brief announcement: Resettable objects and efficient memory
  reclamation for concurrent algorithms.
\newblock In {\em Proc. 32nd ACM Symposium on Principles of Distributed
  Computing}, pages 322--324, 2013.

\bibitem{AM95}
J.~H. Anderson and M.~Moir.
\newblock Universal constructions for multi-object operations.
\newblock In {\em Proc. 14th ACM Symposium on Principles of Distributed
  Computing}, pages 184--193, 1995.

\bibitem{AH11}
H.~Attiya and E.~Hillel.
\newblock Highly concurrent multi-word synchronization.
\newblock {\em Theoretical Computer Science}, 412(12--14):1243--1262, Mar.
  2011.

\bibitem{Barnes}
G.~Barnes.
\newblock A method for implementing lock-free data structures.
\newblock In {\em Proc. 5th ACM Symposium on Parallel Algorithms and
  Architectures}, pages 261--270, 1993.

\bibitem{paper2}
T.~Brown, F.~Ellen, and E.~Ruppert.
\newblock A general technique for non-blocking trees.
\newblock In {\em Proceedings of the 19th ACM SIGPLAN Symposium on Principles
  and Practice of Parallel Programming}, PPoPP '14, pages 329--342, 2014.
\newblock Full version available from \url{http://tbrown.pro}.

\bibitem{CER10}
P.~Chuong, F.~Ellen, and V.~Ramachandran.
\newblock A universal construction for wait-free transaction friendly data
  structures.
\newblock In {\em Proc.\ 22nd ACM Symposium on Parallelism in Algorithms and
  Architectures}, pages 335--344, 2010.

\bibitem{DPS10}
D.~Dechev, P.~Pirkelbauer, and B.~Stroustrup.
\newblock Understanding and effectively preventing the {ABA} problem in
  descriptor-based lock-free designs.
\newblock In {\em Proc.\ 13th IEEE Symposium on
  Object/Component/Service-Oriented Real-Time Distributed Computing}, pages
  185--192, 2010.

\bibitem{EFRB10:podc}
F.~Ellen, P.~Fatourou, E.~Ruppert, and F.~van Breugel.
\newblock Non-blocking binary search trees.
\newblock In {\em Proc.\ 29th ACM Symposium on Principles of Distributed
  Computing}, pages 131--140, 2010.
\newblock Full version available as Technical Report CSE-2010-04, York
  University.

\bibitem{Fra07}
K.~Fraser and T.~Harris.
\newblock Concurrent programming without locks.
\newblock {\em ACM Trans. Comput. Syst.}, 25(2), May 2007.

\bibitem{HM93}
M.~Herlihy and J.~E.~B. Moss.
\newblock Transactional memory: Architectural support for lock-free data
  structures.
\newblock In {\em Proc. 20th Annual International Symposium on Computer
  Architecture}, pages 289--300, 1993.

\bibitem{IR94}
A.~Israeli and L.~Rappoport.
\newblock Disjoint-access-parallel implementations of strong shared memory
  primitives.
\newblock In {\em Proc.\ 13th ACM Symposium on Principles of Distributed
  Computing}, pages 151--160, 1994.

\bibitem{JP05:opodis}
P.~Jayanti and S.~Petrovic.
\newblock Efficiently implementing a large number of {LL/SC} objects.
\newblock In {\em Proc.\ 9th International Conference on Principles of
  Distributed Systems}, volume 3974 of {\em LNCS}, pages 17--31, 2005.

\bibitem{LMS09}
V.~Luchangco, M.~Moir, and N.~Shavit.
\newblock Nonblocking $k$-compare-single-swap.
\newblock {\em Theory of Computing Systems}, 44(1):39--66, Jan. 2009.

\bibitem{Shafiei}
N.~Shafiei.
\newblock Non-blocking {Patricia} tries with replace operations.
\newblock In {\em Proc.\ 33rd International Conference on Distributed Computing
  Systems}, pages 216--225, 2013.

\bibitem{ST97}
N.~Shavit and D.~Touitou.
\newblock Software transactional memory.
\newblock {\em Distributed Computing}, 10(2):99--116, Feb. 1997.

\bibitem{Sun11}
H.~Sundell.
\newblock Wait-free multi-word compare-and-swap using greedy helping and
  grabbing.
\newblock {\em International Journal of Parallel Programming}, 39(6):694--716,
  Dec. 2011.

\bibitem{TSP92}
J.~Turek, D.~Shasha, and S.~Prakash.
\newblock Locking without blocking: Making lock based concurrent data structure
  algorithms nonblocking.
\newblock In {\em Proc.\ 11th ACM Symposium on Principles of Database Systems},
  pages 212--222, 1992.

\end{thebibliography}

\newpage
\appendix

\section{Complete Proof} \label{sec-proof}

\subsection{Basic properties}

We begin with some elementary properties that are needed to prove basic lemmas about freezing.
In particular, we show that the $\info$ field of a \rec\ cannot experience an ABA problem.

\begin{defn} \label{defn-llt-linked-to-sct}
Let $I'$ be an invocation of \sct$(V, R, fld, new)$ or \vlt$(V)$ by a process $p$, and $r$ be a \rec\ in $V$.
We say an invocation $I$ of \llt$(r)$ is \textbf{linked to} $I'$ if and only if:
\begin{enumerate}
\item $I$ returns a value different from \fail\ or \finalized, and
\label{prop-returns-value-different-from-fail-or-finalized}
\item no invocation of \llt$(r)$, \sct$(V', R', fld', new')$, or \vlt$(V')$, where $V'$ contains $r$, is performed by $p$ between $I$ and $I'$.
\label{prop-no-sct-or-vlt-between-linked-llt-and-sct-or-vlt}
\end{enumerate}
\end{defn}

\begin{obs} \label{obs-op-invariants}
An \op\ $U$ created by an invocation $S$ of \sct\ satisfies the following invariants.
\begin{enumerate}
	\item{$U.fld$ points to a mutable field $f$ of a \rec\ $r'$ in $U.V$.}
			\label{inv-fld}
	\item{The value stored in $U.old$ was read at line~\ref{ll-collect}
			from $f$ by the \llt$(r')$ linked to $S$.}
			\label{inv-old}
	\item{For each $r$ in $U.V$, the pointer indexed by $r$ in $U.\llresults$
			was read from $r.\info$ at line~\ref{ll-read}
			by the \llt$(r)$ linked to $S$.}
			\label{inv-llresults}
	\item{For each $r$ in $U.V$,
			the \llt$(r)$ linked to $S$
			must enter the if-block
			at line~\ref{ll-check-frozen},
			and see $r.\info = r\info$
			at line~\ref{ll-reread}.}
			\label{inv-unfrozen}
\end{enumerate}
\end{obs}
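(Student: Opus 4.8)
The plan is to prove the four invariants by tracing the control flow of \sct\ and \llt\ and applying Definition~\ref{defn-llt-linked-to-sct}. Fix the invocation $S$ of \sct$(V,R,fld,new)$ by process $p$ that created $U$; by line~\ref{sct-create-op} we have $U.V=V$, $U.R=R$, $U.fld=fld$, $U.new=new$, $U.state=\freezing$, $U.\freezingdone=\false$, while $U.\llresults$ is the table created at line~\ref{sct-create-llresults} and $U.old$ is the value read at line~\ref{sct-create-old}. For each $r\in V$, precondition~(\presctlinked) of \sct\ supplies an invocation $I_r$ of \llt$(r)$ by $p$ linked to $S$. The backbone of the argument is the following stability claim: when $S$ executes lines~\ref{sct-create-llresults} and~\ref{sct-create-old}, the entry indexed by $r$ in $p$'s local table of \llt\ results is exactly the triple $\langle r, r\info, \langle m_1,\dots,m_y\rangle\rangle$ written at line~\ref{ll-store} by $I_r$, where $r\info$ is the value $I_r$ read from $r.\info$ at line~\ref{ll-read} and $m_1,\dots,m_y$ are the values $I_r$ read at line~\ref{ll-collect}. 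This holds because the only steps that ever write the slot for $r$ are executions of line~\ref{ll-store} inside some \llt$(r)$ by $p$ (the \sct\ and \vlt\ routines only read the table), and condition~\ref{prop-no-sct-or-vlt-between-linked-llt-and-sct-or-vlt} of Definition~\ref{defn-llt-linked-to-sct} forbids any invocation of \llt$(r)$ by $p$ between $I_r$ and $S$.

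Granting the stability claim, the invariants follow by inspection of the code. For invariant~\ref{inv-unfrozen}: by condition~\ref{prop-returns-value-different-from-fail-or-finalized} of Definition~\ref{defn-llt-linked-to-sct}, $I_r$ returns a value that is neither \fail\ nor \finalized; the only return statement of \llt\ that produces such a value is line~\ref{ll-return}, and reaching it forces $I_r$ to pass the test at line~\ref{ll-check-frozen} (i.e.\ enter the if-block) and to find $r.\info=r\info$ at line~\ref{ll-reread}. Since $I_r$ consequently also executes line~\ref{ll-store}, the stability claim applies. Then invariant~\ref{inv-llresults} holds because the pointer indexed by $r$ in $U.\llresults$ is the $r\info$ component of $I_r$'s triple, read at line~\ref{ll-read}; invariant~\ref{inv-fld} holds because the signature of \sct\ requires $fld$ to be the address of a mutable field $f$ of some $r'\in V$, and $U.fld=fld$, $U.V=V$; and invariant~\ref{inv-old} holds because $U.old$ is copied at line~\ref{sct-create-old} from the slot for $r'$, hence equals the component of $\langle m_1,\dots,m_y\rangle$ in $I_{r'}$'s triple corresponding to $f$, which $I_{r'}$ read from $f$ at line~\ref{ll-collect}.

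I do not anticipate a genuine obstacle here: the observation is essentially a bookkeeping statement about how \sct\ populates its \op\ from process $p$'s local state, and most of the work is a direct reading of control flow. The only point that deserves care is the stability claim, where one must confirm from the pseudocode that no step $p$ takes between $I_r$ and $S$ can overwrite the local-table slot for $r$: ``linked to'' rules out an intervening \llt$(r)$, an \llt$(r'')$ with $r''\neq r$ writes a different slot, and \sct\ and \vlt\ never write the table at all.
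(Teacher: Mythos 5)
Your proposal is correct and follows essentially the same route as the paper, whose own proof is just a terse statement that the fields of $U$ (other than $state$) and the contents of $U.\llresults$ never change after line~\ref{sct-create-op}, so the invariants follow from the precondition of \sct, the pseudocode of \llt, and the definition of a linked \llt. Your ``stability claim'' and the case analysis simply spell out in detail what the paper leaves implicit.
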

\begin{proof}
None of the fields of an \op\ except $state$ change after they are initialized at line~\ref{sct-create-op}.  The contents of the table pointed to by $U.\llresults$ do not change, either.  Therefore, it suffices to show that these invariants hold when $u$ is created.
The proof of these invariants follows immediately from the precondition of \sct, the pseudocode of \llt, and the definition of an \llt\ linked to an \sct.
\end{proof}

The following two definitions ease discussion of the important steps that access shared memory.

\begin{defn} \label{defn-helping}
A process is said to be \textbf{helping} an invocation of \sct\ that created an \op\ $U$
whenever it is executing \help$(ptr)$, where $ptr$ points to $U$.
(For brevity, we sometimes say a process is ``helping $U$'' instead of ``helping the \sct\ that created $U$.'')
\end{defn}

Note that, since \help\ does not call itself directly or indirectly, a process cannot be helping two different invocations of \sct\ at the same time.

\begin{defn} \label{defn-belongs}
We say that a \fcas, \upcas, \fstep, \markstep, \astep, \cstep\ or \fcstep\ $S$ \textbf{belongs} to an \op\ $U$ when $S$ is performed by a process 
helping $U$.
We say that a \fstep, \markstep, \astep, \cstep\ or \fcstep\ is \textbf{successful} if it changes the the field it modifies to a different value.
A \fcas\ or \upcas\ is successful if the \cas\ succeeds.
Any step is \textbf{unsuccessful} if it is not successful.
\end{defn}


\begin{lem} \label{lem-no-steps-belong-to-dummy-op}
No \fcas, \upcas, \fstep, \markstep, \astep, \cstep\ or \fcstep\ belongs to the dummy \op.
\end{lem}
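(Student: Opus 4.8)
### Proof Proposal for Lemma~\ref{lem-no-steps-belong-to-dummy-op}

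The plan is to argue that no process ever executes \help$(ptr)$ with $ptr$ pointing to the dummy \op, from which the claim follows immediately by Definition~\ref{defn-belongs}. There are exactly two places in the pseudocode where \help\ is invoked: line~\ref{sct-create-op} inside \sct, and the two helping calls inside \llt\ (line~\ref{ll-help-fail} and the call embedded in the test at line~\ref{ll-check-finalized}), plus the analogous situation need not arise in \vlt\ since \vlt\ never calls \help. I would treat each call site in turn.

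First, the call at line~\ref{sct-create-op} passes a pointer to a \emph{newly created} \op, which by construction is not the dummy \op\ (the dummy is a fixed, pre-allocated object). Second, the calls inside \llt\ at lines~\ref{ll-help-fail} and~\ref{ll-check-finalized} pass $r\info$ or $r.\info$, i.e.\ a value that was read from some \rec's $\info$ field. So it suffices to show that whenever \help\ is about to be invoked from within \llt, the argument is not the dummy \op. The key observation is that these helping calls are guarded by a test that $r\info.state = \freezing$ (at line~\ref{ll-help-fail}, explicitly; at line~\ref{ll-check-finalized}, the $\help(r\info)$ subexpression is only evaluated when $r\info.state = \freezing$ by short-circuit evaluation). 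But the dummy \op\ has $state = \retry$ at all times (stated in the text and never changed, since the only \op\ whose state changes are ones created by actual \sct s, and the state transitions of Figure~\ref{fig-state-transitions} never reach \freezing), so $r\info$ cannot be the dummy \op\ when either helping call in \llt\ fires.

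The one routine wrinkle to handle carefully is justifying that the dummy \op's $state$ field is genuinely \retry\ throughout the execution. This should follow from: (i) the dummy is initialized with $state = \retry$; (ii) a \cstep\ or \astep\ (the only writes to a $state$ field) is performed only by a process helping an \op\ $U$, writing to $U.state$; and (iii) an inductive claim that no process helps the dummy \op\ — which is, circularly, essentially what we are proving. To break the circularity cleanly, I would phrase the whole argument as an induction on steps of the execution: assuming no step before time $t$ belongs to the dummy \op, the dummy's $state$ is still \retry\ at time $t$, hence any \help\ call initiated at time $t$ (whether from \sct, which passes a fresh \op, or from \llt, which passes an \op\ guarded to have $state = \freezing \ne \retry$) does not pass the dummy, so the step at time $t$ does not belong to the dummy \op\ either. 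I expect this bookkeeping about the dummy \op's state being invariant — and correctly folding it into the induction so the argument is not circular — to be the only mildly delicate point; everything else is a direct reading of the pseudocode and the short-circuit semantics of the guards.
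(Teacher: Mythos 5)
Your overall strategy is the same as the paper's: the paper phrases it as a minimal-counterexample argument (take the \emph{first} invocation of \help\ on a pointer to the dummy \op\ and derive a contradiction at each of the three call sites), which is just the contrapositive of your induction on steps, and your resolution of the apparent circularity is the right one. For the $state$ part you can in fact avoid the induction entirely, as the paper does: \freezing\ is never the new value of any write to a $state$ field anywhere in the code, so the dummy \op's $state$ is \retry\ unconditionally, with no inductive hypothesis needed.

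The one concrete gap is at line~\ref{ll-help-fail}. That line reads $r.\info$ \emph{twice} --- once in the guard $r.\info.state = \freezing$ and once to form the argument of \help\ --- and these are two separate shared-memory reads that may return pointers to different \op s. Your phrase ``passes an \op\ guarded to have $state = \freezing$'' silently identifies the guarded \op\ with the passed \op, so as written your case analysis does not rule out the scenario in which the guard sees a non-dummy \op\ in the \freezing\ state but a concurrent \fcas\ then swings $r.\info$ to point to the dummy before the second read. The paper spends a separate sub-case on exactly this: a pointer to the dummy can be installed in an \info\ field only by a successful \fcas\ whose new value is that pointer, and such a \fcas\ occurs only inside an invocation of \help\ on the dummy, which would have to precede the first such invocation --- contradiction. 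Your induction hypothesis gives you the same ammunition (no step before time $t$ belongs to the dummy, hence no earlier \fcas\ has installed a dummy pointer, hence if $r.\info$ points to the dummy at the second read it has pointed there since initialization, in which case the guard read would already have returned the dummy and failed), but you need to say it; the guard alone does not certify the argument actually passed to \help.
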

\begin{proof}
According to Definition~\ref{defn-belongs},
we must simply show that no process ever helps the dummy \op\ $D$.
To derive a contradiction, assume there is some invocation of \help$(ptr)$ where $ptr$ points to $D$, and let $H$ be the first such invocation.
\help\ is only invoked at lines~\ref{ll-check-finalized}, \ref{ll-help-fail} and \ref{sct-call-help}.

If $H$ occurs at line~\ref{ll-check-finalized}
then $D.state = \freezing$ at some point before $H$ (by line~\ref{ll-check-finalized}).
Since $D.state$ is initially \retry, and \freezing\ is never written into any $state$ field, this is impossible.

Now, suppose $H$ occurs at line~\ref{ll-help-fail}.
If $r.\info$ points to $D$ both times it is read at line~\ref{ll-help-fail}, then we obtain the same contradiction as the previous case.
Otherwise, a successful \fcas\ changes $r.\info$ to point to $D$ in between the two reads of $r.\info$.
By line~\ref{help-fcas}, this \fcas\ must occur in an invocation of \help$(ptr)$.
However, since this \fcas\ must precede the \textit{first} invocation, $H$, of \help$(ptr)$, this case is impossible.

$H$ cannot occur at line~\ref{sct-call-help}, since that line calls \help\ on a newly created \op, not $D$.
\end{proof}

\begin{lem} \label{lem-no-aba-info}
Every update to the $\info$ field of a \rec\ $r$ changes $r.\info$ to a value that has never previously appeared there. Hence, there is no ABA problem on $\info$ fields.
\end{lem}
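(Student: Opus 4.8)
The plan is to prove, by induction on the length of the execution, the stronger statement that no value is ever stored in $r.\info$, subsequently overwritten, and later stored again; the first sentence of the lemma and the ``no ABA'' conclusion both follow immediately (once one checks, as below, that every update to an $\info$ field actually changes its value). The only steps that modify an $\info$ field are the \fcas\ steps at line~\ref{help-fcas}, so it suffices to control these.

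\textbf{Two preliminary facts.}
Fix an \op\ $U$ created by an invocation $S$ of \sct, and a \rec\ $r$ in $U.V$. By lines~\ref{help-rinfo}--\ref{help-fcas} and Observation~\ref{obs-op-invariants}(\ref{inv-llresults}), every \fcas\ belonging to $U$ that targets $r$ performs the identical operation $\cas(r.\info,\, w,\, U)$, where $w$ is the pointer indexed by $r$ in $U.\llresults$, i.e.\ the value read from $r.\info$ at line~\ref{ll-read} by the \llt$(r)$ linked to $S$; in particular such a step writes a pointer to $U$, and if it succeeds then $r.\info = w$ immediately before it. Second, $w$ is not a pointer to $U$: the read at line~\ref{ll-read} by the linked \llt$(r)$ precedes the invocation of $S$, which precedes the creation of $U$ at line~\ref{sct-create-op}, and $U$ is allocated at an address that was never used before (Section~\ref{sec-impl-aba}), so a pointer read before $U$ existed cannot equal a pointer to $U$. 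Consequently every successful \fcas\ genuinely changes the targeted $\info$ field, and a pointer to $U$ can occur in $r.\info$ only after $U$ is created.

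\textbf{Inductive step.}
Suppose a successful \fcas\ $\sigma$ writes a pointer to an \op\ $U$ into $r.\info$, and, for contradiction, that a pointer to $U$ was stored in $r.\info$ at some earlier point. By Lemma~\ref{lem-no-steps-belong-to-dummy-op} no \fcas\ belongs to the dummy \op, so no \fcas\ ever writes a pointer to the dummy; hence $U$ is not the dummy, and since the initial value of $r.\info$ is (a pointer to) the dummy, the earlier appearance of $U$ was produced by an earlier successful \fcas\ $\sigma_1$, which, writing a pointer to $U$, belongs to $U$. By the first preliminary fact, $r.\info = w$ just before $\sigma_1$, $r.\info$ points to $U$ just after $\sigma_1$, and $r.\info = w$ just before $\sigma$, where $w$ is not a pointer to $U$. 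Since only \fcas\ steps change $\info$ fields, some successful \fcas\ strictly between $\sigma_1$ and $\sigma$ changes $r.\info$ away from $U$, and thereafter $w$ is stored in $r.\info$ again before $\sigma$. Thus $w$ is stored in $r.\info$, overwritten at $\sigma_1$, and stored again before $\sigma$ --- a repetition completed within the prefix of the execution ending just before $\sigma$, contradicting the induction hypothesis. Hence no pointer to $U$ had appeared in $r.\info$ before $\sigma$, completing the induction.

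\textbf{Main obstacle.}
The step I expect to require the most care is the justification that $U$'s address is genuinely fresh where it matters --- equivalently, that a pointer to $U$ cannot occur in any $\info$ field before $U$ is created, and that a \fcas\ writing a pointer to $U$ must belong to $U$. This rests on the safe garbage collection assumption together with the observation that a process only ever calls \help\ with a pointer obtained (directly, or by a chain of reads of $\info$ fields and \fcas\ steps) from the \sct\ that created that \op\ at line~\ref{sct-call-help}. Everything else is a routine case analysis of which steps write $\info$ fields and with which old and new values.
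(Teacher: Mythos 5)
Your proof is correct, and it rests on exactly the same key observation as the paper's: when a \fcas\ writes a pointer to an \op\ $U$ into $r.\info$, its old value $w$ is the value stored in $U.\llresults$, which was read from $r.\info$ by the linked \llt\ \emph{before} $U$ was created at a fresh address. Where you diverge is in how you close the argument. The paper uses this observation once, directly: along the sequence $a_1, a_2, \ldots$ of \op s ever pointed to by $r.\info$, each $a_i$ was created strictly before $a_{i+1}$, so by freshness of allocation they are pairwise distinct --- a four-line monotonicity argument with no induction. You instead run an induction on the execution with an infinite-descent flavour: a repeated appearance of $U$ forces the common old value $w$ of $U$'s freezing CASes to have been present, overwritten at $\sigma_1$, and re-stored before $\sigma$, i.e.\ an earlier repetition, contradicting the hypothesis. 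This is valid, and it has the small virtue of making explicit that every successful \fcas\ genuinely changes the field ($w$ is not a pointer to $U$), which the paper leaves implicit; but it is doing more work than necessary, and the paper's ordering-by-creation-time argument is the cleaner takeaway. One point to tighten: your invariant says no value is ``stored, overwritten, and stored again,'' but just before $\sigma_1$ the value $w$ may be the \emph{initial} dummy pointer rather than one stored by a step, in which case the contradiction comes not from the induction hypothesis but from your own preliminary observation that no \fcas\ ever writes a pointer to the dummy \op\ (Lemma~\ref{lem-no-steps-belong-to-dummy-op}); phrasing the invariant as ``no value appears, is replaced, and appears again'' covers both cases uniformly.
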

\begin{proof}
We first note that $r.\info$ can only be changed by a \fcas\ at line~\ref{help-fcas}.
When a \fcas\ attempts to change an $\info$ field $f$ from $x$ to $y$, $y.\llresults$ contains $x$ (by 
line~\ref{help-rinfo}).  Then, since $y.\llresults$ does not change after the \op\ pointed to by $y$ is 
created, the \op\ pointed to by $x$ was created before the \op\ pointed to by $y$.
So, letting $a_1, a_2, ...$ be the sequence of \op s ever pointed to by $r.\info$, we know 
that $a_1, a_2, ...$ were created (at line~\ref{sct-create-op}) in that order.  Since we have assumed memory allocations always receive new addresses, $a_1, a_2, ...$ are distinct.
\end{proof}

\begin{defn}
A \fcas\ \textbf{on} a \rec\ $r$ is one that operates on $r.\info$.
A \markstep\ \textbf{on} a \rec\ $r$ is one that writes to $r.marked$.
\end{defn}

\begin{lem} \label{lem-only-first-fcas-can-succeed}
For each \rec\ $r$ in the $V$ sequence of an \op\ $U$, only the first \fcas\ belonging to $U$ on $r$ can succeed.
\end{lem}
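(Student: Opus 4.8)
The plan is to argue that once the first \fcas\ belonging to $U$ on $r$ succeeds, the value stored in $r.\info$ can never again equal the ``old'' value that subsequent \fcas\ steps belonging to $U$ on $r$ are testing against, so those later \cas\ steps must fail. First I would fix $r$ in $U.V$ and let $\rx$ be the pointer indexed by $r$ in $U.\llresults$; by Observation~\ref{obs-op-invariants}(\ref{inv-llresults}), $\rx$ was read from $r.\info$ at line~\ref{ll-read} by the \llt$(r)$ linked to the invocation $S$ of \sct\ that created $U$. Inspecting the code of \help, every \fcas\ belonging to $U$ on $r$ (line~\ref{help-fcas}) uses exactly this value $\rx$ as its expected old value and a pointer to $U$ as its new value. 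So a later \fcas\ belonging to $U$ on $r$ can succeed only if $r.\info$ has returned to the value $\rx$ at the moment that later \cas\ executes. But by Lemma~\ref{lem-no-aba-info}, every update to $r.\info$ writes a value that has never appeared there before; hence $r.\info$ cannot return to $\rx$ after it has been changed away from $\rx$.

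The remaining point is to confirm that the first successful \fcas\ belonging to $U$ on $r$ genuinely does change $r.\info$ to a new value (namely a pointer to $U$) distinct from $\rx$. This follows because the new value is a pointer to the \op\ $U$, which is newly created at line~\ref{sct-create-op}, whereas $\rx$ points to some \op\ created strictly earlier (the reasoning inside the proof of Lemma~\ref{lem-no-aba-info} shows all \op s ever pointed to by $r.\info$ are created in a strict order, and $\rx$ precedes $U$ among them since $U.\llresults$ contains $\rx$). Therefore the first successful \fcas\ belonging to $U$ on $r$ changes $r.\info$ from $\rx$ to a pointer to $U$, and any \fcas\ belonging to $U$ on $r$ that executes afterward finds $r.\info \neq \rx$ (by no-ABA on $\info$) and fails.

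I do not anticipate a serious obstacle here: the lemma is essentially a corollary of Lemma~\ref{lem-no-aba-info} together with the syntactic observation that all \fcas\ steps belonging to $U$ on $r$ share the same expected-old and new values. The only mild care needed is to handle the degenerate possibility that no \fcas\ belonging to $U$ on $r$ ever succeeds, in which case the statement is vacuously true, and to make explicit that distinct helpers of $U$ all read the same $\rx$ out of $U.\llresults$ (which is immutable after $U$ is created, as noted in the proof of Observation~\ref{obs-op-invariants}).
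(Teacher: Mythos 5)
Your overall approach is the same as the paper's: every \fcas\ belonging to $U$ on $r$ uses as its expected old value the same pointer $r\info$ stored in $U.\llresults$, which was read from $r.\info$ by the \llt$(r)$ linked to the \sct\ that created $U$, and Lemma~\ref{lem-no-aba-info} guarantees that once $r.\info$ has moved away from $r\info$ it never returns. However, your case analysis does not quite prove the stated lemma. You show that no \fcas\ occurring after the \emph{first successful} \fcas\ can succeed, and you dispose of the case where no \fcas\ succeeds as vacuous. That establishes that at most one \fcas\ belonging to $U$ on $r$ succeeds, but not that the one that succeeds is the temporally \emph{first} \fcas\ belonging to $U$ on $r$ --- you have not ruled out an execution in which the first \fcas\ fails and some later one succeeds. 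The stronger statement is what the lemma asserts and what downstream proofs rely on (e.g., Lemma~\ref{lem-if-all-succ-fcas-then-no-fcstep-until-fass} uses it to conclude that a successful \fcas\ must \emph{precede} any unsuccessful \fcas\ belonging to the same \op\ on the same \rec).

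The missing case is closed with the same ingredients you already have: consider the first \fcas\ belonging to $U$ on $r$, whether or not it succeeds. Since $r.\info$ contained $r\info$ during the linked \llt$(r)$, which terminates before any helper of $U$ begins, $r.\info$ held $r\info$ at some point before this first \fcas. If that first \fcas\ succeeds, it changes $r.\info$ away from $r\info$; if it fails, then $r.\info$ must already have been changed away from $r\info$ before it executed. In either case Lemma~\ref{lem-no-aba-info} implies that $r.\info$ never again contains $r\info$, so every subsequent \fcas\ belonging to $U$ on $r$ fails. (One further small point: you should first note, via Lemma~\ref{lem-no-steps-belong-to-dummy-op}, that $U$ is not the dummy \op, so that the creating invocation of \sct\ and its linked \llt$(r)$ actually exist before you invoke Observation~\ref{obs-op-invariants}.)
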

\begin{proof}
Let $ptr$ be a pointer to $U$, and {\it fcas} be the first \fcas\ belonging to $U$ on $r$.  Let $r\info$ be the old value used by {\it fcas}.  By Definition~\ref{defn-belongs}, the new value used by {\it fcas} is $ptr$. 
Since {\it fcas} belongs to $U$, Lemma~\ref{lem-no-steps-belong-to-dummy-op} implies that $U$ is not the dummy \op\ initially pointed to by each \info\ field.
Hence, $U$ was created by an invocation of \sct, so Observation~\ref{obs-op-invariants}.\ref{inv-llresults} implies that $r.\info$ contained $r\info$ during the $\llt(r)$ linked to $S$.
Since the $\llt(r)$ linked to $S$ terminates before the start of $S$, and $S$ creates $U$, the $\llt(r)$ linked to $S$ must terminate before any invocation of \help$(ptr)$ begins.
From the code of \help, {\it fcas} occurs in an invocation of \help$(ptr)$.
Thus, $r.\info$ contains $r\info$ at some point before {\it fcas}.
If {\it fcas} is successful, then $r.\info$ contains $r\info$ just before {\it fcas}, and $ptr$ just after.
Otherwise, $r.\info$ contains $r\info$ at some point before {\it fcas}, but contains some other value just before {\it fcas}.
In either case, Lemma~\ref{lem-no-aba-info} implies that $r.\info$ can never again contain $r\info$ after {\it fcas}.
Finally, since each \fcas\ belonging to $U$ on $r$ uses $r\info$ as its old value (by line~\ref{help-rinfo} and the fact that table $U.\llresults$ does not change after it is first created), there can be no successful \fcas\ belonging to $U$ on $r$ after {\it fcas}.
\end{proof}

\subsection{Changes to the \info\ field of a \rec\ and the \textit{state} field of an \op}

We prove that freezing of nodes proceeds an orderly way.
The first lemma shows that a process cannot freeze a node that is frozen 
by a different operation that is still in progress.

\begin{lem} \label{lem-no-info-change-while-freezing}
The $\info$ field of a \rec\ $r$ cannot be changed while $r.\info$ points to an \op\ with $state$ \freezing.
\end{lem}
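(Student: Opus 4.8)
The plan is a proof by contradiction that exploits the fact that, before a process can freeze a \rec\ $r$ for a freshly created \op, it must have run an \llt$(r)$ that observed $r$ to be unfrozen. Suppose some successful \fcas\ (line~\ref{help-fcas}) changes $r.\info$ at a time $t$, and that immediately before $t$ we have $r.\info = U$ for an \op\ $U$ with $U.state = \freezing$. Since $r.\info$ is only ever modified by a \fcas\ at line~\ref{help-fcas} (as observed in the proof of Lemma~\ref{lem-no-aba-info}), this covers every possible change to $r.\info$. Let $U'$ be the \op\ written by this \fcas. By Definition~\ref{defn-belongs} the \fcas\ belongs to $U'$, so it is executed inside some invocation \help$(ptr')$ with $ptr'$ pointing to $U'$, and since the \fcas\ succeeds while $r.\info = U$, the old value it uses equals $U$.

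Next I would track down the \llt\ responsible for that old value. By Lemma~\ref{lem-no-steps-belong-to-dummy-op}, $U'$ is not the dummy \op, so it was created by some invocation $S'$ of \sct. By line~\ref{help-rinfo}, the old value $U$ is the pointer indexed by $r$ in $U'.\llresults$, which by Observation~\ref{obs-op-invariants}.\ref{inv-llresults} was read from $r.\info$ at line~\ref{ll-read} by the invocation $I$ of \llt$(r)$ linked to $S'$. By Observation~\ref{obs-op-invariants}.\ref{inv-unfrozen}, $I$ enters the if-block at line~\ref{ll-check-frozen}, so the test there evaluates to \true; hence the value $U.state$ that $I$ reads at line~\ref{ll-read-state} (after reading $r.\info = U$ at line~\ref{ll-read}) is \retry\ or \done, and in particular is not \freezing.

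Finally I would close the argument with timing and monotonicity. Since $I$ is linked to $S'$, it returns before $S'$ is invoked; $S'$ creates $U'$ at line~\ref{sct-create-op}; and the \fcas\ at time $t$ lies inside \help$(ptr')$, which cannot begin before $U'$ exists. Hence $I$'s execution of line~\ref{ll-read-state} strictly precedes $t$. Because no line of the algorithm ever writes \freezing\ into a $state$ field --- the only writes to $state$ fields are the \astep\ at line~\ref{help-astep} (which writes \retry) and the \cstep\ at line~\ref{help-cstep} (which writes \done) --- once $U.state$ is seen to be \retry\ or \done\ at line~\ref{ll-read-state} of $I$ it remains different from \freezing\ thereafter, and in particular just before $t$. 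This contradicts the assumption that $U.state = \freezing$ just before $t$. I expect the only delicate point to be assembling the three facts that do the real work --- Observation~\ref{obs-op-invariants}.\ref{inv-unfrozen} (the linked \llt\ passed the unfrozen test), the ordering constraint that the \fcas\ cannot precede the creation of its own \op, and the trivial-but-essential observation that $state$ never reverts to \freezing\ --- since each individually is easy but the conclusion depends on their interaction.
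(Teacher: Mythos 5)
Your proof is correct and follows essentially the same route as the paper's: both trace the old value of the successful \fcas\ back through $\llresults$ to the linked \llt$(r)$ (via Observation~\ref{obs-op-invariants}.\ref{inv-llresults}), use Observation~\ref{obs-op-invariants}.\ref{inv-unfrozen} to conclude that this \llt\ saw $U.state \in \{\retry, \done\}$ at line~\ref{ll-read-state}, and finish with the fact that \freezing\ is never written into a $state$ field. Your version is a bit more explicit about the timing chain (linked \llt\ precedes $S'$ precedes the \fcas) and about ruling out the dummy \op, but these are refinements, not a different argument.
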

\begin{proof}
Suppose an $\info$ field of a \rec\ $r$ is changed while it points to an \op\ $U$ with $U.state =$ \freezing.
This change can only be performed by a successful \fcas\ {\it fcas} whose old value is a pointer to $U$ and whose new value is a pointer to $W$.
Let $S$ be the invocation of \sct\ that created $W$.
From line~\ref{help-rinfo}, we can see that the old value for {\it fcas} (a pointer to $U$) is stored in the table $W.\llresults$ and,
by Observation~\ref{obs-op-invariants}.\ref{inv-llresults},
this value was read from $r.\info$ (at line~\ref{ll-read}) by the \llt$(r)$ linked to $S$.
Hence, the \llt$(r)$ linked to $S$ reads $U.state$ at line~\ref{ll-read-state}.
By Observation~\ref{obs-op-invariants}.\ref{inv-unfrozen}, the \llt$(r)$ linked to $S$ passes the test at line~\ref{ll-check-frozen} and enters the if-block.
This implies that, when $U.state$ was read at line~\ref{ll-read-state}, either it was \done\ and $r$ was unmarked, or it was \retry.
Thus, $U.state$ must be \retry\ or \done\ prior to {\it fcas}, and the claim follows from the fact that \freezing\ is never written to $U.state$.
\end{proof}

It follows easily from Lemma \ref{lem-no-info-change-while-freezing} 
that if a node is frozen for an operation, it remains so until
the operation is committed or aborted.

\begin{lem} \label{lem-if-succ-fcas-then-point-u-until-bcas-or-uass}
If there is a successful \fcas\ {\it fcas} belonging to an \op\ $U$ on a \rec\ $r$, and some time $t$ after the first \fcas\ belonging to $U$ on $r$ and before the first \astep\ or \cstep\ belonging to $U$, then $r.\info$ points to $U$ at $t$.
\end{lem}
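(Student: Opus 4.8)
The plan is to fix the successful \fcas\ {\it fcas} belonging to $U$ on $r$, and then argue by contradiction that $r.\info$ cannot stop pointing to $U$ at any moment in the stated interval. Let $t_0$ be the time of the first \fcas\ belonging to $U$ on $r$, and let $t_1$ be the time of the first \astep\ or \cstep\ belonging to $U$ (or $+\infty$ if there is none). I want to show: for all $t \in [t_0, t_1)$, $r.\info$ points to $U$ at $t$.

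First I would establish the base of the argument: among all \fcas\ steps belonging to $U$ on $r$, Lemma~\ref{lem-only-first-fcas-can-succeed} says only the first can succeed. Since by hypothesis {\it some} \fcas\ belonging to $U$ on $r$ succeeds, it must be the first one, namely the one at $t_0$; so $r.\info$ points to $U$ immediately after $t_0$. (A small subtlety: I should also note {\it fcas} and the \fcas\ at $t_0$ are the same step, or at least that the one at $t_0$ is the successful one.) Now suppose, for contradiction, that $r.\info$ is changed away from $U$ at some time $t' \in (t_0, t_1)$; take the earliest such $t'$. By Lemma~\ref{lem-no-aba-info} the \info\ field never repeats a value, so once it leaves $U$ at $t'$ it never returns to $U$; hence just before $t'$ it still points to $U$, and the change at $t'$ is a successful \fcas\ whose old value is a pointer to $U$. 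The key point is then to invoke Lemma~\ref{lem-no-info-change-while-freezing}: $r.\info$ cannot be changed while it points to an \op\ whose $state$ is \freezing. So at $t'$ we must have $U.state \neq \freezing$, i.e.\ $U.state$ has already become \done\ or \retry\ by $t'$.

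The main obstacle is closing the loop from ``$U.state \neq \freezing$ at $t'$'' back to a contradiction with ``$t' < t_1$''. I would argue that $U.state$ is initially \freezing\ (by Figure~\ref{fig-state-transitions} / the initialization at line~\ref{sct-create-op}), and that the only steps that write \done\ or \retry\ into $U.state$ are a \cstep\ or an \astep\ belonging to $U$ (respectively lines~\ref{help-cstep} and~\ref{help-astep}, each executed only by a process helping $U$, hence belonging to $U$ by Definition~\ref{defn-belongs}). Therefore, for $U.state$ to differ from \freezing\ by time $t'$, some \astep\ or \cstep\ belonging to $U$ must have occurred at or before $t'$; but $t_1$ is the time of the {\it first} such step, so $t_1 \le t' < t_1$, a contradiction. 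This shows no change away from $U$ can occur in $(t_0, t_1)$, and combined with the base case ($r.\info$ points to $U$ just after $t_0$) it follows that $r.\info$ points to $U$ throughout $[t_0, t_1)$, in particular at the given time $t$. The one place I would be careful is handling the boundary: the statement says ``after the first \fcas'' and ``before the first \astep\ or \cstep,'' so I should make sure the argument covers $t = t_0$ itself correctly (right after {\it fcas} succeeds) and does not need to say anything at $t = t_1$.
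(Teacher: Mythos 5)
Your proof is correct and follows essentially the same route as the paper's: both rest on Lemma~\ref{lem-only-first-fcas-can-succeed} to identify {\it fcas} as the first \fcas\ belonging to $U$ on $r$, on the observation that $U.state$ can only leave \freezing\ via an \astep\ or \cstep\ belonging to $U$, and on Lemma~\ref{lem-no-info-change-while-freezing}; the paper just argues directly ($U.state = \freezing$ throughout the interval, hence $r.\info$ never changes) rather than by contradiction, which also lets it skip your extra appeal to Lemma~\ref{lem-no-aba-info}. The one detail to add is a one-line appeal to Lemma~\ref{lem-no-steps-belong-to-dummy-op} to rule out $U$ being the dummy \op\ (whose $state$ is \retry\ from the outset), so that your claim that ``$U.state$ is initially \freezing'' is justified --- the paper does this explicitly at the start of its proof.
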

\begin{proof}
Since {\it fcas} belongs to $U$, by Lemma~\ref{lem-no-steps-belong-to-dummy-op}, $U$ cannot be the dummy \op, so $U$ is created at line~\ref{sct-create-op}, where $U.state$ is initially set to \freezing.
Let $t'$ be when the first \astep\ or \cstep\ belonging to $U$ on $r$ occurs.
By Lemma~\ref{lem-only-first-fcas-can-succeed}, {\it fcas} must be the first \fcas\ belonging to $U$ on $r$.
Thus, $t$ is after {\it fcas} occurs, and before $t'$.
Immediately following {\it fcas}, $r.\info$ points to $U$.
From the code, $U.state$ can only be changed by an \astep\ or \cstep\ belonging to $U$.
Therefore, $U.state =$ \freezing\ at all times after {\it fcas} and before $t'$.
By Lemma~\ref{lem-no-info-change-while-freezing}, $r.\info$ cannot change after {\it fcas}, and before $t'$.
Hence, $r.\info$ points to $U$ at $t$.
\end{proof}


A \fstep\ occurs only after all \rec s are successfully frozen.

\begin{lem} \label{lem-if-fass-then-all-succ-fcas}
If a \fstep\ belongs to an \op\ $U$ then, for each $r$ in $U.V$, there is a successful \fcas\ belonging to $U$ on $r$ that occurs before the first \fstep\ belonging to $U$.
\end{lem}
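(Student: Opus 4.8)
The plan is to trace backwards through the \help\ routine from the \fstep\ step. Fix an \op\ $U$ and suppose a \fstep\ belongs to $U$; let \emph{fstep} be the first such step, performed by some process $p$ inside an invocation of \help$(ptr)$ with $ptr$ pointing to $U$. By Lemma~\ref{lem-no-steps-belong-to-dummy-op}, $U$ is not the dummy \op, so it was created at line~\ref{sct-create-op} and $ptr$ points to a genuine \op\ with the invariants of Observation~\ref{obs-op-invariants}. First I would observe that the only place \fstep\ is executed is line~\ref{help-fstep}, which is reached only after the \textbf{for} loop over $scxPtr.V$ at lines~\ref{help-fcas}--\ref{help-return-false} completes without hitting a \textbf{return}. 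So for each $r$ in $U.V$, during $p$'s execution of that loop iteration for $r$, $p$ either (i) performed a successful \fcas\ on $r$ at line~\ref{help-fcas} --- in which case we are done for that $r$, since that \fcas\ belongs to $U$ and, by Lemma~\ref{lem-only-first-fcas-can-succeed}, it is in fact the first \fcas\ belonging to $U$ on $r$, hence occurs before \emph{fstep} --- or (ii) performed an unsuccessful \fcas\ but then found $r.\info = scxPtr$ at line~\ref{help-check-frozen} (otherwise $p$ would have returned at line~\ref{help-return-true-loop} or line~\ref{help-return-false}, never reaching line~\ref{help-fstep}).

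The crux is case (ii): $p$ sees $r.\info$ already pointing to $U$, meaning some \emph{other} helper of $U$ froze $r$ earlier. I would argue that this earlier freeze must have been caused by a successful \fcas\ belonging to $U$ on $r$. The reasoning: $r.\info$ points to $U$ at the moment $p$ reads it at line~\ref{help-check-frozen}, which is before \emph{fstep}; by Lemma~\ref{lem-no-aba-info} there is no ABA on \info\ fields, so the (unique) update that first set $r.\info$ to point to $U$ is a \fcas\ (line~\ref{help-fcas} is the only writer of \info\ fields), it is successful, and by line~\ref{help-rinfo} together with Observation~\ref{obs-op-invariants}.\ref{inv-llresults} its old value is the \info\ value recorded in $U.\llresults$ for $r$ --- so this \fcas\ is performed by a process whose $scxPtr$ is $ptr$, i.e.\ it belongs to $U$. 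Hence it is a successful \fcas\ belonging to $U$ on $r$, and since $r.\info$ already equals $U$ by the time $p$ reaches line~\ref{help-check-frozen} (which precedes \emph{fstep}), this successful \fcas\ occurs before \emph{fstep}.

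Combining the two cases, for every $r$ in $U.V$ there is a successful \fcas\ belonging to $U$ on $r$ occurring before the first \fstep\ belonging to $U$, which is what we wanted. The main obstacle I anticipate is pinning down, in case (ii), that the helper who froze $r$ was genuinely helping $U$ and not some other \op\ that happens to be pointed to by $r.\info$ momentarily --- but this is exactly what the no-ABA lemma (Lemma~\ref{lem-no-aba-info}) plus the $\llresults$-bookkeeping in Observation~\ref{obs-op-invariants}.\ref{inv-llresults} are designed to rule out, so the argument should go through cleanly once those are invoked. A minor care point is to make sure the \textbf{for} loop is ``enumerated in order'' so the loop really does visit every $r$ in $U.V$ before line~\ref{help-fstep}; this is immediate from the pseudocode.
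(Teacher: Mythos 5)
Your proposal is correct and follows essentially the same route as the paper's proof: trace back from the first \fstep\ through the loop, and in the case where the helper's own \fcas\ failed but it saw $r.\info = scxPtr$, conclude that since $U$ is not the dummy \op\ some successful \fcas\ (necessarily belonging to $U$, as only line~\ref{help-fcas} with $scxPtr = ptr$ can install a pointer to $U$) must already have changed $r.\info$, hence before the \fstep. The extra appeals to Lemma~\ref{lem-only-first-fcas-can-succeed}, Lemma~\ref{lem-no-aba-info} and Observation~\ref{obs-op-invariants} are harmless but not needed for this step.
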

\begin{proof}
Suppose a \fstep\ belongs to $U$.  Let \textit{fstep} be the first such \fstep\ and let $H$ be the invocation of \help\ that performs \textit{fstep}.
Since \textit{fstep} occurs at line~\ref{help-fstep}, for each \rec\ $r$ in $U.V$, $H$ must perform a successful \fcas\ belonging to $U$ on $r$ or see $otherPtr = scxPtr$ in the preceding loop.
If $H$ performs a successful \fcas\ belonging to $U$ on $r$, then we are done.
Otherwise, $r.\info = scxPtr$ at some point before \textit{fstep}.
Since \textit{fstep} belongs to $U$, $scxPtr$ points to $U$.
From Lemma~\ref{lem-no-steps-belong-to-dummy-op}, $U$ is not the dummy \op\ to which $r.\info$ initially points.
Hence, some process must have changed $r.\info$ to point to $U$, which can only be done by a successful \fcas.
\end{proof}

Finally, we show that \rec s are frozen in the correct order.  (This will be useful later on to show that no livelock can occur.)


\begin{lem} \label{lem-fcas-on-ri-only-after-succ-fcas-on-previous}
Let $U$ be an \op, and $\langle r_1, r_2, ..., r_l\rangle$ be the sequence of \rec s in $U.V$.
For $i \geq 2$, a \fcas\ belonging to $U$ on \rec\ $r_i$
can occur only after a successful \fcas\ belonging to $U$ on $r_{i-1}$.
\end{lem}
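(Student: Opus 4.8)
The plan is to argue directly from the structure of the \help\ routine, using the fact that the \texttt{for} loop over $scxPtr.V$ processes \rec s \emph{in order}, together with the two bailout paths (lines~\ref{help-return-true-loop} and~\ref{help-return-false}) that are the only ways to exit the loop early. First I would fix an arbitrary invocation $H$ of \help$(ptr)$, where $ptr$ points to $U$, and trace what $H$ does. Since a \fcas\ belonging to $U$ on $r_i$ is performed only at line~\ref{help-fcas} during iteration $i$ of the loop in $H$, for $H$ to reach that iteration it must have executed iteration $i-1$ without returning. In iteration $i-1$, $H$ performs a \fcas\ belonging to $U$ on $r_{i-1}$ (line~\ref{help-fcas}); the key observation is that if this \fcas\ fails and $r_{i-1}.\info \neq ptr$, then $H$ takes the branch at line~\ref{help-check-frozen} and returns at line~\ref{help-return-true-loop} or~\ref{help-return-false}, so it never reaches iteration $i$. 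Hence, whenever $H$ performs a \fcas\ on $r_i$, it must earlier have either performed a \emph{successful} \fcas\ belonging to $U$ on $r_{i-1}$, or performed an unsuccessful one and then observed $r_{i-1}.\info = ptr$ at line~\ref{help-check-frozen}.

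In the second sub-case I would use Lemma~\ref{lem-no-steps-belong-to-dummy-op} to conclude $U$ is not the dummy \op, so $r_{i-1}.\info$ did not point to $U$ initially; therefore, for $H$ to have seen $r_{i-1}.\info = ptr$, \emph{some} process must previously have changed $r_{i-1}.\info$ to point to $U$, and by line~\ref{help-fcas} the only step that changes an \info\ field is a \fcas, which (by Definition~\ref{defn-belongs}, since its new value is $ptr$) belongs to $U$. That \fcas\ is successful and is on $r_{i-1}$, and it occurs before $H$'s read at line~\ref{help-check-frozen} in iteration $i-1$, hence before $H$'s \fcas\ on $r_i$ in iteration $i$. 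So in both sub-cases there is a successful \fcas\ belonging to $U$ on $r_{i-1}$ occurring before the given \fcas\ on $r_i$, which is exactly the claim.

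I expect the only mildly delicate point to be handling the sub-case cleanly: one must be careful that the ``some process changed $r_{i-1}.\info$ to $ptr$'' step really is a \fcas\ \emph{belonging to} $U$ and \emph{on} $r_{i-1}$, which follows immediately from the definitions of ``belongs'' and ``on'' together with the fact that line~\ref{help-fcas} is the unique writer of \info\ fields and uses $scxPtr$ (a pointer to the \op\ being helped) as its new value. Everything else is a straightforward reading of the control flow of \help, so there is no real obstacle; the proof is short.
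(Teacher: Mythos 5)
Your proposal is correct and follows essentially the same argument as the paper: trace the control flow of \help\ to see that iteration $i$ is reachable only after iteration $i-1$ either performs a successful \fcas\ on $r_{i-1}$ or observes $r_{i-1}.\info = scxPtr$, and in the latter case invoke Lemma~\ref{lem-no-steps-belong-to-dummy-op} to rule out the dummy \op\ and conclude that some earlier successful \fcas\ belonging to $U$ on $r_{i-1}$ must have set the \info\ field. Your treatment of the second sub-case is slightly more explicit than the paper's, but the substance is identical.
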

\begin{proof}
Let {\it fcas} be a \fcas\ belonging to $U$ on $r_i$, for some $i \geq 2$.
Let $H$ be the invocation of \help\ which performs {\it fcas}.
The loop in $H$ iterates over the sequence
$r_1, r_2, ..., r_l$, so 
$H$ performs {\it fcas} in iteration $i$ of the loop.
Since $H$ reaches iteration $i$, $H$ must perform iteration $i-1$.
Thus, by the code of \help, $H$ must perform a \fcas\ {\it fcas$'$} belonging to $U$ on $r_{i-1}$ at line~\ref{help-fcas} before {\it fcas}.
If {\it fcas$'$} succeeds, then the claim is proved.
Otherwise, $H$
will check whether $r_{i-1}.\info$ is equal to $scxPtr$ at line~\ref{help-check-frozen}.
Since \help\ does not return in iteration $i-1$, $r_{i-1}.\info = scxPtr$.
This can only be true if $U$ is the dummy \op, or there has already been a successful \fcas \ belonging to $U$ on $r_{i-1}$.
Since {\it fcas} belongs to $U$, Lemma~\ref{lem-no-steps-belong-to-dummy-op} implies that $U$ cannot not be the dummy \op.
\end{proof}

\subsection{Proving $state$ and $info$ fields change as described in Fig.~\ref{fig-state-transitions-simple} and Fig.~\ref{fig-state-transitions}}

\begin{figure}[tb]
\centering
\mbox{
	\hspace{-1.5cm}
	\includegraphics[scale=0.5]{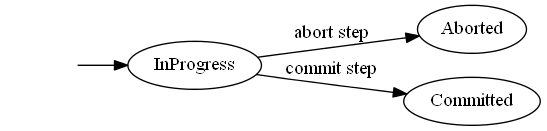}
}
\caption{Possible transitions for the $state$ field of an \op\ (initially \freezing).}
\label{fig-state-transitions-simple}
\end{figure}

In this section, we first prove that an \op's $state$ transitions respect Figure~\ref{fig-state-transitions-simple}, then we expand upon the information in Figure~\ref{fig-state-transitions-simple} by showing that \fstep s, \astep s, \cstep s and successful \fcas s proceed as illustrated in Figure~\ref{fig-state-transitions}.

We now prove an \op\ $U$'s $state$ transitions respect Figure~\ref{fig-state-transitions-simple} by noting that a $U.state$ is never changed to \freezing, and proving that it does not change from \retry\ to \done\ or vice versa.
Since $U.state$ can only be changed by a \cstep\ or \astep\ belonging to $U$, and each \cstep\ is preceded by a \fstep\ (by the code of \help), it suffices to show that there cannot be both a \fstep\ and an \astep\ belonging to an \op.


\begin{lem} \label{lem-if-all-succ-fcas-then-no-fcstep-until-fass}
Let $U$ be an \op.
Suppose that there is a successful \fcas\ belonging to $U$ on each \rec\ in $U.V$.
Then, a \fcstep\ belonging to $U$ cannot occur until after a \fstep\ belonging to $U$ has occurred.
\end{lem}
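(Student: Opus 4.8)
The plan is to argue by contradiction, isolating the \emph{earliest} \fcstep\ belonging to $U$ and deriving a structural impossibility. (If no \fcstep\ belongs to $U$, the claim holds vacuously, so assume one does.) Let {\it fcstep} be the earliest \fcstep\ belonging to $U$, performed by an invocation $H$ of \help\ that is helping $U$ while processing some \rec\ $r$ in $U.V$ in its freezing loop. The first step is to read off from the code what must have happened in $H$: to reach line~\ref{help-fcstep} in the iteration for $r$, the \cas\ at line~\ref{help-fcas} on $r.\info$ must have failed and the test at line~\ref{help-check-frozen} must have held, so $r.\info$ does not point to $U$ immediately before {\it fcstep}, and $H$ performed an unsuccessful \fcas\ belonging to $U$ on $r$ just before {\it fcstep}.

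Next I would locate the successful \fcas\ on $r$ and show {\it fcstep} follows it. Since $r \in U.V$, the hypothesis supplies a successful \fcas\ belonging to $U$ on $r$; by Lemma~\ref{lem-only-first-fcas-can-succeed} it is the \emph{first} \fcas\ belonging to $U$ on $r$; call it {\it fc}. Because $H$'s \fcas\ on $r$ is unsuccessful it is not {\it fc}, and because {\it fc} is the first such \fcas, $H$'s \fcas\ --- and hence {\it fcstep} --- occurs after {\it fc}. Now I would apply Lemma~\ref{lem-if-succ-fcas-then-point-u-until-bcas-or-uass} with this \rec\ $r$: between {\it fc} and the first \astep\ or \cstep\ belonging to $U$, $r.\info$ points to $U$. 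Since $r.\info$ does not point to $U$ at {\it fcstep} and {\it fcstep} is after {\it fc}, there must be an \astep\ or \cstep\ belonging to $U$ before {\it fcstep}; let $\sigma$ be the earliest such step.

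The crux is then a two-way case split on $\sigma$, exploiting the minimality of {\it fcstep}. If $\sigma$ is a \cstep, then the \help\ invocation that performed $\sigma$ executed line~\ref{help-fstep} before line~\ref{help-cstep}, so a \fstep\ belonging to $U$ occurs before $\sigma$, hence before {\it fcstep}; since every \fcstep\ belonging to $U$ occurs no earlier than {\it fcstep}, this is exactly the statement of the lemma. If $\sigma$ is an \astep, then the \help\ invocation that performed $\sigma$ first evaluated the test at line~\ref{help-fcstep}, finding $\freezingdone = \false$, before executing line~\ref{help-astep}; that evaluation is a \fcstep\ belonging to $U$ occurring strictly before $\sigma$, hence strictly before {\it fcstep}, contradicting the choice of {\it fcstep} as the earliest \fcstep\ belonging to $U$. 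So $\sigma$ must be a \cstep, and the lemma follows.

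The main subtlety --- and the place where a careless argument would break --- is the \astep\ case: at first glance it looks like another way for a helper to reach line~\ref{help-fcstep} without a prior \fstep, and closing it relies precisely on the observation that performing an \astep\ is itself preceded by a \fcstep\ belonging to $U$, so choosing the \emph{earliest} \fcstep\ rules this out. The remaining work is bookkeeping: checking that {\it fcstep} genuinely follows {\it fc} (so Lemma~\ref{lem-if-succ-fcas-then-point-u-until-bcas-or-uass} applies and forces $\sigma$ before {\it fcstep}) and that the step produced in the \astep\ case really ``belongs to $U$'' in the sense of Definitions~\ref{defn-helping} and~\ref{defn-belongs}; both follow directly from Lemma~\ref{lem-only-first-fcas-can-succeed} and the code of \help.
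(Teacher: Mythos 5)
Your proposal is correct and follows essentially the same route as the paper's proof: both hinge on Lemma~\ref{lem-only-first-fcas-can-succeed} to place the successful \fcas\ before $H$'s failed one, on Lemma~\ref{lem-if-succ-fcas-then-point-u-until-bcas-or-uass} to force an \astep\ or \cstep\ belonging to $U$ before the check at line~\ref{help-check-frozen}, and on the code structure (a \cstep\ is preceded by a \fstep, an \astep\ by a \fcstep) together with minimality of the first \fcstep\ to close both cases. The only difference is presentational — the paper derives the contradiction directly from the assumption that the first \fcstep\ precedes every \fstep, whereas you isolate the earliest \astep/\cstep\ and case-split — and your phrase ``$r.\info$ does not point to $U$ immediately before {\it fcstep}'' should strictly refer to the time of the read at line~\ref{help-check-frozen}, which is where the lemma is actually applied.
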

\begin{proof}
To derive a contradiction, suppose that the first \fcstep\ {\it fcstep} belonging to $U$ occurs before any \fstep\ belonging to $U$.
Let $H$ be the invocation of \help\ in which {\it fcstep} occurs.
%
Before {\it fcstep}, $H$ performs an unsuccessful \fcas\ at line~\ref{help-fcas} on one \rec\ in $U.V$.
The hypothesis of the lemma says that there is a successful \fcas, {\it fcas}, belonging to $U$ on this same \rec.
By Lemma~\ref{lem-only-first-fcas-can-succeed}, {\it fcas} must occur before $H$'s unsuccessful \fcas.
Thus, {\it fcas} occurs before {\it fcstep}, which occurs before any \fstep\ belonging to $U$.
From the code of \help, a \fstep\ belonging to $U$ precedes the first \cstep\ belonging to $U$, which implies that {\it fcas} and {\it fcstep} occur before the first \cstep\ belonging to $U$.
Further, the code of \help\ implies that no \astep\ can occur before {\it fcstep}.
Thus, {\it fcas} and {\it fcstep} occur strictly before the first \cstep\ or \astep\ belonging to $U$.
After {\it fcas}, and before {\it fcstep}, $H$ sees $r.\info \neq scxPtr$ at line~\ref{help-check-frozen}.
Since $scxPtr$ points to $U$, this implies that $r.\info$ points to some \op\ different from $U$.
However, Lemma~\ref{lem-if-succ-fcas-then-point-u-until-bcas-or-uass} implies that $r.\info$ must point to $U$ when $H$ performs line~\ref{help-check-frozen}, which is a contradiction.
\end{proof}

\begin{cor} \label{cor-first-fass-before-first-fcstep}
If a \fstep\ belongs to an \op\ $U$, then the first such \fstep\ must occur before any \fcstep\ belonging to $U$.
\end{cor}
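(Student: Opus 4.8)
The plan is to prove Corollary~\ref{cor-first-fass-before-first-fcstep} by reducing it to Lemma~\ref{lem-if-all-succ-fcas-then-no-fcstep-until-fass}, with the main work being to verify the hypothesis of that lemma. Suppose a \fstep\ belongs to $U$. I want to show that a successful \fcas\ belongs to $U$ on each \rec\ in $U.V$, so that Lemma~\ref{lem-if-all-succ-fcas-then-no-fcstep-until-fass} applies and gives the conclusion directly.

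First I would observe that, since a \fstep\ belongs to $U$, Lemma~\ref{lem-if-fass-then-all-succ-fcas} immediately tells us that for each $r$ in $U.V$ there is a successful \fcas\ belonging to $U$ on $r$ (occurring before the first \fstep\ belonging to $U$). This is exactly the hypothesis needed for Lemma~\ref{lem-if-all-succ-fcas-then-no-fcstep-until-fass}. Applying that lemma, a \fcstep\ belonging to $U$ cannot occur until after a \fstep\ belonging to $U$ has occurred. In particular, every \fcstep\ belonging to $U$ occurs after \emph{some} \fstep\ belonging to $U$, hence after the \emph{first} \fstep\ belonging to $U$. Equivalently, the first \fstep\ belonging to $U$ precedes every \fcstep\ belonging to $U$, which is the statement of the corollary. (If there are no \fcstep s belonging to $U$ at all, the claim holds vacuously.)

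The argument is essentially a two-line composition of already-proven lemmas, so I do not anticipate a serious obstacle. The one point to be careful about is the direction of the implication and the ordering quantifiers: Lemma~\ref{lem-if-all-succ-fcas-then-no-fcstep-until-fass} is phrased as ``no \fcstep\ until after a \fstep,'' and I must make sure that ``after a \fstep'' correctly upgrades to ``after the first \fstep'' — which it does, since the first \fstep\ is the earliest such step, so anything after some \fstep\ is after the first one. No new case analysis on the pseudocode is required beyond what the cited lemmas already establish.
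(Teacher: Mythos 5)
Your proof is correct and is essentially identical to the paper's: both invoke Lemma~\ref{lem-if-fass-then-all-succ-fcas} to establish a successful \fcas\ belonging to $U$ on each \rec\ in $U.V$, and then apply Lemma~\ref{lem-if-all-succ-fcas-then-no-fcstep-until-fass} to conclude. The extra remark about upgrading ``after some \fstep'' to ``after the first \fstep'' is a valid and harmless clarification.
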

\begin{proof}
Suppose a \fstep\ belongs to $U$.  By Lemma~\ref{lem-if-fass-then-all-succ-fcas}, we know there is a successful \fcas\ belonging to $U$ on $r$ for each $r$ in $U.V$.  Thus, by Lemma~\ref{lem-if-all-succ-fcas-then-no-fcstep-until-fass},
a \fcstep\ belonging to $U$ cannot occur until a \fstep\ belonging to $U$ has occurred.
\end{proof}

\begin{lem} \label{lem-fass-then-no-bcas}
There cannot be both a \fstep\ and an \astep\ belonging to the same \op.
\end{lem}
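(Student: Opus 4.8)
The plan is to argue by contradiction: suppose that, for some \op\ $U$, there is both a \fstep\ and an \astep\ belonging to $U$. I will show that the \astep\ must precede the first \fstep, and then derive a contradiction from the structure of the \help\ code. The first step is to observe that, since an \astep\ belonging to $U$ exists, Lemma~\ref{lem-if-fass-then-all-succ-fcas} (applied to the existing \fstep) tells us there is a successful \fcas\ belonging to $U$ on each \rec\ in $U.V$. Let {\it astep} be the first \astep\ belonging to $U$, performed in an invocation $H$ of \help, and let {\it fstep} be the first \fstep\ belonging to $U$.

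Next I would locate {\it astep} in the code: it occurs at line~\ref{help-astep}, which $H$ reaches only after failing the test at line~\ref{help-fcstep}, i.e.\ only after seeing $scxPtr.\freezingdone = \false$. Since an \astep\ occurs strictly before any \cstep\ belonging to $U$ (a \cstep\ is preceded by a \fstep, and we will be establishing {\it astep} precedes the first \fstep), I can apply Lemma~\ref{lem-if-succ-fcas-then-point-u-until-bcas-or-uass}: between the first \fcas\ belonging to $U$ on the relevant \rec\ $r$ and the first \astep\ or \cstep\ belonging to $U$, $r.\info$ points to $U$. But $H$ reaches line~\ref{help-astep} only after seeing $r.\info \neq scxPtr$ at line~\ref{help-check-frozen}, i.e.\ $r.\info$ points to an \op\ other than $U$. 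To make this work I need to know that the successful \fcas\ on $r$ occurred before $H$'s failed \fcas\ (hence before {\it astep}); this follows from Lemma~\ref{lem-only-first-fcas-can-succeed}, since $H$'s \fcas\ at line~\ref{help-fcas} failed, the first (successful) \fcas\ belonging to $U$ on $r$ is earlier. Thus, provided {\it astep} is before the first \astep\ or \cstep, Lemma~\ref{lem-if-succ-fcas-then-point-u-until-bcas-or-uass} forces $r.\info = U$ at that moment, contradicting $r.\info \neq scxPtr$.

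It remains to discharge the "provided" clause: I must rule out the possibility that a \cstep\ belonging to $U$ precedes {\it astep}. Here I would use the code invariant that every \cstep\ is preceded in its own \help\ invocation by a \fstep\ (line~\ref{help-fstep} before line~\ref{help-cstep}), together with Corollary~\ref{cor-first-fass-before-first-fcstep}. Suppose some \cstep\ belonging to $U$ occurs before {\it astep}; then some \fstep\ belonging to $U$ occurs before that \cstep, hence before {\it astep}. Once a \fstep\ sets $U.\freezingdone$ to \true, that field never reverts (it is only ever written \true). So at the time $H$ tested line~\ref{help-fcstep}, $U.\freezingdone$ would already be \true, and $H$ would have returned \true\ at line~\ref{help-return-true-loop} rather than reaching line~\ref{help-astep} — contradiction. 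Therefore {\it astep} precedes every \cstep\ belonging to $U$, and (trivially, being the first \astep) every \astep; the argument of the previous paragraph then applies and yields the desired contradiction.

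The main obstacle I anticipate is the careful ordering bookkeeping in the last step: making sure the "no \cstep\ before {\it astep}" claim is established without circularity, since Lemma~\ref{lem-if-succ-fcas-then-point-u-until-bcas-or-uass} is phrased in terms of the first \astep\ \emph{or} \cstep. The clean way to handle this is to let $t^*$ be the time of the first step that is an \astep\ or \cstep\ belonging to $U$, argue that if $t^*$ is a \cstep\ then a \fstep\ precedes $t^*$ and hence $U.\freezingdone = \true$ before $t^*$ (so no helper could have executed line~\ref{help-astep} after that — but a subtlety is that {\it astep} could in principle be before $t^*$'s \fstep; one rules this out because $t^*$ is by definition the \emph{first} \astep-or-\cstep, so $t^* \le$ time of {\it astep}), concluding $t^*$ is an \astep, in fact {\it astep} itself. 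Everything else is a routine reading of the pseudocode.
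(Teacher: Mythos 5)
Your overall strategy is workable, but as written there is one inference that does not hold, and it sits at the load-bearing point of the argument. In the step where you rule out a \cstep\ belonging to $U$ before {\it astep}, you argue: a \cstep\ before {\it astep} implies a \fstep\ before {\it astep}, and since $\freezingdone$ never reverts, ``at the time $H$ tested line~\ref{help-fcstep}, $U.\freezingdone$ would already be \true.'' That last ``hence'' is a non-sequitur: knowing the \fstep\ precedes $H$'s {\it astep} does not tell you it precedes $H$'s \fcstep. The dangerous schedule is exactly the one where $H$ performs its \fcstep, reads \false, is then delayed while another helper performs the \fstep\ (and possibly a \cstep), and only afterwards does $H$ execute line~\ref{help-astep}. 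Ruling this out requires knowing that \emph{every} \fcstep\ belonging to $U$ comes after the first \fstep\ belonging to $U$ --- which is precisely Corollary~\ref{cor-first-fass-before-first-fcstep}. You do name that corollary, but the sentence you actually write relies only on ``every \cstep\ is preceded by a \fstep'' plus monotonicity of $\freezingdone$, which is not enough.

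Once you do invoke Corollary~\ref{cor-first-fass-before-first-fcstep} properly, notice that it finishes the whole lemma by itself: a \fstep\ belongs to $U$, so the first \fstep\ precedes every \fcstep\ belonging to $U$; every \astep\ is immediately preceded (in its own \help\ invocation) by a \fcstep\ that read $\freezingdone = \false$; but any such \fcstep\ occurs after the first \fstep\ and therefore reads \true\ and returns at line~\ref{help-return-true-loop}. Hence no \astep\ can belong to $U$. That is the paper's proof. Your first two paragraphs --- deducing a successful \fcas\ on $r$ from Lemmas~\ref{lem-if-fass-then-all-succ-fcas} and~\ref{lem-only-first-fcas-can-succeed}, then using Lemma~\ref{lem-if-succ-fcas-then-point-u-until-bcas-or-uass} to contradict $r.\info \neq scxPtr$ at line~\ref{help-check-frozen} --- are essentially a re-derivation of the proof of Lemma~\ref{lem-if-all-succ-fcas-then-no-fcstep-until-fass} (the ingredient behind Corollary~\ref{cor-first-fass-before-first-fcstep}), and they become redundant once the corollary is applied. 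So either cite the corollary and delete the $\info$-field detour, or, if you want a self-contained argument, run the $\info$-field contradiction against $H$'s \fcstep\ (as Lemma~\ref{lem-if-all-succ-fcas-then-no-fcstep-until-fass} does) rather than against {\it astep}, so that you never need the problematic ``no \cstep\ before {\it astep}'' claim in the first place.
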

\begin{proof}
Suppose a \fstep\ belongs to $U$.  By
Corollary~\ref{cor-first-fass-before-first-fcstep},
the first such \fstep\ precedes the first \fcstep,
which, 
by the pseudocode of \help,
precedes the first \astep. 
This \fstep\ sets
$U.\freezingdone$
to \true\ and
$U.\freezingdone$
is never changed from \true\ to \false.
Therefore, any process that performs a \fcstep\ belonging to $U$
will immediately return \true, without performing an \astep.
Thus, there can be no \astep\ belonging to $U$.
\end{proof}

\begin{cor} \label{cor-no-change-from-done-retry}
An \op\ $U$'s $state$ cannot change from \done\ to \retry\ or from \retry\ to \done.
\end{cor}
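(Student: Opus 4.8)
The plan is to obtain this corollary directly from Lemma~\ref{lem-fass-then-no-bcas}, together with two elementary observations about the \help\ pseudocode. First I would note that $U.state$ is initialised to \freezing\ and is subsequently modified only by a \cstep\ belonging to $U$ (which writes \done, at line~\ref{help-cstep}) or an \astep\ belonging to $U$ (which writes \retry, at line~\ref{help-astep}); moreover, \freezing\ is never written into any $state$ field. Hence, the only way $U.state$ could change from \done\ to \retry, or from \retry\ to \done, would be for both a \cstep\ and an \astep\ belonging to $U$ to be performed during the execution.

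Second, I would invoke the control flow of \help: every \cstep\ belonging to $U$ is preceded, within the same invocation of \help, by a \fstep\ belonging to $U$ (line~\ref{help-fstep} executes before line~\ref{help-cstep}). Consequently, if some \cstep\ belongs to $U$, then some \fstep\ belongs to $U$, and Lemma~\ref{lem-fass-then-no-bcas} then rules out any \astep\ belonging to $U$. Symmetrically, if some \astep\ belongs to $U$, then by Lemma~\ref{lem-fass-then-no-bcas} no \fstep\ belongs to $U$, and hence no \cstep\ belongs to $U$. In either case, at most one of the two kinds of step ever belongs to $U$, so $U.state$ can never change from \done\ to \retry\ nor from \retry\ to \done.

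There is no genuine obstacle here; the corollary is essentially just repackaging Lemma~\ref{lem-fass-then-no-bcas} as a statement about $state$. The only points needing a moment's care are the purely syntactic facts that a \cstep\ is always preceded by a \fstep\ in \help, that \astep\ and \cstep\ are the only steps that write to $state$ after its initialisation, and that \freezing\ is never re-written; these together ensure that a change of $state$ between \done\ and \retry\ would genuinely require both a \cstep\ and an \astep\ belonging to $U$.
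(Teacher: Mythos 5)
Your argument is essentially the paper's: both reduce the corollary to Lemma~\ref{lem-fass-then-no-bcas} via the observations that only \cstep s write \done, only \astep s write \retry, \freezing\ is never re-written, and every \cstep\ is preceded in \help\ by a \fstep. For every \op\ created by an invocation of \sct\ your reasoning is complete and correct.

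There is, however, one small but genuine gap: your premise that ``$U.state$ is initialised to \freezing'' is false for the dummy \op, whose $state$ is \retry\ from the moment it is created. For that one \op\ your key claim --- that a transition between \done\ and \retry\ would require \emph{both} a \cstep\ and an \astep\ belonging to $U$ --- breaks down: a transition from the initial value \retry\ to \done\ would require only a \cstep, and nothing in your argument rules that out (your symmetric case ``if some \astep\ belongs to $U$\dots'' never fires, because no \astep\ need have occurred for the dummy \op\ to be in state \retry). The paper closes this case separately by invoking Lemma~\ref{lem-no-steps-belong-to-dummy-op}, which guarantees that no \cstep\ (indeed no step at all) belongs to the dummy \op, so its $state$ never changes. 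Adding that one sentence makes your proof complete.
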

\begin{proof}
Suppose $U.state =$ \done.
Then a \cstep\ belonging to $U$ must have occurred.
Since each \cstep\ is preceded by a \fstep,
Lemma~\ref{lem-fass-then-no-bcas} implies that no \astep\ belongs to $U$.
Thus, $U.state$ can never be set to \retry.

Now suppose that $U.state =$ \retry. Then either $U$ is the dummy \op\ or an \astep\ belongs to $U$.  If $U$ is the dummy \op\ then, by Lemma~\ref{lem-no-steps-belong-to-dummy-op}, no \cstep\ belongs to $U$,
so $U.state$ never changes to \done.
Otherwise, $U.state$ is initially \freezing, so there must have been an \astep\ belonging to $U$.  Hence, by Lemma~\ref{lem-fass-then-no-bcas}, no \fstep\ can belong to $U$.
If there is no \fstep\ belonging to $U$,
then there can be no \cstep\ belonging to $U$ (by the pseudocode of \help).
Therefore $U.state$ can never be set to \done.
\end{proof}

\begin{cor} \label{cor-state-transitions-respect-figure}
The changes to the $state$ field of an \op\ respect Figure~\ref{fig-state-transitions-simple}.
\end{cor}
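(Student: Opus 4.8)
The plan is simply to assemble the facts already established into a single statement about the $state$ field. Figure~\ref{fig-state-transitions-simple} asserts that an \op's $state$ starts at \freezing\ and thereafter can only move once, either to \done\ or to \retry, with no way back to \freezing\ and no alternation between \done\ and \retry. I would check each of these claims in turn.

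First, for the initial value: an \op\ $U$ created by an invocation of \sct\ has $U.state$ set to \freezing\ at line~\ref{sct-create-op}, which matches the figure. The dummy \op\ is the lone exception, having $state = \retry$ from the outset; but Lemma~\ref{lem-no-steps-belong-to-dummy-op} shows that no \astep\ or \cstep\ ever belongs to it, so its $state$ never changes, which is consistent with (a trivial reading of) the figure. Second, inspecting the pseudocode of \help, the only instructions that write a $state$ field are the \astep\ at line~\ref{help-astep} (which writes \retry) and the \cstep\ at line~\ref{help-cstep} (which writes \done). Hence \freezing\ is never written into any $state$ field, so once $U.state$ leaves \freezing\ it can never return; the only sequences of values $U.state$ could conceivably take are therefore ``\freezing\ forever'', ``\freezing\ then \done'', ``\freezing\ then \retry'', or ``\freezing\ then some alternation of \done\ and \retry''.

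Third, Corollary~\ref{cor-no-change-from-done-retry} eliminates the last possibility, since it states that $U.state$ can change neither from \done\ to \retry\ nor from \retry\ to \done. Combining these three observations, the transitions undergone by $U.state$ are exactly those depicted in Figure~\ref{fig-state-transitions-simple}.

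I do not anticipate any real difficulty in this step: all the substantive work was done in Lemma~\ref{lem-fass-then-no-bcas} and Corollary~\ref{cor-no-change-from-done-retry}, and this corollary merely packages their consequences together. The only point requiring a moment's care is the dummy \op, whose stationary $state = \retry$ must be handled by appeal to Lemma~\ref{lem-no-steps-belong-to-dummy-op} rather than by the line~\ref{sct-create-op} initialization argument used for all other \op s.
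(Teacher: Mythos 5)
Your proposal is correct and follows essentially the same route as the paper: the paper's proof is simply ``immediate from Corollary~\ref{cor-no-change-from-done-retry} and the fact that an \op's $state$ field cannot change to \freezing\ from any other state,'' which is exactly the combination of facts you assemble. Your extra care about the dummy \op\ (via Lemma~\ref{lem-no-steps-belong-to-dummy-op}) is a reasonable elaboration but not a departure in approach.
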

\begin{proof}
Immediate from Corollary~\ref{cor-no-change-from-done-retry} and the fact that an \op's state field cannot change to \freezing\ from any other state.
\end{proof}

The next five lemmas prove that any successful \fcas s belonging to an \op\ $U$ must occur prior to the first \fstep\ or \astep\ belonging to $U$.  This result allows us to fill in the gaps between Figure~\ref{fig-state-transitions-simple} and Figure~\ref{fig-state-transitions}.

\after{This next lemma seems to be not quite true if V seq can have duplicates.  I think the way to fix it is to define things slightly differently.  Instead of talking about $fcas on r_i$ we should talk about the $fcas of U for position i$ (meaning an fcas that occurs during the ith iteration of the loop) or some such.  This will probably have an effect on how we blame for failed SCXs when proving progress.  Eric}

\begin{lem} \label{lem-abort-fcas-flow}
Let $U$ be an \op, and let $\langle r_1, r_2, ..., r_l\rangle$
be the sequence of \rec s in $U.V$. 
Suppose an \astep\ belongs to $U$ and let $astep$ be the first such \astep.
Then, there is a
$k \in \{1, \ldots, l\}$ such that
\begin{compactenum}[1.]
	\item{a \fcas\ belonging to $U$ on $r_k$ occurs prior to $astep$,}
	\label{abort-fcas-flow-claim-fcas-on-rk}
	\item{there is no successful \fcas\ belonging to $U$ on $r_k$,}
\label{abort-fcas-flow-claim-no-succ-fcas-on-rk}
	\item{for each $i \in \{1, ..., {k-1}\}$,
a successful \fcas\ belonging to $U$ on $r_i$ occurs prior to $astep$,
and no successful \fcas\ belonging to $U$ on $r_i$ occurs after $astep$, and}
	\label{abort-fcas-flow-claim-succ-fcas-on-r1-etc}
	\item{$r_k.\info$ changes after the \llt$(r_k)$ linked to $S$ reads $r_k.\info$ at line~\ref{ll-reread} and before the first \fcas\ belonging to $U$ on $r_k$.
	} \label{abort-fcas-flow-claim-rk-changes}
\end{compactenum}
\end{lem}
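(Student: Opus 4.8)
\medskip
\noindent\textbf{Proof idea.}
The plan is to trace the \astep\ back to the loop iteration that triggered it and read off $k$ from that iteration. Write $S$ for the \sct\ that created $U$; since an \astep\ belongs to $U$, Lemma~\ref{lem-no-steps-belong-to-dummy-op} rules out $U$ being the dummy \op, so $S$ exists. Let $H$ be the invocation of \help$(scxPtr)$, with $scxPtr$ pointing to $U$, that performs $astep$. Because $astep$ is executed at line~\ref{help-astep}, the control flow of \help\ forces $H$ to have, in some iteration $k$ of its freezing loop: performed an unsuccessful \fcas\ at line~\ref{help-fcas} (this \fcas\ belongs to $U$ and is on $r_k$, which is Claim~\ref{abort-fcas-flow-claim-fcas-on-rk}); then, at line~\ref{help-check-frozen}, found that $r_k.\info$ does not point to $U$; then found $U.\freezingdone = \false$ at line~\ref{help-fcstep}. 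All three events precede $astep$.

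For Claim~\ref{abort-fcas-flow-claim-no-succ-fcas-on-rk}, I first observe that no \cstep\ belongs to $U$ before $astep$: every \cstep\ is preceded in the code of \help\ by a \fstep, and Lemma~\ref{lem-fass-then-no-bcas} forbids both a \fstep\ and an \astep\ from belonging to $U$; hence no \fstep\ belongs to $U$ at all, so no \cstep\ does either, and $astep$ is in fact the \emph{first} \astep\ or \cstep\ belonging to $U$. Now suppose for contradiction that some \fcas\ belonging to $U$ on $r_k$ succeeds. By Lemma~\ref{lem-only-first-fcas-can-succeed} it is the first \fcas\ belonging to $U$ on $r_k$, hence precedes $H$'s unsuccessful \fcas\ of iteration $k$; so the moment at which $H$ evaluates line~\ref{help-check-frozen} in iteration $k$ lies between the first \fcas\ belonging to $U$ on $r_k$ and the first \astep\ or \cstep\ belonging to $U$. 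Lemma~\ref{lem-if-succ-fcas-then-point-u-until-bcas-or-uass} then gives that $r_k.\info$ points to $U$ at that moment, contradicting what $H$ observed.

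For Claim~\ref{abort-fcas-flow-claim-succ-fcas-on-r1-etc}, since $H$ reaches iteration $k$ it completes iterations $1,\dots,k-1$ without returning, so in each such iteration $i$ it either performs a successful \fcas\ belonging to $U$ on $r_i$ at line~\ref{help-fcas}, or finds $r_i.\info$ already pointing to $U$ at line~\ref{help-check-frozen}; in the latter case, because $U$ is not the dummy \op, $r_i.\info$ points to $U$ only because an earlier \fcas\ belonging to $U$ on $r_i$ succeeded. Either way a successful \fcas\ belonging to $U$ on $r_i$ occurs before $H$'s iteration $i$, hence before $astep$; and by Lemma~\ref{lem-only-first-fcas-can-succeed} that is the first \fcas\ belonging to $U$ on $r_i$, so no later one (in particular none after $astep$) can succeed.

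For Claim~\ref{abort-fcas-flow-claim-rk-changes}, let $r\info_k$ denote the pointer indexed by $r_k$ in $U.\llresults$. By Observation~\ref{obs-op-invariants}.\ref{inv-llresults} it was read from $r_k.\info$ at line~\ref{ll-read} by the \llt$(r_k)$ linked to $S$, and by Observation~\ref{obs-op-invariants}.\ref{inv-unfrozen} that \llt\ sees $r_k.\info = r\info_k$ again at line~\ref{ll-reread}; moreover every \fcas\ belonging to $U$ on $r_k$ uses $r\info_k$ as its old value (line~\ref{help-rinfo}, together with immutability of $U.\llresults$). By Claim~\ref{abort-fcas-flow-claim-no-succ-fcas-on-rk} the first \fcas\ belonging to $U$ on $r_k$ fails, so $r_k.\info \neq r\info_k$ immediately before it; and exactly as in the proof of Lemma~\ref{lem-only-first-fcas-can-succeed}, that first \fcas\ occurs after the linked \llt$(r_k)$ terminates (the earliest call of \help\ with argument pointing to $U$ is $S$'s own call at line~\ref{sct-call-help}, which follows the start of $S$ and hence the end of the linked \llt), so it occurs after line~\ref{ll-reread}. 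Since $r_k.\info$ equals $r\info_k$ at line~\ref{ll-reread} but differs from it just before the first \fcas, $r_k.\info$ changes in between, giving Claim~\ref{abort-fcas-flow-claim-rk-changes}. The part I expect to require the most care is the two-step argument for Claim~\ref{abort-fcas-flow-claim-no-succ-fcas-on-rk}: first pinning down that $astep$ really is the first \astep\ or \cstep\ belonging to $U$ (via Lemma~\ref{lem-fass-then-no-bcas} and the \fstep-before-\cstep\ ordering) so that Lemma~\ref{lem-if-succ-fcas-then-point-u-until-bcas-or-uass} applies, and then correctly locating $H$'s iteration-$k$ evaluation of line~\ref{help-check-frozen} inside the interval that lemma governs.
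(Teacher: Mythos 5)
Your proof is correct and follows essentially the same route as the paper's: identify the loop iteration $k$ of the helper $H$ that performs $astep$, derive Claim~2 by combining Lemma~\ref{lem-only-first-fcas-can-succeed}, Lemma~\ref{lem-fass-then-no-bcas} and Lemma~\ref{lem-if-succ-fcas-then-point-u-until-bcas-or-uass}, and derive Claim~4 from Observation~\ref{obs-op-invariants}. The only cosmetic differences are that you inline the argument of Lemma~\ref{lem-fcas-on-ri-only-after-succ-fcas-on-previous} for Claim~3 rather than citing it, and you are slightly more explicit than the paper about why the change to $r_k.\info$ precedes the \emph{first} \fcas\ on $r_k$.
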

\begin{proof}
Let $H$ be the invocation of \help\ that performs $astep$ and $k$ be the iteration of the loop in \help\ during which $H$ performs $astep$.
The loop in $H$ iterates over the sequence $r_1, r_2, ..., r_l$ of \rec s. 

Claim~\ref{abort-fcas-flow-claim-fcas-on-rk} follows from the definition of $k$:  before $H$ performs $astep$, it performs a \fcas\ belonging to $H$ on $r_k$ at line~\ref{help-fcas}.

To derive a contradiction, suppose Claim 2 is false, i.e., there is a \textit{successful} \fcas\ belonging to $U$ on $r_k$.  By Claim 1, {\it fcas} is before $astep$.
From Corollary~\ref{cor-no-change-from-done-retry} and the fact that $astep$ occurs, we know that no \cstep\ belongs to $U$.
By Lemma~\ref{lem-if-succ-fcas-then-point-u-until-bcas-or-uass}, $r_k.\info$ points to $U$ at all times between the first \fcas\ belonging to $U$ on $r_k$ and $astep$.
However, this contradicts the fact that $r_k.\info$ does not point to $U$ when $H$ performs line \ref{help-check-frozen} just before performing $astep$.

We now prove Claim~\ref{abort-fcas-flow-claim-succ-fcas-on-r1-etc}.
By Claim \ref{abort-fcas-flow-claim-fcas-on-rk}, prior to $astep$,
$H$ performs a \fcas\ belonging to $U$ on $r_k$.
By Lemma~\ref{lem-fcas-on-ri-only-after-succ-fcas-on-previous}, this can only occur after a successful \fcas\ belonging
to $U$ on $r_i$, for all $i <k$.
By Lemma~\ref{lem-only-first-fcas-can-succeed}, there is no successful
\fcas\ belonging to $U$ on $r_i$ after $astep$.

We now prove Claim~\ref{abort-fcas-flow-claim-rk-changes}.
By Claim~\ref{abort-fcas-flow-claim-fcas-on-rk} and Claim~\ref{abort-fcas-flow-claim-no-succ-fcas-on-rk}, an unsuccessful \fcas \ {\it fcas} belonging to $U$ on $r_k$ occurs prior to $astep$.
By line~\ref{help-rinfo} and Observation~\ref{obs-op-invariants}.\ref{inv-llresults}, the old value for {\it fcas} is read from $r_k.\info$ and stored in $r\info$ at line~\ref{ll-read} by the \llt$(r_k)$ linked to $S$.
By Observation~\ref{obs-op-invariants}.\ref{inv-unfrozen}, the \llt$(r_k)$ linked to $S$ again sees $r_k.\info = r\info$ at line~\ref{ll-reread}.
Thus, since {\it fcas} fails, $r_k.\info$ must change after the \llt$(r_k)$ linked to $S$ performs line \ref{ll-reread} and before {\it fcas} occurs.
\end{proof}

\begin{lem} \label{lem-no-succ-fcas-after-fass-or-bcas}
No \fcas\ belonging to an \op\ $U$ is successful after the first \fstep\ or \astep\ belonging to $U$.
\end{lem}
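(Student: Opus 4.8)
The plan is to split on whether the first \fstep\ or \astep\ belonging to $U$ is an \astep\ or a \fstep.  By Lemma~\ref{lem-fass-then-no-bcas}, at most one of these two kinds of step belongs to $U$, so exactly one of the two cases below applies (if neither an \astep\ nor a \fstep\ belongs to $U$, the statement is vacuous).

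\emph{Case 1: an \astep\ belongs to $U$.}  Then no \fstep\ belongs to $U$, so the first \fstep\ or \astep\ is the first \astep; call it $astep$.  I would apply Lemma~\ref{lem-abort-fcas-flow} to get an index $k$ with its associated properties.  For records $r_i$ with $i<k$, Claim~3 of that lemma already states that no successful \fcas\ belonging to $U$ on $r_i$ occurs after $astep$.  For $r_k$ itself, Claim~2 states that there is no successful \fcas\ belonging to $U$ on $r_k$ at all.  The only records left are $r_i$ with $i>k$, and here I would chain Lemma~\ref{lem-fcas-on-ri-only-after-succ-fcas-on-previous} upward from $k$: a \fcas\ belonging to $U$ on $r_{k+1}$ can occur only after a successful \fcas\ belonging to $U$ on $r_k$, of which there is none, so no \fcas\ belonging to $U$ on $r_{k+1}$ ever occurs; inducting on $i$ gives the same for every $r_i$ with $i>k$.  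Combining the three sub-cases, no \fcas\ belonging to $U$ on any record of $U.V$ is successful after $astep$.

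\emph{Case 2: a \fstep\ belongs to $U$.}  Then no \astep\ belongs to $U$ by Lemma~\ref{lem-fass-then-no-bcas}, so the first \fstep\ or \astep\ is the first \fstep; call it $fstep$.  By Lemma~\ref{lem-if-fass-then-all-succ-fcas}, for each $r$ in $U.V$ there is a successful \fcas\ belonging to $U$ on $r$ occurring before $fstep$.  If some successful \fcas\ belonging to $U$ on a record $r$ occurred after $fstep$, there would be two distinct successful \fcas\ steps belonging to $U$ on $r$, contradicting Lemma~\ref{lem-only-first-fcas-can-succeed}.  Hence no \fcas\ belonging to $U$ is successful after $fstep$.

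The main obstacle I anticipate is the $i>k$ sub-case of Case~1: one must use the loop structure of \help\ as captured by Lemma~\ref{lem-fcas-on-ri-only-after-succ-fcas-on-previous}, together with Claim~2 of Lemma~\ref{lem-abort-fcas-flow}, to rule out even \emph{unsuccessful} \fcas\ steps on the later records of $U.V$; the $i<k$ and $i=k$ sub-cases and all of Case~2 reduce almost immediately to Lemmas~\ref{lem-abort-fcas-flow},~\ref{lem-if-fass-then-all-succ-fcas}, and~\ref{lem-only-first-fcas-can-succeed}.
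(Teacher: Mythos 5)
Your proposal is correct and follows essentially the same route as the paper: the \fstep\ case via Lemma~\ref{lem-if-fass-then-all-succ-fcas} and Lemma~\ref{lem-only-first-fcas-can-succeed}, and the \astep\ case via Lemma~\ref{lem-abort-fcas-flow} for indices $i\le k$ together with Lemma~\ref{lem-fcas-on-ri-only-after-succ-fcas-on-previous} to rule out any \fcas\ at all on records with $i>k$. The only difference is cosmetic: you invoke Lemma~\ref{lem-fass-then-no-bcas} to make the two cases exclusive, whereas the paper simply treats them as two (necessarily non-overlapping) suppositions.
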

\begin{proof}
First, suppose a \fstep\ belongs to $U$, and let $fstep$ be the first such \fstep.
Then, by Lemma~\ref{lem-if-fass-then-all-succ-fcas}, there is
a successful \fcas\ belonging to $U$ on $r$ for each $r$ in $U.V$
that occurs before $fstep$.
By Lemma~\ref{lem-only-first-fcas-can-succeed}, only the first \fcas\ belonging to $U$ on $r$ can be successful.
Hence, no \fcas\ belonging to $U$ is successful after $fstep$.

Now, suppose an \astep\ belongs to $U$, and let $astep$ be the first such \astep.
Let $\langle r_1, r_2, ..., r_l\rangle$ be the sequence of \rec s in $U.V$. 
By Lemma~\ref{lem-abort-fcas-flow}, there is a $k \in \{1, ..., l\}$ such that no successful \fcas\ belonging to $U$ is performed on any $r_i \in \{r_1, ..., r_{k-1}\}$ after $astep$, and no successful \fcas\ belonging to $U$ on $r_k$ ever occurs.
By Lemma~\ref{lem-fcas-on-ri-only-after-succ-fcas-on-previous},
there is no \fcas\ belonging to $U$ on any $r_i \in \{r_{k+1}, ..., r_l\}$.
\end{proof}

\begin{cor} \label{cor-if-succ-fcas-then-point-u-until-bcas-or-uass}
If there is a successful \fcas\ {\it fcas} belonging to an \op\ $U$ on a \rec\ $r$, then {\it fcas} occurs before time $t$, when the first \astep\ or \cstep\ belonging to $U$ occurs. Moreover, $r.\info$ points to $U$ at all times after {\it fcas} occurs, and before $t$.
\end{cor}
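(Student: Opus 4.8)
The plan is to obtain this corollary immediately from the three results just proved: Lemma~\ref{lem-only-first-fcas-can-succeed} (among the \fcas s belonging to $U$ on $r$, only the first can succeed), Lemma~\ref{lem-no-succ-fcas-after-fass-or-bcas} (no \fcas\ belonging to $U$ is successful after the first \fstep\ or \astep\ belonging to $U$), and Lemma~\ref{lem-if-succ-fcas-then-point-u-until-bcas-or-uass} (at any time after the first \fcas\ belonging to $U$ on $r$ and before the first \astep\ or \cstep\ belonging to $U$, $r.\info$ points to $U$). First I would record the key identification: by Lemma~\ref{lem-only-first-fcas-can-succeed}, the given successful \fcas\ {\it fcas} is in fact the \emph{first} \fcas\ belonging to $U$ on $r$; this is what will let us plug into Lemma~\ref{lem-if-succ-fcas-then-point-u-until-bcas-or-uass} at the end. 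I would also note that, since every \cstep\ is preceded within \help\ by a \fstep\ and (by Lemma~\ref{lem-fass-then-no-bcas} / Corollary~\ref{cor-no-change-from-done-retry}) no \op\ can have both an \astep\ and a \cstep\ belonging to it, the time $t$ ``of the first \astep\ or \cstep\ belonging to $U$'' is unambiguous.

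To show {\it fcas} occurs before $t$, I would split into the two (mutually exclusive) cases. If $t$ is the first \astep\ belonging to $U$, then Lemma~\ref{lem-no-succ-fcas-after-fass-or-bcas} directly forbids any successful \fcas\ belonging to $U$ after $t$, so {\it fcas} occurs before $t$. If instead $t$ is the first \cstep\ belonging to $U$, then the invocation of \help\ that performs this \cstep\ (at line~\ref{help-cstep}) must earlier have performed a \fstep\ belonging to $U$ (at line~\ref{help-fstep}), so the first \fstep\ belonging to $U$ occurs strictly before $t$; applying Lemma~\ref{lem-no-succ-fcas-after-fass-or-bcas} once more, the successful \fcas\ {\it fcas} occurs before that first \fstep, hence before $t$.

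For the ``moreover'' part, I would invoke Lemma~\ref{lem-if-succ-fcas-then-point-u-until-bcas-or-uass}: using the first observation that {\it fcas} is the first \fcas\ belonging to $U$ on $r$, and the definition of $t$ as the first \astep\ or \cstep\ belonging to $U$, that lemma states precisely that $r.\info$ points to $U$ at every time strictly after {\it fcas} and strictly before $t$, which is the claim. I do not anticipate a genuine obstacle here; the only point needing care is the bookkeeping around $t$ --- confirming that it is well-defined and that, in the \cstep\ case, a \fstep\ belonging to $U$ really does precede it --- and this is already settled by Lemma~\ref{lem-fass-then-no-bcas}, Corollary~\ref{cor-no-change-from-done-retry}, and the linear structure of the \help\ pseudocode.
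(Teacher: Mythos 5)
Your proof is correct and takes essentially the same route as the paper: the paper also derives ``{\it fcas} occurs before $t$'' from Lemma~\ref{lem-no-succ-fcas-after-fass-or-bcas} and then concludes with Lemma~\ref{lem-if-succ-fcas-then-point-u-until-bcas-or-uass}. You have merely made explicit the case split (\astep\ vs.\ \cstep) and the identification of {\it fcas} as the first \fcas\ on $r$ via Lemma~\ref{lem-only-first-fcas-can-succeed}, details the paper leaves implicit.
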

\begin{proof}
By Lemma~\ref{lem-no-succ-fcas-after-fass-or-bcas}, {\it fcas} occurs before $t$.
The claim then follows from Lemma~\ref{lem-if-succ-fcas-then-point-u-until-bcas-or-uass}. 
\end{proof}

\begin{lem} \label{lem-state-transitions2}
Changes to the $state$ and $\freezingdone$ fields of an \op, as well as \fstep s, \astep s, \cstep s and successful \fcas s can only occur as depicted in Figure~\ref{fig-state-transitions}.
\end{lem}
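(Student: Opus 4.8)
The plan is to combine Corollary~\ref{cor-state-transitions-respect-figure}, which already establishes that the $state$ component of an \op\ transitions as in Figure~\ref{fig-state-transitions-simple}, with an analysis of the $\freezingdone$ component and of the exact configuration in which each annotated step (\fstep, \astep, \cstep, and successful \fcas) can be performed. First I would record the easy facts about $\freezingdone$: it is initially \false, it is written only at line~\ref{help-fstep} (that is, by a \fstep), and the only value ever written to it is \true. Hence an \op\ starts in configuration $[\freezing,\false]$ and its $\freezingdone$ coordinate flips at most once, from \false\ to \true, and only at a \fstep. Throughout, I would use Lemma~\ref{lem-no-steps-belong-to-dummy-op} to assume any \op\ that has such a step belonging to it is not the dummy \op, so it was created at line~\ref{sct-create-op} with $state = \freezing$ and $\freezingdone = \false$.

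Next, for an \op\ $U$, I would pin down the configuration just before and just after the \emph{first} step of each kind belonging to $U$. A successful \fcas\ belonging to $U$ precedes the first \fstep\ and the first \astep\ belonging to $U$ by Lemma~\ref{lem-no-succ-fcas-after-fass-or-bcas}; since every \cstep\ belonging to $U$ is preceded, within the same \help\ invocation, by a \fstep\ belonging to $U$, it also precedes the first \cstep\ belonging to $U$. As $U.state$ can change only via an \astep\ or a \cstep\ belonging to $U$ and $U.\freezingdone$ only via a \fstep\ belonging to $U$, the configuration at such a \fcas\ is still $[\freezing,\false]$, matching the self-loop labelled with successful \fcas\ on $[\freezing,\false]$ in Figure~\ref{fig-state-transitions}. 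For the first \fstep\ belonging to $U$: all successful \fcas s belonging to $U$ precede it (just argued) and it precedes the first \cstep\ belonging to $U$, while by Lemma~\ref{lem-fass-then-no-bcas} no \astep\ belongs to $U$; hence its configuration goes from $[\freezing,\false]$ to $[\freezing,\true]$. For the first \astep\ belonging to $U$: by Lemma~\ref{lem-fass-then-no-bcas} no \fstep\ belongs to $U$, and since any \cstep\ belonging to $U$ would be preceded by a \fstep\ belonging to $U$, no \cstep\ belongs to $U$ either; so the configuration goes from $[\freezing,\false]$ to $[\retry,\false]$. For the first \cstep\ belonging to $U$: the \help\ invocation performing it executed a \fstep\ belonging to $U$ earlier, so $\freezingdone$ is already \true, and by Lemma~\ref{lem-fass-then-no-bcas} no \astep\ belongs to $U$, so $state$ is still \freezing; hence the configuration goes from $[\freezing,\true]$ to $[\done,\true]$.

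To finish, I would observe that any later occurrence of one of these steps by another helper is unsuccessful --- a later \fcas\ by Lemma~\ref{lem-only-first-fcas-can-succeed}, and a later \fstep, \astep\ or \cstep\ because $\freezingdone$ and $state$ are written idempotently --- and therefore leaves the configuration fixed, corresponding to a self-loop of Figure~\ref{fig-state-transitions}; combined with Corollary~\ref{cor-state-transitions-respect-figure}, this shows that the configuration graph in Figure~\ref{fig-state-transitions} captures exactly the possible transitions together with the steps that cause them. The main obstacle is organizational rather than conceptual: because several helpers may redundantly perform the ``same'' \fstep, \astep, \cstep, or \fcas, one must always reason about the \emph{first} such step belonging to a given \op\ and then dispose of the repetitions, and one must chain together the ordering facts (successful \fcas s before the first \fstep; the first \fstep\ before the first \cstep; no \astep\ when a \fstep\ exists, and conversely) carefully enough to determine the $[state,\freezingdone]$ configuration at the precise instant each step executes.
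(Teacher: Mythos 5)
Your proposal is correct and follows essentially the same route as the paper: both determine the $[state,\freezingdone]$ configuration at each kind of step by combining the initial values, the fact that only \astep s, \fstep s and \cstep s write these fields, the code ordering (\fstep\ before \cstep\ within a \help\ invocation), Lemma~\ref{lem-fass-then-no-bcas} to exclude coexisting \fstep s and \astep s, and Lemma~\ref{lem-no-succ-fcas-after-fass-or-bcas} to confine successful \fcas s to $[\freezing,\false]$. The only difference is that the paper's proof additionally argues that, for each $r$ in $U.R$, only the first \markstep\ belonging to $U$ on $r$ can be successful and it lies between the first \fstep\ and the first \cstep\ --- a fact about Figure~\ref{fig-state-transitions} that later results (e.g., Corollary~\ref{cor-marksteps}) draw from this lemma, so you would want to add that short paragraph even though the statement's text does not list \markstep s.
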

\begin{proof}
Initially, the dummy \op\ has $state =$ \retry\ and $\freezingdone =$ \false\
and, by Lemma \ref{lem-no-steps-belong-to-dummy-op}, they never change.

Every other \op\ $U$ initially has $state =$ \freezing\ and
$\freezingdone =$ \false.
Only \astep s, \fstep s, and \cstep s
can change $state$ or $\freezingdone$.
From the code of \help, each transition shown in Figure \ref{fig-state-transitions},
results in the indicated values for $state$ and $\freezingdone$.
A \cstep\ on line \ref{help-cstep} must be preceded by a \fstep\ on line \ref{help-fstep}.
Therefore, from [\freezing, \false], the only outgoing transitions are
to [\retry, \false] and [\freezing, \true].
By Lemma \ref{lem-fass-then-no-bcas}, there cannot be both
a \fstep\ and an \astep\ belonging to $U$.
Hence, from [\retry, \false], there cannot be a \fstep\ or \cstep\
and there cannot be an \astep\ from  [\freezing, \true] or [\done, \true]. 

By Lemma \ref{lem-no-succ-fcas-after-fass-or-bcas}, successful \fcas s can only occur when $state =$ \freezing\ and $\freezingdone =$ \false.
From the code of \help, for each $r$ in $U.R$, the first \markstep\ belonging to $U$ on $r$ must occur after the first \fstep \ belonging to $U$ and before the first \cstep \ belonging to $U$.
Since each $r$ in $U.R$ initially has $r.marked = \false$, and $r.marked$ is only changed at line~\ref{help-markstep}, where it is set to \true, only the first \markstep\ belonging to $U$ on $r$ can be successful.
\end{proof}

\subsection{The period of time over which a \rec\ is frozen}


We now prove several lemmas which characterize the period of time over which a \rec\ is frozen for an \op.
We first use the fact that the $state$ of an \op\ cannot change from \retry\ to \done\ to extend Lemma~\ref{lem-no-info-change-while-freezing} to prove that the \info\ field of a \rec\ cannot be changed while the \rec\ is frozen for an \op.
In the following, the phrase ``after X and before the first time Y happens'' should be interpreted
to mean ``after X'' in the event that Y never happens.

%

\begin{figure}[tb]
	\centering
	\begin{tabular}{|l|c|c|}
		\hline
		& \hspace{1mm} $r.\info.state$ \hspace{1mm} & $r.marked$ \\
		\hline
		Frozen & \done & \true \\
		& \freezing & \{\true, \false\} \\
		\hline
		Unfrozen & \done & \false \\
		& \retry & \{\true, \false\} \\
		\hline
	\end{tabular}
	\caption{
	When a \rec\ $r$ is frozen,
			in terms of $r.\info.state$ and $r.marked$.}
	\label{fig-freezing-table}
\end{figure}

\begin{lem} \label{lem-records-frozen}
If a \fstep\ belongs to an \op\ $U$ then,
for each $r$ in $U.V$, a \fcas\ belonging to $U$ on $r$ precedes the first \fstep\ belonging to $U$, and $r$ is frozen for $U$ at all times after the first \fcas\ belonging to $U$ on $r$
and before the first \cstep\ belonging to $U$.
\end{lem}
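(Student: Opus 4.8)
The plan is to assemble this lemma directly from the freezing machinery already established, with essentially no new argument. Claim~1 — that some \fcas\ belonging to $U$ on $r$ precedes the first \fstep\ belonging to $U$ — is immediate from Lemma~\ref{lem-if-fass-then-all-succ-fcas}, which gives a \emph{successful} \fcas\ belonging to $U$ on each $r$ in $U.V$ occurring before the first \fstep\ belonging to $U$. So the real content is Claim~2: that $r$ remains frozen for $U$ at every point in the interval from the first \fcas\ belonging to $U$ on $r$ up to (but not including) the first \cstep\ belonging to $U$.

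First I would identify the first \fcas\ belonging to $U$ on $r$, call it $fcas$, and note it is successful: Lemma~\ref{lem-if-fass-then-all-succ-fcas} supplies \emph{some} successful \fcas\ belonging to $U$ on $r$, and Lemma~\ref{lem-only-first-fcas-can-succeed} says only the first one can succeed, so $fcas$ itself is the successful one. Hence the window ``after the first \fcas\ belonging to $U$ on $r$'' coincides with ``after a successful \fcas\ belonging to $U$ on $r$''. Next I would pin down $r.\info$ over the window. Since a \fstep\ belongs to $U$, Lemma~\ref{lem-fass-then-no-bcas} rules out any \astep\ belonging to $U$, so the cutoff time $t$ in Corollary~\ref{cor-if-succ-fcas-then-point-u-until-bcas-or-uass} (the first \astep\ \emph{or} \cstep\ belonging to $U$) is precisely the first \cstep\ belonging to $U$ (or never, if there is none). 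That corollary then gives exactly what is needed: $r.\info$ points to $U$ at all times after $fcas$ and before the first \cstep\ belonging to $U$.

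It remains to control $U.state$ over the same window. Here I would invoke Lemma~\ref{lem-state-transitions2} (equivalently Corollary~\ref{cor-state-transitions-respect-figure}): since a \fstep\ belongs to $U$, Lemma~\ref{lem-no-steps-belong-to-dummy-op} shows $U$ is not the dummy \op, so $U.state$ starts at \freezing; it can move to \retry\ only via an \astep\ belonging to $U$ (which does not exist) and to \done\ only via a \cstep\ belonging to $U$. As the window ends at the first such \cstep, $U.state = \freezing$ throughout the window. Combining this with the previous paragraph and the definition of ``frozen'' ($r.\info$ points to $U$ and $U.state = \freezing$), $r$ is frozen for $U$ at every point in the window, which finishes Claim~2.

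The only real care point — and it is bookkeeping rather than a genuine obstacle — is aligning the interval endpoints: the absence of \astep s belonging to $U$ has to be used twice, once so that the cutoff in Corollary~\ref{cor-if-succ-fcas-then-point-u-until-bcas-or-uass} becomes exactly the first \cstep\ belonging to $U$, and once to guarantee $U.state$ has not yet left \freezing\ before that \cstep. All the substantive work — ABA-freedom of \info\ fields and the fact that a successful \fcas\ keeps $r.\info$ pointing to $U$ until the \sct\ aborts or commits — is already packaged in Lemmas~\ref{lem-no-aba-info}, \ref{lem-if-succ-fcas-then-point-u-until-bcas-or-uass} and Corollary~\ref{cor-if-succ-fcas-then-point-u-until-bcas-or-uass}, so nothing beyond citing them is required.
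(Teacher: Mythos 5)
Your proposal is correct and follows essentially the same route as the paper's proof: Lemma~\ref{lem-if-fass-then-all-succ-fcas} for the successful \fcas\ preceding the first \fstep, Lemma~\ref{lem-fass-then-no-bcas} to exclude \astep s so that the cutoff in Corollary~\ref{cor-if-succ-fcas-then-point-u-until-bcas-or-uass} is the first \cstep, and the observation that $U.state$ stays \freezing\ until then. The extra bookkeeping you flag (that the successful \fcas\ is the \emph{first} one, via Lemma~\ref{lem-only-first-fcas-can-succeed}) is already folded into the paper's supporting lemmas, so nothing is missing.
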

\begin{proof}
Fix any $r$ in $U.V $.
If a \fstep\ belongs to $U$ then,
by Lemma~\ref{lem-if-fass-then-all-succ-fcas},
it is preceded by
a successful \fcas\ belonging to $U$ on $r$.
Further, by Lemma~\ref{lem-fass-then-no-bcas},
no \astep\ belongs to $U$.
Thus, by Corollary~\ref{cor-if-succ-fcas-then-point-u-until-bcas-or-uass},
$r.\info$ points to $U$ at all points between
time $t_0$, when the first \fcas\ belonging to $U$ on $r$ occurs,
and time $t_1$, when the first \cstep\ belonging to $U$ occurs
(after the first \fstep).
Since no \astep\ belongs to $U$,
$U.state =$ \freezing\ at all times before $t_1$.
Hence, by the definition of freezing (see Figure~\ref{fig-freezing-table}),
$r$ is frozen for $U$
at all times between $t_0$ and $t_1$.
\end{proof}

\begin{cor} \label{cor-records-frozen-from-fass-to-uass}
If a \fstep\ belongs to an \op\ $U$, then
each $r$ in $U.V$ is frozen for $U$
at all times between
the first \fstep\ belonging to $U$
and the first \cstep\ belonging to $U$.
\end{cor}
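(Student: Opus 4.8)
The plan is to derive this corollary directly from Lemma~\ref{lem-records-frozen}, which already does essentially all of the work. I would fix an arbitrary \rec\ $r$ in $U.V$. Since a \fstep\ belongs to $U$ by hypothesis, Lemma~\ref{lem-records-frozen} applies to $r$ and yields two facts: a \fcas\ belonging to $U$ on $r$ occurs before the first \fstep\ belonging to $U$, and $r$ is frozen for $U$ at all times after the first \fcas\ belonging to $U$ on $r$ and before the first \cstep\ belonging to $U$.

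Next I would observe that, because the first \fcas\ belonging to $U$ on $r$ precedes the first \fstep\ belonging to $U$, the interval from the first \fstep\ to the first \cstep\ is contained in the interval from the first \fcas\ on $r$ to the first \cstep. Hence the second fact above already tells us that $r$ is frozen for $U$ throughout the former interval. Since $r$ was an arbitrary element of $U.V$, the same conclusion holds for every \rec\ in $U.V$, which is precisely the statement of the corollary.

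I do not expect any genuine obstacle here: the substantive content — that every \rec\ in $U.V$ is hit by a \fcas\ before the \fstep, that no \astep\ belongs to $U$ (via Lemma~\ref{lem-fass-then-no-bcas}), and that each such \rec's \info\ field keeps pointing at $U$ until the first \cstep\ (via Corollary~\ref{cor-if-succ-fcas-then-point-u-until-bcas-or-uass}) — has already been carried out inside the proof of Lemma~\ref{lem-records-frozen}. The one point to keep in mind is the stated convention that ``after $X$ and before the first time $Y$ happens'' means simply ``after $X$'' when $Y$ never occurs; thus, if no \cstep\ belongs to $U$, the corollary asserts that every $r$ in $U.V$ remains frozen for $U$ from the first \fstep\ onward, and this too is covered by Lemma~\ref{lem-records-frozen}.
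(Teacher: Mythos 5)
Your proposal is correct and follows essentially the same route as the paper: both invoke Lemma~\ref{lem-records-frozen} and then observe that the interval from the first \fstep\ to the first \cstep\ is contained in the interval from the first \fcas\ on $r$ to the first \cstep, since the first \fcas\ belonging to $U$ on $r$ precedes the first \fstep. The only cosmetic difference is that you obtain this ordering directly from the first conclusion of Lemma~\ref{lem-records-frozen}, whereas the paper re-reads it off the pseudocode of \help.
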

\begin{proof}
Suppose there is a \fstep\ belonging to $U$.
By Lemma~\ref{lem-records-frozen}, each $r$ in $U.V$ is frozen for $U$ at all times between the first \fcas\ belonging to $U$ on $r$
and the first \cstep\ belonging to $U$.
It then follows directly from the pseudocode of \help\ that the first \fstep\ belonging to $U$ must follow the first \fcas\ belonging to $U$ on $r$,
for each $r$ in $U.V$, and precede the first \cstep\ belonging to $U$.
\end{proof}

\begin{cor} \label{cor-marksteps}
A successful \markstep\ belonging to $U$ can occur only while $r$ is frozen for $U$.
\end{cor}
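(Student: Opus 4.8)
The plan is to reduce this corollary to Corollary~\ref{cor-records-frozen-from-fass-to-uass} by way of Lemma~\ref{lem-state-transitions2}. First I would read the statement in the natural quantified form it is clearly meant to have: for each \rec\ $r$, a successful \markstep\ belonging to $U$ on $r$ can occur only at a time when $r$ is frozen for $U$. So fix such a step: let \textit{ms} be a successful \markstep\ belonging to $U$ on $r$, performed inside an invocation $H$ of \help$(scxPtr)$ with $scxPtr$ pointing to $U$.

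Since \textit{ms} is performed at line~\ref{help-markstep}, the invocation $H$ must previously have executed line~\ref{help-fstep}; that execution is a \fstep\ belonging to $U$, so in particular a \fstep\ belongs to $U$, and moreover $r$ is written at line~\ref{help-markstep} only for \rec s in $scxPtr.R$, so $r \in U.R \subseteq U.V$. Next I would invoke Lemma~\ref{lem-state-transitions2}, which already establishes two facts I need: only the first \markstep\ belonging to $U$ on $r$ can be successful, and the first \markstep\ belonging to $U$ on $r$ occurs after the first \fstep\ belonging to $U$ and before the first \cstep\ belonging to $U$. Consequently \textit{ms} \emph{is} that first \markstep, and it lies strictly between the first \fstep\ belonging to $U$ and the first \cstep\ belonging to $U$. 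Finally, since a \fstep\ belongs to $U$, Corollary~\ref{cor-records-frozen-from-fass-to-uass} tells us that every $r$ in $U.V$ — in particular our $r$ — is frozen for $U$ throughout that same interval, so $r$ is frozen for $U$ at the moment \textit{ms} is performed.

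I do not expect a genuine obstacle here, since all of the substantive work is already packaged in Lemma~\ref{lem-state-transitions2} and Corollary~\ref{cor-records-frozen-from-fass-to-uass}. The one point that needs a moment of care is the step that lets us apply the timing bound from Lemma~\ref{lem-state-transitions2}: we must first note that a \emph{successful} \markstep\ of $U$ on $r$ is necessarily the \emph{first} \markstep\ of $U$ on $r$, which is exactly the "only the first \markstep\ can be successful" clause of that lemma, together with the observation that $r.marked$ only ever changes from \false\ to \true.
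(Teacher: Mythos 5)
Your proof is correct and follows exactly the paper's route: the paper's own proof is the one-liner ``Immediate from Lemma~\ref{lem-state-transitions2} and Corollary~\ref{cor-records-frozen-from-fass-to-uass},'' and you have simply unpacked that immediacy, correctly noting that a successful \markstep\ must be the first one (so Lemma~\ref{lem-state-transitions2}'s timing bound applies) and that the freezing interval from Corollary~\ref{cor-records-frozen-from-fass-to-uass} covers it. No gaps.
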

\begin{proof}
Immediate from Lemma~\ref{lem-state-transitions2} and Corollary~\ref{cor-records-frozen-from-fass-to-uass}.
\end{proof}

\begin{lem} \label{lem-become-frozen-only-by-info-change}
A \rec\ can only be changed from unfrozen to frozen by a change in its $\info$ field (which can only be the result of a \fcas).
\end{lem}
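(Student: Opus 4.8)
The plan is to prove the contrapositive one step at a time. Fix a configuration $C$ in which $r$ is unfrozen and let $C'$ be the configuration immediately after $C$, in which $r$ is frozen; let $s$ be the step between them. I would show that $s$ must change $r.\info$, which by the proof of Lemma~\ref{lem-no-aba-info} can only be done by a \fcas\ at line~\ref{help-fcas}; the general statement then follows by applying this to the first step in any interval after which $r$ becomes frozen. The starting observation is that, by Figure~\ref{fig-freezing-table}, whether $r$ is frozen in a configuration depends only on $r.\info$, on the $state$ field of the \op\ that $r.\info$ currently points to, and on $r.marked$. Hence either $s$ changes $r.\info$ --- and we are done --- or $r.\info$ is unchanged across $s$, and then $s$ must change $U.state$ or $r.marked$, where $U$ is the \op\ to which $r.\info$ points in both $C$ and $C'$. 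First I would dispose of these last two cases.

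If $s$ changes $U.state$: by Corollary~\ref{cor-state-transitions-respect-figure}, the only state transitions are from \freezing\ to \retry\ and from \freezing\ to \done, so $U.state$ is \freezing\ in $C$. But then, by the definition of frozen summarized in Figure~\ref{fig-freezing-table}, $r$ is already frozen for $U$ in $C$, contradicting the choice of $C$. If $s$ changes $r.marked$: since $r.marked$ only ever changes from \false\ to \true, and only at line~\ref{help-markstep}, the step $s$ is a successful \markstep\ belonging to some \op\ $W$ and operating on $r$, with $r$ in $W.R$. By Corollary~\ref{cor-marksteps}, $r$ is frozen for $W$ at the moment $s$ is performed, i.e., in $C$; in particular $r.\info$ points to $W$ in $C$, so $W = U$ and $r$ is frozen for $U$ in $C$ --- again contradicting the choice of $C$.

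The step I expect to be the main obstacle is the \markstep\ case. A priori, a \markstep\ could turn an unfrozen \rec\ whose \info\ field points to a \done\ \op\ into a frozen one, since in Figure~\ref{fig-freezing-table} an unmarked \rec\ pointing to a \done\ \op\ is unfrozen while a marked \rec\ pointing to a \done\ \op\ is frozen. What rules this out is precisely Corollary~\ref{cor-marksteps}: a successful \markstep\ on $r$ occurs only while $r$ is already frozen for the \op\ being helped, and while $r$ is still unmarked this forces that \op's $state$ to be \freezing\ rather than \done. So the argument hinges on having Corollary~\ref{cor-marksteps} and Corollary~\ref{cor-state-transitions-respect-figure} available; given those, the remainder is a short case analysis over the three fields that determine whether $r$ is frozen.
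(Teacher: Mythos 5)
Your proof is correct and follows essentially the same strategy as the paper's: reduce the question, via the table in Figure~\ref{fig-freezing-table}, to the three quantities that determine whether $r$ is frozen, and rule out the transitions that do not go through $r.\info$ using Corollary~\ref{cor-state-transitions-respect-figure}. There is no circularity in your appeal to Corollary~\ref{cor-marksteps}, which is established just before this lemma and does not depend on it.

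In fact your write-up is slightly more careful than the paper's own proof on exactly the point you flag as the main obstacle. The paper's argument only enumerates the $state$-field transitions (\done\ or \retry\ to \freezing, and \retry\ to \done\ while marked) and dismisses them by Corollary~\ref{cor-state-transitions-respect-figure}; it does not explicitly address the remaining row of the table, namely the transition from $[\done,\false]$ to $[\done,\true]$ caused by a \markstep\ alone. Your use of Corollary~\ref{cor-marksteps} to show that a successful \markstep\ on $r$ can only happen while $r$ is already frozen (hence cannot be the step that turns an unfrozen \rec\ into a frozen one) closes that case cleanly. The only superfluous move is the deduction that $W=U$ in that case: being frozen for $W$ in $C$ already contradicts the assumption that $r$ is unfrozen in $C$, so the identification of $W$ with $U$ is not needed.
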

\begin{proof}
Let $r$ be a \rec\ whose \info\ field points to an \op\ $U$.
According to the definition of a frozen \rec\ (see Figure~\ref{fig-freezing-table}), if $r.\info$ does not change, then $r$ can only become frozen if $U.state$ changes from \done\ or \retry\ to \freezing, or from \retry\ to \done\ (provided $r$ is marked).  However, both cases are impossible by Corollary~\ref{cor-state-transitions-respect-figure}.
\end{proof}


\newcommand{\permafrozen}{permafrozen}

\begin{defn}
A \rec\ $r$ is called \textbf{\permafrozen\ for} \op\ $U$ if $r$ is marked, $r.\info$ points to $U$ and the $U.state$ is \done.  Notice that a \rec\ that is \permafrozen\ for $U$ is also frozen for $U$.
\end{defn}

\begin{lem} \label{lem-finalized-forever-frozen}
Once a \rec\ $r$ is \permafrozen\ for \op\ $U$, it remains \permafrozen\ for $U$ thereafter.
\end{lem}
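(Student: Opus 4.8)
The plan is to show that the three defining conditions of ``\permafrozen\ for $U$'' --- namely $r.marked = \true$, $r.\info$ points to $U$, and $U.state = \done$ --- are each individually preserved once they hold simultaneously, and that none of them can be undone independently of the others. I would argue as follows. Suppose $r$ is \permafrozen\ for $U$ at some time $t$. Since $U.state = \done$ at $t$, a \cstep\ belonging to $U$ must have occurred before $t$, and by the pseudocode of \help\ every \cstep\ is preceded by a \fstep\ belonging to $U$; hence a \fstep\ belongs to $U$. By Lemma~\ref{lem-fass-then-no-bcas} there is then no \astep\ belonging to $U$, so (using Lemma~\ref{lem-state-transitions2} / Corollary~\ref{cor-state-transitions-respect-figure}) $U.state$ can never change away from \done; thus $U.state = \done$ at all times $\geq t$.

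Next I would handle $r.marked$: the $marked$ bit of a \rec\ is initially \false\ and is only ever set (to \true) at line~\ref{help-markstep}, never cleared, so once $r.marked = \true$ it stays \true\ forever; in particular it is \true\ at all times $\geq t$. Finally, for $r.\info$: a \fstep\ belongs to $U$, so by Lemma~\ref{lem-records-frozen} (or Corollary~\ref{cor-if-succ-fcas-then-point-u-until-bcas-or-uass}) $r.\info$ points to $U$ at all times between the first \fcas\ belonging to $U$ on $r$ and the first \cstep\ belonging to $U$. To extend this past the \cstep, I would invoke Lemma~\ref{lem-become-frozen-only-by-info-change} together with Lemma~\ref{lem-no-info-change-while-freezing}: once $U.state = \done$ and $r$ is marked, $r$ is frozen for $U$ (by the definition of frozen, Figure~\ref{fig-freezing-table}), and a frozen \rec's \info\ field cannot change while it is frozen --- so the only way $r.\info$ could change would be at a moment when $r$ is not frozen for $U$, but we have just shown $r.marked$ and $U.state$ stay fixed, so $r$ stays frozen for $U$ exactly as long as $r.\info$ points to $U$; combining these, $r.\info$ cannot change at all after $t$. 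Since all three conditions persist, $r$ remains \permafrozen\ for $U$ thereafter.

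The main obstacle I anticipate is the circularity between ``$r.\info$ points to $U$'' and ``$r$ is frozen for $U$'': the cleanest available tool, Lemma~\ref{lem-no-info-change-while-freezing}, only forbids \info-changes while $r.\info$ points to an \op\ whose $state$ is \freezing, which is not our situation ($U.state = \done$ here). So I would instead lean on Lemma~\ref{lem-records-frozen}/Corollary~\ref{cor-if-succ-fcas-then-point-u-until-bcas-or-uass}, which already pins $r.\info$ to $U$ up to the first \cstep, and then argue separately that $r.\info$ cannot change \emph{at or after} the first \cstep: any such change would be a successful \fcas\ on $r$, but by Lemma~\ref{lem-no-succ-fcas-after-fass-or-bcas} no \fcas\ belonging to any \op\ can succeed once it would move $r.\info$ away, because such a \fcas\ reads its expected old value from a fresh \op's $\llresults$ table whose linked \llt$(r)$ must have seen $r$ unfrozen --- and Lemma~\ref{lem-no-aba-info} guarantees this old value is never restored. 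In other words, the real content is: after the \cstep, $r$ is permanently frozen for $U$, hence no further \fcas\ on $r$ can ever succeed, hence $r.\info$ is stuck at $U$. I would phrase this carefully to avoid assuming what I am proving, probably by taking the \emph{first} hypothetical later change to $r.\info$ and deriving a contradiction from Lemma~\ref{lem-no-aba-info} and the fact that, up to that first change, all three \permafrozen\ conditions provably still hold.
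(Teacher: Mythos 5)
Your overall strategy matches the paper's at a high level: the persistence of $U.state = \done$ and of $r.marked$ is immediate, and the entire difficulty is showing that $r.\info$ cannot be moved off $U$ by a later successful \fcas. But your argument for that last step has a genuine gap. You reduce it to: the interfering \fcas\ (belonging to some \op\ $W$) uses old value $U$, which was read by the \llt$(r)$ linked to the \sct\ that created $W$, and that \llt\ ``must have seen $r$ unfrozen,'' which you then try to contradict via Lemma~\ref{lem-no-aba-info}. The trouble is that there is no ABA violation to exploit: the old value $U$ was never removed from $r.\info$, so the \fcas\ would genuinely succeed, and the linked \llt\ can perfectly legitimately have seen $r$ unfrozen --- namely if it ran at a time when $U.state$ was already \done\ but $r$ was not yet marked (the ``unfrozen, \done'' row of Figure~\ref{fig-freezing-table}). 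Your observation that the three \permafrozen\ conditions hold from $t$ up to the first hypothetical change does not rule this out, because that \llt\ may have completed before $t$. (Your appeal to Lemma~\ref{lem-no-succ-fcas-after-fass-or-bcas} is also off target: it constrains \fcas s belonging to $U$ itself, not \fcas s belonging to $W$ that overwrite $U$.)

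The missing idea is a case analysis on \emph{which \op\ marked $r$}. Since the linked \llt\ must pass the test on line~\ref{ll-check-frozen} with $state = \done$ (the cases $state=\freezing$ and $state=\retry$ are impossible for a linked \llt\ here), it must see $marked_2 = \false$; hence $r$ becomes marked strictly after that read and before the \fcas. By Lemma~\ref{lem-no-aba-info}, $r.\info$ points to $U$ throughout that interval, so by Corollary~\ref{cor-marksteps} the successful \markstep\ that marks $r$ must belong to $U$ itself. But by Lemma~\ref{lem-state-transitions2} any successful \markstep\ of $U$ precedes $U$'s first \cstep, which precedes the \llt's observation of $U.state=\done$; so $r$ would already have been marked when the \llt\ read $marked_2$ --- a contradiction. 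This is precisely how the paper closes the argument, and without it your proof does not go through.
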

\begin{proof}
By definition, when $r$ is \permafrozen\ for $U$, it is frozen for $U$, $U.state$ is \done\ and $r.marked = \true$.
Once $r.marked$ is set to \true, it can never be changed back to \false.
By Corollary~\ref{cor-state-transitions-respect-figure},
$U.state$ was never \retry,
$U.state$ will remain \done\ forever, and
$r$ will be frozen for $U$ as long as $r.\info$ points to $U$.
It remains only to prove that $r.\info$ cannot change while $r$ is \permafrozen\ for $U$.
Note that $r.\info$ can be changed only by a successful \fcas.

To obtain a contradiction,
suppose a \fcas\ {\it fcas} changes $r.\info$ from $U$ to $W$ while $r$ is \permafrozen\ for $U$.
By Lemma~\ref{lem-no-steps-belong-to-dummy-op},
$W$ is not the dummy \op.
Let $S$ be the invocation of \sct\ that created $W$.
From the code of \help, $r$ is in $W.V$. So, by the precondition of \sct, there is an invocation of \llt($r$) linked to $S$.
By Observation~\ref{obs-op-invariants}.\ref{inv-llresults} and line~\ref{help-rinfo}, the old value for {\it fcas} (a pointer to $U$) was read at line~\ref{ll-read} of the \llt$(r)$ linked to $S$.
Let $I$ be the invocation of $\llt(r)$ linked to $S$.
Since we have argued that $U.state$ is never \retry, $U.state \in \{\freezing, \done\}$ when $I$ reads $ state$ from $U.state$ at line~\ref{ll-read-state}.

If $state = \freezing$ then $I$ does not enter the if-block at line~\ref{ll-check-frozen}, and returns \fail \ or \finalized, which contradicts Definition~\ref{defn-llt-linked-to-sct}.\ref{prop-returns-value-different-from-fail-or-finalized}.

Now, consider the case where $state = \done$.
If we can argue that $r$ is marked when $I$ performs line~\ref{ll-read-marked2}, then we shall obtain the same contradiction as in the previous case.
Since $state = \done $, a \cstep \ belonging to $U$ occurs before $I$ performs line~\ref{ll-read-state}.
By Lemma~\ref{lem-state-transitions2}, any successful \markstep\ belonging to $U$ occurs prior to this \cstep.
Therefore, if $r$ is in $U.R$, then $r$ will be marked when $I$ performs line~\ref{ll-read-marked2}, and we obtain the same contradiction.
The only remaining possibility is that $r$ is not in $U.R$, and $r$ is marked by a successful \markstep\ $mstep$ belonging to some other \op\ $U'$ \textit{after} $I$ performs line~\ref{ll-read-marked2}, and before {\it fcas} occurs (which is while $r$ is \permafrozen\ for $U$).
Since $r.\info$ points to $U$ when $I$ performs line~\ref{ll-read}, and again when {\it fcas} occurs, Lemma~\ref{lem-no-aba-info} implies that $r.\info$ points to $U$ throughout this time.
However, this contradicts Corollary~\ref{cor-marksteps}, which states that $mstep$ can only occur while $r.\info$ points to $U'$.
\end{proof}

\begin{lem} \label{lem-frozen-forever-after-markstep}
Suppose a successful \markstep\ $mstep$ belonging to an \op \ $U$ on $r$ occurs.
Then, $r$ is frozen for $U$ when $mstep$ occurs, and forever thereafter.
\end{lem}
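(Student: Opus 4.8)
The plan is to track the pair $(r.\info, U.state)$ forward in time starting from the step $mstep$, and to show that at every instant from $mstep$ on it satisfies the criterion for ``$r$ is frozen for $U$'' given in Figure~\ref{fig-freezing-table}. The key dichotomy is: if $U$ eventually commits, then just after the first \cstep\ belonging to $U$ the \rec\ $r$ is \permafrozen\ for $U$, and Lemma~\ref{lem-finalized-forever-frozen} extends this to all later times; if $U$ never commits, then $U.state$ stays \freezing\ forever and $r$ stays frozen trivially.

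First I would collect the structural facts. Since $mstep$ belongs to $U$, Lemma~\ref{lem-no-steps-belong-to-dummy-op} gives that $U$ is not the dummy \op, so $U$ was created at line~\ref{sct-create-op} with $state = \freezing$; and since $mstep$ writes $r.marked$, the \rec\ $r$ is in $U.R$, hence in $U.V$. The invocation $H$ of \help\ that performs $mstep$ reaches the loop at line~\ref{help-markstep} only after executing line~\ref{help-fstep}, so a \fstep\ belongs to $U$ and precedes $mstep$; then Lemma~\ref{lem-fass-then-no-bcas} gives that no \astep\ belongs to $U$, so $U.state$ is never \retry. Because $r.marked$ is written only at line~\ref{help-markstep} and only ever set to \true, $mstep$ is the first \markstep\ on $r$ belonging to $U$ (and the only one that can succeed); and since, by the code of \help, every \cstep\ belonging to $U$ is preceded by a \markstep\ on $r$ belonging to $U$ (cf.\ Lemma~\ref{lem-state-transitions2}), the step $mstep$ precedes the first \cstep\ belonging to $U$, if any.

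Now the main step. By Corollary~\ref{cor-records-frozen-from-fass-to-uass}, $r$ is frozen for $U$ at all times between the first \fstep\ belonging to $U$ and the first \cstep\ belonging to $U$. Since $mstep$ occurs after the first \fstep\ belonging to $U$, this already shows that $r$ is frozen for $U$ at $mstep$ (which also follows directly from Corollary~\ref{cor-marksteps}) and throughout the interval $[mstep, c]$, where $c$ denotes the first \cstep\ belonging to $U$. If $U$ never commits, then $U.state = \freezing$ forever while $r.\info = U$ forever after the first \fcas\ belonging to $U$ on $r$ (by Lemma~\ref{lem-if-fass-then-all-succ-fcas} and Corollary~\ref{cor-if-succ-fcas-then-point-u-until-bcas-or-uass}), so $r$ is frozen for $U$ forever and we are done. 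Otherwise, $r.\info$ points to $U$ just before $c$ (since $r$ is frozen for $U$ there), and $c$ writes only $state$, so $r.\info$ still points to $U$ just after $c$; moreover $U.state = \done$ just after $c$, and $r$ is marked there since $mstep$ precedes $c$. Hence $r$ is \permafrozen\ for $U$ just after $c$, and Lemma~\ref{lem-finalized-forever-frozen} keeps it \permafrozen, hence frozen, for $U$ forever after. Stitching $[mstep, c]$ and $[c, \infty)$ together proves the lemma.

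The crux I anticipate is the hand-off at $c$: one must verify that, at that single instant, all three hypotheses of Lemma~\ref{lem-finalized-forever-frozen} hold simultaneously --- $U.state = \done$ (clear from $c$), $r.\info = U$ (from $r$ being frozen for $U$ immediately before $c$, plus the fact that a \cstep\ leaves $r.\info$ untouched), and $r$ marked (which needs $mstep$ to precede $c$, and hence relies on the ordering of \markstep s relative to \cstep s in \help, cf.\ Lemma~\ref{lem-state-transitions2}). The branch where $U$ never commits is easy but must be stated explicitly.
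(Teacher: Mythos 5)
Your proof is correct and follows essentially the same route as the paper's: both establish that $r$ is frozen for $U$ from $mstep$ until the first \cstep\ belonging to $U$ (via Corollaries~\ref{cor-marksteps} and~\ref{cor-records-frozen-from-fass-to-uass}), observe that $mstep$ precedes that \cstep, conclude that $r$ becomes \permafrozen\ for $U$ immediately after it, and invoke Lemma~\ref{lem-finalized-forever-frozen} to extend this forever. Your explicit treatment of the case where no \cstep\ ever occurs is handled in the paper only implicitly (via its stated convention for intervals whose right endpoint never happens), but the substance is identical.
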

\begin{proof}
By Corollary~\ref{cor-marksteps}, $mstep$ must occur while $r$ is frozen for $U$.
From the code of \help, a \fstep \ belonging to $U$ must precede $mstep$, and $r$ must be in $V$ (since it is marked at line~\ref{help-markstep}).
Thus, Corollary~\ref{cor-records-frozen-from-fass-to-uass} implies that $r$ is frozen for $U$ at all times between $mstep$ and the first \cstep\ belonging to $U$.
Since $r.marked$ is never changed from \true \ to \false, $mstep$ must be the first \markstep\ that ever modifies $r.marked$.
From the code of \help, $mstep$ must precede the first \cstep \ belonging to $U$.
If any \cstep\ belonging to $U$ occurs after $mstep$, immediately after the first such \cstep, $r$ will be marked, and $r.\info.state$ will be \done, so $r$ will become \permafrozen\ for $U$.
By Lemma~\ref{lem-finalized-forever-frozen}, $r$ will remain frozen for $U$, thereafter.
\end{proof}

\begin{lem} \label{lem-r-not-frozen-in-good-llt}
Suppose $I$ is an invocation of $\llt(r)$ that returns a value different from \fail \ or \finalized.
Then, $r$ is not frozen at any time in $[t_0, t_1]$, where $t_0$ is when $I$ reads $r\info.state$ at line~\ref{ll-read-state}, and $t_1$ is when $I$ reads $r.\info$ at line~\ref{ll-reread}.
\end{lem}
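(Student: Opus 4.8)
The plan is to leverage the fact that an \llt$(r)$ returning a value other than \fail\ or \finalized must return a snapshot, i.e., it must reach line~\ref{ll-return}, and to work backwards from that to pin down $r.\info$ and $r.marked$ throughout $[t_0,t_1]$.

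First I would note that reaching line~\ref{ll-return} requires both that the test at line~\ref{ll-check-frozen} succeed and that $I$ find $r.\info = r\info$ at line~\ref{ll-reread} (which is $t_1$). Since $r\info$ was read from $r.\info$ at line~\ref{ll-read}, and line~\ref{ll-read} precedes $t_0$, which precedes $t_1$, Lemma~\ref{lem-no-aba-info} shows $r.\info$ equals $r\info$ at every instant in $[t_0,t_1]$; let $U$ denote the \op\ it points to. Passing the test at line~\ref{ll-check-frozen} means that the value $state$ read from $U.state$ at $t_0$ is either \retry, or \done\ with $marked_2 = \false$. By Corollary~\ref{cor-state-transitions-respect-figure}, $U.state$ can never become \freezing\ and cannot move between \retry\ and \done, so $U.state$ equals $state$ throughout $[t_0,t_1]$.

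Then I would do a two-way case split. If $state = \retry$, then throughout $[t_0,t_1]$ the \rec\ $r$ has $r.\info$ pointing to an \op\ with $state = \retry$, which by the characterization of freezing in Figure~\ref{fig-freezing-table} is unfrozen regardless of $r.marked$, and we are done. If $state = \done$ and $marked_2 = \false$, it suffices (again by Figure~\ref{fig-freezing-table}) to show $r.marked = \false$ throughout $[t_0,t_1]$. Since $r.marked$ only changes from \false\ to \true, and $I$ reads it as \false\ at line~\ref{ll-read-marked2} (which is after $t_0$), it is \false\ at $t_0$. It remains to rule out a marking of $r$ during $[t_0,t_1]$. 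First, $r \notin U.R$: since $U.state = \done$ at $t_0$, a \cstep\ belonging to $U$ occurred by $t_0$, and by the code of \help\ together with Lemma~\ref{lem-state-transitions2} a \markstep\ belonging to $U$ on $r$ would have preceded it, leaving $r$ marked before $t_0$, a contradiction. Hence no \markstep\ belonging to $U$ is ever performed on $r$, so any \markstep\ writing $r.marked$ belongs to some other \op\ $U'$; by Corollary~\ref{cor-marksteps} such a step can occur only while $r.\info$ points to $U'$, but $r.\info$ points to $U \neq U'$ throughout $[t_0,t_1]$. Therefore no \markstep\ touches $r$ in that interval, so $r.marked$ stays \false\ there, and $r$ is unfrozen.

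The step I expect to be the main obstacle is the $state = \done$ case: arguing that $r$ cannot be marked anywhere in $(t_0,t_1]$. It is the one place the proof must combine several earlier facts at once — monotonicity of $r.marked$, the ordering of a \markstep\ of $U$ before its \cstep\ (to force $r \notin U.R$), and Corollary~\ref{cor-marksteps} forbidding a foreign \markstep\ unless $r.\info$ swings away from $U$ — and it is precisely the ABA-freedom of \info\ (Lemma~\ref{lem-no-aba-info}) that keeps $r.\info$ pinned to $U$ over the whole interval so that the last of these applies. The remainder is routine bookkeeping with the absorbing behaviour of the \done\ and \retry\ states.
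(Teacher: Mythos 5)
Your proof is correct and follows essentially the same route as the paper's: the same case split on $state$, ABA-freedom of $\info$ (Lemma~\ref{lem-no-aba-info}) to pin $r.\info$ to a single \op\ over $[t_0,t_1]$, Corollary~\ref{cor-state-transitions-respect-figure} to freeze the $state$ value, and Corollary~\ref{cor-marksteps} together with the \markstep-before-\cstep\ ordering to keep $r.marked$ at \false\ in the \done\ case. The only cosmetic difference is your intermediate detour establishing $r \notin U.R$; the paper argues directly that every successful \markstep\ belonging to that \op\ precedes its first \cstep\ and hence precedes $t_0$.
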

\begin{proof}
We prove that $r$ is not frozen at any time between $t_0$ and $t_1$.
Since $I$ returns a value different from \fail\ or \finalized, it enters the if-block at line~\ref{ll-check-frozen}, and sees $r.\info = r\info$ at line~\ref{ll-reread}.
Therefore, it sees either $state =$ \done\ and $r.marked = \false$, or $state =$ \retry\ at line~\ref{ll-check-frozen}.
In each case, Corollary~\ref{cor-state-transitions-respect-figure} guarantees that $r\info.state$ will never change again after time $t_0$.
Thus, if $state = \retry$, then $r$ is not frozen at any time between $t_0$ and $t_1$.
Now, suppose $state = \done$.
We prove that $r.marked$ does not change between $t_0$ and $t_1$.
A pointer to an \op\ $W$ is read from $r.\info$ and stored in the local variable $r\info$ at line~\ref{ll-read}, before $t_0$.
At line~\ref{ll-reread}, $r.\info$ still contains a pointer to $W$.
By Lemma~\ref{lem-no-aba-info}, $r.\info$ must not change between line~\ref{ll-read} and line~\ref{ll-reread}.
Therefore, $r.\info$ points to $W$ at all times between $t_0$ and $t_1$.
By Corollary~\ref{cor-marksteps}, a successful \markstep\ can occur between $t_0$ and $t_1$ only if it belongs to $W$.
Since $state = \done$, a \cstep \ belonging to $W$ must have occurred before $t_0$.
By Lemma~\ref{lem-state-transitions2}, any successful \markstep\ belonging to $W$ must have occurred before $t_0$.
Therefore, $W.state =$ \done\ and $r.marked = \false$ throughout [$t_0, t_1$].
\end{proof}

\begin{cor} \label{cor-r-not-frozen-in-linked-llt}
Let $S$ be an invocation of \sct, and $r$ be any \rec \ in the $V$ sequence of $S$. 
Then, $r$ is not frozen at any time in $[t_0, t_1]$, where $t_0$ is when the \llt$(r)$ linked to $S$ reads $r\info.state$ at line~\ref{ll-read-state}, and $t_1$ is when the \llt$(r)$ linked to $S$ reads $r.\info$ at line~\ref{ll-reread}.
\end{cor}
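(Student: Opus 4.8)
The plan is to obtain this corollary as an essentially immediate consequence of Lemma~\ref{lem-r-not-frozen-in-good-llt}, so the only real work is checking that the hypothesis of that lemma is met by the \llt\ referred to in the statement. First I would invoke precondition~(\presctlinked) of \sct: since $r$ is in the $V$ sequence of the invocation $S$, the process performing $S$ has executed an invocation $I_r$ of \llt$(r)$ linked to $S$. By Definition~\ref{defn-llt-linked-to-sct}, part~\ref{prop-returns-value-different-from-fail-or-finalized}, $I_r$ returns a value different from \fail\ or \finalized. From the pseudocode of \llt, the only three possible return values are a snapshot (line~\ref{ll-return}), \finalized\ (line~\ref{ll-return-finalized}), and \fail\ (line~\ref{ll-return-fail}), so $I_r$ must return a snapshot at line~\ref{ll-return}; in particular $I_r$ passes the test at line~\ref{ll-check-frozen}, reaches line~\ref{ll-reread}, and the equality test there succeeds. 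Hence the times $t_0$ (when $I_r$ reads $r\info.state$ at line~\ref{ll-read-state}) and $t_1$ (when $I_r$ reads $r.\info$ at line~\ref{ll-reread}) named in the statement are well-defined. (This can also be read off directly from Observation~\ref{obs-op-invariants}.\ref{inv-unfrozen}, which already records that the \llt$(r)$ linked to $S$ enters the if-block at line~\ref{ll-check-frozen} and sees $r.\info = r\info$ at line~\ref{ll-reread}.)

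Second, I would apply Lemma~\ref{lem-r-not-frozen-in-good-llt} to the invocation $I_r$, which is precisely an invocation of \llt$(r)$ returning a value different from \fail\ or \finalized. The lemma's conclusion is that $r$ is not frozen at any time in $[t_0, t_1]$, with $t_0$ and $t_1$ defined exactly as in the corollary's statement, so this already is the corollary. I do not anticipate any obstacle: all the case analysis (on whether the \llt\ saw $state = \retry$ or $state = \done$ with $r$ unmarked, the invocation of Corollary~\ref{cor-state-transitions-respect-figure} to rule out further $state$ changes, and the use of Lemma~\ref{lem-no-aba-info} and Corollary~\ref{cor-marksteps} to rule out a change of $r.marked$) is carried out inside Lemma~\ref{lem-r-not-frozen-in-good-llt}; here one merely has to observe that "linked to $S$" entails the lemma's hypothesis. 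The proof is therefore a one- or two-sentence dereferencing of the precondition of \sct\ and the definition of a linked \llt, followed by a citation of the lemma.
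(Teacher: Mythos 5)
Your proposal is correct and matches the paper's own proof, which is a one-line citation of Definition~\ref{defn-llt-linked-to-sct}.\ref{prop-returns-value-different-from-fail-or-finalized} together with Lemma~\ref{lem-r-not-frozen-in-good-llt}. Your additional unpacking (that the linked \llt\ exists by the precondition of \sct, returns a snapshot, and hence that $t_0$ and $t_1$ are well-defined) is exactly the implicit content of that citation.
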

\begin{proof}
Immediate from Definition~\ref{defn-llt-linked-to-sct}.\ref{prop-returns-value-different-from-fail-or-finalized} and Lemma~\ref{lem-r-not-frozen-in-good-llt}.
\end{proof}

\subsection{Properties of \upcas\ steps}

\begin{obs} \label{obs-immutable-fields-do-not-change}
An immutable field of a \rec\ cannot change from its initial value.
\end{obs}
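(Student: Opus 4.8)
The plan is to argue by a straightforward inspection of the pseudocode in Figure~\ref{code-main}, showing that no line ever writes to an immutable field of a \rec. First I would recall that every field of a \rec, including each immutable field, is assigned a value at the moment the \rec\ is created, and that immutable fields are never given as the target of an assignment in any user-level operation (by the definition of the primitives in Section~\ref{sec-operations}). So it suffices to check that the implementation of \llt, \sct\ and \vlt\ in Figure~\ref{code-main} never modifies an immutable field.

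Next I would enumerate the only places in Figure~\ref{code-main} where a field of a \rec\ is written. There are exactly three: the \fcas\ at line~\ref{help-fcas}, which writes only to the \info\ field of a \rec; the \markstep\ at line~\ref{help-markstep}, which writes only to the $marked$ field of a \rec; and the \upcas\ at line~\ref{help-upcas}, which writes to the field pointed to by $scxPtr.fld$. The \info\ and $marked$ fields are the special fields introduced by the \llt/\sct\ algorithm (see Figure~\ref{code1}), not immutable user-defined fields, so the first two cases cannot touch an immutable field.

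For the third case, I would invoke Observation~\ref{obs-op-invariants}.\ref{inv-fld}: for any \op\ $U$ created by an invocation of \sct, $U.fld$ points to a \emph{mutable} field of a \rec\ in $U.V$. Since every \upcas\ is performed inside \help$(scxPtr)$ on the field pointed to by $scxPtr.fld$, and $scxPtr$ points to some such \op, the \upcas\ can only modify a mutable field. Hence none of the three kinds of writes ever changes an immutable field, so an immutable field retains its initial value forever. The only ``obstacle'' here is simply making sure the enumeration of writes to \rec\ fields is exhaustive; once that is verified, the argument is immediate, and I expect this observation to be used as a trivial lemma in later proofs (e.g., when reasoning about $key$ in the multiset).
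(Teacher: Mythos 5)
Your proposal is correct and follows essentially the same route as the paper's (one-sentence) proof: the only writes to user-defined fields are the \upcas\ steps, and $fld$ must point to a mutable field; you merely spell out the enumeration of writes (including the internal \info\ and $marked$ fields) that the paper leaves implicit. No gap.
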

\begin{proof}
This observation follows from the facts that
\rec s can only be changed by \sct\ and 
an invocation of \sct\ can only accept a pointer
to a mutable field as its $fld$ argument (to modify).
\end{proof}

\begin{obs} \label{obs-only-upcas-modifies-records}
Each mutable field of a \rec\ can be modified only by a successful \upcas.
\end{obs}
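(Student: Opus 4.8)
The plan is a straightforward case analysis over every line of pseudocode in Figure~\ref{code-main} (the application code, such as the multiset, touches \rec s only through our primitives, so it need not be examined separately). First I would observe that \llt\ and \vlt\ only \emph{read} fields of \rec s: the relevant accesses are at lines~\ref{ll-read-marked1}, \ref{ll-read}, \ref{ll-read-marked2}, \ref{ll-collect} and~\ref{ll-reread} in \llt, and at line~\ref{vlt-reread} in \vlt; the store at line~\ref{ll-store} writes only to $p$'s local table. The \sct\ routine itself writes only to a newly created \op\ (line~\ref{sct-create-op}, and the new table at line~\ref{sct-create-llresults}) and then invokes \help. Hence every write to a field of an already-existing \rec\ occurs inside \help.

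Within \help\ there are exactly three steps that write to a \rec: the \fcas\ at line~\ref{help-fcas}, which writes $r.\info$; the \markstep\ at line~\ref{help-markstep}, which writes $r.marked$; and the \upcas\ at line~\ref{help-upcas}, which writes the field addressed by $scxPtr.fld$. The first two touch the $\info$ and $marked$ fields, neither of which is one of the mutable user-defined fields $m_1,\dots,m_y$. By Observation~\ref{obs-op-invariants}.\ref{inv-fld}, $scxPtr.fld$ points to a mutable field of a \rec\ in $scxPtr.V$, so the \upcas\ is the only step that can ever write a mutable field. Finally, a \cas\ changes its target location only when it succeeds, so a mutable field can change only at a successful \upcas.

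Since this is essentially a code-inspection argument, there is no real obstacle; the points requiring care are (i) being explicit that the $\info$ and $marked$ fields are not among the \emph{mutable} fields that the statement is about, so that the \fcas\ and \markstep\ steps are correctly set aside, and (ii) invoking Observation~\ref{obs-op-invariants}.\ref{inv-fld} to rule out the possibility that $scxPtr.fld$ secretly addresses $\info$, $marked$, or an immutable field — and noting that this is not circular, since the proof of Observation~\ref{obs-op-invariants} reasons only about the fields of an \op, not about \rec\ fields. One should also keep in mind that the initialization of a mutable field when its \rec\ is created is not a \emph{modification}; if one prefers to state the observation as ``every write to a mutable field of a \rec\ is a successful \upcas'', then the \rec-creation step must be added as one further, harmless case.
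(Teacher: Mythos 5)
Your proposal is correct, and it supplies exactly the code-inspection argument the paper treats as immediate: the paper states this observation without proof, relying on precisely the enumeration you give (the only writes to a \rec\ inside \help\ are the \fcas\ on $\info$, the \markstep\ on $marked$, and the \upcas\ on the field addressed by $scxPtr.fld$, with Observation~\ref{obs-op-invariants}.\ref{inv-fld} guaranteeing the last targets a mutable field). Your added care about the $\info$/$marked$ fields not being among the user-defined mutable fields, the non-circularity of citing Observation~\ref{obs-op-invariants}, and initialization not counting as modification is all sound.
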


\begin{obs} \label{obs-all-upcas-write-new}
Each \upcas\ belonging to an \op\ $U$ is of the form \cas$(U.fld, U.old, U.new)$. 
Invariant: $U.fld$ and $U.new$ contain the arguments $fld$ and $new$, respectively, that were passed to the invocation of \sct$(V, R, fld, new)$ that created $U$.
\end{obs}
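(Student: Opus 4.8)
The plan is to verify both halves of the statement directly from the pseudocode of \help\ together with the already-established immutability of the non-$state$ fields of an \op.

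First I would pin down the form of the \upcas. The only step labelled \upcas\ in Figure~\ref{code-main} is line~\ref{help-upcas}, namely \cas$(scxPtr.fld, scxPtr.old, scxPtr.new)$, executed inside \help$(scxPtr)$. By Definition~\ref{defn-belongs}, a \upcas\ belongs to an \op\ $U$ precisely when it is performed by a process helping $U$, i.e., a process executing \help$(ptr)$ for some pointer $ptr$ to $U$. Substituting $ptr$ for $scxPtr$ then shows the \upcas\ is \cas$(U.fld, U.old, U.new)$, which is the first assertion.

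For the invariant, I would first observe that $U$ cannot be the dummy \op: by Lemma~\ref{lem-no-steps-belong-to-dummy-op} no \upcas\ belongs to the dummy \op, so the hypothesis forces $U$ to have been created by some invocation $S$ of \sct$(V, R, fld, new)$ at line~\ref{sct-create-op}. That line constructs a fresh \op\ whose $fld$ and $new$ fields are initialized to the arguments $fld$ and $new$ of $S$. Since, as noted in the proof of Observation~\ref{obs-op-invariants}, no field of an \op\ other than $state$ is ever modified after creation, $U.fld$ and $U.new$ retain these initial values at all times, establishing the invariant.

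I do not expect a genuine obstacle here: the statement is essentially a reading of the code combined with the facts that \op\ fields other than $state$ are immutable and that no step belongs to the dummy \op. The only point needing a moment's care is the appeal to Lemma~\ref{lem-no-steps-belong-to-dummy-op} to rule out the dummy \op, since without an originating \sct\ invocation there would be no arguments $fld$ and $new$ to identify with $U.fld$ and $U.new$.
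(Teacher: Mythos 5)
Your proof is correct and takes essentially the same route as the paper's: read off the \cas\ at line~\ref{help-upcas} as operating on $scxPtr.fld$ with old value $scxPtr.old$ and new value $scxPtr.new$, and note that all fields of the \op\ other than $state$ are fixed when it is created at line~\ref{sct-create-op} from the arguments of the \sct. Your additional appeal to Lemma~\ref{lem-no-steps-belong-to-dummy-op} to rule out the dummy \op\ is a small point of rigor that the paper's proof leaves implicit, not a different approach.
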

\begin{proof}
An \upcas\ occurs at line~\ref{help-upcas} in an invocation of \help$(scxPtr)$, where it operates on $scxPtr.fld$, using $scxPtr.old$ as its old value, and $scxPtr.new$ as its new value. The fields of $scxPtr$ do not change after $scxPtr$ is created at line~\ref{sct-create-op}. At this line, the arguments $fld$ and $new$ that were passed to the invocation of \sct$(V, R, fld, new)$ are stored in $scxPtr.fld$ and $scxPtr.new$, respectively.
\end{proof}


\begin{lem} \label{lem-first-upcas-while-frozen}
The first \upcas\ belonging to an \op\ $U$ on a \rec\ $r$ occurs while $r$ is frozen for $U$. 
\end{lem}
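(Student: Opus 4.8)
The plan is to trace the control flow of \help\ to see that any \upcas\ belonging to $U$ is preceded, within the same invocation of \help, by a \fstep\ belonging to $U$, and therefore (via the lemmas already established about the period over which a \rec\ is frozen) occurs while the \rec\ $r'$ containing the field $U.fld$ is frozen for $U$. Concretely, let $r'$ be the \rec\ in $U.V$ that contains the mutable field pointed to by $U.fld$; such an $r'$ exists and is unique by Observation~\ref{obs-op-invariants}.\ref{inv-fld}. The lemma statement writes ``$r$'' for this record; I will keep that notation. Since an \upcas\ belongs to $U$, Lemma~\ref{lem-no-steps-belong-to-dummy-op} rules out $U$ being the dummy \op, so $U$ was created at line~\ref{sct-create-op} with $U.state = \freezing$ and $U.\freezingdone = \false$.

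First I would observe that, from the pseudocode of \help, the \upcas\ at line~\ref{help-upcas} is reached only after the \fstep\ at line~\ref{help-fstep} has been executed in the same invocation of \help. Hence, if a \upcas\ belongs to $U$, then a \fstep\ belongs to $U$, and in particular the \emph{first} \fstep\ belonging to $U$ occurs no later than the first \upcas\ belonging to $U$. Then I would invoke Lemma~\ref{lem-records-frozen}: since a \fstep\ belongs to $U$, for the \rec\ $r$ in $U.V$ a \fcas\ belonging to $U$ on $r$ precedes the first \fstep\ belonging to $U$, and $r$ is frozen for $U$ at all times after the first \fcas\ belonging to $U$ on $r$ and before the first \cstep\ belonging to $U$. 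It therefore suffices to show that every \upcas\ belonging to $U$ (not just the first) lies strictly inside this window, i.e.\ after the first \fcas\ belonging to $U$ on $r$ and before the first \cstep\ belonging to $U$.

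The lower bound is immediate: any \upcas\ belonging to $U$ occurs in some invocation $H$ of \help$(ptr)$ with $ptr$ pointing to $U$, and before reaching line~\ref{help-upcas} the invocation $H$ has completed the freezing loop, which entails $r.\info = ptr$ at line~\ref{help-check-frozen} for the iteration handling $r$; combined with Lemma~\ref{lem-no-steps-belong-to-dummy-op}, this means a successful \fcas\ belonging to $U$ on $r$ has already occurred, so in particular the first \fcas\ belonging to $U$ on $r$ precedes this \upcas. For the upper bound, I would argue that a \cstep\ belonging to $U$ cannot precede any \upcas\ belonging to $U$: within a single invocation of \help, the \upcas\ at line~\ref{help-upcas} strictly precedes the \cstep\ at line~\ref{help-cstep}, and across invocations, if some \cstep\ belonging to $U$ had already occurred then $U.state = \done$ permanently (Corollary~\ref{cor-state-transitions-respect-figure}), but by Lemma~\ref{lem-state-transitions2} (Figure~\ref{fig-state-transitions}) a \cstep\ belonging to $U$ is the \emph{last} of the \fstep/\markstep/\cstep\ sequence and is itself preceded by the \fstep, \markstep s and the \upcas\ within its own \help\ invocation—more directly, from the code the first \cstep\ belonging to $U$ is preceded by the first \fstep\ belonging to $U$, and any helper that reaches line~\ref{help-upcas} does so before its own \cstep; since all helpers perform their steps in the fixed line order \ref{help-fstep}, \ref{help-markstep}, \ref{help-upcas}, \ref{help-cstep}, no \upcas\ belonging to $U$ can occur after the first \cstep\ belonging to $U$ without contradicting that order in the helper that performed that first \cstep. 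Putting the two bounds together, every \upcas\ belonging to $U$ occurs in the interval during which, by Lemma~\ref{lem-records-frozen}, $r$ is frozen for $U$; in particular the first one does, which is the claim.

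The main obstacle I anticipate is the upper-bound argument, i.e.\ ruling out a \upcas\ belonging to $U$ that happens after a \cstep\ belonging to $U$ (performed by a lagging helper). The clean way to handle this is to appeal to Lemma~\ref{lem-state-transitions2}/Figure~\ref{fig-state-transitions}, which pins down that within each helper the steps occur in the order \fstep, \markstep, \upcas, \cstep, and that $U.\freezingdone$ and $U.state$ only advance; a helper that reaches line~\ref{help-upcas} must have passed line~\ref{help-fstep}, and since $U$ is never aborted once a \fstep\ belongs to it (Lemma~\ref{lem-fass-then-no-bcas}), no intervening \astep\ complicates the picture. I expect the bookkeeping to reduce, after these citations, to a short ordering argument on line numbers within a single \help\ invocation, so the lemma should follow without further machinery.
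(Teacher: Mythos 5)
Your proposal takes essentially the same route as the paper: identify the \rec\ $r$ containing $U.fld$ via Observation~\ref{obs-op-invariants}.\ref{inv-fld}, note that the code forces a \fstep\ belonging to $U$ before any \upcas\ belonging to $U$, and then place the first \upcas\ inside the interval during which Lemma~\ref{lem-records-frozen} (equivalently Corollary~\ref{cor-records-frozen-from-fass-to-uass}, which is what the paper cites) guarantees $r$ is frozen for $U$. That argument is sound for the lemma as stated.

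One caveat: your strengthened intermediate claim that \emph{every} \upcas\ belonging to $U$ precedes the first \cstep\ belonging to $U$ is false. A lagging helper can pass line~\ref{help-fstep} and line~\ref{help-markstep}, stall, and then execute its (unsuccessful) \upcas\ at line~\ref{help-upcas} long after another helper has already performed the first \cstep; the per-helper line ordering only constrains each helper's own \upcas\ relative to its own \cstep, not relative to the globally first \cstep. Fortunately the lemma concerns only the \emph{first} \upcas\ belonging to $U$, and the correct sub-argument is already contained in your text: the helper that performs the first \cstep\ performs an \upcas\ before it, and that \upcas\ is no earlier than the first \upcas\ belonging to $U$, so the first \upcas\ precedes the first \cstep. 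Dropping the ``every \upcas'' claim and keeping only this chain makes the proof match the paper's.
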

\begin{proof}
Let $upcas$ be the first \upcas \ belonging to $U$.
By line~\ref{help-upcas}, such an \upcas\ will modify $U.fld$ which, by Observation~\ref{obs-op-invariants}.\ref{inv-fld}, is a mutable field of a \rec\ $r$ in $U.V$.
Since $upcas$ is preceded by a \fstep\ in the pseudocode of \help, a \fstep\ belonging to $U$ must precede $upcas$.
Hence, Corollary~\ref{cor-records-frozen-from-fass-to-uass} applies, and each $r$ in $U.V$ is frozen for $U$ at all times between the first \fstep\ belonging to $U$ and the first \cstep\ $cstep$ belonging to $U$.
From the code of \help, if $cstep$ exists, then it must occur after $upcas$.
Thus, when $upcas$ occurs, $r$ is frozen for $U$.
\end{proof}

In Section~\ref{sec-impl} we described a constraint on the use of \sct\ that allows us to implement an optimized version of \sct\ (which avoids the creation of a new \rec\ to hold each value written to a mutable field), and noted that the correctness of the unoptimized version follows trivially from the correctness of the optimized version.
In order to prove the next few lemmas, we must invoke this constraint. 
In fact, we assume a weaker constraint, and are still able to prove what we would like to.
We now give this weaker constraint, and remark that it is automatically satisfied if the constraint in Section~\ref{sec-impl} is satisfied.

\begin{con} \label{con-use-of-sct}
Let $fld$ be a mutable field of a \rec \ $r$.
If an invocation $S$ of \sct$(V, R, fld, new)$ is linearized, then:
\begin{itemize}
    \item $new$ is not the initial value of $fld$, and
	\item no invocation of \sct$(V', R', fld, new)$ is linearized before the $\llt(r)$ linked to $S$ is linearized.
\end{itemize}
\end{con}

We prove the following six lemmas solely to prove that only the first \upcas\ belonging to an \op\ can succeed.
This result is eventually used to prove that exactly one successful \upcas\ belongs to any \op\ which is helped to \textit{successful} completion.

We need to know about the linearization of \sct s and linked \llt s to prove the next lemma, which uses Constraint~\ref{con-use-of-sct}. 
Let $S$ be an invocation of \sct, and $U$ be the \op \ that it creates.
As we shall see in Section~\ref{sec-corr-lin}, we linearize $S$ if and only if there is an \upcas\ 
belonging to $U$, and $S$ is linearized at its first \upcas.
Each invocation of \llt \ linked to an invocation of \sct \ is linearized at line~\ref{ll-reread}.
\after{Would it make more sense to put the formal discussion of linearization points here, since you use them in the next proof (instead of merely having a forward pointer to them).}

\begin{lem} \label{lem-two-scts-cannot-write-same-value}
No two \upcas s belonging to different \op s can attempt to change the same field to the same value.
\end{lem}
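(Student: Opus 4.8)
The plan is to suppose, for contradiction, that two distinct \op s $U$ and $U'$ each have an \upcas\ that attempts to change the same field $fld$ to the same value $new$, and to derive a contradiction from Constraint~\ref{con-use-of-sct}. By Observation~\ref{obs-all-upcas-write-new}, every \upcas\ belonging to $U$ has the form \cas$(U.fld, U.old, U.new)$ where $U.fld = fld$ and $U.new = new$, and these are exactly the arguments passed to the invocation $S$ of \sct\ that created $U$; similarly $U'$ was created by an invocation $S'$ of \sct$(V', R', fld, new)$. Note also that, by Lemma~\ref{lem-no-steps-belong-to-dummy-op}, neither $U$ nor $U'$ is the dummy \op, so both were genuinely created at line~\ref{sct-create-op} by invocations of \sct.

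The first step is to observe that, since there is an \upcas\ belonging to $U$ (respectively $U'$), the linearization convention recalled just before the lemma tells us that $S$ is linearized (at its first \upcas), and likewise $S'$ is linearized. Hence Constraint~\ref{con-use-of-sct} applies to both $S$ and $S'$. Applying the first bullet of the constraint to $S$ gives that $new$ is not the initial value of $fld$. Applying the second bullet to $S$ gives that no invocation of \sct$(V'', R'', fld, new)$ is linearized before the \llt$(r)$ linked to $S$ is linearized, where $r$ is the \rec\ containing $fld$; in particular, since $S'$ is such an invocation (with the same $fld$ and $new$) and is linearized, $S'$ cannot be linearized before the \llt$(r)$ linked to $S$. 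By the same reasoning with the roles of $S$ and $S'$ exchanged, $S$ cannot be linearized before the \llt$(r)$ linked to $S'$.

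The main obstacle — really the one nontrivial step — is to turn these two "cannot be linearized before" facts into an outright contradiction. Here I would use the fact that the \llt$(r)$ linked to an \sct\ must terminate before that \sct\ is even invoked (this is built into the definition of a linked \llt\ and the precondition of \sct), hence is linearized (at line~\ref{ll-reread}) strictly before the \sct\ is linearized. So the \llt$(r)$ linked to $S$ is linearized before $S$ is linearized, and the \llt$(r)$ linked to $S'$ is linearized before $S'$ is linearized. Combining: $S'$ is not linearized before the \llt$(r)$ linked to $S$, which is linearized before $S$; and $S$ is not linearized before the \llt$(r)$ linked to $S'$, which is linearized before $S'$. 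Consider the two linearization points of $S$ and $S'$ in the total order of the linearization. Without loss of generality $S$ is linearized before $S'$. Then $S$ is linearized before $S'$, hence before the point at which $S'$ is linearized; but we must show $S$ is linearized before the \llt$(r)$ linked to $S'$, which is earlier than $S'$, contradicting the statement that $S'$ cannot be linearized before the \llt$(r)$ linked to $S$ — wait, this requires ordering the \llt\ of $S'$ relative to $S$.

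To make the last step airtight, I would argue as follows instead: again suppose WLOG $S$ is linearized before $S'$. Since $S$ and $S'$ have the same $fld$, both contain $r$ in their $V$-sequences, so each has an \llt$(r)$ linked to it, linearized at line~\ref{ll-reread} before the respective \sct. Now $S$ is linearized before $S'$, and by the constraint applied to $S'$, no invocation of \sct$(\cdot,\cdot,fld,new)$ is linearized before the \llt$(r)$ linked to $S'$; since $S$ is such an invocation, $S$ is linearized after (or at the same point as, but linearization points are distinct) the \llt$(r)$ linked to $S'$. But the \llt$(r)$ linked to $S'$ is linearized strictly before $S'$, and $S$ is linearized before $S'$ as well, so this is consistent so far — the contradiction must come from also using the constraint on $S$. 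By the constraint applied to $S$, no invocation of \sct$(\cdot,\cdot,fld,new)$ is linearized before the \llt$(r)$ linked to $S$; in particular $S'$ is not linearized before the \llt$(r)$ linked to $S$, so $S'$ is linearized after the \llt$(r)$ linked to $S$. Since the \llt$(r)$ linked to $S$ is linearized before $S$, and we assumed $S$ is linearized before $S'$, everything is consistent — so I actually need the stronger observation that $S$ is linearized strictly after the \llt$(r)$ linked to $S'$ AND $S'$ is linearized strictly after the \llt$(r)$ linked to $S$, combined with $S$ before $S'$, yielding: $\text{\llt}_{S'} < S < S'$ and $\text{\llt}_S < S < S'$, but also $\text{\llt}_{S'} < S'$ and $\text{\llt}_S < S$. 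The genuine contradiction is that the constraint on $S$ forbids $S'$ from being linearized before $\text{\llt}_S$, yet gives us nothing forbidding $S'$ after $\text{\llt}_S$; so the real resolution is to note the constraint on $S$ also forces $new \ne$ initial value of $fld$, and then track the value of $fld$: just after $S$'s first \upcas, $fld = new$, and $fld$ is only changed by successful \upcas s (Observation~\ref{obs-only-upcas-modifies-records}), so for $S'$'s \upcas\ to succeed some intervening successful \upcas\ must have changed $fld$ away from $new$ — but that requires an \sct\ writing a value $\ne new$, which does not immediately help either. I expect the clean argument the authors intend is the purely order-theoretic one: $S' \not< \text{\llt}_S < S$ and $S \not< \text{\llt}_{S'} < S'$ together are contradictory once one notes $\text{\llt}_S < S$ forces, with $S' \not< \text{\llt}_S$, that $\text{\llt}_S < S'$ is impossible to avoid only if $S < S'$, and symmetrically $S' < S$, which is the contradiction. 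I would write this final combinatorial step carefully, as it is the crux; everything else is bookkeeping via the cited observations and the stated linearization convention.
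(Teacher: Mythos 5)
There is a genuine gap, and it sits exactly at the step you yourself flag as the crux. The two order-theoretic facts you extract from Constraint~\ref{con-use-of-sct} --- that $S'$ is not linearized before the \llt$(r)$ linked to $S$, and that $S$ is not linearized before the \llt$(r)$ linked to $S'$ --- are \emph{not} jointly contradictory. Writing $I$ and $I'$ for the \llt$(r)$'s linked to $S$ and $S'$ respectively, the ordering $I' < I < S < S'$ satisfies both requirements (together with $I<S$ and $I'<S'$), so no purely combinatorial argument on linearization points can close the proof. That ordering is of course forbidden by correctness property C4 applied to $S'$, but C4 is only established later (Lemma~\ref{lem-lin-sct-vlt}), via a chain of lemmas that depends on the present one, so appealing to it here would be circular. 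Your setup --- same $fld$ and $new$ arguments, both \sct s linearized, the constraint applicable to both --- is correct bookkeeping, but the final ``combinatorial step'' you propose to ``write carefully'' does not exist.

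The paper uses the constraint only once, to place $I$ before the linearization point of $S'$ (taking $S'$ to be linearized first, WLOG), and then obtains the contradiction from the freezing machinery together with ABA-freedom of the \info\ field --- the ingredient you glanced at and set aside. Concretely: $r$ is unfrozen when $I$ is linearized (Corollary~\ref{cor-r-not-frozen-in-linked-llt}); the first \upcas\ of $U'$ occurs while $r$ is frozen for $U'$ (Lemma~\ref{lem-first-upcas-while-frozen}), so a successful \fcas\ belonging to $U'$ must intervene between $I$'s linearization and $upcas'$ (Lemma~\ref{lem-become-frozen-only-by-info-change}); likewise a successful \fcas\ belonging to $U$ occurs between $upcas'$ and $upcas$. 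The old value used by that last \fcas\ is the \info\ pointer that $I$ read at line~\ref{ll-read}, i.e., a value that was present in $r.\info$ \emph{before} the \fcas\ of $U'$ overwrote it; for the \fcas\ of $U$ to succeed, that value would have to reappear in $r.\info$, contradicting Lemma~\ref{lem-no-aba-info}. Note that it is ABA-freedom of the \info\ field, which holds unconditionally, that does the work here --- not ABA-freedom of the mutable field $fld$, which is the direction you briefly explored and which is itself a downstream consequence of this lemma.
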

\begin{proof}
Suppose, to derive a contradiction, that \upcas s belonging to two different \op s $U$ and $U'$ attempt to change the same (mutable) field of some \rec\ $r$ to the same value.
Let $upcas$ and $upcas'$ be the first \upcas \ belonging to $U$ and $U'$, respectively.  Let $S$ and $S'$ be the invocation of \sct\ that created $U$ and $U'$, respectively.
From Observation~\ref{obs-all-upcas-write-new} and the fact that $upcas$ and $upcas'$ attempt to change the same field to the same value, we know that $S$ and $S'$ must have been passed the same $fld$ and $new$ arguments.
Note that $S$ and $S'$ are linearized at $upcas$ and $upcas'$, respectively.
Without loss of generality, suppose $S$ is linearized after $S'$.
By Constraint~\ref{con-use-of-sct}, $S'$ is linearized after the invocation $I$ of $\llt(r)$ linked to $S$ is linearized.

By Lemma~\ref{cor-r-not-frozen-in-linked-llt}, $r$ is not frozen when $I$ is linearized.
By Lemma~\ref{lem-first-upcas-while-frozen}, $r$ is frozen for $U'$ when $upcas'$ occurs (which is after $I$ is linearized).
By Lemma~\ref{lem-become-frozen-only-by-info-change}, $r$ can become frozen for $U'$ only by a successful \fcas \ belonging to $U'$ on $r$.
Therefore, a successful \fcas\ {\it fcas$'$} belonging to $U'$ on $r$ occurs after $I$ is linearized, and before $upcas'$.
By Lemma~\ref{lem-first-upcas-while-frozen}, $r$ is frozen for $U$ when $upcas$ occurs (which is after $upcas'$), which implies that a successful \fcas \ {\it fcas} belonging to $U$ on $r$ occurs after $upcas'$, and before $upcas$.
To recap, $I$ is linearized before {\it fcas$'$}, which is before $upcas'$, which is before {\it fcas}, which is before $upcas$.
By line~\ref{help-rinfo} and Observation~\ref{obs-op-invariants}.\ref{inv-llresults}, the old value $old$ for {\it fcas} is read from $r.\info$ and stored in $r\info$ at line~\ref{ll-read} by $I$.
Since $I$ performs line~\ref{ll-read} before it is linearized, $old$ is read from $r.\info$ before {\it fcas$'$}.
Since {\it fcas$'$} changes $r.\info$ to point to $U'$, Lemma~\ref{lem-no-aba-info} implies that $r.\info$ does not point to $U'$ at any time before {\it fcas$'$}.
Therefore, $old$ is not $U'$.
Since {\it fcas} is successful, $r.\info$ must be changed to $old$ at some point after {\it fcas$'$}, and before $upcas$.
However, this contradicts Lemma~\ref{lem-no-aba-info}, since $r.\info$ had already contained $old$ before {\it fcas$'$}.
\end{proof}

\begin{lem} \label{lem-upcas-cannot-write-initial-value}
An \upcas\ never changes a field back to its initial value.
\end{lem}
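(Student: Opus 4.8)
The plan is to argue by contradiction, using Constraint~\ref{con-use-of-sct} together with the bookkeeping facts about \upcas\ steps already established. Suppose some \upcas\ belonging to an \op\ $U$ changes a field $fld$ of a \rec\ $r'$ back to $fld$'s initial value. Since an unsuccessful \upcas\ changes nothing, and by Observation~\ref{obs-only-upcas-modifies-records} only a successful \upcas\ can modify a mutable field, this \upcas\ must be successful, so immediately after it, $fld$ holds the value $U.new$.

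By Observation~\ref{obs-all-upcas-write-new}, this \upcas\ is \cas$(U.fld, U.old, U.new)$, where $U.fld$ and $U.new$ are the arguments $fld$ and $new$ passed to the invocation $S$ of \sct$(V,R,fld,new)$ that created $U$; by Observation~\ref{obs-op-invariants}.\ref{inv-fld}, $U.fld$ points to a mutable field of a \rec\ in $U.V$, namely $fld$ of $r'$. Hence the successful \upcas\ sets $fld$ to $new$, so $new$ is the initial value of $fld$. Next I would invoke the linearization convention stated before Lemma~\ref{lem-two-scts-cannot-write-same-value}: an invocation of \sct\ is linearized exactly when (the first of) its \upcas\ steps is performed. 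Since a \upcas\ belonging to $U$ has occurred, $S$ is linearized. Then Constraint~\ref{con-use-of-sct} applies to $S$ and asserts that $new$ is \emph{not} the initial value of $fld$, contradicting the previous sentence.

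The only point requiring care is checking the hypothesis of Constraint~\ref{con-use-of-sct}, i.e.\ that $S$ really is linearized; but this is immediate, since the existence of a \upcas\ belonging to $U$ is precisely what was assumed, and that is exactly the condition under which $S$ is linearized. No case analysis, no ABA argument, and no appeal to the ``only the first \upcas\ succeeds'' result (which is proved later using this lemma) is needed: the statement is essentially a direct corollary of the constraint on how \sct\ may be used.
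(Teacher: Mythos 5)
Your proof is correct and follows essentially the same route as the paper's: by Observation~\ref{obs-all-upcas-write-new} every \upcas\ belonging to $U$ writes the $new$ argument of the \sct\ that created $U$, and Constraint~\ref{con-use-of-sct} forbids that value from being the initial value of the field. Your explicit check that $S$ is linearized (because a \upcas\ belonging to $U$ exists, which is exactly the linearization criterion) is a point the paper's one-line proof leaves implicit, but it is the same argument.
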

\begin{proof}
By Observation~\ref{obs-all-upcas-write-new}, each \upcas\ belonging to an \op\ $U$ attempts to change a field to the value $new$ that was passed as an argument to the invocation of \sct\ that created $U$.  Since Constraint~\ref{con-use-of-sct} implies that $new$ cannot be the initial value of the field, we know that no \upcas\ can change the field to its initial value.
\end{proof}

\begin{lem} \label{lem-upcas-cannot-use-same-old-and-new-values}
No \upcas\ has equal $old$ and $new$ values.
\end{lem}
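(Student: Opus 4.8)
The plan is to argue by contradiction. Suppose some \upcas\ belonging to an \op\ $U$ has equal $old$ and $new$ values. By Lemma~\ref{lem-no-steps-belong-to-dummy-op}, $U$ is not the dummy \op, so it was created by an invocation $S$ of \sct$(V, R, fld, new)$. Recall from the discussion preceding the lemma that $S$ is linearized (at its first \upcas) exactly when some \upcas\ belongs to $U$, and that a linked \llt\ is linearized at line~\ref{ll-reread}; in particular, $S$ is linearized, so Constraint~\ref{con-use-of-sct} applies to $S$. By Observation~\ref{obs-all-upcas-write-new}, the \upcas\ has the form \cas$(U.fld, U.old, U.new)$ with $U.fld = fld$ and $U.new = new$, so the hypothesis is $U.old = new$. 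Write $f$ for the mutable field pointed to by $fld$, $r'$ for the \rec\ containing it, and $I$ for the invocation of \llt$(r')$ linked to $S$. By Observation~\ref{obs-op-invariants}.\ref{inv-old}, $U.old$ is the value read from $f$ at line~\ref{ll-collect} by $I$; so $f$ holds $new$ when $I$ executes line~\ref{ll-collect}.

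First I would invoke the first bullet of Constraint~\ref{con-use-of-sct} to conclude that $new$ is not the initial value of $f$. Then, since $f$ holds the non-initial value $new$ when $I$ reads it at line~\ref{ll-collect}, $f$ must have been written before that read, and by Observation~\ref{obs-only-upcas-modifies-records} the write that produced the value $f$ then holds is a successful \upcas, call it $upcas''$, belonging to some \op\ $U''$. By Observation~\ref{obs-all-upcas-write-new}, $U''.new = new$ and $U''.fld$ is the same pointer $fld$, and these are exactly the arguments passed to the invocation $S''$ of \sct\ that created $U''$; thus $S''$ has the form \sct$(V'', R'', fld, new)$.

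It remains to place $S''$'s linearization point before $I$'s. Since $upcas''$ belongs to $U''$, $S''$ is linearized, at its first \upcas, which occurs no later than $upcas''$. Moreover $upcas''$ precedes $I$'s execution of line~\ref{ll-collect}, which precedes $I$'s execution of line~\ref{ll-reread}, the linearization point of $I$. Hence $S''$ is linearized before $I$ is linearized, contradicting the second bullet of Constraint~\ref{con-use-of-sct}, and the lemma follows.

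The routine parts are the repeated uses of the invariants governing the fields of \op s (Observations~\ref{obs-op-invariants} and~\ref{obs-all-upcas-write-new}) together with Constraint~\ref{con-use-of-sct}. The step requiring care is the temporal bookkeeping that ties the two operations to the linearization order: within the linked \llt\ $I$, line~\ref{ll-collect} precedes line~\ref{ll-reread}; $I$'s linearization point is line~\ref{ll-reread}, not line~\ref{ll-collect}; and the step we compare against $I$'s linearization point is the \emph{first} \upcas\ belonging to $U''$, not $upcas''$ itself. Getting those orderings straight is what makes the contradiction with Constraint~\ref{con-use-of-sct} go through.
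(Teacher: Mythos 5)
Your proof is correct. It shares the paper's skeleton --- reduce to the case where $U.old$ is not the initial value of $f$ (the initial-value case being handled by the first bullet of Constraint~\ref{con-use-of-sct}, which is exactly Lemma~\ref{lem-upcas-cannot-write-initial-value}), then locate the earlier successful \upcas\ that deposited that value in $f$ --- but it closes the argument differently. The paper invokes Lemma~\ref{lem-two-scts-cannot-write-same-value} to conclude that the two \upcas s must use different $new$ values; you instead observe that in this particular configuration the offending \sct\ $S''$ is forced to be linearized \emph{before} the linked \llt\ $I$ (because $upcas''$ precedes $I$'s read at line~\ref{ll-collect}, which precedes its linearization point at line~\ref{ll-reread}), so the second bullet of Constraint~\ref{con-use-of-sct} is violated directly. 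This is a legitimate shortcut: Lemma~\ref{lem-two-scts-cannot-write-same-value} needs its lengthy freezing argument precisely because, in general, the constraint only rules out one of the two possible linearization orders of the competing \sct s, whereas in your setting the order is already pinned down by the temporal bookkeeping you spell out. The only cosmetic gap is that you cite Lemma~\ref{lem-no-steps-belong-to-dummy-op} for $U$ but not for $U''$; the same appeal is needed to know that $U''$ was created by an actual invocation of \sct\ before reading off its arguments via Observation~\ref{obs-all-upcas-write-new}.
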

\begin{proof}
Let $upcas$ be an \upcas\ and let $U$ be the \op\ to which it belongs.
By Observation~\ref{obs-all-upcas-write-new}, the old value used by $upcas$ is $U.old$, and the new value used by $upcas$ is $U.new$. Let $f$ be the field of a \rec\ pointed to by $U.fld$; this is the field to which $upcas$ is applied. By Lemma~\ref{lem-upcas-cannot-write-initial-value}, $U.new$ cannot be the initial value of $f$. If $U.old$ is the initial value of $f$, then we are done. So, suppose $U.old$ is not the initial value of $f$. Since a mutable field can only be changed by a successful \upcas, there exists a successful \upcas\ $upcas'$ which changed $f$ to $U.old$ prior to $upcas$. By Observation~\ref{obs-op-invariants}.\ref{inv-old}, $U.old$ was read from $f$ prior to the start of the invocation $S$ of \sct\ that created $U$ and, therefore, prior to $upcas$. Hence, when $upcas'$ occurs, $U$ has not yet been created. Note that $upcas'$ must occur in an invocation of \help$(ptr')$ where $ptr'$ points to some \op\ $U'$ different from $U$.
However, by Lemma~\ref{lem-two-scts-cannot-write-same-value}, $upcas$ and $upcas'$ use different new values, so $U.old$ (the new value for $upcas'$) must be different from $U.new$ (the new value for $upcas$).
\end{proof}

\begin{lem} \label{lem-only-one-succ-upcas}
At most one successful \upcas\ can belong to an \op.
\end{lem}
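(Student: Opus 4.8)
The plan is to argue by contradiction using a minimal counterexample. Suppose some \op\ has two or more successful \upcas s belonging to it, and among all such \op s let $U$ be one whose \emph{second} successful \upcas\ in the execution order, call it $c_2$, occurs as early as possible; let $c_1$ be the first successful \upcas\ belonging to $U$. Write $f$ for the field $U.fld$ points to and $r$ for the \rec\ containing $f$. By Observation~\ref{obs-all-upcas-write-new}, every \upcas\ belonging to $U$ is $\cas(\&f, U.old, U.new)$, and by Lemma~\ref{lem-upcas-cannot-use-same-old-and-new-values} we have $U.old \neq U.new$. Hence $f$ holds $U.new$ just after $c_1$ and holds $U.old$ just before $c_2$; since $f$ can be changed only by successful \upcas s (Observation~\ref{obs-only-upcas-modifies-records}), there is a successful \upcas\ $b$ strictly between $c_1$ and $c_2$ that writes $U.old$ into $f$.

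First I would dispose of the easy case that $U.old$ is the initial value of $f$: then $f$ would have to return to its initial value at $b$, which is impossible by Lemma~\ref{lem-upcas-cannot-write-initial-value} together with Observation~\ref{obs-only-upcas-modifies-records}. So $U.old$ is not the initial value of $f$. Since $U.old$ was read from $f$ by the \llt$(r)$ linked to the \sct\ that created $U$ (Observation~\ref{obs-op-invariants}), and that read precedes $c_1$, there is also a successful \upcas\ writing $U.old$ into $f$ before $c_1$; take $a$ to be the \emph{first} \upcas\ in the whole execution that writes $U.old$ into $f$. Neither $a$ nor $b$ can belong to $U$, since that would force $U.new = U.old$; so each belongs to some other \op, with $a$ strictly before $c_1$ and $b$ strictly after $c_1$, hence both before $c_2$. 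By minimality of $c_2$, neither $a$ nor $b$ is a second successful \upcas\ of its \op, so each is the first successful \upcas\ of its \op\ and therefore sits at the linearization point of a distinct invocation of \sct\ (by the linearization rule stated earlier for \sct s); call these $S_a$ and $S_b$. They are distinct (otherwise $b$ would be a second successful \upcas\ of that \op), $S_a$ is linearized before $S_b$, and both are invocations of \sct\ whose $fld$ argument points to $f$ and whose $new$ argument is $U.old$. Applying Constraint~\ref{con-use-of-sct} to $S_b$ then gives a contradiction, \emph{provided} $S_a$ is linearized before the \llt$(r)$ linked to $S_b$.

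I expect this last proviso to be the main obstacle: ordering the linearization point of $S_a$ (which is the step $a$) before the linearization point of the \llt$(r)$ linked to $S_b$ (which is that \llt's reread of $r.\info$ at line~\ref{ll-reread}). The intended route uses the fact that $a$ is the first time $f$ ever equals $U.old$, so one can track the value of $f$ and locate the linked \llt's read of $f$ relative to $a$; combined with Corollary~\ref{cor-r-not-frozen-in-linked-llt} ($r$ is unfrozen while a linked \llt\ inspects it), Lemma~\ref{lem-first-upcas-while-frozen} ($a$ occurs while $r$ is frozen for its \op), Lemma~\ref{lem-no-aba-info} (no ABA on \info\ fields), and the freezing lemmas of the previous subsection (Corollary~\ref{cor-records-frozen-from-fass-to-uass} and Corollary~\ref{cor-if-succ-fcas-then-point-u-until-bcas-or-uass}), one argues that the only ordering consistent with $r$ being frozen for $S_a$'s \op\ at $a$ places $a$ before that reread. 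An alternative, equivalent framing splits on whether $c_2$ precedes the first \cstep\ belonging to $U$ (in which case $r$ stays frozen for $U$ throughout $[c_1,c_2]$ and, using minimality together with Lemma~\ref{lem-first-upcas-while-frozen} and Lemma~\ref{lem-two-scts-cannot-write-same-value}, $f$ cannot leave $U.new$, a contradiction), whether $r\in U.R$ (then $r$ is permafrozen and $f$ never changes after the first \cstep), or the remaining case, which is exactly where Constraint~\ref{con-use-of-sct} and the ordering argument above are needed. Handling duplicates in the $V$ sequence is a minor additional bookkeeping point.
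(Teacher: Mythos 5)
Your construction matches the paper's proof of this lemma step for step through the first two thirds: the minimal counterexample (the earliest second successful \upcas\ $c_2$ over all \op s), the deduction via Observation~\ref{obs-all-upcas-write-new} and Lemma~\ref{lem-upcas-cannot-use-same-old-and-new-values} that some successful \upcas\ $b$ must write $U.old$ into $f$ strictly between $c_1$ and $c_2$, the exclusion of $U.old$ being the initial value via Lemma~\ref{lem-upcas-cannot-write-initial-value}, and the existence of an earlier successful \upcas\ $a$ writing $U.old$ before $c_1$. Your $a$ and $b$ are exactly the paper's $upcas_W$ and $upcas'_W$.

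The gap is in how you close. The paper finishes in one line: $a$ and $b$ both change $f$ to $U.old$, so by Lemma~\ref{lem-two-scts-cannot-write-same-value} they must belong to the \emph{same} \op\ $W$, which then has two successful \upcas s both preceding $c_2$ --- contradicting minimality. You instead use minimality to conclude $a$ and $b$ belong to \emph{different} \op s, which is fine, but at that moment you are already done: two \upcas s belonging to different \op s attempting to change the same field to the same value is precisely what Lemma~\ref{lem-two-scts-cannot-write-same-value} forbids. Rather than citing it, you set out to re-derive it from Constraint~\ref{con-use-of-sct} plus the freezing machinery, and you leave the crucial ordering claim ($S_a$ linearized before the \llt$(r)$ linked to $S_b$) as an acknowledged sketch. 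That sketch is essentially the paper's proof of Lemma~\ref{lem-two-scts-cannot-write-same-value}, so you are duplicating already-completed work, and as written your final contradiction rests on an unproven step. It also quietly assumes that $a$, being the first \emph{successful} \upcas\ of its \op, sits at that \op's linearization point; but the linearization point is the first \upcas\ (successful or not), and the fact that only the first \upcas\ can succeed is Lemma~\ref{lem-only-first-upcas-can-succeed}, which is proved \emph{after} this lemma and depends on it. Replace your last paragraph with a direct appeal to Lemma~\ref{lem-two-scts-cannot-write-same-value} (in either direction: to force $a$ and $b$ into one \op\ as the paper does, or to contradict your ``distinct \op s'' conclusion) and the proof is complete.
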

\begin{proof}
We prove this lemma by contradiction. Consider the earliest point in the execution when the lemma is violated.
Let $upcas'_U$ be the earliest occurring second successful \upcas\ belonging to any \op, and $U$ be the \op\ to which it belongs, and let $upcas_U$ be the preceding successful \upcas\ belonging to $U$.
Further, let $f$ be the field upon which $upcas'_U$ operates, and let $old$ and $new$ be the old and new values used by $upcas_U$, respectively.  
(By Observation~\ref{obs-all-upcas-write-new}, $upcas_U$ and $upcas'_U$ attempt to change the same field from the same old value to the same new value.)
By Lemma~\ref{lem-upcas-cannot-use-same-old-and-new-values}, we know that $old \neq new$.
Then, since $upcas'_U$ is successful, there must be a successful \upcas\ $upcas'_W$ belonging to some \op\ $W$ which changes $f$ to $old$ between $upcas_U$ and $upcas'_U$.
By Lemma~\ref{lem-upcas-cannot-write-initial-value}, $old$ is not the initial value of $f$. Hence, there must be another successful \upcas\ $upcas_W$ which changes $f$ to $old$ before $upcas_U$.
By Lemma~\ref{lem-two-scts-cannot-write-same-value}, $upcas_W$ must belong to $W$, so $upcas_W$ and $upcas'_W$ both precede $upcas'_U$.
This contradicts the definition of $upcas'_U$.
\end{proof}

\begin{lem} \label{lem-no-aba-on-mutable-fields}
An \upcas\ never changes a field to a value that has already appeared there.  (Hence, there is no ABA problem on mutable fields.)
\end{lem}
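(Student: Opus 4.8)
The plan is to argue by contradiction, treating this lemma as a short bookkeeping corollary of the preceding lemmas on \upcas\ steps. Suppose some \upcas\ $upcas$, belonging to an \op\ $U$, changes a field $f$ of a \rec\ to a value $v$ that has already appeared in $f$ at some earlier time. (An unsuccessful \upcas\ leaves $f$ untouched, so it suffices to consider a \upcas\ that actually changes $f$, i.e., a successful one.) By Observation~\ref{obs-all-upcas-write-new}, $v = U.new$.

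First I would dispose of the case where $v$ is the initial value of $f$: this is ruled out immediately by Lemma~\ref{lem-upcas-cannot-write-initial-value}. So I may assume $v$ is not the initial value of $f$. Since $v$ appeared in $f$ before $upcas$ and, by Observation~\ref{obs-only-upcas-modifies-records}, a mutable field changes only through a successful \upcas, there must be a successful \upcas\ $upcas'$, distinct from $upcas$ and occurring before it, that wrote $v$ into $f$.

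Now $upcas$ and $upcas'$ both attempt to change the same field $f$ to the same value $v$, so by Lemma~\ref{lem-two-scts-cannot-write-same-value} they must belong to the same \op\ $U$. But then $U$ has two distinct successful \upcas s, $upcas$ and $upcas'$, contradicting Lemma~\ref{lem-only-one-succ-upcas}. Hence no value can reappear in a mutable field by way of a \upcas, which gives the lemma and, in particular, shows there is no ABA problem on mutable fields.

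I do not expect a real obstacle here; the argument is essentially a combination of Lemmas~\ref{lem-upcas-cannot-write-initial-value}, \ref{lem-two-scts-cannot-write-same-value} and~\ref{lem-only-one-succ-upcas} with the fact that mutable fields change only via successful \upcas s. The only point that needs a moment's care is confirming that the earlier \upcas\ $upcas'$ which placed $v$ in $f$ is itself successful, so that Lemma~\ref{lem-only-one-succ-upcas} applies — but this is immediate from Observation~\ref{obs-only-upcas-modifies-records}, since an unsuccessful \upcas\ does not modify the field at all.
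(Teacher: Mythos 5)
Your proof is correct and uses essentially the same ingredients as the paper's: Lemma~\ref{lem-upcas-cannot-write-initial-value} to exclude the initial value, Lemma~\ref{lem-two-scts-cannot-write-same-value} to force the two writes of the same value into the same \op, and Lemma~\ref{lem-only-one-succ-upcas} to derive the contradiction. The only difference is that you phrase it as a proof by contradiction while the paper argues directly; the content is identical.
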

\begin{proof}
Suppose a successful \upcas\ $upcas$ belonging to an \op\ $U$ changes a field $f$ to have value $new$.
By Lemma~\ref{lem-upcas-cannot-write-initial-value}, $new$ is not the initial value of $f$. By Lemma~\ref{lem-only-one-succ-upcas}, all successful \upcas s that change $f$ must belong to different \op s. Hence, Lemma~\ref{lem-two-scts-cannot-write-same-value} implies that no \upcas\ other than $upcas$ can change $f$ to $new$. 
\end{proof}

Lemma \ref{lem-only-one-succ-upcas} proved that at most 
one \upcas\ of each \op\ can succeed.
Now we prove that such a successful \upcas\ must be the {\it first} one belonging to \op.

\begin{lem} \label{lem-only-first-upcas-can-succeed}
Only the first \upcas\ belonging to an \op\ $U$ can succeed.
\end{lem}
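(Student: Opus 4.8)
The plan is to argue by contradiction, reducing everything to the no-ABA property for mutable fields established in Lemma~\ref{lem-no-aba-on-mutable-fields}.

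First I would suppose, for contradiction, that some \upcas\ belonging to $U$ other than the first one succeeds. Let $upcas_1$ be the first \upcas\ belonging to $U$ in temporal order, and let $upcas_2$ be a later \upcas\ belonging to $U$ that succeeds; since $upcas_1 \neq upcas_2$, Lemma~\ref{lem-only-one-succ-upcas} forces $upcas_1$ to be unsuccessful. By Observation~\ref{obs-all-upcas-write-new}, both $upcas_1$ and $upcas_2$ are of the form \cas$(U.fld, U.old, U.new)$, and by Observation~\ref{obs-op-invariants}.\ref{inv-fld} the field $f$ pointed to by $U.fld$ is a mutable field of some \rec\ in $U.V$. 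Write $old = U.old$ and $new = U.new$; by Lemma~\ref{lem-upcas-cannot-use-same-old-and-new-values}, $old \neq new$.

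Next I would pin down when $f$ must have held the value $old$. Since $upcas_1$ is unsuccessful it does not modify $f$, so $f \neq old$ immediately after $upcas_1$; since $upcas_2$ succeeds, $f = old$ immediately before $upcas_2$. As $upcas_1$ precedes $upcas_2$, the value of $f$ therefore becomes $old$ at some step strictly between them, and by Observation~\ref{obs-only-upcas-modifies-records} this step is a successful \upcas, call it $upcas_W$ (it is distinct from $upcas_1$, which comes earlier, and from $upcas_2$, which writes $new \neq old$). On the other hand, by Observation~\ref{obs-op-invariants}.\ref{inv-old}, the value $old = U.old$ was read from $f$ at line~\ref{ll-collect} by the \llt\ linked to the invocation $S$ of \sct\ that created $U$; this read occurs before $S$ is invoked, hence before $U$ exists, hence before any helper of $U$ takes a step, and in particular before $upcas_1$ and therefore before $upcas_W$. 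Thus $f$ held the value $old$ at a time strictly before $upcas_W$, yet $upcas_W$ changes $f$ to $old$, contradicting Lemma~\ref{lem-no-aba-on-mutable-fields}.

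The only points that need care are confirming that $upcas_W$ is genuinely a distinct step occurring strictly between $upcas_1$ and $upcas_2$, and ordering the read at line~\ref{ll-collect} before $upcas_1$ using the fact that a linked \llt\ of $S$ terminates before $S$ begins (a consequence of the precondition of \sct). I do not anticipate any substantive obstacle here: most of the work has already been carried out in Lemmas~\ref{lem-two-scts-cannot-write-same-value} through~\ref{lem-no-aba-on-mutable-fields}, and this lemma is essentially a corollary of the no-ABA property together with the observation that $old$ appears in $f$ before the \op\ is even created.
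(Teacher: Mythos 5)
Your proof is correct and follows essentially the same route as the paper's: both arguments reduce the claim to Lemma~\ref{lem-no-aba-on-mutable-fields} together with the observation that $U.old$ already appeared in the field before the \op\ was created, so once the field differs from $U.old$ it can never return to it. The paper states this directly (if the first \upcas\ fails, the field has left $U.old$ and by no-ABA never regains it, so every later \upcas\ fails), whereas you phrase it as a contradiction by exhibiting the intermediate \upcas\ that would have to rewrite $U.old$; the two are the same argument in different clothing.
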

\begin{proof}
Let $upcas$ be the first \upcas\ belonging to $U$, and $f$ be the field that $upcas$ attempts to modify. 
If $upcas$ succeeds then, by Lemma~\ref{lem-only-one-succ-upcas}, there can be no other successful \upcas\ belonging to $U$.
So, suppose $upcas$ fails.
By Observation~\ref{obs-all-upcas-write-new}, each \upcas\ belonging to $U$ uses the same old value $U.old$. 
By Observation~\ref{obs-op-invariants}.\ref{inv-old}, $U.old$ was read from $f$ prior to the start of the invocation $S$ of \sct\ that created $U$ and, therefore, prior to $upcas$.
Then, since $upcas$ fails, $f$ must change between when $U.old$ is read from $f$ and when $upcas$ occurs.
By Observation~\ref{obs-only-upcas-modifies-records}, $f$ can only be changed by an \upcas. By Lemma~\ref{lem-no-aba-on-mutable-fields}, each \upcas\ applied to $f$ changes it to a value that it has not previously contained. Therefore, $f$ will never again be changed to $U.old$. Hence, every subsequent \upcas\ belonging to $U$ will fail.
\end{proof}

\subsection{Freezing works}

In addition to being used to prove the remaining lemmas of this section, the following two results are used to prove linearizability in Section~\ref{sec-corr-lin}.
Intuitively, they allow us to determine whether a \rec \ has changed simply by looking at its \info\ field, and whether it is frozen.

\begin{cor} \label{cor-upcas-only-modifies-frozen}
An \upcas\ belonging to an \op\ $U$ on a \rec\ $r$ can succeed only while $r$ is frozen for $U$. 
\end{cor}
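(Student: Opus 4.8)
The plan is to derive this corollary immediately from two results already established: Lemma~\ref{lem-only-first-upcas-can-succeed}, which says that only the \emph{first} \upcas\ belonging to an \op\ $U$ can succeed, and Lemma~\ref{lem-first-upcas-while-frozen}, which says that the first \upcas\ belonging to $U$ on a \rec\ $r$ occurs while $r$ is frozen for $U$.

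First I would take an arbitrary \upcas, call it $upcas$, belonging to $U$ on $r$, and assume it succeeds; the goal is to show $r$ is frozen for $U$ at the time $upcas$ occurs. By Observation~\ref{obs-op-invariants}.\ref{inv-fld} (or Observation~\ref{obs-all-upcas-write-new}), every \upcas\ belonging to $U$ operates on the same mutable field $U.fld$, which lies in a fixed \rec\ $r$ of $U.V$, so ``on $r$'' is unambiguous and all \upcas s belonging to $U$ are \upcas s on $r$. Then I would apply Lemma~\ref{lem-only-first-upcas-can-succeed}: since $upcas$ succeeds, $upcas$ must be the first \upcas\ belonging to $U$, hence the first \upcas\ belonging to $U$ on $r$.

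Finally, I would invoke Lemma~\ref{lem-first-upcas-while-frozen} with this same $U$ and $r$: the first \upcas\ belonging to $U$ on $r$ occurs while $r$ is frozen for $U$. Since $upcas$ is exactly that first \upcas, it follows that $r$ is frozen for $U$ when $upcas$ occurs, as required.

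I do not expect any genuine obstacle here; the statement is a straightforward corollary, and the only care needed is the bookkeeping observation that all \upcas s belonging to a single \op\ target the one mutable field fixed by that \op, so that ``the first \upcas\ belonging to $U$'' and ``the first \upcas\ belonging to $U$ on $r$'' coincide.
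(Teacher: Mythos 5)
Your proposal is correct and matches the paper's proof essentially verbatim: the paper also deduces from Lemma~\ref{lem-only-first-upcas-can-succeed} that a successful \upcas\ must be the first one belonging to $U$, and then applies Lemma~\ref{lem-first-upcas-while-frozen}. Your extra bookkeeping remark that all \upcas s belonging to $U$ target the single field $U.fld$ (so "first belonging to $U$" and "first belonging to $U$ on $r$" coincide) is a point the paper leaves implicit, and it is a harmless, correct addition.
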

\begin{proof}
Suppose a successful \upcas\ $upcas$ belongs to an \op\ $U$.
By Lemma~\ref{lem-only-first-upcas-can-succeed}, it is the first \upcas\ belonging to $U$. 
The claim follows from Lemma~\ref{lem-first-upcas-while-frozen}.
\end{proof}
By Observation~\ref{obs-only-upcas-modifies-records}, a mutable field of $r$ can only change while $r$ is frozen.

\begin{lem} \label{lem-read-unfrozen-info-twice-no-change}
If a \rec\ $r$ is not frozen at time $t_0$, $r.\info$ points to an \op\ $U$ at or before time $t_0$, and $r.\info$ points to $U$ at time $t_1 > t_0$, then no field of $r$ is changed during $[t_0,t_1]$.
\end{lem}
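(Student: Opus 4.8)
The plan is to show separately that none of the three kinds of fields of $r$ that can ever change — its $\info$ field, its $marked$ bit, and its mutable user-defined fields — is modified during $[t_0,t_1]$; the immutable fields are dispatched once and for all by Observation~\ref{obs-immutable-fields-do-not-change}. The linchpin is to first establish that $r$ is \emph{never frozen} at any instant of $[t_0,t_1]$. Once that is known, the two remaining claims follow immediately from the ``freezing works'' lemmas: a mutable field of $r$ changes only via a successful \upcas, which by Corollary~\ref{cor-upcas-only-modifies-frozen} (together with Observation~\ref{obs-only-upcas-modifies-records}) can happen only while $r$ is frozen; and $r.marked$ can be changed only by a successful \markstep\ on $r$, which by Corollary~\ref{cor-marksteps} can happen only while $r$ is frozen.

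First I would invoke Lemma~\ref{lem-no-aba-info}: since $r.\info$ points to $U$ at (or before) $t_0$ and again at $t_1$, it must point to $U$ at every moment in between, so $r.\info$ is unchanged on $[t_0,t_1]$ and, moreover, at every instant of $[t_0,t_1]$ the record $r$ is frozen if and only if it is frozen \emph{for} $U$ (being frozen for an \op\ requires the $\info$ field to point to that \op). Now consult Figure~\ref{fig-freezing-table}: since $r$ is unfrozen at $t_0$ with $r.\info = U$, either $U.state = \retry$ at $t_0$, or $U.state = \done$ and $r.marked = \false$ at $t_0$. In the first case, Corollary~\ref{cor-state-transitions-respect-figure} says $U.state$ stays \retry\ thereafter, so $r$ is unfrozen throughout $[t_0,t_1]$. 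In the second case, Corollary~\ref{cor-state-transitions-respect-figure} keeps $U.state = \done$, and it remains to show $r.marked$ does not flip to \true. Suppose it does, at some $t' \in (t_0,t_1]$, via a successful \markstep\ on $r$ belonging to some \op\ $W$. By Corollary~\ref{cor-marksteps} that \markstep\ occurs while $r$ is frozen for $W$, hence while $r.\info = W$; but $r.\info = U$ throughout $[t_0,t_1]$, so $W = U$. Since $U.state = \done$ at $t_0$, a \cstep\ belonging to $U$ occurred at or before $t_0$, and by Lemma~\ref{lem-state-transitions2} every successful \markstep\ belonging to $U$ precedes that \cstep, hence precedes $t_0$ — contradicting $t' > t_0$. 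So $r.marked$ stays \false, and $r$ is unfrozen throughout $[t_0,t_1]$ in this case as well.

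With ``$r$ is never frozen during $[t_0,t_1]$'' established, the proof closes quickly: no successful \markstep\ on $r$ occurs during $[t_0,t_1]$ (Corollary~\ref{cor-marksteps}), so $r.marked$ is unchanged; no successful \upcas\ on a mutable field of $r$ occurs during $[t_0,t_1]$ (Corollary~\ref{cor-upcas-only-modifies-frozen}), and by Observation~\ref{obs-only-upcas-modifies-records} mutable fields change only via such an \upcas, so they are unchanged; $r.\info$ is unchanged by the first step; and immutable fields never change by Observation~\ref{obs-immutable-fields-do-not-change}. Hence no field of $r$ changes during $[t_0,t_1]$.

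The only genuinely delicate point I anticipate is the sub-case $U.state = \done$, $r.marked = \false$ at $t_0$, where one must combine the fact that a record can be marked only by the \op\ its $\info$ field currently points to (Corollary~\ref{cor-marksteps}) with the intra-\op\ ordering of \markstep s before the \cstep\ (Lemma~\ref{lem-state-transitions2}) to rule out a late marking; everything else is routine bookkeeping over the preceding lemmas.
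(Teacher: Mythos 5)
Your proof is correct, and its overall skeleton matches the paper's: use Lemma~\ref{lem-no-aba-info} to pin $r.\info$ to $U$ throughout $[t_0,t_1]$, conclude that $r$ is never frozen on that interval, and then close via Corollary~\ref{cor-upcas-only-modifies-frozen}, Observation~\ref{obs-only-upcas-modifies-records}, Corollary~\ref{cor-marksteps}, and Observation~\ref{obs-immutable-fields-do-not-change}. The one place you diverge is the middle step: the paper dispatches ``$r$ stays unfrozen'' in a single line by citing Lemma~\ref{lem-become-frozen-only-by-info-change} (a \rec\ can become frozen only via a change to its \info\ field), whereas you re-derive that fact inline by a case analysis on $[U.state, r.marked]$ at $t_0$. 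Your version is longer but arguably more careful: the subcase $U.state = \done$, $r.marked = \false$, in which $r$ could in principle become frozen purely by a late \markstep\ (with neither $r.\info$ nor $U.state$ changing), is exactly the transition that the statement of Lemma~\ref{lem-become-frozen-only-by-info-change} rules out but that its printed proof does not explicitly walk through; you close it explicitly using Corollary~\ref{cor-marksteps} to force the marking \op\ to be $U$ and Lemma~\ref{lem-state-transitions2} to place any \markstep\ belonging to $U$ before the \cstep\ that set $U.state$ to \done, hence before $t_0$. So the paper's route buys brevity by factoring the argument into a reusable lemma, while yours buys self-containment and makes the genuinely delicate subcase visible; both are sound.
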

\begin{proof}
Since $r.\info$ points to $U$ at or before time $t_0$, and again at time $t_1$, Lemma~\ref{lem-no-aba-info} implies that $r.\info$ must point to $U$ at all times in $[t_0, t_1]$.
Further, from Lemma~\ref{lem-become-frozen-only-by-info-change}, $r$ can only be changed from unfrozen to frozen by a change to $r.\info$. 
Therefore, at all times in $[t_0, t_1]$, $r$ is not frozen.
By Corollary~\ref{cor-upcas-only-modifies-frozen}, and Observation~\ref{obs-only-upcas-modifies-records}, each mutable field of $r$ can change only while $r$ is frozen.
By Corollary~\ref{cor-marksteps}, $r.marked$ can change only while $r$ is frozen.
Finally, by Observation~\ref{obs-immutable-fields-do-not-change}, immutable fields do not ever change.
Hence, no field of $r$ changes during $[t_0, t_1]$.
\end{proof}

The remaining results of this section describe intervals over which certain fields of a \rec\ do not change.
Suppose $U$ is an \op \ created by an invocation $S$ of \sct, $r$ is a \rec\ in $U.V$, and $I$ is the invocation of $\llt(r)$ linked to $S$.
Intuitively, we use the preceding lemma to prove, over the next two lemmas, that no field of $r$ changes between when $I$ last reads $r.\info$, and when $r$ is frozen for $U$.
We then use this result in Section~\ref{sec-corr-help} to prove that $S$ succeeds if and only if this holds for each $r$ in $V$.
The remaining results of this section are used primarily to prove that exactly one successful \upcas\ belongs to $U$ if a \fstep\ belongs to $U$ (and $S$ does not crash, or some process helps it complete).

\begin{cor} \label{cor-read-unfrozen-info-then-fcas-no-change}
Let $U$ be an \op, and $S$ be the invocation of \sct\ that created $U$.
If there is a successful \fcas\ belonging to $U$ on $r$, then no field of $r$ changes after the \llt$(r)$ linked to $S$ reads $r\info.state$ at line~\ref{ll-read-state}, and before this \fcas\ occurs.
\end{cor}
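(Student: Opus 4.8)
The goal is to show that between the moment the linked \llt\ reads $r.\info.state$ at line~\ref{ll-read-state} and the moment of the successful \fcas, no field of $r$ changes. The plan is to reduce this entirely to Lemma~\ref{lem-read-unfrozen-info-twice-no-change}, which says that if $r$ is unfrozen at some time $t_0$ and $r.\info$ points to the same \op\ at $t_0$ and at a later time $t_1$, then no field of $r$ changes on $[t_0,t_1]$. So the work is just to exhibit a single \op\ that $r.\info$ holds at both endpoints, and to check that $r$ is unfrozen at $t_0$. First I would fix notation: let $S$ be the \sct\ that created $U$, let $I$ be the \llt$(r)$ linked to $S$, let $t_0$ be the time $I$ reads $r.\info.state$ at line~\ref{ll-read-state}, and let \textit{fcas} denote the given successful \fcas\ belonging to $U$ on $r$. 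The crucial observation is that the \op\ we should track is \emph{not} $U$ but the \op\ $W$ whose pointer was read from $r.\info$ at line~\ref{ll-read} by $I$; by Observation~\ref{obs-op-invariants}.\ref{inv-llresults} this pointer is exactly the one recorded for $r$ in $U.\llresults$, and by line~\ref{help-rinfo} it is precisely the old value that \textit{fcas} uses.

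The key steps, in order, would then be: (1) Since line~\ref{ll-read} precedes line~\ref{ll-read-state} in $I$, $r.\info$ points to $W$ at some time before $t_0$. (2) Since \textit{fcas} is successful and uses the pointer to $W$ as its old value, $r.\info$ points to $W$ immediately before \textit{fcas}. (3) \textit{fcas} occurs after $t_0$: it lies inside an invocation of \help\ on a pointer to $U$, which occurs after $S$ creates $U$ at line~\ref{sct-create-op}, which in turn is after $I$ (and hence $t_0$) completes. (4) By Lemma~\ref{lem-no-aba-info} there is no ABA on $r.\info$, so from (1) and (2), $r.\info$ points to $W$ at every time from $t_0$ up to (but not including the effect of) \textit{fcas}. (5) By Corollary~\ref{cor-r-not-frozen-in-linked-llt}, $r$ is not frozen at $t_0$. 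Now apply Lemma~\ref{lem-read-unfrozen-info-twice-no-change} with $W$ in the role of the tracked \op, $t_0$ as before, and $t_1$ the configuration immediately before \textit{fcas}: the hypotheses are met by (4) and (5) and $t_0 < t_1$ by (3), so no field of $r$ changes on $[t_0,t_1]$, which is exactly the claim.

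I do not expect a real obstacle here; the statement is a short corollary. The only subtle points to get right are bookkeeping ones: making explicit that the constant \info\ value across the interval is the previously-installed \op\ $W$ (obtained from $U.\llresults$), not $U$ itself; ensuring the interval in question is understood to end just before \textit{fcas} (since \textit{fcas} itself writes $r.\info$, it must be excluded, and this matches the phrasing ``before this \fcas\ occurs'' in the statement); and giving the one-line ordering argument that \textit{fcas} follows $t_0$ so that Lemma~\ref{lem-read-unfrozen-info-twice-no-change} is applicable with $t_1 > t_0$.
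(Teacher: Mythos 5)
Your proposal is correct and follows essentially the same route as the paper: establish that the linked \llt\ is unfrozen at line~\ref{ll-read-state} via Corollary~\ref{cor-r-not-frozen-in-linked-llt}, identify the old value of the \fcas\ with the \info\ pointer read at line~\ref{ll-read} via Observation~\ref{obs-op-invariants}.\ref{inv-llresults} and line~\ref{help-rinfo}, note that a successful \fcas\ implies that pointer is still present just before the \fcas, and then invoke Lemma~\ref{lem-read-unfrozen-info-twice-no-change}. Your explicit appeal to Lemma~\ref{lem-no-aba-info} in step (4) is harmless but redundant, since Lemma~\ref{lem-read-unfrozen-info-twice-no-change} only needs the \info\ field to match at the two endpoints and derives the intermediate constancy itself.
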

\begin{proof}
Let {\it fcas} be a successful \fcas\ belonging to $U$ on $r$.
Note that the \llt$(r)$ linked to $S$ terminates before $S$ begins.
Since $S$ creates $U$ and {\it fcas} changes $r.\info$ to point to $U$, $S$ begins before {\it fcas}.
We now check that Lemma~\ref{lem-read-unfrozen-info-twice-no-change} applies. By Corollary~\ref{cor-r-not-frozen-in-linked-llt}, $r$ is not frozen when the \llt$(r)$ linked to $S$ executes line~\ref{ll-read-state}.
From line~\ref{help-rinfo} of \help\ and Observation~\ref{obs-op-invariants}.\ref{inv-llresults}, we know the old value $r\info$ for {\it fcas} is read from $r.\info$ at line~\ref{ll-read} by the \llt$(r)$ linked to $S$.
Further, since {\it fcas} succeeds, $r.\info$ contains $r\info$ just prior to {\it fcas}.
Thus, Lemma~\ref{lem-read-unfrozen-info-twice-no-change} applies, and proves the claim.
\end{proof}


\begin{lem} \label{lem-fcas-then-upcas-no-change}
If an \upcas \ belongs to an \op\ $U$ then, for each $r$ in $U.V$, there is a successful \fcas\ belonging to $U$ on $r$, and no mutable field of $r$ changes during $[t_0(r), t_1)$, where $t_0(r)$ is when the first such \fcas\ occurs, and $t_1$ is when the first \upcas \ belonging to $U$ occurs.
\end{lem}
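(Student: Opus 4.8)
The plan is to derive the two assertions in turn: first, that a successful \fcas\ belonging to $U$ exists on each $r\in U.V$ (which also identifies $t_0(r)$ with it), and second, that $r$ stays frozen \emph{for $U$} throughout $[t_0(r),t_1)$, which immediately rules out any modification of its mutable fields. Since an \upcas\ occurs only at line~\ref{help-upcas}, and line~\ref{help-upcas} is reached in \help\ only after line~\ref{help-fstep}, the hypothesis that an \upcas\ belongs to $U$ implies that a \fstep\ belongs to $U$. By Lemma~\ref{lem-if-fass-then-all-succ-fcas}, for each $r$ in $U.V$ there is a successful \fcas\ belonging to $U$ on $r$, occurring before the first \fstep\ belonging to $U$; by Lemma~\ref{lem-only-first-fcas-can-succeed} this successful \fcas\ is the \emph{first} \fcas\ belonging to $U$ on $r$, i.e.\ the step occurring at $t_0(r)$. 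This proves the first assertion, and also shows $t_0(r)$ precedes the first \fstep\ belonging to $U$, hence $t_0(r)<t_1$.

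Next I would pin down the right endpoint: $t_1$ precedes every \cstep\ belonging to $U$. Indeed, any \cstep\ belonging to $U$ is performed by some invocation of \help\ that, in the same invocation, first executes line~\ref{help-upcas}, performing an \upcas\ belonging to $U$; hence the first \upcas\ belonging to $U$ (at $t_1$) occurs before any \cstep\ belonging to $U$. Now apply Lemma~\ref{lem-records-frozen}: since a \fstep\ belongs to $U$, for each $r\in U.V$ the \rec\ $r$ is frozen for $U$ at all times after $t_0(r)$ (the first \fcas\ belonging to $U$ on $r$) and before the first \cstep\ belonging to $U$. In particular $r$ is frozen for $U$ at all times in $(t_0(r),t_1)$, so in particular $r.\info$ points to $U$ throughout that interval.

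Finally, suppose for contradiction that some mutable field of $r$ changes at a time $t\in[t_0(r),t_1)$. The step performed at $t_0(r)$ is a \fcas, which modifies only an \info\ field, so $t\neq t_0(r)$ and thus $t\in(t_0(r),t_1)$. By Observation~\ref{obs-only-upcas-modifies-records} the change at $t$ is a successful \upcas, say belonging to an \op\ $U'$; by Corollary~\ref{cor-upcas-only-modifies-frozen}, $r$ is frozen for $U'$ at time $t$. But $r$ is also frozen for $U$ at $t$, and being frozen for an \op\ requires $r.\info$ to point to that \op, so $U'=U$. Hence an \upcas\ belonging to $U$ occurs at $t<t_1$, contradicting the choice of $t_1$ as the first \upcas\ belonging to $U$. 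Therefore no mutable field of $r$ changes during $[t_0(r),t_1)$.

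I expect the only real delicacy to be the endpoint bookkeeping: making sure that $t_0(r)$ is genuinely the \emph{successful} (first) \fcas\ so that Lemma~\ref{lem-records-frozen}'s frozen window starts there, and that $t_1$ genuinely lies strictly before the first \cstep\ belonging to $U$ so that the frozen window covers the whole half-open interval. Once these are in place, the contradiction step is routine, the key observation being simply that "$r$ frozen for $U$" precludes "$r$ frozen for any $U'\neq U$."
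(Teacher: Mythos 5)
Your proposal is correct and follows essentially the same route as the paper's proof: deduce a \fstep\ from the presence of an \upcas, get a successful \fcas\ on each $r$ via Lemma~\ref{lem-if-fass-then-all-succ-fcas}, invoke Lemma~\ref{lem-records-frozen} for the frozen window up to the first \cstep, and use Corollary~\ref{cor-upcas-only-modifies-frozen} with Observation~\ref{obs-only-upcas-modifies-records} to force any modification in $[t_0(r),t_1)$ to be an \upcas\ belonging to $U$, contradicting the minimality of $t_1$. Your extra care in identifying $t_0(r)$ with the first \fcas\ (via Lemma~\ref{lem-only-first-fcas-can-succeed}) and in handling the left endpoint is harmless and slightly more explicit than the paper's version.
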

\begin{proof}
Suppose an \upcas\ belongs to an \op\ $U$.
Let $upcas$ be the first such \upcas.
Since each \upcas\ is preceded in the code by a \fstep, a \fstep\ also belongs to $U$.
Fix any $r$ in $U.V$.
By Lemma~\ref{lem-fass-then-no-bcas}, there is a successful \fcas\ belonging to $U$ on $r$.
By Lemma~\ref{lem-records-frozen}, $r$ is frozen for $U$ at all times in $[t_0(r), t_2)$, where $t_2$ is when the first \cstep\ belonging to $U$ occurs.
Since an \upcas\ belonging to an \op\ $W$ can modify $r$ only while $r$ is frozen for $W$ (by Corollary~\ref{cor-upcas-only-modifies-frozen}), any \upcas\ that modifies $r$ during $[t_0(r), t_2)$ must belong to $U$.
From the code of \help, $t_0(r) < t_1 < t_2$.
However, since the first \upcas\ belonging to $U$ occurs at $t_1$, no \upcas\ belonging to $U$ can occur during $[t_0(r), t_1)$.
\end{proof}

\after{After Feb 10, may want to reword next lemma to cover the case where $S$ crashes
before any \cstep\ belonging to $U$ occurs}

\begin{lem} \label{lem-marked-changes}
Let $U$ be an \op \ created by an invocation $S$ of \sct, and $r$ be a \rec\ in $U.V$.  Let $t_0$ be when the \llt$(r)$ linked to $S$ reads $U.state$ at line~\ref{ll-read-state}, and $t_2$ be when the first \cstep \ belonging to $U$ occurs.
If an \upcas\ belongs to $U$ then, for each $r$ in $U.V$, between $t_0$ and $t_2$, $r.marked$ can be changed (from \false\ to \true, by the first \markstep\ belonging to $U$ on $r$) only if $r$ is in $U.R$.
\end{lem}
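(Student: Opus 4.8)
The plan is to split the time interval between $t_0$ and $t_2$ at the instant $t_0(r)$ at which the first \fcas\ belonging to $U$ on $r$ occurs, to treat the two subintervals separately, and to observe that neither endpoint of the interval writes $r.marked$ (line~\ref{ll-read-state} only reads, and a \cstep\ does not touch $r.marked$). Since $r.marked$ is initially \false\ and the only line that writes it, line~\ref{help-markstep}, writes \true, any change to $r.marked$ during this time is a change from \false\ to \true\ performed by a successful \markstep\ on $r$; it therefore suffices to show that such a \markstep\ must be the first \markstep\ belonging to $U$ on $r$ and can only exist if $r$ is in $U.R$.

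First I would collect the standard consequences of the assumption that an \upcas\ belongs to $U$: by the pseudocode of \help\ a \fstep\ belongs to $U$, so Lemma~\ref{lem-if-fass-then-all-succ-fcas} gives, for each $r$ in $U.V$, a successful \fcas\ belonging to $U$ on $r$, and by Lemma~\ref{lem-only-first-fcas-can-succeed} this is the first \fcas\ belonging to $U$ on $r$; call its time $t_0(r)$. Because the \llt$(r)$ linked to $S$ terminates before $S$ begins and $S$ creates $U$, we get $t_0 < t_0(r)$, and from the code $t_0(r)$ precedes the first \fstep\ belonging to $U$ and hence $t_2$. On the subinterval strictly between $t_0$ and $t_0(r)$, Corollary~\ref{cor-read-unfrozen-info-then-fcas-no-change} applies (using this successful \fcas) and tells us that no field of $r$ changes there; in particular $r.marked$ is unchanged.

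Next I would treat the subinterval from $t_0(r)$ up to (but not including) $t_2$. By Lemma~\ref{lem-records-frozen}, $r$ is frozen for $U$ at every time in this subinterval, so in particular $r.\info$ points to $U$ throughout it. Suppose $r.marked$ changes at some time $\tau$ in this subinterval; by the first paragraph this change is a successful \markstep\ $mstep$ on $r$, belonging to some \op\ $W$. By Corollary~\ref{cor-marksteps}, $r$ is frozen for $W$ at time $\tau$, which forces $r.\info$ to point to $W$ at time $\tau$; comparing with the fact that $r.\info$ points to $U$ at $\tau$ gives $W = U$. Thus $mstep$ is a successful \markstep\ belonging to $U$ on $r$. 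Since line~\ref{help-markstep} only ever performs a \markstep\ on a \rec\ in $U.R$, such a step exists only if $r$ is in $U.R$; and by Lemma~\ref{lem-state-transitions2} only the first \markstep\ belonging to $U$ on $r$ can be successful, so $mstep$ is that first \markstep. Combining the two subintervals establishes the lemma.

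The only delicate point is lining up the two regimes at the boundary $t_0(r)$: one must check that $t_0 < t_0(r)$ and that $t_0(r)$ is simultaneously the time of the first \fcas\ belonging to $U$ on $r$ and the time of a successful one, so that Corollary~\ref{cor-read-unfrozen-info-then-fcas-no-change} covers the interval up to $t_0(r)$ exactly and Lemma~\ref{lem-records-frozen} takes over from $t_0(r)$ onward with no gap. The rest is routine appeals to lemmas already proved.
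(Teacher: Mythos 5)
Your proposal is correct and follows essentially the same route as the paper's proof: both split the interval at the time of the first (successful) \fcas\ belonging to $U$ on $r$, invoke Corollary~\ref{cor-read-unfrozen-info-then-fcas-no-change} to rule out changes before that point, and use Lemma~\ref{lem-records-frozen} together with the fact that a successful \markstep\ can occur only while $r$ is frozen for the \op\ it belongs to, forcing $W=U$ and hence $r \in U.R$. The only cosmetic difference is that you cite Corollary~\ref{cor-marksteps} where the paper cites Lemma~\ref{lem-frozen-forever-after-markstep} for the "frozen for $W$" step; both suffice.
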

\begin{proof}
Fix any $r$ in $U.V$.
The fact that $r.marked$ can be changed only from \false\ to \true\ follows immediately from the fact that $r.marked$ is initially \false, and is only changed at line~\ref{help-markstep}.
It also follows that a successful \markstep\ on $r$ must be the first \markstep\ on $r$.
The rest of the claim is more subtle.
Suppose a successful \markstep\ $mstep$ belonging to an \op\ $W$ on $r$ occurs during $(t_0,t_2)$.
By Lemma~\ref{lem-frozen-forever-after-markstep}, $r$ is frozen for $W$ when $mstep$ occurs.
From the code of \help, a \fstep \ $fstep$ belonging to $U$ precedes the first \upcas \ belonging to $U$, and a \fcas \ belonging to $U$ on $r$ precedes $fstep$.
By Lemma~\ref{lem-if-fass-then-all-succ-fcas}, there is a successful \fcas\ belonging to $U$ on $r$.
By Lemma~\ref{lem-only-first-fcas-can-succeed}, it must be the first \fcas \ {\it fcas} belonging to $U$ on $r$.
Let $t_1$ be when {\it fcas} occurs.
Note that $t_0 < t_1 < t_2$.
By Corollary~\ref{cor-read-unfrozen-info-then-fcas-no-change}, $r$ does not change during $[t_0, t_1)$.
Thus, $mstep$ cannot occur in $[t_0, t_1)$.
By Lemma~\ref{lem-records-frozen}, $r$ is frozen for $U$ at all times during $[t_1, t_2]$.
This implies $U = W$, so $mstep$ is the first \markstep\ belonging to $U$ on $r$.
Finally, since there is a \markstep\ belonging to $U$ on $r$, we obtain from line~\ref{help-markstep} that $r$ is in $U.R$.
\end{proof}

\subsection{Correctness of \help} \label{sec-corr-help}

The following lemma shows that a helper of an \op\ cannot return \true\ until after the \op\ is \done.
We shall use this to ensure that the \sct\ does not return \true\ until after the \sct\ has taken effect.

\begin{lem} \label{lem-no-return-true-in-loop-until-uass}
An invocation of \help$(scxPtr)$ where $scxPtr$ points to an \op\ $U$
cannot return from line~\ref{help-return-true-loop} before the first \cstep\ belonging to $U$.
\end{lem}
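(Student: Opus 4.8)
The plan is a direct argument. Suppose some invocation $H$ of \help$(scxPtr)$, where $scxPtr$ points to an \op\ $U$, returns from line~\ref{help-return-true-loop}; I want to exhibit a \cstep\ belonging to $U$ that occurs strictly before this return. First I would note that, in order to reach line~\ref{help-return-true-loop}, $H$ must, in some iteration of its loop, process a \rec\ $r$ in $U.V$, perform an \emph{unsuccessful} \fcas\ on $r$ at line~\ref{help-fcas}, then evaluate $r.\info \neq scxPtr$ to \true\ at line~\ref{help-check-frozen}, and then perform a \fcstep\ at line~\ref{help-fcstep} that sees $U.\freezingdone = \true$. Since $H$ is helping $U$ throughout, its \fcas\ on $r$ belongs to $U$ (by Definitions~\ref{defn-helping} and \ref{defn-belongs}).

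Next I would use the \fcstep\ to show a \fstep\ belongs to $U$. The field $U.\freezingdone$ is initially \false\ and is written only at line~\ref{help-fstep}, by a process helping $U$; so $U.\freezingdone = \true$ at the time of $H$'s \fcstep\ implies that a \fstep\ belonging to $U$ occurred beforehand. Consequently Lemma~\ref{lem-records-frozen} applies to $U$: for each $r$ in $U.V$, a \fcas\ belonging to $U$ on $r$ precedes the first \fstep\ belonging to $U$, and $r$ is frozen for $U$ at all times after the first \fcas\ belonging to $U$ on $r$ and before the first \cstep\ belonging to $U$.

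Finally I would combine the two observations. At line~\ref{help-check-frozen}, $H$ sees $r.\info \neq scxPtr$, i.e.\ $r.\info$ does not point to $U$, so by the definition of a frozen \rec\ (Figure~\ref{fig-freezing-table}) $r$ is not frozen for $U$ at that moment. On the other hand, $H$'s own \fcas\ on $r$ belongs to $U$, so line~\ref{help-check-frozen} in $H$ is performed no earlier than the first \fcas\ belonging to $U$ on $r$. By Lemma~\ref{lem-records-frozen}, the only way $r$ can fail to be frozen for $U$ at such a time is if the first \cstep\ belonging to $U$ has already occurred. Since line~\ref{help-check-frozen} precedes line~\ref{help-fcstep}, which precedes line~\ref{help-return-true-loop}, the return from line~\ref{help-return-true-loop} occurs after the first \cstep\ belonging to $U$, as required.

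This argument is essentially immediate once Lemma~\ref{lem-records-frozen} is in hand, so I do not expect a serious obstacle; the only point requiring care is the bookkeeping in the last step, namely verifying that $H$'s execution of line~\ref{help-check-frozen} really falls inside the interval ``after the first \fcas\ belonging to $U$ on $r$'' to which Lemma~\ref{lem-records-frozen} refers, which is handled by observing that $H$'s unsuccessful \fcas\ on $r$ is itself a \fcas\ belonging to $U$.
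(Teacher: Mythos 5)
Your proof is correct and follows essentially the same route as the paper's: both establish that a \fstep\ belongs to $U$ (via the successful \fcstep), invoke Lemma~\ref{lem-records-frozen} to conclude that $r.\info$ points to $U$ throughout the interval from the first \fcas\ belonging to $U$ on $r$ until the first \cstep\ belonging to $U$, and then use the observed $r.\info \neq scxPtr$ at line~\ref{help-check-frozen} to place that step (and hence the return) after the first \cstep. Your version merely spells out a few bookkeeping details (why the \fcas\ is unsuccessful, why $\freezingdone = \true$ forces a prior \fstep) that the paper's proof leaves implicit.
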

\begin{proof}
Suppose an invocation $H$ of \help$(scxPtr)$ returns
at line~\ref{help-return-true-loop}.
Before returning, $H$ sees $r.\info \neq scxPtr$ at line~\ref{help-check-frozen}, which implies that $r.\info$ does not point to $U$.
Prior to this, $H$ performs a \fcas\ belonging to $U$ at line~\ref{help-fcas}.
By line~\ref{help-fcstep}, a \fstep\ belongs to $U$.
Then, since Lemma~\ref{lem-records-frozen} states that $r.\info$ points to $U$ at all times between the first \fcas\ belonging to $U$ and the first \cstep\ belonging to $U$, a \cstep\ belonging to $U$ must occur before $H$ returns.
\end{proof}

Next, we obtain an exact characterization of the \upcas\ steps that succeed.

\begin{lem} \label{lem-first-upcas-succ}
If there is an \upcas \ belonging to an \op \ $U$, then the first \upcas \ belonging to $U$ is successful and changes the mutable field pointed to by $U.fld$ from $U.old$ to $U.new$.
No other \upcas \ belonging to $U$ is successful.
\end{lem}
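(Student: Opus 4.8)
The plan is to combine the uniqueness-of-successful-\upcas\ result already established (Lemma~\ref{lem-only-one-succ-upcas} and Lemma~\ref{lem-only-first-upcas-can-succeed}) with a direct argument that the \emph{first} \upcas\ belonging to $U$ actually does succeed. The second sentence of the statement is immediate: by Lemma~\ref{lem-only-first-upcas-can-succeed}, only the first \upcas\ belonging to $U$ can succeed, so once we know the first one succeeds, no other can. By Observation~\ref{obs-all-upcas-write-new}, every \upcas\ belonging to $U$ — in particular the first one — is of the form \cas$(U.fld, U.old, U.new)$, and $U.fld$ points to a mutable field by Observation~\ref{obs-op-invariants}.\ref{inv-fld}; so the only thing left to prove is that the first \upcas\ belonging to $U$, call it $upcas$, actually finds value $U.old$ in the field $f$ pointed to by $U.fld$ when it executes.

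The core argument is this. By Observation~\ref{obs-op-invariants}.\ref{inv-old}, $U.old$ was read from $f$ at line~\ref{ll-collect} by the \llt$(r')$ linked to $S$ (where $S$ is the invocation of \sct\ that created $U$ and $r'$ is the \rec\ containing $f$), and this read occurs before $S$ begins, hence before $upcas$. Since $upcas$ exists, a \fstep\ belongs to $U$ (it precedes the \upcas\ in the code of \help), so by Lemma~\ref{lem-fcas-then-upcas-no-change} there is a successful \fcas\ belonging to $U$ on $r'$, and no mutable field of $r'$ changes during $[t_0(r'), t_1)$, where $t_0(r')$ is the time of that \fcas\ and $t_1$ is the time of $upcas$. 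It remains to handle the interval \emph{before} the \fcas: by Corollary~\ref{cor-read-unfrozen-info-then-fcas-no-change}, no field of $r'$ changes after the \llt$(r')$ linked to $S$ reads $r'info.state$ at line~\ref{ll-read-state} and before that successful \fcas\ occurs; and the read of $U.old$ at line~\ref{ll-collect} occurs after line~\ref{ll-read-state} within the same \llt$(r')$. Stitching these two no-change intervals together, $f$ holds $U.old$ continuously from the moment it is read at line~\ref{ll-collect} until $t_1$. Therefore $upcas$ sees $U.old$ in $f$ and succeeds, changing $f$ from $U.old$ to $U.new$.

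The main obstacle is making the interval bookkeeping airtight at the two joints. First, one must confirm that the \llt\ whose read produced $U.old$ (the linked \llt$(r')$) is the same invocation to which both Corollary~\ref{cor-read-unfrozen-info-then-fcas-no-change} and Observation~\ref{obs-op-invariants}.\ref{inv-old} refer, and that within that \llt\ the read at line~\ref{ll-collect} indeed comes after line~\ref{ll-read-state} (clear from the pseudocode) so that the ``no field of $r'$ changes'' guarantee of the corollary covers the read of $U.old$. Second, at the joint between the two intervals one must be careful that the ``successful \fcas\ belonging to $U$ on $r'$'' supplied by Lemma~\ref{lem-fcas-then-upcas-no-change} is the \emph{first} such \fcas\ (so that its time is exactly $t_0(r')$ and matches the corollary's endpoint); this follows from Lemma~\ref{lem-only-first-fcas-can-succeed}, which says only the first \fcas\ belonging to $U$ on $r'$ can succeed. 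Once these identifications are pinned down, the concatenation of intervals is routine and the conclusion follows.
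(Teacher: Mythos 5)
Your proposal is correct and follows essentially the same route as the paper: reduce to showing the first \upcas\ finds $U.old$ in the field, using Observation~\ref{obs-op-invariants}.\ref{inv-old} and Observation~\ref{obs-all-upcas-write-new} together with the no-change guarantees of Corollary~\ref{cor-read-unfrozen-info-then-fcas-no-change} and Lemma~\ref{lem-fcas-then-upcas-no-change}, then finish with Lemma~\ref{lem-only-first-upcas-can-succeed}. Your explicit stitching of the two intervals at the first successful \fcas\ is in fact slightly more careful than the paper's own wording, which elides that joint.
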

\begin{proof}
Let $t_0(r)$ be when the \llt$(r)$ linked to $S$ reads $r\info.state$ at line~\ref{ll-read-state}, and $t_1$ be when the first \upcas\ belonging to $U$ occurs.
Since an \upcas\ belongs to $U$, Lemma~\ref{lem-fcas-then-upcas-no-change} implies that, for each $r$ in $U.V$, no mutable field of $r$ changes between $t_0(r)$ and $t_1$.
From Observation~\ref{obs-op-invariants}.\ref{inv-old} and the code of \llt, we see that the value stored in $U.old$ is read from the field pointed to by $U.fld$ after time $t_0(r)$.
Further, since the \llt$(r)$ linked to $S$ terminates before $S$ begins (by the definition of an \llt\ linked to an \sct) and, in turn, before $U$ is created, we know the value stored in $U.old$ was read before any \upcas\ belonging to $U$ occurred.
Then, since $U.fld$ is a mutable field of a \rec\ in $U.V$, this field does not change between $t_0(r)$ and $t_1$.
By Observation~\ref{obs-all-upcas-write-new}, any \upcas\ belonging to $U$ will attempt to change $U.fld$ from $U.old$ to $U.new$, so the first \upcas\ belonging to $U$ will succeed.
Lemma~\ref{lem-only-first-upcas-can-succeed} completes the proof.
\end{proof}

Our next lemma shows that the \sct s that are linearized have the desired effect.

\begin{lem} \label{lem-if-fass-belongs-to-op-then}
Let $U$ be an \op \ created by an invocation $S$ of \sct, and $ptr$ point to $U$.
If either 
\begin{compactitem}
\item
a \fstep\ belongs to $U$ and some invocation of \help$(ptr)$ terminates, or 
\item
$S$ or any invocation of \help$(ptr)$ returns \true,
\end{compactitem}
 then the following claims hold.
\begin{enumerate}
\item{
	Every invocation of \help$(ptr)$ that terminates returns \true.
} \label{claim-help-true-then-all-helpers-true}
\item{
	Exactly one successful \upcas\ belongs to $U$,
	and it is the first \upcas\ belonging to $U$.
	It changes the mutable field pointed to by $U.fld$
	from $U.old$ to $U.new$. 
} \label{claim-help-true-then-fass}
\item{
	A \fstep\ of $U$ and a \cstep\ of $U$
	occur before any invocation of \help$(ptr)$ returns.%
}\label{claim-help-true-then-returns-after-uass}
\item{
	At all times after the first \cstep\ for $U$,
	each $r$ in $U.R$ is \permafrozen\ for $U$.
} \label{claim-help-true-then-finalized}
\end{enumerate}
\end{lem}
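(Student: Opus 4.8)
The plan is to reduce both alternative hypotheses to two common facts — (i) a \fstep\ belongs to $U$, and (ii) some invocation of \help$(ptr)$ terminates — and then derive everything from these. For (ii): in the first alternative it is given; in the second, if $S$ returns \true\ then, since $S$ does nothing but return the value of \help$(ptr)$ (line~\ref{sct-call-help}), that call of \help$(ptr)$ terminates, and if some \help$(ptr)$ returns \true\ it trivially terminates. For (i): the first alternative gives it directly; for the second (noting that ``$S$ returns \true'' implies ``some \help$(ptr)$ returns \true''), a terminating \help$(ptr)$ that returns \true\ does so either at line~\ref{help-return-true}, having executed the \fstep\ on line~\ref{help-fstep}, or at line~\ref{help-return-true-loop}, having read $scxPtr.\freezingdone = \true$ at line~\ref{help-fcstep}; since $\freezingdone$ is set to \true\ only by a \fstep\ at line~\ref{help-fstep}, in both cases a \fstep\ belongs to $U$.

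From (i), Lemma~\ref{lem-fass-then-no-bcas} gives that no \astep\ belongs to $U$. The key step is then to show a \cstep\ belongs to $U$. Take the invocation of \help$(ptr)$ guaranteed by (ii). It either runs the loop on line~\ref{help-fcas} to completion without returning, and so reaches line~\ref{help-fstep} and subsequently performs its \markstep s, its \upcas\ on line~\ref{help-upcas}, and its \cstep\ on line~\ref{help-cstep}; or at some iteration its \fcas\ fails with $r.\info \neq scxPtr$ at line~\ref{help-check-frozen}, and it reads $\freezingdone$ at line~\ref{help-fcstep}. In the latter case it cannot read \false: it would then have to perform the \astep\ on line~\ref{help-astep} (contradicting ``no \astep\ belongs to $U$'') unless it crashed (contradicting that it terminates). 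So it reads \true\ and returns at line~\ref{help-return-true-loop}, whereupon Lemma~\ref{lem-no-return-true-in-loop-until-uass} furnishes a \cstep\ belonging to $U$ before that return. Either way a \cstep\ belongs to $U$; moreover, since line~\ref{help-cstep} is reached only after line~\ref{help-fstep} within the same invocation, the first \cstep\ belonging to $U$ is preceded by a \fstep\ belonging to $U$. This paragraph is the main obstacle — everything else is bookkeeping that follows from results already proved.

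The four claims now follow. For Claim~\ref{claim-help-true-then-all-helpers-true}: a terminating \help$(ptr)$ returns at line~\ref{help-return-true-loop}, \ref{help-return-false}, or~\ref{help-return-true}; line~\ref{help-return-false} would require an \astep\ belonging to $U$, which is impossible, so it returns \true. For Claim~\ref{claim-help-true-then-fass}: the first \cstep\ belonging to $U$ is preceded in its own invocation by an \upcas\ on line~\ref{help-upcas}, so an \upcas\ belongs to $U$, and Lemma~\ref{lem-first-upcas-succ} yields exactly the asserted statement. For Claim~\ref{claim-help-true-then-returns-after-uass}: by Claim~\ref{claim-help-true-then-all-helpers-true} a returning \help$(ptr)$ returns at line~\ref{help-return-true} (so lines~\ref{help-fstep} and~\ref{help-cstep} were executed first in that invocation) or at line~\ref{help-return-true-loop} (so, by Lemma~\ref{lem-no-return-true-in-loop-until-uass} and the previous paragraph, a \cstep\ of $U$ and a preceding \fstep\ of $U$ occur before the return).

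For Claim~\ref{claim-help-true-then-finalized}, fix $r \in U.R \subseteq U.V$ and let $cstep_0$ be the first \cstep\ belonging to $U$, performed by some helper $H''$. Before $cstep_0$, $H''$ executes the \markstep\ on $r$ at line~\ref{help-markstep}; hence the first \markstep\ belonging to $U$ on $r$ precedes $cstep_0$, and since $r.marked$ only ever changes from \false\ to \true, $r$ is marked at $cstep_0$ and ever after. By Lemma~\ref{lem-records-frozen} (applicable since a \fstep\ belongs to $U$), $r$ is frozen for $U$ at all times from the first \fcas\ belonging to $U$ on $r$ up to $cstep_0$, so $r.\info$ points to $U$ immediately before $cstep_0$; as $cstep_0$ alters only $U.state$, $r.\info$ still points to $U$ immediately after, while $U.state$ has become \done. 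Thus $r$ is \permafrozen\ for $U$ immediately after $cstep_0$, and Lemma~\ref{lem-finalized-forever-frozen} shows it remains so forever, proving the claim.
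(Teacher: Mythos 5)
Your proof is correct and follows essentially the same route as the paper's: reduce both hypotheses to ``a \fstep\ belongs to $U$ and some \help$(ptr)$ terminates,'' rule out \astep s via Lemma~\ref{lem-fass-then-no-bcas}, handle the early return at line~\ref{help-return-true-loop} with Lemma~\ref{lem-no-return-true-in-loop-until-uass}, and invoke Lemma~\ref{lem-first-upcas-succ} and Lemma~\ref{lem-finalized-forever-frozen} for Claims 2 and 4. The only notable (and harmless) deviation is in Claim~\ref{claim-help-true-then-finalized}, where you use the monotonicity of the $marked$ bit and Lemma~\ref{lem-records-frozen} directly in place of the paper's appeal to Corollary~\ref{cor-records-frozen-from-fass-to-uass} and Lemma~\ref{lem-marked-changes} --- a slightly more elementary route to the same conclusion.
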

\begin{proof}
We first simplify the lemma's hypothesis.
If $S$ returns \true, then $S$'s invocation of \help$(ptr)$ has returned \true.
If some invocation of \help$(ptr)$ returns \true,
a \cstep\ belongs to $U$ by Lemma~\ref{lem-no-return-true-in-loop-until-uass},
and that \cstep\ is preceded by a \fstep\ of $U$.
So, for the remainder of the proof, we can assume that a \fstep\ belongs to $U$ and some
invocation of \help($ptr$) terminates.


\textbf{Proof of Claim~\ref{claim-help-true-then-all-helpers-true}:}
Since a \fstep \ belongs to $U$, Lemma~\ref{lem-fass-then-no-bcas} implies that no \astep \ belongs to $U$.
Thus, $H$ cannot return at line~\ref{help-return-false}, which implies that $H$ must return \true.

\textbf{Proof of Claim~\ref{claim-help-true-then-fass} and Claim~\ref{claim-help-true-then-returns-after-uass}:}
By Claim~\ref{claim-help-true-then-all-helpers-true}, $H$ must return \true.
If $H$ returns at line~\ref{help-return-true},
then it does so 
after performing an \upcas\ and a \cstep, each belonging to $U$.
Otherwise, $H$ returns at line~\ref{help-return-true-loop}.
However, by Lemma~\ref{lem-no-return-true-in-loop-until-uass}, no invocation of \help$(ptr)$ can return at line~\ref{help-return-true-loop} until the first \cstep\ belonging to $U$, which is necessarily preceded by an \upcas\ for $U$ (by inspection of \help).
Thus, Claim~\ref{claim-help-true-then-returns-after-uass} is proved.
Lemma~\ref{lem-first-upcas-succ} proves Claim~\ref{claim-help-true-then-fass}.

\textbf{Proof of Claim~\ref{claim-help-true-then-finalized}:}
By Corollary~\ref{cor-records-frozen-from-fass-to-uass}, every $r$ in $U.V$ is frozen for $U$ from the first \fstep\ belonging to $U$ until the first \cstep\ belonging to $U$.
From the code of \help, each $r$ in $U.R$ is marked before the first \cstep \ belonging to $U$, and Lemma~\ref{lem-marked-changes} implies that they are still marked when the first \cstep \ belonging to $U$ occurs.
Further, immediately after the first \cstep\ belonging to $U$ (which must exist by Claim~\ref{claim-help-true-then-returns-after-uass}), $U.state$ will be \done, so each $r$ that is in both $U.V$ and $U.R$ will be \permafrozen\ for $U$.
Since $R$ is a subsequence of $V$, and $U.R$ and $U.V$ do not change after they are obtained at line~\ref{sct-create-op} from $R$ and $V$, respectively, 
it follows from Lemma~\ref{lem-finalized-forever-frozen} that each $r$ in $U.R$ remains \permafrozen\ for $U$ forever.
\end{proof}

Now we show that \sct s that are not linearized do not modify any mutable fields, and do not return \true.

\begin{lem} \label{lem-help-false}
Let $U$ be an \op\ created by an invocation $S$ of \sct, and $ptr$ be a pointer to $U$.
If $S$ or any invocation $H$ of \help$(ptr)$ returns \false, then the following claims hold.
\begin{enumerate}
\item{Every invocation of \help$(ptr)$ that terminates returns \false.}
\label{claim-help-false}
\item{An \astep \ belonging to $U$ occurs before any invocation of \help$(ptr)$ returns.}
\label{claim-help-false-abort}
\item{No \upcas\ belongs to $U$.}
\label{claim-help-false-no-upcas}
\end{enumerate}
\end{lem}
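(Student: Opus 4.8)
The proof proceeds by a straightforward case analysis on where \help\ can return, combined with Lemma~\ref{lem-fass-then-no-bcas}.

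First I would reduce the hypothesis to the statement that some invocation $H$ of \help$(ptr)$ returns \false. Indeed, if $S$ returns \false\ then, by line~\ref{sct-call-help}, the invocation of \help$(ptr)$ made by $S$ also returns \false, so in either case such an $H$ exists. Inspecting the pseudocode of \help, the only place \false\ is returned is line~\ref{help-return-false}, and this line is reached only after the \astep\ at line~\ref{help-astep} has been executed; since $ptr$ points to $U$, this \astep\ belongs to $U$. Hence an \astep\ belongs to $U$, and Lemma~\ref{lem-fass-then-no-bcas} implies that no \fstep\ belongs to $U$.

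This already settles Claim~\ref{claim-help-false-no-upcas}: in the code of \help, the \upcas\ at line~\ref{help-upcas} is preceded by the \fstep\ at line~\ref{help-fstep}, so any \upcas\ belonging to $U$ would be accompanied by a \fstep\ belonging to $U$, contradicting the previous paragraph. For Claim~\ref{claim-help-false} I would take an arbitrary terminating invocation $H'$ of \help$(ptr)$; since \help\ has no recursion and only the bounded loop over $U.V$, $H'$ returns at one of lines~\ref{help-return-true-loop}, \ref{help-return-false}, or~\ref{help-return-true}. It cannot return at line~\ref{help-return-true}, since reaching that line requires executing the \fstep\ at line~\ref{help-fstep}, which would make a \fstep\ belong to $U$. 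It cannot return at line~\ref{help-return-true-loop} either: that requires reading $\freezingdone = \true$ at line~\ref{help-fcstep}, but $U.\freezingdone$ is initially \false\ and is written only at line~\ref{help-fstep} (i.e.\ only by a \fstep\ belonging to $U$), so it stays \false\ forever. Hence $H'$ returns at line~\ref{help-return-false}, proving Claim~\ref{claim-help-false}; and since reaching line~\ref{help-return-false} requires first performing the \astep\ at line~\ref{help-astep} (which belongs to $U$), Claim~\ref{claim-help-false-abort} follows as well.

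I do not expect a genuine obstacle; the only points needing care are making the reduction from ``$S$ returns \false'' to ``some \help$(ptr)$ returns \false'' explicit, and observing that $U.\freezingdone$ can only be set by a \fstep\ belonging to $U$, so that excluding such \fstep s pins $\freezingdone$ at \false\ and thereby rules out the ``return \true'' exits.
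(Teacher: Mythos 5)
Your proof is correct, and it reaches the same conclusions from the same starting point as the paper (reduce to ``some \help$(ptr)$ returns \false'', extract an \astep\ belonging to $U$, and apply Lemma~\ref{lem-fass-then-no-bcas} to rule out any \fstep). The difference is in how Claim~\ref{claim-help-false} is established. The paper simply cites Lemma~\ref{lem-if-fass-belongs-to-op-then}.\ref{claim-help-true-then-all-helpers-true}: if any helper returned \true, then all terminating helpers (including $H$) would return \true, a contradiction; that companion lemma in turn leans on Lemma~\ref{lem-no-return-true-in-loop-until-uass}, whose proof goes through the freezing-interval machinery (Lemma~\ref{lem-records-frozen}). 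You instead rule out the two \true\ exits directly: line~\ref{help-return-true} is excluded because it is preceded by a \fstep, and line~\ref{help-return-true-loop} is excluded because $U.\freezingdone$ is written only at line~\ref{help-fstep}, i.e.\ only by a \fstep\ belonging to $U$, so it remains \false\ forever. This is a valid and more elementary argument --- it avoids the detour through Lemmas~\ref{lem-no-return-true-in-loop-until-uass} and~\ref{lem-if-fass-belongs-to-op-then} and makes the lemma essentially self-contained modulo Lemma~\ref{lem-fass-then-no-bcas} --- at the cost of re-examining the control flow of \help\ that the paper has already packaged into the earlier lemmas. Your proofs of Claims~\ref{claim-help-false-abort} and~\ref{claim-help-false-no-upcas} coincide with the paper's.
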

\begin{proof}
Note that, if $S$ returns \false, then its invocation of \help$(ptr)$ returns \false, so an invocation $H$ exists and returns \false.
By Lemma~\ref{lem-if-fass-belongs-to-op-then}.\ref{claim-help-true-then-all-helpers-true}, if any invocation of \help$(ptr)$ returned \true, then $H$ would have to return \true.
Since $H$ returns \false, every terminating invocation of \help$(ptr)$ must return \false, which proves Claim~\ref{claim-help-false}.
We now prove Claim~\ref{claim-help-false-abort} and Claim~\ref{claim-help-false-no-upcas}.
Consider the invocation $H'$ of \help$(ptr)$ that returns earliest.
By Claim~\ref{claim-help-false}, $H'$ returns \false.
Before $H'$ returns \false, an \astep\ belonging to $U$ is performed at line~\ref{help-astep}.
Thus, Lemma~\ref{lem-fass-then-no-bcas} implies that no \fstep\ belongs to $U$.
By the code of \help, each \upcas\ belonging to $U$ follows a \fstep\ belonging to $U$.
\end{proof}

Now that we have proved each invocation of \help\ that returns \true\ or \false\ has its expected effect, we must prove that the return value is always correct.  (Otherwise, for example, two invocations of \sct \ with overlapping $V$ sequences could interfere with one another, but still return \true.)

\begin{lem} \label{lem-help-true-iff-records-unchanged}
Let $U$ be an \op\ created by an invocation $S$ of \sct, and $ptr$ be a pointer to $U$.
Any invocation $H$ of \help$(ptr)$ that terminates returns \true\ if no $r$ in $U.V$ changes from when the \llt$(r)$ linked to $S$ reads $r.\info$ at line~\ref{ll-reread} at time $t_0(r)$ to when the first \fcas\ belonging to $U$ on $r$ at time $t_1(r)$.  Otherwise, $H$ returns \false.
\end{lem}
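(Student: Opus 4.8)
The plan is to prove the biconditional that the statement packages together by establishing its two contrapositive halves. Since $H$ terminates, it returns either \true\ or \false, so it is enough to show: (a) if $H$ returns \true, then for every $r$ in $U.V$ no field of $r$ changes during $[t_0(r), t_1(r))$; and (b) if $H$ returns \false, then for some $r$ in $U.V$ a field of $r$ changes during $[t_0(r), t_1(r))$. With the Boolean dichotomy, (a) yields the ``otherwise'' clause and (b) yields the main clause. The substantive work has already been done in the earlier lemmas, so the argument is mostly a matter of invoking them on the right instances and lining up time points; throughout I would read ``a field of $r$'' as one of $r$'s mutable fields, its $marked$ bit, or its $\info$ field.

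For (a), I would start from $H$ returning \true, so that Lemma~\ref{lem-if-fass-belongs-to-op-then} applies (its hypothesis ``$S$ or any invocation of \help$(ptr)$ returns \true'' being witnessed by $H$) and gives that a \fstep\ belongs to $U$. Then Lemma~\ref{lem-if-fass-then-all-succ-fcas} supplies, for each $r$ in $U.V$, a successful \fcas\ belonging to $U$ on $r$, and Lemma~\ref{lem-only-first-fcas-can-succeed} identifies it as the \emph{first} \fcas\ belonging to $U$ on $r$, i.e.\ the one at $t_1(r)$ (so $t_1(r)$ is well-defined). Corollary~\ref{cor-read-unfrozen-info-then-fcas-no-change} then says no field of $r$ changes from when the \llt$(r)$ linked to $S$ reads $r\info.state$ at line~\ref{ll-read-state} until that \fcas; and since line~\ref{ll-read-state} precedes line~\ref{ll-reread} within one invocation of \llt, the time $t_0(r)$ at which that \llt\ executes line~\ref{ll-reread} lies inside the corollary's interval. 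Hence no field of $r$ changes during $[t_0(r), t_1(r))$, which is (a).

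For (b), I would start from $H$ returning \false, so Lemma~\ref{lem-help-false} gives that an \astep\ belongs to $U$; let $astep$ be the first such step. Writing $\langle r_1, \ldots, r_l\rangle$ for $U.V$, Lemma~\ref{lem-abort-fcas-flow} provides an index $k$ such that a \fcas\ belonging to $U$ on $r_k$ occurs before $astep$ --- making $t_1(r_k)$ well-defined --- and, crucially, $r_k.\info$ changes after the \llt$(r_k)$ linked to $S$ reads $r_k.\info$ at line~\ref{ll-reread} and before the first \fcas\ belonging to $U$ on $r_k$; that is, $r_k.\info$ changes during $[t_0(r_k), t_1(r_k))$. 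Since $\info$ is a field of $r_k$, this is (b).

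I expect no genuinely hard step here: the whole thing repackages Corollary~\ref{cor-read-unfrozen-info-then-fcas-no-change} (for case (a)) and Lemma~\ref{lem-abort-fcas-flow} (for case (b)). The points that need care --- the ``main obstacle'', such as it is --- are (i) confirming $t_1(r)$ is well-defined in each case, which Lemma~\ref{lem-if-fass-then-all-succ-fcas} and Lemma~\ref{lem-abort-fcas-flow} respectively do; (ii) checking that the reference point of Corollary~\ref{cor-read-unfrozen-info-then-fcas-no-change} (line~\ref{ll-read-state}) is no later than this lemma's reference point $t_0(r)$ (line~\ref{ll-reread}), so the corollary's conclusion transfers to the narrower window $[t_0(r), t_1(r))$; and (iii) being explicit that ``$r$ changes'' includes a change to $r.\info$, so the $\info$-change of case (b) counts and the ``no field of $r$ changes'' of Corollary~\ref{cor-read-unfrozen-info-then-fcas-no-change} is precisely what case (a) wants.
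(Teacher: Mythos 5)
Your proposal is correct and follows essentially the same route as the paper's proof: both cases rest on the same lemmas (Lemma~\ref{lem-if-fass-belongs-to-op-then} and Lemma~\ref{lem-if-fass-then-all-succ-fcas} feeding Corollary~\ref{cor-read-unfrozen-info-then-fcas-no-change} for the \true\ case, and the \astep\ plus Lemma~\ref{lem-abort-fcas-flow}.\ref{abort-fcas-flow-claim-rk-changes} for the \false\ case), combined with the Boolean dichotomy. Your explicit checks---that $t_1(r)$ is well-defined, that the first \fcas\ is the successful one, and that the corollary's interval starting at line~\ref{ll-read-state} contains $[t_0(r),t_1(r))$---are details the paper's proof leaves implicit, and they are all handled correctly.
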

\begin{proof}
Since $H$ terminates, it must return \true\ or \false.
Hence, it suffices to prove $H$ returns \true\ if and only if no $r$ in $U.V$ changes between $t_0(r)$ and $t_1(r)$.

\textbf{Case I: } Suppose $H$ returns \true.
By Lemma~\ref{lem-if-fass-belongs-to-op-then}.\ref{claim-help-true-then-returns-after-uass}, a \fstep\ belongs to $U$.  Hence, by Lemma~\ref{lem-if-fass-then-all-succ-fcas}, there is a successful \fcas\ belonging to $U$ for each $r$ in $U.V$.
Then, by Corollary~\ref{cor-read-unfrozen-info-then-fcas-no-change}, no field of $r$ changes between $t_0(r)$ and $t_1(r)$.

\textbf{Case II: }
Suppose $H$ returns \false.  We show that some $r$ in $U.V$ changes between $t_0(r)$ and $t_1(r)$.
Since $H$ can only return \false\ at line~\ref{help-return-false}, immediately before $H$ returns, it performs an \astep\ belonging to $U$.
Thus, Lemma~\ref{lem-abort-fcas-flow}.\ref{abort-fcas-flow-claim-rk-changes} applies and the claim is proved.
\end{proof}

%

\subsection{Linearizability of \llt/\sct/\vlt} \label{sec-corr-lin}

As described in Section~\ref{sec-operations}, we linearize all reads, all invocations of \llt\ that do not return \fail, all invocations of \vlt\ that return \true, and all invocations of \sct\ that modify the sequential data structure (all that return \true, and some that do not terminate).
In our implementation, subtle interactions with a concurrent invocation of \sct\ can cause an invocation of \llt \ to return \fail.
Since this cannot occur in a linearized execution,  we do not linearize  any such invocation of \llt.
Similarly, we allow some invocations of \sct \ and \vlt \ to return \false \ because of interactions with concurrent invocations of \sct. 
Rather than distinguishing between the invocations of \sct \ or \vlt \ that return \false \ because of an earlier linearized invocation of \sct\ (which is allowed by the sequential specification), and those that return \false \ because of contention, we simply opt not to linearize any invocation of \sct \ or \vlt \ that returns \false.
Alternatively, we could have accounted for the invocations that returns \false \ because of contention by allowing spurious failures in the sequential specification of the operations.
However, this would unnecessarily complicate the sequential specification.
Intuitively, an algorithm designer using \llt, \sct \ and \vlt \ is most likely to be interested in invocations of \sct \ and \vlt \ that return \true\ since these, respectively, change the sequential data structure, and indicate that a set of \rec\ have not changed since they were last passed to successful invocations of \llt\ by this process.
Knowing whether an invocation of \sct \ or \vlt \ was unsuccessful because of a change to the sequential data structure, or because of contention, is less likely to be useful.

Before we give the linearization points, we state precisely which invocations of \sct \ we shall linearize.
Let $S$ be an invocation of \sct, and $U$ be the \op\ it creates.
We linearize $S$ if and only if there is an \upcas \ belonging to $U$.
By Lemma~\ref{lem-if-fass-belongs-to-op-then}, every successful invocation of \sct\ will be linearized.
(By Lemma~\ref{lem-help-false}, no unsuccessful invocation of \sct\ will be linearized.)

We first give the linearization points of the operations, then we prove our \llt/\sct/\vlt\ implementation respects the correctness specification given in Section~\ref{sec-operations}.
\\

\noindent\textbf{Linearization points:}
\begin{itemize}
	\item{
		An \llt$(r)$ that returns values at line~\ref{ll-return}
		is linearized at line~\ref{ll-reread}.
		We linearize an \llt$(r)$ that returns \finalized\
		at line~\ref{ll-return-finalized}.
	}
	\item{
		Let $U$ be an \op \ created by an invocation $S$ of \sct. 
		Suppose there is an \upcas \ belonging to $U$.
		We linearize $S$ at the first such \upcas\
		(which is the unique successful \upcas \ belonging to $U$,
		 by Lemma~\ref{lem-first-upcas-succ}).
	}
	\item{
		An invocation $I$ of \vlt\ that returns \true\ is linearized
		at the first execution of line~\ref{vlt-reread}.
	}
	\item{
		We assume reads are atomic.
		Hence, a read is simply linearized when it occurs.
	}
\end{itemize}

\begin{lem} \label{lem-lin-points-during-ops}
The linearization point of each linearized operation occurs during the operation.
\end{lem}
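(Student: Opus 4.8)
The plan is to handle each type of linearized operation separately, using the specific linearization points stated just above. For each operation, I need to show that its linearization point lies strictly between the invocation and response of the operation (or, for non-terminating \sct s that are linearized, after the invocation).

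First I would dispose of the easy cases. A read is linearized exactly when it occurs, so its linearization point trivially lies within it. An \llt$(r)$ that returns a snapshot at line~\ref{ll-return} is linearized at line~\ref{ll-reread}, which is one of its own steps, so that is immediate; likewise an \llt$(r)$ returning \finalized\ is linearized at line~\ref{ll-return-finalized}, again one of its own steps. Similarly, an invocation of \vlt\ that returns \true\ is linearized at the first execution of line~\ref{vlt-reread}, which is one of its steps. So the only substantive case is the \sct.

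For an invocation $S$ of \sct\ that creates an \op\ $U$ and is linearized, the linearization point is the first \upcas\ belonging to $U$. I would argue this occurs after $S$'s invocation: the \upcas\ occurs inside some invocation of \help$(ptr)$ where $ptr$ points to $U$, and $U$ is created (at line~\ref{sct-create-op}) by $S$ itself, so no step can touch $U$ before $S$ begins; in particular the first \upcas\ belonging to $U$ occurs after $S$ is invoked. For the other direction, I must show the first \upcas\ belonging to $U$ occurs before $S$ responds. If $S$ terminates by returning \true, then by Lemma~\ref{lem-if-fass-belongs-to-op-then}.\ref{claim-help-true-then-returns-after-uass} a \cstep\ of $U$ occurs before $S$'s invocation of \help\ returns, and by inspection of \help\ the first \upcas\ of $U$ precedes any \cstep\ of $U$; hence the linearization point precedes $S$'s response. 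If $S$ does not terminate, there is nothing to check on the response side — a non-terminating operation is linearized after its invocation, which we have already established.

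The main obstacle is the \sct\ case, specifically confirming that a terminating \sct\ that is linearized must return \true\ (so that we are in the situation covered by Lemma~\ref{lem-if-fass-belongs-to-op-then}), rather than \false. This follows because $S$ is linearized if and only if an \upcas\ belongs to $U$, and by Lemma~\ref{lem-help-false}.\ref{claim-help-false-no-upcas} no \upcas\ belongs to $U$ when any invocation of \help$(ptr)$ returns \false; hence a terminating linearized \sct\ returns \true, and the preceding argument applies. All other cases are routine, since the stated linearization points are themselves steps of the operations in question.
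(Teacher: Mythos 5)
Your proposal is correct and follows essentially the same route as the paper: the non-\sct\ cases are immediate because their linearization points are steps of the operation itself, and the \sct\ case is handled by noting the first \upcas\ occurs after $U$ is created and, via Lemma~\ref{lem-if-fass-belongs-to-op-then}.\ref{claim-help-true-then-returns-after-uass}, before any \help\ on $U$ (hence $S$ itself) returns. The only cosmetic difference is that you establish the hypothesis of that lemma by first ruling out a \false\ return using Lemma~\ref{lem-help-false}, whereas the paper invokes it directly from the existence of the \upcas\ and the termination of $S$; both are valid.
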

\begin{proof}
This is trivial to see for reads, and invocations of \llt\ and \vlt.
Let $U$ be an \op \ created by a linearized invocation $S$ of \sct, and $ptr$ be a pointer to $U$.
This claim is not immediately obvious for $S$, since the first \upcas \ belonging to $U$ may be performed by a process helping $U$ to complete (not the process performing $S$).
Since $S$ is linearized, there is an \upcas \ $upcas$ belonging to $U$, which can only occur in an invocation of  \help$(ptr)$.
Since $ptr$ points to $U$, which is created by $S$, $upcas$ must occur after the start of $S$.
If $S$ does not terminate, then we are done.
Otherwise, Lemma~\ref{lem-if-fass-belongs-to-op-then}.\ref{claim-help-true-then-returns-after-uass} implies that a \cstep \ belongs to $U$, and the first such \cstep \ occurs before any invocation of \help$(ptr)$ returns.
From the code of \help, $upcas$ must occur before before the first \cstep \ belonging to $U$, so $upcas$ must occur before any invocation of \help$(ptr)$ returns.
Finally, since $S$ invokes \help$(ptr)$, $upcas$ must occur before $S$ terminates.
%
\end{proof}

We first show that each read returns the correct result according to its linearization point.

\after{Shorten the next proof.  It's too pedantic.}

\begin{lem} \label{lem-lin-read}
If a read $R_f$ of a field $f$ is linearized after a successful invocation of \sct$(V, R, fld, new)$, where $fld$ points to $f$, then $R_f$ returns the parameter $new$ of the last such \sct.
Otherwise, $R_f$ returns the initial value of $f$.
\end{lem}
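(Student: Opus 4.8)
The plan is to first prove a single global invariant about the contents of $f$, and then read the lemma off from where the linearization points of reads and \sct s lie.

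First I would establish the following invariant: at every time $t$, the value stored in $f$ equals $U.new$, where $U$ is the \op\ created by the last \sct\ linearized before $t$ whose $fld$ argument points to $f$, and equals the initial value of $f$ if no such \sct\ has been linearized before $t$. The proof is a trivial induction on the successful \upcas\ steps applied to $f$, taken in the order they occur. By Observation~\ref{obs-only-upcas-modifies-records}, these are the only steps that change $f$, so $f$ is constant strictly between consecutive ones (and holds its initial value before the first). By Lemma~\ref{lem-only-first-upcas-can-succeed} together with Lemma~\ref{lem-first-upcas-succ}, a successful \upcas\ applied to $f$ is exactly the first \upcas\ of some \op\ $U$ with $U.fld$ pointing to $f$; by the linearization scheme of Section~\ref{sec-corr-lin} (we linearize $S$ iff an \upcas\ belongs to the \op\ it creates, and at the first such \upcas) and Lemma~\ref{lem-first-upcas-succ}, this \upcas\ is precisely the linearization point of the (linearized) \sct\ that created $U$, and by Observation~\ref{obs-all-upcas-write-new} and Lemma~\ref{lem-first-upcas-succ} it changes $f$ from $U.old$ to $U.new$. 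Hence immediately after this \upcas, and until the next one if any, $f$ holds $U.new$, which is exactly the claimed value; this completes the induction.

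To finish, recall that reads are assumed atomic and are linearized exactly when they occur, so $R_f$ returns the contents of $f$ at its own linearization point. By the invariant, this is $U.new$ for the \op\ $U$ of the last linearized \sct\ modifying $f$ that precedes $R_f$ in the linearization order, or the initial value of $f$ if there is none. To match the exact wording of the statement, note that every linearized \sct\ has an \upcas\ belonging to its \op\ (by definition of the linearization), and hence, by Lemma~\ref{lem-if-fass-belongs-to-op-then}, it returns \true\ whenever it terminates; so ``the last linearized \sct$(V,R,fld,new)$ with $fld$ pointing to $f$ linearized before $R_f$'' coincides with ``the last successful such \sct'' whenever such an invocation is referenced, and its $new$ parameter is what $R_f$ returns; otherwise $R_f$ returns the initial value of $f$.

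I do not expect any real obstacle: every substantive fact needed — that only successful \upcas\ steps modify $f$, that at most one successful \upcas\ belongs to each \op, which \upcas\ that is, and what value it writes — was already proved in the ``Properties of \upcas\ steps'' subsection. The only point requiring a moment's care is the identification of successful \upcas\ steps on $f$ with linearization points of \sct s modifying $f$, and the attendant identification of ``linearized \sct'' with ``successful \sct'', both of which follow immediately from Lemmas~\ref{lem-only-first-upcas-can-succeed}, \ref{lem-first-upcas-succ}, and~\ref{lem-if-fass-belongs-to-op-then}.
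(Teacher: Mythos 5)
Your proof is correct and rests on exactly the same ingredients as the paper's: mutable fields change only at successful \upcas\ steps, each \op\ has at most one successful \upcas\ (its first, which is the linearization point of the \sct\ that created it), and that \upcas\ writes $U.new$ (Observation~\ref{obs-only-upcas-modifies-records}, Lemmas~\ref{lem-only-first-upcas-can-succeed}, \ref{lem-first-upcas-succ}, \ref{lem-if-fass-belongs-to-op-then}, Observation~\ref{obs-all-upcas-write-new}). The only differences are organizational and cosmetic: the paper argues per read via a case split (immutable $f$; mutable $f$ with or without a preceding modifying \sct) while you prove one global invariant about the contents of $f$ and read the lemma off of it --- noting only that for immutable $f$ the relevant citation is Observation~\ref{obs-immutable-fields-do-not-change} rather than Observation~\ref{obs-only-upcas-modifies-records}, and that your identification of the ``last linearized'' with the ``last successful'' modifying \sct\ glosses over linearized non-terminating \sct s in precisely the same way the paper's own proof does.
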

\begin{proof}
We proceed by cases.

\textbf{Case I:} $f$ is an immutable field.
In this case, a pointer to $f$ cannot be the $fld$ parameter of an invocation of \sct.
Further, by Observation~\ref{obs-immutable-fields-do-not-change}, $f$ cannot be modified after its initialization, so $R_f$ returns the initial value of $f$.

\textbf{Case II:} $f$ is a mutable field.
Suppose there is no successful invocation of \sct$(V, R, fld, new)$, where $fld$ points to $f$, linearized before $R_f$.
Since an invocation of \sct \ is linearized at its first \upcas, there can be no \upcas \ on $f$ prior to $ R_f $.
Since $f $ can only be modified by successful \upcas, $ R_f $ must return the initial value of $f $.

Now, suppose there is a successful invocation of \sct$(V, R, fld, new)$, where $fld$ points to $f$, linearized before $R_f$.
Let $S$ be the last such invocation of \sct\ linearized before $R_f$, and $U$ be the \op \ it creates.
By Lemma~\ref{lem-if-fass-belongs-to-op-then}.\ref{claim-help-true-then-fass}, there is exactly one successful \upcas \ $upcas$ belonging to $U$, occurring at the linearization point of $S$.
Since each successful \upcas \ is the linearization point of some invocation of \sct, and no invocation of \sct$ (V', R', fld, new') $ is linearized between $S$ and $ R_f $, no successful \upcas \ occurs between $S$ and $ R_f $.
Since a mutable field can only be changed by \upcas, $ R_f $ returns the value stored by the successful \upcas \ $upcas$ belonging to $U$.
By Lemma~\ref{lem-if-fass-belongs-to-op-then}.\ref{claim-help-true-then-fass}, $upcas$ changes $f $ to $U.new$.
Finally, since $U.new$ does not change after it is obtained from $new$ at line~\ref{sct-create-op}, the claim is proved.
\end{proof}

Next, we prove that an \llt\ that returns a snapshot does return the correct result according to its linearization point.

\begin{cor} \label{cor-lin-llt-success}
Let $r$ be a \rec \ with mutable fields $f_1, ..., f_y$, and $I$ be an invocation of $\llt(r)$ that returns a tuple of values $\langle m_1, ..., m_y \rangle$ at line~\ref{ll-return}. 
For each mutable field $f_i$ of $r$, if $I$ is linearized after a successful invocation of \sct$(V, R, fld, new)$, where $fld$ points to $f_i$, then $m_i$ is the parameter $new$ of the last such invocation of \sct.
Otherwise, $m_i$ is the initial value of $f_i$.
\end{cor}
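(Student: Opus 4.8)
The plan is to show that the value $m_i$ that $I$ reads from $f_i$ at line~\ref{ll-collect} is exactly the value stored in $f_i$ at $I$'s linearization point (line~\ref{ll-reread}), and then to invoke Lemma~\ref{lem-lin-read} to identify that value with the $new$ parameter of the last relevant \sct\ (or with the initial value of $f_i$).

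First I would pin down the relevant times. Since $I$ returns at line~\ref{ll-return}, it entered the if-block at line~\ref{ll-check-frozen}, read each $m_i$ from $f_i$ at line~\ref{ll-collect}, and reread $r.\info$ at line~\ref{ll-reread}; let $t_0$ and $t_1$ be the times at which $I$ executes line~\ref{ll-read-state} and line~\ref{ll-reread}, respectively, and let $t \in (t_0, t_1)$ be the time at which $I$ reads $f_i$ at line~\ref{ll-collect}. By Lemma~\ref{lem-r-not-frozen-in-good-llt}, $r$ is not frozen at any time in $[t_0, t_1]$; hence, by Corollary~\ref{cor-upcas-only-modifies-frozen} together with Observation~\ref{obs-only-upcas-modifies-records}, no successful \upcas\ on $f_i$ occurs during $[t_0, t_1]$, so $f_i$ holds the value $m_i$ throughout $[t_0, t_1]$, in particular at time $t$ and at time $t_1$.

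Next I would translate this into a statement about linearized \sct s. Each \sct$(V,R,fld,new)$ with $fld$ pointing to $f_i$ that is linearized is linearized at its first \upcas, which is the unique successful \upcas\ belonging to its \op\ and changes $f_i$ to $new$ (Lemma~\ref{lem-first-upcas-succ}, Lemma~\ref{lem-if-fass-belongs-to-op-then}.\ref{claim-help-true-then-fass}). Since no successful \upcas\ on $f_i$ lies in $[t_0, t_1]$, the last \sct\ with $fld$ pointing to $f_i$ linearized before time $t$ coincides with the last such \sct\ linearized before time $t_1$. Applying Lemma~\ref{lem-lin-read} to the read of $f_i$ performed at line~\ref{ll-collect} (which is linearized at $t$), $m_i$ is the $new$ parameter of that last \sct\ if one exists and the initial value of $f_i$ otherwise; by the previous sentence the same holds with respect to $I$'s linearization point $t_1$, which is the desired conclusion.

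The only point requiring care — and it is already handled by the cited results — is the timing: we must rule out any \upcas\ on $f_i$ (equivalently, any linearization point of a competing \sct\ on $f_i$) slipping into the window between when $I$ reads $f_i$ and when $I$ is linearized, which is precisely what Lemma~\ref{lem-r-not-frozen-in-good-llt} and Corollary~\ref{cor-upcas-only-modifies-frozen} provide. Beyond this bookkeeping there is no real obstacle, since all the substantive work lies in the earlier lemmas on freezing and on \upcas\ steps.
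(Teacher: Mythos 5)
Your proposal is correct and follows essentially the same route as the paper: both use the fact that $r$ is unfrozen throughout the window between line~\ref{ll-read-state} and line~\ref{ll-reread} (so no successful \upcas\ can touch $f_i$ there), and then reduce to Lemma~\ref{lem-lin-read}. The only difference is presentational — the paper cites Lemma~\ref{lem-read-unfrozen-info-twice-no-change} as a packaged statement that no field of $r$ changes over the interval, whereas you unfold it into Corollary~\ref{cor-upcas-only-modifies-frozen} and Observation~\ref{obs-only-upcas-modifies-records}.
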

\begin{proof}
Since $I$ returns at line~\ref{ll-return}, 
the same value is read from $r.info$ on line~\ref{ll-read} and line~\ref{ll-reread} at times $t_0$ and $t_1$, respectively. 
By Lemma~\ref{lem-r-not-frozen-in-good-llt}, $r$ is unfrozen at line~\ref{ll-check-frozen}, which is between $t_0$ and $t_1$.
Thus, Lemma~\ref{lem-read-unfrozen-info-twice-no-change} implies that $r$ does not change during $[t_0,t_1]$.
Since the values returned by the \llt\ are read from the fields of $r$ between $t_0$ and $t_1$ (at line~\ref{ll-collect}), each read returns the same result as it would if it were executed atomically at the linearization point of the \llt\ (line~\ref{ll-reread}).
Finally, Lemma~\ref{lem-lin-read} completes the proof.
\end{proof}

Next, we show that an \llt$(r)$ returns \finalized\ only if $r$ really has been finalized.

\begin{lem} \label{lem-lin-llt-finalized}
Let $I$ be an invocation of \llt$(r)$, and $U$ be the \op \ to which $I$ reads a pointer at line~\ref{ll-read}.
If $I$ returns \finalized, then an invocation $S$ of \sct$ (V, R, fld, new) $ that created $U$ is linearized before $I$, and $r$ is in $R$.
\end{lem}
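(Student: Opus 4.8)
The plan is to unfold the test at line~\ref{ll-check-finalized}, which must have evaluated to \true\ since $I$ returns at line~\ref{ll-return-finalized}. Writing $U$ for the pointer $I$ reads from $r.\info$ at line~\ref{ll-read}, this gives two facts: (i) $marked_1 = \true$, i.e.\ $r.marked$ was \true\ when $I$ executed line~\ref{ll-read-marked1}; and (ii) either $U.state = \done$ when $I$ evaluated line~\ref{ll-check-finalized}, or $U.state = \freezing$ there and the call $\help(U)$ made at line~\ref{ll-check-finalized} returned \true. The first step is to deduce from (ii) that a \cstep\ belonging to $U$ occurs before $I$ is linearized (at line~\ref{ll-return-finalized}). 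In the first case, $state$ becomes \done\ only through a \cstep\ belonging to $U$ (Lemma~\ref{lem-state-transitions2}), so such a \cstep\ occurred before line~\ref{ll-check-finalized}. In the second case, an invocation of \help\ on a pointer to $U$ returned \true; a \help\ invocation returns \true\ only at line~\ref{help-return-true} (having just performed a \cstep\ belonging to $U$ at line~\ref{help-cstep}) or at line~\ref{help-return-true-loop}, and Lemma~\ref{lem-no-return-true-in-loop-until-uass} says the latter cannot happen before the first \cstep\ belonging to $U$. Either way a \cstep\ belonging to $U$ precedes the return of $\help(U)$, hence precedes line~\ref{ll-return-finalized}.

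Next I would use this \cstep\ to obtain $S$ and place it in the linearization order. Since a \cstep\ belongs to $U$, Lemma~\ref{lem-no-steps-belong-to-dummy-op} shows $U$ is not the dummy \op, so $U$ was created at line~\ref{sct-create-op} by some invocation $S$ of \sct$(V, R, fld, new)$, with $U.R = R$. The \cstep\ above lies in some invocation of \help\ that executed line~\ref{help-upcas} immediately before line~\ref{help-cstep}, so an \upcas\ belonging to $U$ exists; by the linearization rule for \sct, $S$ is therefore linearized, at the first \upcas\ belonging to $U$. By the code of \help, every \cstep\ is preceded in the same invocation by an \upcas\ at line~\ref{help-upcas}, so the first \upcas\ belonging to $U$ precedes the first \cstep\ belonging to $U$, which is at or before the \cstep\ found above, which is before line~\ref{ll-return-finalized}. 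Hence $S$ is linearized strictly before $I$.

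It remains to show $r \in R$. From (i) and the fact that $r.marked$ only ever changes from \false\ to \true, $r.marked = \true$ when $I$ executes line~\ref{ll-read}. So some \markstep\ set $r.marked$ to \true; let $mstep$ be the first such \markstep\ on $r$ (which is successful), say belonging to an \op\ $W$, and note $mstep$ occurs before line~\ref{ll-read-marked1}, hence before line~\ref{ll-read}. By Corollary~\ref{cor-marksteps} and Lemma~\ref{lem-frozen-forever-after-markstep}, $r$ is frozen for $W$ from $mstep$ onward, in particular when $I$ executes line~\ref{ll-read}; since a \rec\ frozen for $W$ has its \info\ field pointing to $W$, and $I$ reads $U$ from $r.\info$ at that moment, we get $W = U$. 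As $mstep$ is a \markstep\ on $r$ belonging to $U$, line~\ref{help-markstep} gives $r \in U.R = R$, completing the proof. The main obstacle is the bookkeeping in the first step — correctly extracting a \cstep\ (and hence an \upcas) from the possibly-helped \finalized\ path so that $S$'s linearization point can be shown to precede line~\ref{ll-return-finalized}; the identification $W = U$ in the last step is delicate but hinges only on $r.\info$ pointing to $U$ exactly at line~\ref{ll-read}.
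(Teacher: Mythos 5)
Your proof is correct, and its first half (extracting a \cstep\ belonging to $U$ from either branch of the test at line~\ref{ll-check-finalized}, then deducing that $U$ is not the dummy \op, that an \upcas\ belonging to $U$ exists, and hence that $S$ is linearized at that \upcas\ before line~\ref{ll-return-finalized}) matches the paper's argument almost step for step, differing only in which supporting lemma is cited for the helped case (you use Lemma~\ref{lem-no-return-true-in-loop-until-uass} directly where the paper invokes Lemma~\ref{lem-if-fass-belongs-to-op-then}).

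Where you genuinely diverge is the final step, showing $r \in R$. The paper splits into two cases according to whether the first \cstep\ belonging to $U$ precedes or follows $I$'s read at line~\ref{ll-read}, shows in each case that $r$ eventually becomes \permafrozen\ for $U$, and only then identifies $U$ with the \op\ $W$ of the marking step by appealing to the fact that $r$ is frozen for $W$ forever after $mstep$ and can be frozen for at most one \op\ at a time. You instead observe that $mstep$ already precedes line~\ref{ll-read-marked1}, so by Lemma~\ref{lem-frozen-forever-after-markstep} the record is frozen for $W$ at the very instant line~\ref{ll-read} returns a pointer to $U$, and the definition of ``frozen for $W$'' forces $r.\info$ to point to $W$ at that instant, giving $W = U$ immediately. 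This is a legitimately shorter route: it eliminates the paper's case analysis and the detour through \permafrozen ness, at the cost of leaning directly on the definition of freezing rather than on the packaged Lemma~\ref{lem-finalized-forever-frozen}. Both arguments are sound; yours localizes the identification of $W$ with $U$ to a single configuration, which makes the dependence on the order of lines~\ref{ll-read-marked1} and~\ref{ll-read} explicit.
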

\begin{proof}
Suppose $I$ returns \finalized.
Then, $I$ is linearized at line~\ref{ll-return-finalized}.
When $I$ performs line~\ref{ll-check-finalized}, either $r\info.state$ is \done \ or $I$'s invocation of \help$(r\info)$ returns \true.
We show that, in either case, a \cstep \ belonging to $U$ must have occurred before $I$ performs line~\ref{ll-return-finalized}.
If $r\info.state$ is \done, then this follows immediately from the fact that no \op \ has \done\ as its initial state.
Otherwise, Lemma~\ref{lem-if-fass-belongs-to-op-then}.\ref{claim-help-true-then-returns-after-uass} implies that a \cstep \ belonging to $U$ occurs before $I$'s invocation of \help$(r\info)$ returns.
Since a \cstep\ belongs to $U$, Lemma~\ref{lem-no-steps-belong-to-dummy-op} implies that $U$ is not the dummy \op, so there must be an invocation $S$ of \sct$(V, R, fld, new)$ that created $U$.
From the code of \help, an \upcas\ belonging to $U$ occurs before the first \cstep\ belonging to $U$.
Since $S$ is linearized at its first \upcas, $S$ is linearized before $I$.

It remains to show that $r$ is in $R$.
Since $U.R$ does not change after it is obtained from $R$ at line~\ref{sct-create-op}, it suffices to show $r$ is in $U.R$.
By line~\ref{ll-check-finalized}, $marked_1 = \true$, which means that $r$ is marked when $I$ reads a pointer to $U$ from $r.\info$ at line~\ref{ll-read}.
Let $t_0$ be when $I$ performs line~\ref{ll-read}, and $t_1$ be when the first \cstep \ belonging to $U$ occurs.
We consider two cases.
Suppose $t_1 < t_0$.
Then, when $I$ performs line~\ref{ll-read}, $r$ is marked, $r.\info$ points to $U$, and $U.state = \done$, which means that $r$ is \permafrozen\ for $U$.
By Lemma~\ref{lem-finalized-forever-frozen}, $r$ is frozen for $U$ at all times after $t_0$.
Now, suppose $t_0 < t_1$.
In this case, Lemma~\ref{lem-state-transitions2} implies $U.state = \freezing$ when $I$ performs line~\ref{ll-read}, and that $U.state$ will never be \retry.
By Lemma~\ref{lem-no-info-change-while-freezing}, $r.\info$ must point to $U$ at all times in $[t_0, t_1]$.
Thus, at $t_1$, $r$ is marked and $r.\info$ points to $U$, which means that $r$ is \permafrozen\ for $U$.
%
%
By Lemma~\ref{lem-finalized-forever-frozen}, $r$ is frozen for $U$ at all times after $t_1$.
In each case, $r$ is frozen for $U$ at all times in some (non-empty) suffix of the execution.
Since $r$ is marked, there must be a successful \markstep\ belonging to some \op\ $W$ on $r$.
By Lemma~\ref{lem-frozen-forever-after-markstep}, $r$ is frozen for $W$ at all times after this \markstep.
Therefore, $U = W$, which means there is a \markstep\ belonging to $U$ on $r$.
Finally, line~\ref{help-markstep} implies that $r$ is in $U.R$.
\end{proof}

\begin{lem} \label{lem-lin-llt-finalized-if-after-sct}
Let $r$ be a \rec, $I$ be an invocation of \llt$(r)$ that terminates, and $S$ be a linearized invocation of \sct$(V, R, fld, new)$ with $r$ in $R$.
$I$  returns \finalized\ if it is linearized after $S$, or begins after $S$ is linearized.
(This implies $I$ will be linearized in both cases.)
\end{lem}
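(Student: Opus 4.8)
The plan is to route both cases through a single structural fact $(\star)$ about the \op\ $U$ created by $S$ (let $ptr$ point to $U$): there is a time $t_m$ strictly earlier than $S$'s linearization point $t_S$ such that, at every time at or after $t_m$, $r$ is marked and $r.\info$ points to $U$ --- hence $r$ is frozen for $U$ from $t_m$ onward. To establish $(\star)$: since $S$ is linearized, some \upcas\ belongs to $U$; let $upcas$ be the first such, performed at time $t_S$ by an invocation $H_0$ of \help$(ptr)$ (this is $S$'s linearization point). Reaching line~\ref{help-upcas} forces $H_0$ to have performed, before $t_S$, a \fstep\ belonging to $U$ and --- since $r\in R\subseteq V$ and $U.R,U.V$ are copied from $R,V$ at line~\ref{sct-create-op} --- a \markstep\ belonging to $U$ on $r$ at line~\ref{help-markstep}; so the first \markstep\ belonging to $U$ on $r$ occurs at some $t_m<t_S$. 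Because a \fstep\ belongs to $U$, Lemma~\ref{lem-fass-then-no-bcas} forbids any \astep\ belonging to $U$, and together with Lemma~\ref{lem-no-steps-belong-to-dummy-op} ($U$ is not the dummy \op) and Corollary~\ref{cor-state-transitions-respect-figure} this gives that $U.state$ is always \freezing\ or \done\ and never leaves \done. By Corollary~\ref{cor-records-frozen-from-fass-to-uass} and Lemma~\ref{lem-state-transitions2}, that first \markstep\ on $r$ happens while $r$ is frozen for $U$; it must be successful, for otherwise $r.marked$ was already \true, producing an earlier successful \markstep\ on $r$ belonging to some \op\ $W$, whence Lemma~\ref{lem-frozen-forever-after-markstep} would also put $r$ frozen for $W$ at the moment of our \markstep\ --- impossible unless $W=U$, contradicting minimality. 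Applying Lemma~\ref{lem-frozen-forever-after-markstep} to this successful \markstep\ then gives $(\star)$.

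Next I would handle the case in which $I$ is linearized after $S$. As $I$ terminates, it returns \finalized\ (and is linearized at line~\ref{ll-return-finalized}), a snapshot (linearized at line~\ref{ll-reread}), or \fail\ (not linearized); being linearized, it returns one of the first two. If it returned a snapshot it would be linearized at line~\ref{ll-reread} at a time $t_1>t_S>t_m$, and Lemma~\ref{lem-r-not-frozen-in-good-llt} would say $r$ is not frozen at $t_1$, contradicting $(\star)$. Hence $I$ returns \finalized.

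For the case in which $I$ begins after $t_S$: every step of $I$ then occurs after $t_S>t_m$, so by $(\star)$ the invocation reads $marked_1=\true$ at line~\ref{ll-read-marked1}, sets its local variable $r\info$ to $ptr$ at line~\ref{ll-read}, reads $state\in\{\freezing,\done\}$ at line~\ref{ll-read-state}, and reads $marked_2=\true$ at line~\ref{ll-read-marked2}. Thus the guard at line~\ref{ll-check-frozen} is false, $I$ skips line~\ref{ll-return}, and at line~\ref{ll-check-finalized} we have $r\info$ still pointing to $U$ and $marked_1=\true$. If $U.state=\done$ there, the guard holds and $I$ returns \finalized. If $U.state=\freezing$, $I$ invokes \help$(ptr)$, which terminates (since $I$ does) and returns \true\ --- by Lemma~\ref{lem-if-fass-belongs-to-op-then}.\ref{claim-help-true-then-all-helpers-true} (a \fstep\ belongs to $U$), or directly because no \astep\ belongs to $U$ so it cannot return \false; so the guard holds and $I$ returns \finalized.

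The main obstacle is $(\star)$, and in particular anchoring the \markstep\ on $r$ strictly before $S$'s linearization point and proving it is permanent. The subtle phenomenon it captures is that, while $I$ executes, $r.\info.state$ may still be \freezing\ --- no \cstep\ belonging to $U$ has occurred --- so $r$ is not yet ``finalized'' in the $state=\done$ sense; this is exactly why, in the last case, $I$ may have to help $U$ perform its own \cstep\ before it can return \finalized, and why that case must be settled by examining the \llt\ pseudocode (the guard on line~\ref{ll-check-finalized}) rather than by invoking a standing ``$r$ is finalized'' invariant.
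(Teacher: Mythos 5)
Your proof is correct and follows essentially the same route as the paper's: both arguments hinge on the fact that a successful \markstep\ belonging to $U$ on $r$ precedes the first \upcas\ (the linearization point of $S$), so that by Lemma~\ref{lem-frozen-forever-after-markstep} $r$ is marked and frozen for $U$ from then on, which rules out a snapshot-returning linearization at line~\ref{ll-reread} in the first case and forces the \finalized\ branch in the second. If anything, your explicit \freezing/\done\ case split at line~\ref{ll-check-finalized} (including the possibility that $I$ must itself help $U$ complete) is more careful than the paper's Case~I, which simply asserts that $I$ will see $state = \done$.
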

\begin{proof}
Let $U$ be the \op\ created by $S$.

\textbf{Case I:} $I$ begins after $S$ is linearized.
In this case, $S$ is linearized at a successful \upcas\ $upcas$ belonging to $U$ that occurs before $I$ begins.
From the code of \help, a \markstep\ on $r$ belonging to $U$ must occur before $upcas$.
Consider the first \markstep\ $mstep$ on $r$.
Since $r.marked$ is initially \false, $mstep$ must be successful.
Let $W$ be the \op\ to which $mstep$ belongs.
By Lemma~\ref{lem-frozen-forever-after-markstep}, $r$ is frozen for $W$ at all times after $mstep$.
By Lemma~\ref{cor-upcas-only-modifies-frozen}, $r$ is frozen for $U$ when $upcas$ occurs (which is after $mstep$).
Since $r$ can be frozen for only one \op\ at a time, $W = U$.
Therefore, $r.\info$ points to $U$ throughout $I$.
So, when $I$ performs line~\ref{ll-check-frozen}, it will see $state = \done$ and $marked_2 = \true$.
Moreover, when it subsequently performs line~\ref{ll-check-finalized}, it will see $r\info.state = \done$ and $marked_1 = \true$, so it will return \finalized.

\textbf{Case II:} $I$ is linearized after $S$.
$I$ can either be linearized at line~\ref{ll-reread} or at line~\ref{ll-return-finalized}.
If it is linearized at line~\ref{ll-return-finalized}, then it returns \finalized, and we are done.
Suppose, in order to derive a contradiction, that $I$ is linearized at line~\ref{ll-reread}.
Then, $I$ returns at line~\ref{ll-return} and, by Corollary~\ref{cor-r-not-frozen-in-linked-llt}, $r$ is unfrozen  at all times during $[t_0, t_1]$, where $t_0$ is when $I$ performs line~\ref{ll-read-state}, and $t_1$ is when $I$ performs line~\ref{ll-reread}.
Since $I$ is linearized at time $t_1$, $S$ must be linearized at an \upcas \ $upcas$ belonging to $U$ that occurs before time $t_1$.
By Corollary~\ref{cor-upcas-only-modifies-frozen}, $upcas$ can only occur while $r$ is frozen for $U$.
Since $r$ is unfrozen at all times during $[t_0,t_1]$, $upcas$ must occur at some point before $t_0$.
From the code of \help, a \markstep\ belonging to $U$ must occur before $upcas$.
Consider the first \markstep\ $mstep$ belonging to any \op\ $W$ on $r$.
As we argued in the previous case, $r$ is frozen for $W$ at all times after $mstep$.
However, this contradicts our argument that $r$ is unfrozen at all times during $[t_0,t_1]$ (since $t_0$ is after $mstep$).
Thus, $I$ cannot be linearized at line~\ref{ll-reread}, so $I$ must return \finalized.
\end{proof}

The following lemma proves that an \sct\ succeeds only when it is supposed to, according to the
correctness specification.

\begin{lem} \label{lem-lin-sct-vlt}
If an invocation $I$ of \vlt$(V)$ or \sct$(V, R, fld, new)$ is linearized then, for each $r$ in $V$, no \sct$(V', R', fld', new')$ with $r$ in $V'$ is linearized between the \llt$(r)$ linked to $I$ and $I$.
\end{lem}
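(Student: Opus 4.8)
The plan is to fix an $r$ in $V$ and show that no $\sct(V',R',fld',new')$ with $r$ in $V'$ can be linearized in the open interval between the $\llt(r)$ linked to $I$ and $I$ itself. The key object to track is $r.\info$. By Observation~\ref{obs-op-invariants}.\ref{inv-llresults} and Observation~\ref{obs-op-invariants}.\ref{inv-unfrozen}, the $\llt(r)$ linked to $I$ reads some pointer $r\info$ from $r.\info$ at line~\ref{ll-read}, sees $r.\info = r\info$ again at line~\ref{ll-reread}, and (by Lemma~\ref{lem-r-not-frozen-in-good-llt}) $r$ is unfrozen at that moment. Call this moment $t_0$; it is the linearization point of the linked $\llt(r)$ when $I$ is a $\vlt$, and in all cases it lies inside the linked $\llt$. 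For the $\vlt$ case, $I$ returns \true, so at line~\ref{vlt-reread} (the linearization point of $I$, call it $t_1$) it rereads $r.\info$ and finds it equal to $r\info$; by Lemma~\ref{lem-no-aba-info} there is no ABA on $\info$, so $r.\info = r\info$ throughout $[t_0,t_1]$, and by Lemma~\ref{lem-become-frozen-only-by-info-change} $r$ stays unfrozen on $[t_0,t_1]$.

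Now suppose some $S' = \sct(V',R',fld',new')$ with $r$ in $V'$ is linearized at a time $t'$ with $t_0 < t' < t_1$; let $U'$ be the \op\ it creates. Since $S'$ is linearized, there is an \upcas\ belonging to $U'$, and by Lemma~\ref{lem-first-upcas-while-frozen} (or Corollary~\ref{cor-upcas-only-modifies-frozen}) the first \upcas\ belonging to $U'$ — which is the linearization point $t'$ of $S'$ — occurs while \emph{some} \rec\ in $V'$ is frozen for $U'$; and more to the point, by Lemma~\ref{lem-fcas-then-upcas-no-change} there is a successful \fcas\ belonging to $U'$ on every \rec\ in $U'.V$, in particular on $r$, occurring before $t'$. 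A successful \fcas\ on $r$ changes $r.\info$ to point to $U'$. But $U' \neq r\info$: indeed $r\info$ was read from $r.\info$ before the $\llt(r)$ linked to $S'$ even started (by Observation~\ref{obs-op-invariants}.\ref{inv-llresults}, $r\info$ is in $r.\info$ during that linked $\llt$, which precedes $S'$, which creates $U'$), so $r.\info$ contained $r\info$ before $U'$ existed, whereas $U'$ did not exist then. Hence the successful \fcas\ belonging to $U'$ on $r$ changes $r.\info$ away from $r\info$ at some time before $t'$, i.e.\ inside $[t_0,t_1]$ — contradicting that $r.\info = r\info$ throughout $[t_0,t_1]$. (Equivalently for the $\vlt$ case, it contradicts that $r$ is unfrozen on the whole interval, via Lemma~\ref{lem-become-frozen-only-by-info-change}.)

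The $\sct$ case is handled the same way once we know $r.\info = r\info$ over the relevant interval, but the subtlety is that $I$, being an $\sct$, is linearized at its \emph{first} \upcas, which may be performed by a helper well after the $\llt(r)$ linked to $I$ finishes; I must argue that on the interval from $t_0$ up to $I$'s linearization point, $r.\info$ is still $r\info$. This follows because $I$ is linearized only if there is an \upcas\ belonging to $U = $ the \op\ $I$ creates, hence (Lemma~\ref{lem-fcas-then-upcas-no-change}) a successful \fcas\ belonging to $U$ on $r$ occurs before $I$'s linearization point; by Corollary~\ref{cor-read-unfrozen-info-then-fcas-no-change} (with $S=I$), no field of $r$ — in particular $r.\info$ — changes between when the $\llt(r)$ linked to $I$ reads $r\info.state$ at line~\ref{ll-read-state} and this \fcas. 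So on the whole interval from the linked $\llt(r)$ to $I$'s linearization point, $r.\info$ equals $r\info$, and the argument of the previous paragraph applies verbatim: any $S'$ linearized strictly in between would have forced a successful \fcas\ belonging to $U'$ on $r$, changing $r.\info$ off $r\info$, a contradiction.

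The \textbf{main obstacle} is precisely this bookkeeping about where the linked $\llt(r)$'s read of $r.\info$ sits relative to the (possibly helper-performed) linearization point of $I$ in the $\sct$ case, and making sure the "$r.\info$ is constantly $r\info$" window actually spans from the linked $\llt$ all the way to $I$'s linearization point — for which Corollary~\ref{cor-read-unfrozen-info-then-fcas-no-change} applied with $S = I$ is exactly the tool. Everything else is a direct application of the no-ABA lemma for $\info$ and the fact that any linearized $\sct$ must first successfully $\fcas$ the $\info$ field of every \rec\ in its $V$ sequence.
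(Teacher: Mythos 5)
Your overall strategy is the same as the paper's (track $r.\info$ and the frozen status of $r$ over the window from the linked \llt\ to $I$'s linearization point, and show that the freezing that $S'$ would need is incompatible), and the \vlt\ half is essentially sound via your parenthetical ``$r$ is unfrozen on the whole interval'' version: any linearized $S'$ has $r$ frozen for $U'$ at its first \upcas\ (Lemma~\ref{lem-records-frozen}), which directly contradicts unfrozenness, with no need to reason about where $U'$'s \fcas\ falls. (Your primary phrasing --- that the \fcas\ belonging to $U'$ occurs ``before $t'$, i.e.\ inside $[t_0,t_1]$'' --- is a non sequitur, since ``before $t'$'' does not place it after $t_0$; stick with the frozen-based contradiction.)

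The genuine gap is in the \sct\ case. Your claim that ``on the whole interval from the linked $\llt(r)$ to $I$'s linearization point, $r.\info$ equals $r\info$'' is false. Corollary~\ref{cor-read-unfrozen-info-then-fcas-no-change} only gives constancy up to the successful \fcas\ belonging to $U$ on $r$; that \fcas\ itself changes $r.\info$ to point to $U$, and it occurs \emph{strictly before} $I$'s linearization point (the first \upcas\ belonging to $U$ is preceded in \help\ by the \fstep, which is preceded by all the freezing \cas es). So on the sub-interval from that \fcas\ to the first \upcas, $r.\info$ points to $U$, not $r\info$, and your ``the argument applies verbatim'' step has nothing to contradict there. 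To close the gap you need a separate argument for that sub-interval: there $r$ is frozen for $U$ (Lemma~\ref{lem-records-frozen}), so if $S'$ were linearized there, $r$ would simultaneously be frozen for $U'$, forcing $U'=U$ --- impossible because $S'$'s linearization point is the first \upcas\ belonging to $U'$ and would then have to precede the first \upcas\ belonging to $U$. This two-regime analysis is exactly what the paper's proof packages as the invariant ``at all times in $[t_1,t_4)$, $r$ is either frozen for $U$ or not frozen,'' which you should prove instead of the all-interval constancy of $r.\info$.
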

\begin{proof}
Fix any $r$ in $V$. 
By the preconditions 
of \sct\ and \vlt, there must be an \llt$(r)$ linked to $I$.
If $I$ is an invocation of \sct, then let $U$ be the \op \ that it creates.
Let $L$ be the $\llt(r)$ linked to $I$, $t_0$ be when $L$ performs line~\ref{ll-read}, $t_1$ be when $L$ performs line~\ref{ll-read-state} and $t_2$ be when $L$ is linearized (at line~\ref{ll-reread}).
Since $L$ is linked to $I$, $t_2$ exists.
Let $t_3$ be when $I$ is linearized.
For \sct, $t_3$ is when the first \upcas\ belonging to $U$ occurs.
For \vlt, $t_3$ is when $I$ first performs line~\ref{vlt-reread}.
Since $I$ is linearized, $t_3$ exists.
Finally, we define time $t_4$.
For \sct, $t_4$ is when the first \cstep\ belonging to $U$ occurs, or the end of the execution if there is no \cstep\ belonging to $U$ (or $\infty$ if the execution is infinite).
For \vlt, $t_4$ is when $I$ sees $r\info = r.\info$ at line~\ref{vlt-reread} (in the iteration for $r$).
If $I$ is an invocation of \vlt\ then, since $I$ is linearized, it returns \true.
This implies that $I$ must see $r\info = r.\info$ at line~\ref{vlt-reread} in the iteration for $r$, so $t_4$ exists.
Clearly, $t_4$ exists if $I$ is an invocation of \sct.

We now prove $ t_0 <t_1 <t_2 <t_3 <t_4$.
From the code of \llt, $t_0 <t_1 <t_2$.
Suppose $I$ is an invocation of \vlt.
Since $L$ terminates before $I$ begins, $t_2 < t_3$.
Trivially, $t_3 < t_4$.
Now, suppose $I$ is an invocation of \sct.
Since each \upcas \ belonging to $U$ occurs in an invocation of \help$(ptr)$ where $ptr$ points to $U$, and $U$ is created during $I$, $t_3 $ must occur after the start of $I$.
Since $L$ terminates before $I$ begins, $t_2 < t_3$.
From the code of \help, the first \upcas \ belonging to $U$ precedes the first \cstep \ belonging to $U$ (as well as the other options for $t_4$), so $t_3 < t_4$.

Next, we prove that, at all times in $[t_1, t_4)$, $r$ is either frozen for $U$, or not frozen (i.e., $r$ is not frozen for any \op \ different from $U$ at any point during $[t_1, t_4)$).
We consider two cases.

\textbf{Case I:}
Suppose $I$ is an invocation of \vlt.
Then, $r.\info$ contains $r\info$ at $t_0$, and again at $t_4$.
By Lemma~\ref{lem-no-aba-info}, $r.\info$ must contain $r\info$ at all times in $[t_0, t_4]$.
By Corollary~\ref{cor-r-not-frozen-in-linked-llt}, $r$ is unfrozen at time $t_1$.
By Lemma~\ref{lem-become-frozen-only-by-info-change}, $r$ can only be changed from unfrozen the frozen by a change to $r.\info$.
Since $r$ does not change during $[t_0, t_4]$, $r$ is unfrozen at all times in $[t_1, t_4]$.

\textbf{Case II:}
Suppose $I$ is an invocation of \sct.
From the code of \help, a \fstep\ belonging to $U$ precedes the first \upcas \ belonging to $U$.
By Lemma~\ref{lem-if-fass-then-all-succ-fcas}, a successful \fcas \ belonging to $U$ on $r$ precedes the first \fstep \ belonging to $U$.
Let $t_2'$ be when the first successful \fcas \ {\it fcas} belonging to $U$ on $r$ occurs.
It follows that $t_2' < t_3$.
Since each  \fcas \ belonging to $U$ occurs in an invocation of \help$(ptr)$ where $ptr$ points to $U$, and $U$ is created during $I$, each \fcas \ belonging to $U$ must occur after the start of $I$.
Recall that $L$ terminates before the start of $I$.
Hence, $t_2' > t_2$.
By Observation~\ref{obs-op-invariants}.\ref{inv-llresults} and line~\ref{help-rinfo}, the old value for {\it fcas} is the value $r\info$ that was read from $r.\info$ at line~\ref{ll-read} of $L$ (at $t_0$).
Since {\it fcas} is successful, $r.\info$ must contain $r\info$ just before {\it fcas}.
Therefore, $r.\info$ contains $r\info$ at $t_0$, and again at $t_2'$.
By the same argument we made in Case I (but with $t_2'$ instead of $t_4$), Lemma~\ref{lem-no-aba-info}, Corollary~\ref{cor-r-not-frozen-in-linked-llt}, and Lemma~\ref{lem-become-frozen-only-by-info-change} imply that $r$ is unfrozen at all times in $[t_1, t_2']$.
By Lemma~\ref{lem-records-frozen}, $r$ is frozen for $U$ at all times in $(t_2',t_4)$, which proves this case.

At last, we have assembled the results needed to obtain a contradiction.
Suppose, to derive a contradiction, that an invocation $S$ of \sct$(V', R', fld', new')$ with $r$ in $V'$ is linearized between $L$ and $I$ (i.e., in $(t_2, t_3)$).
Let $W$ be the \op \ created by $S$.
$S$ is linearized at the first \upcas \ $upcas$ belonging to $W$.
From the code of \help, a \fstep \ belonging to $W$ precedes $upcas$, and $upcas$ precedes any \cstep  \ belonging to $W$.
By Lemma~\ref{lem-records-frozen}, a successful \fcas \ {\it fcas} belonging to $W$ on $r$ precedes $upcas$, and $r$ is frozen for $W$ at all times after {\it fcas}, and before the first \cstep\ belonging to $W$.
Therefore, $r$ is frozen for $W$ when $upcas$ occurs in $(t_2, t_3)$.
Since, at all times in $[t_1, t_4)$, $r$ is either frozen for $U$, or not frozen, we must have $W = U$.
This is a contradiction, since $upcas$ occurs before the \textit{first} \upcas \ belonging to $U$ occurs (at $t_3$).
Thus, $S$ cannot exist.
\end{proof}

\begin{thm}
\label{thm-llt-sct-vlt-correct} %
Our implementation of \llt/\sct/\vlt\ satisfies the correctness specification discussed in Section~\ref{sec-operations}.
That is, we linearize all successful \llt s, all successful \sct s, a subset of the \sct s that never terminate, all successful \vlt s, and all reads, such that:
\begin{enumerate}
	\item
		Each read of a field $f$ of a \rec\ $r$ returns
		the last value stored in $f$ by a linearized \sct\
		(or $f$'s initial value, if no linearized \sct\ has modified $f$).
	    \label{claim-llt-sct-vlt-correctness-read}
	\item{%
  		Each linearized \llt($r$) that does not return \finalized\
		returns	the last value stored in each mutable field $f$ of $r$
		by a linearized \sct\ (or $f$'s initial value, if no linearized \sct\ has modified~$f$).}%
        \label{claim-llt-sct-vlt-correctness-llt-values}
	\item{
		Each linearized \llt$(r)$ returns \finalized\ if and only if it is linearized
		after an \sct($V, R, fld,$ $new)$ with $r$ in $R$.}
        \label{claim-llt-sct-vlt-correctness-llt-finalized-if-lin-after-sct}
	\item
	    If an invocation $I$ of \sct$(V, R, fld, new)$ or \vlt$(V)$ returns \true\ then,
	    for all $r$ in $V$, there has been no \sct$(V', R', fld', new')$
	    with $r$ in $V'$ linearized since the \llt$(r)$ linked to $I$.
	    \label{claim-llt-sct-vlt-correctness-sct-vlt}
\end{enumerate}
\end{thm}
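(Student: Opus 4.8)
The plan is to assemble this theorem directly from the lemmas of the preceding subsections, since each of its four claims is essentially one (or a combination of two) of them. First I would fix the linearization exactly as stated in Section~\ref{sec-corr-lin}: linearize every read where it occurs, every \llt\ that returns a snapshot at line~\ref{ll-reread} and every \llt\ that returns \finalized\ at line~\ref{ll-return-finalized}, every \vlt\ returning \true\ at its first execution of line~\ref{vlt-reread}, and every invocation $S$ of \sct\ for which an \upcas\ belongs to the \op\ $U$ created by $S$, placing its point at that (unique successful, by Lemma~\ref{lem-first-upcas-succ}) \upcas. I would then check that this set of linearized operations is exactly the one claimed: by Lemma~\ref{lem-if-fass-belongs-to-op-then} every successful \sct\ performs a \cstep, hence (by the code of \help) a preceding \upcas, so it is linearized; by Lemma~\ref{lem-help-false} no unsuccessful \sct\ has any \upcas, so none of them is linearized; thus the extra linearized \sct s are all non-terminating. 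Finally, Lemma~\ref{lem-lin-points-during-ops} guarantees each linearization point lies inside its operation, so the resulting order is consistent with the real-time order.

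Next I would discharge the four claims in turn. Claim~\ref{claim-llt-sct-vlt-correctness-read} is Lemma~\ref{lem-lin-read}: a mutable field is changed only by a successful \upcas\ (Observation~\ref{obs-only-upcas-modifies-records}), and each successful \upcas\ is the linearization point of the unique \sct\ it belongs to (Lemma~\ref{lem-first-upcas-succ}), which is exactly the content of Lemma~\ref{lem-lin-read}; immutable fields are handled by the same lemma's Case~I. Claim~\ref{claim-llt-sct-vlt-correctness-llt-values} follows from Corollary~\ref{cor-lin-llt-success} together with Lemma~\ref{lem-lin-read}, since a snapshot-returning \llt\ reads each field at a point where (by Lemma~\ref{lem-r-not-frozen-in-good-llt} and Lemma~\ref{lem-read-unfrozen-info-twice-no-change}) the \rec\ is unchanged around its linearization point. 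For Claim~\ref{claim-llt-sct-vlt-correctness-llt-finalized-if-lin-after-sct}, the ``only if'' direction is Lemma~\ref{lem-lin-llt-finalized}, and the ``if'' direction is the $I$-linearized-after-$S$ case of Lemma~\ref{lem-lin-llt-finalized-if-after-sct}, where one also uses that $U.R$ does not change after it is set from $R$ at line~\ref{sct-create-op}. Claim~\ref{claim-llt-sct-vlt-correctness-sct-vlt} is Lemma~\ref{lem-lin-sct-vlt}: a \vlt\ or \sct\ that returns \true\ is linearized (by the characterization above), and the lemma then says no \sct\ with $r$ in its $V'$ is linearized between the linked \llt$(r)$ and the invocation.

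The only real care needed — rather than a genuine obstacle — is making the statements line up and handling the non-terminating linearized \sct s in Claim~\ref{claim-llt-sct-vlt-correctness-read}: Lemma~\ref{lem-lin-read} is phrased for \emph{successful} \sct s, but its Case~II argument uses only that a linearized \sct\ is linearized at the unique successful \upcas\ of its \op\ (Lemma~\ref{lem-first-upcas-succ}, not requiring termination) and that successful \upcas\ steps are precisely the writes to mutable fields, so it applies verbatim to non-terminating linearized \sct s as well; I would add one sentence noting that ``the last value stored by a linearized \sct'' and ``the parameter $new$ of the last successful \sct'' denote the same quantity because every linearized \sct\ performs exactly one successful \upcas\ storing its $new$ (Lemma~\ref{lem-if-fass-belongs-to-op-then}.\ref{claim-help-true-then-fass} via Lemma~\ref{lem-first-upcas-succ}). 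With that remark in place, the theorem is immediate.
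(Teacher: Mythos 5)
Your proposal is correct and follows essentially the same route as the paper: fix the linearization points of Section~\ref{sec-corr-lin}, invoke Lemma~\ref{lem-lin-points-during-ops} for real-time consistency, and then read off the four claims from Lemma~\ref{lem-lin-read}, Corollary~\ref{cor-lin-llt-success}, Lemmas~\ref{lem-lin-llt-finalized} and~\ref{lem-lin-llt-finalized-if-after-sct}, and Lemma~\ref{lem-lin-sct-vlt}, respectively. Your closing remark reconciling ``successful \sct'' in Lemma~\ref{lem-lin-read} with ``linearized \sct'' in the theorem statement is a small extra care the paper leaves implicit, but it does not change the argument.
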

\begin{proof}
By Lemma~\ref{lem-lin-points-during-ops}, the linearization point of each operation occurs during that operation.
Claim~\ref{claim-llt-sct-vlt-correctness-read} follows immediately from Lemma~\ref{lem-lin-read}.
Claim~\ref{claim-llt-sct-vlt-correctness-llt-values} is immediate from Lemma~\ref{cor-lin-llt-success}.
The only-if direction of Claim~\ref{claim-llt-sct-vlt-correctness-llt-finalized-if-lin-after-sct} follows from Lemma~\ref{lem-lin-llt-finalized}, and the if direction follows from Lemma~\ref{lem-lin-llt-finalized-if-after-sct}.
Claim~\ref{claim-llt-sct-vlt-correctness-sct-vlt} is immediate from Lemma~\ref{lem-lin-sct-vlt}.
\end{proof}

\subsection{Progress Guarantees}

\begin{lem} \label{lem-wait-free}
\llt, \sct \ and \vlt \ are wait-free
\end{lem}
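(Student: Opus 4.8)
The plan is to show that every invocation of \llt, \sct, and \vlt\ terminates after a bounded number of its own steps, regardless of the behaviour of other processes. Since \sct\ is a constant amount of local work followed by a single call to \help, and \vlt\ is a single loop of length $|V|$ with no blocking, the crux is to show that \help\ itself is wait-free, and that \llt\ makes at most a bounded number of calls to \help.

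First I would examine \help$(scxPtr)$ directly. Its body consists of a single \textbf{for} loop over $scxPtr.V$ (a fixed-length sequence determined when the \op\ was created at line~\ref{sct-create-op}), followed by the \fstep, the loop over $scxPtr.R$ (a subsequence of $V$, hence also of bounded length), the \upcas, and the \cstep. Crucially, \help\ contains no inner loops, no waiting, no retry loops, and no recursive or mutually recursive calls --- this is exactly the point emphasized in Section~\ref{sec-impl} and noted formally after Definition~\ref{defn-helping} (``since \help\ does not call itself directly or indirectly''). So \help\ performs at most $O(|V|)$ steps and always returns, either at line~\ref{help-return-true-loop}, line~\ref{help-return-false}, or line~\ref{help-return-true}. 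Therefore \help\ is wait-free, and consequently \sct\ (line~\ref{sct-create-op} / \ref{sct-call-help}) is wait-free.

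Next I would handle \vlt: its code is a single \textbf{for} loop over the fixed sequence $V$, each iteration performing a read and a comparison at line~\ref{vlt-reread}, with an early return or a final return of \true. No step blocks, so \vlt\ terminates within $O(|V|)$ of its own steps. For \llt$(r)$, the body is straight-line code (lines~\ref{ll-read-marked1}--\ref{ll-return}) with at most two calls to \help: at most one at line~\ref{ll-check-finalized} and at most one at line~\ref{ll-help-fail}. Each such call terminates by the argument above, and there is no loop around them, so \llt\ performs at most $O(1) + O(|V_{r\info}|)$ of its own steps --- where $r\info$ is the \op\ it reads at line~\ref{ll-read} --- and then returns a snapshot, \fail, or \finalized.

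I expect the main (and only mild) obstacle to be making precise that \help\ cannot be ``trapped'' inside the \textbf{for} loops --- i.e., that the sequences $V$ and $R$ stored in an \op\ are immutable and of bounded length. This follows from Observation~\ref{obs-op-invariants} and the remark in its proof that none of the fields of an \op\ except $state$ change after line~\ref{sct-create-op}; in particular $V$ and $R$ are fixed finite sequences. Once that is in hand, the absence of any back-edges or blocking constructs in the pseudocode of \llt, \sct, \help, and \vlt\ makes the wait-freedom immediate. It is worth noting explicitly, as the paper does in Section~\ref{progress-spec}, that this wait-freedom is ``only because they may fail'': the bound is on steps taken, not on useful progress, and the genuine progress guarantees are the separate properties P1--P4.
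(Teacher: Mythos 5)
Your proposal is correct and follows essentially the same route as the paper: the paper's proof likewise observes that the loop in \help\ iterates over the finite, fixed sequence $V$ doing constant work per iteration, performs constant work after the loop, and concludes that wait-freedom of \llt, \sct\ and \vlt\ "follows from the code." You simply spell out the remaining straight-line cases (\llt, \sct, \vlt) in more detail than the paper bothers to.
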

\begin{proof}
The loop in $H$ iterates over the elements of the finite sequence $V$ and performs a constant amount of work during each iteration.  If $H$ does not return from the loop, then it performs a constant amount of work after the loop and returns. The claim then follows from the code.
\end{proof}

\begin{lem}\label{finalize-progress} Our implementation satisfies the first progress property in Section \ref{sec-operations}:  Each terminating \llt$(r)$ returns \finalized\ if it begins after the end of a successful \sct$(V,$ $R,$ $fld,$ $new)$ with $r$ in $R$ or after another \llt$(r)$ has returned \finalized. 
\end{lem}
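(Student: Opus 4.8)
The plan is to reduce both cases of P1 to Lemma~\ref{lem-lin-llt-finalized-if-after-sct}, which states that a terminating \llt$(r)$ returns \finalized\ if it begins after a linearized \sct$(V,R,fld,new)$ with $r\in R$ is linearized. So in each case it suffices to exhibit such an \sct\ whose linearization point already lies before the start of the given \llt$(r)$; the placement of linearization points from Section~\ref{sec-corr-lin}, together with Lemma~\ref{lem-lin-points-during-ops}, will supply the needed timing.

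For the first case, let $I$ be a terminating \llt$(r)$ that begins after the end of a successful \sct\ $S=$\sct$(V,R,fld,new)$ with $r\in R$. Since $S$ returns \true, it is linearized (as noted in Section~\ref{sec-corr-lin}, using Lemma~\ref{lem-if-fass-belongs-to-op-then}), and by Lemma~\ref{lem-lin-points-during-ops} its linearization point occurs during $S$, hence before $S$ returns. As $I$ begins after $S$ ends, $I$ begins after $S$ is linearized, so Lemma~\ref{lem-lin-llt-finalized-if-after-sct} gives that $I$ returns \finalized.

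For the second case, suppose $I$ begins after another invocation $I'$ of \llt$(r)$ has returned \finalized; then $I'$ is linearized at line~\ref{ll-return-finalized}. Let $U$ be the \op\ to which $I'$ reads a pointer at line~\ref{ll-read}. By Lemma~\ref{lem-lin-llt-finalized}, there is a linearized invocation $S$ of \sct$(V,R,fld,new)$ that created $U$, with $r\in R$, and $S$ is linearized before $I'$ --- in particular before $I'$'s linearization point at line~\ref{ll-return-finalized}, which occurs during $I'$ and hence before $I'$ returns. Since $I$ begins after $I'$ returns, $I$ begins after $S$ is linearized, so Lemma~\ref{lem-lin-llt-finalized-if-after-sct} again gives that $I$ returns \finalized. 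No new machinery is needed beyond the lemmas already established; the only thing to be careful about is confirming, in each case, that the relevant \sct's linearization point precedes the triggering event (the completion of the successful \sct, respectively the return of \finalized\ by $I'$).
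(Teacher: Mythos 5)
Your proposal is correct and follows essentially the same route as the paper: both cases are reduced to Lemma~\ref{lem-lin-llt-finalized-if-after-sct}, with the second case passing through Lemma~\ref{lem-lin-llt-finalized} to obtain the earlier linearized \sct\ that finalized $r$. Your version merely spells out more explicitly the timing facts (that a successful \sct\ is linearized and that its linearization point lies within the operation) that the paper's shorter proof leaves implicit.
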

\begin{proof}
Consider a terminating invocation $I'$ of \llt($r$).  
If $I'$ begins after the end of a successful \sct$(V,$ $R,$ $fld,$ $new)$ with $r$ in $R$, the claim follows from Lemma~\ref{lem-lin-llt-finalized-if-after-sct}.
If $I'$ begins after another invocation $I$ of \llt$(r)$ has returned \finalized, then
$I$ is linearized after an invocation $S$ of \sct$(V, R, fld, new)$ with $r$ in $R$.
Since $I$ precedes $I'$, $I'$ starts after $S$ is linearized.
By Lemma~\ref{lem-lin-llt-finalized-if-after-sct}, $I'$ returns \finalized.
\end{proof}

We now begin to prove the non-blocking progress properties.  First, we design a way to assign
blame to an \sct\ for each failed invocation of \llt, \vlt\ or \sct.

\begin{defn} \label{defn-blame-llx}
Let $I$ be an invocation of \llt \ that returns \fail.
If $I$ enters the if-block at line~\ref{ll-check-frozen}, then let $U$ be the \op \ pointed to by $r.\info$ when $I$ performs line~\ref{ll-reread}.
Otherwise, let $U$ be the \op \ pointed to by $r.\info$ when $I$ performs line~\ref{ll-read}.
We say $I$ \textbf{blames} the invocation $S$ of \sct \ that created $U$.
(We prove below that $S$ exists.)
\end{defn}

\begin{lem} \label{lem-if-llt-fail-then-blame-sct}
If an invocation $I$ of \llt\ returns \fail, then it blames some invocation of \sct.
\end{lem}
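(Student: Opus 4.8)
The plan is to show that the \op\ $U$ referenced in Definition~\ref{defn-blame-llx} is never the dummy \op, so that it must have been created by some invocation of \sct, which is the invocation $I$ blames. I would split into the two cases of the definition. First I would suppose $I$ enters the if-block at line~\ref{ll-check-frozen} and returns \fail. Since $I$ returns \fail\ (not a snapshot), it must not have returned at line~\ref{ll-return}; inspection of the code of \llt\ shows the only way to reach line~\ref{ll-return-fail} from inside the if-block is that the test at line~\ref{ll-reread} failed, i.e., $r.\info$ changed between line~\ref{ll-read} and line~\ref{ll-reread}. Hence the value of $r.\info$ read at line~\ref{ll-reread} differs from its initial value, and by Lemma~\ref{lem-no-aba-info} it can only have been placed there by a successful \fcas; by line~\ref{help-fcas} that \fcas\ writes a pointer to a newly created \op, so $U$ is not the dummy \op\ and was created by an invocation $S$ of \sct\ at line~\ref{sct-create-op}.

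For the second case, suppose $I$ does not enter the if-block at line~\ref{ll-check-frozen} and returns \fail. Then the test at line~\ref{ll-check-frozen} evaluated to \false, so when $I$ read $state$ at line~\ref{ll-read-state} we had $state = \freezing$, or ($state = \done$ and $marked_2 = \true$). In either case $state \neq \retry$. Since the dummy \op\ has $state = \retry$ permanently (by Lemma~\ref{lem-no-steps-belong-to-dummy-op}) and this remains true forever, the \op\ pointed to by $r.\info$ when $I$ performed line~\ref{ll-read-state} — which, by Lemma~\ref{lem-no-aba-info}, is the same \op\ $U$ that $r.\info$ pointed to at line~\ref{ll-read}, provided $r.\info$ did not change between those lines — cannot be the dummy. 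I should be slightly careful here: $U$ is defined as the \op\ pointed to at line~\ref{ll-read}, whereas the state observation is at line~\ref{ll-read-state}. If $r.\info$ changed between these two reads, then by the same Lemma~\ref{lem-no-aba-info}/line~\ref{help-fcas} argument as in the first case, the new value is a pointer to a freshly created \op, hence $U$ (being a prior value of $r.\info$) is either itself such a freshly created \op\ or we can rerun the argument — in any case $r.\info$ has held a non-dummy \op\ before, and since the dummy \op\ is the initial value and ABA cannot occur, once $r.\info$ has moved off the dummy it never returns; so if $U$ at line~\ref{ll-read} were the dummy, the observation at line~\ref{ll-read-state} would also see the dummy, contradicting $state \neq \retry$. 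Thus $U$ is not the dummy \op, and it was created by some invocation of \sct.

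The main obstacle I anticipate is the bookkeeping around which line's read of $r.\info$ defines $U$ versus which line's read supplies the state value, and making the ``once off the dummy, never back'' argument airtight using Lemma~\ref{lem-no-aba-info} (which only directly says each new value is new, so I must combine it with the fact that the dummy is the initial value to conclude the dummy value is never restored). Everything else is a routine case analysis over the control flow of \llt. Concluding, in both cases $U$ is non-dummy, so there is an invocation $S$ of \sct\ that created $U$, and $I$ blames $S$. \qed
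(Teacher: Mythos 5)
Your proof is correct and follows essentially the same two-case decomposition as the paper's: in the first case you rule out the dummy \op\ because the changed \info\ value was installed by a successful \fcas\ (the paper instead cites Lemma~\ref{lem-no-steps-belong-to-dummy-op}), and in the second you rule it out because the observed $state$ is not \retry. The only wrinkle is that your worry about $r.\info$ changing between lines~\ref{ll-read} and~\ref{ll-read-state} is moot: line~\ref{ll-read-state} reads $r\info.state$ through the local pointer saved at line~\ref{ll-read}, so the state observed is necessarily that of the \op\ $U$ named in Definition~\ref{defn-blame-llx}, and the extra ``once off the dummy, never back'' argument is unnecessary.
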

\begin{proof}
Suppose $I$ enters the if-block at line~\ref{ll-check-frozen}.
Then, $I$ must see $r.\info \neq r\info $ at line~\ref{ll-reread}.
Let $U$ be the \op \ pointed to by $r.\info$ when $I$ performs line~\ref{ll-reread}.
Since $I$ reads $r\info$ from $r.\info$ at line~\ref{ll-read}, $r.\info$ must change to point to $U$ between when $I$ performs line~\ref{ll-read} and line~\ref{ll-reread}. 
Thus, there must be a successful \fcas \ belonging to $U$ on $r$ between these two times.
Since a successful \fcas \ belongs to $U$, Lemma~\ref{lem-no-steps-belong-to-dummy-op} implies that $U$ cannot be the dummy \op.
Therefore, $U$ must be created by an invocation of \sct.

Now, suppose $I$ does not enter the if-block at line~\ref{ll-check-frozen}.
Then, from the code of \llt, $r\info.state$ cannot be \retry \ when $I$ performs line~\ref{ll-read-state}, so the \op \ pointed to by $r\info$ is not the dummy \op.
Since $I$ reads the value stored in $r\info$ from $r.\info$ at  line~\ref{ll-read}, the \op \ pointed to by $r.\info$ when $I$ performs line~\ref{ll-read} must have been created by an invocation of \sct.
\end{proof}

\begin{defn} \label{defn-blame-vlt}
Let $I$ be an invocation of \vlt$(V)$ that returns \false, and $r$ be the \rec\ in $V$ for which $I$ sees $ r.\info \neq r\info $ at line~\ref{vlt-reread}.
Consider the first successful \fcas \ on $r$ between when the \llt$(r)$ linked to $I$ reads $r.\info$ at line~\ref{ll-read}, and when $I$ sees $ r.\info \neq r\info $ at line~\ref{vlt-reread}.
Let $U$ be the \op \ to which this \fcas \ belongs, and $S$ be the invocation of \sct \ that created $U$.
(We prove below that $S$ exists.)
We say $I$ \textbf{blames} $S$ \textbf{for} $r$.
\end{defn}

\begin{lem} \label{lem-if-vlt-false-then-blame-sct}
If an invocation $I$ of \vlt \ returns \false, then it blames some invocation of \sct.
\end{lem}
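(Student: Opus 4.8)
The plan is to trace the execution of $I$ through the pseudocode of \vlt, locate the \rec\ $r$ and the \fcas\ named in Definition~\ref{defn-blame-vlt}, and then invoke Lemma~\ref{lem-no-steps-belong-to-dummy-op} to conclude the \op\ owning that \fcas\ was created by an \sct. First I would observe that, by inspection of \vlt, the only way $I$ can return \false\ is at line~\ref{vlt-reread}; hence there is a \rec\ $r$ in $V$ for which $I$ sees $r\info \neq r.\info$ at line~\ref{vlt-reread}, where $r\info$ is the value it read from $p$'s local table of \llt\ results at line~\ref{vlt-info}.

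Next I would pin down the provenance of $r\info$. By the precondition of \vlt, $p$ performed an \llt$(r)$ linked to $I$; by Definition~\ref{defn-llt-linked-to-sct} this \llt\ returns a value different from \fail\ and \finalized, so it executes line~\ref{ll-store} and stores into $p$'s local table the value $r\info$ that it read from $r.\info$ at line~\ref{ll-read}. Since the same definition forbids $p$ from performing any other \llt$(r)$ (the only operation that writes this table entry) between the linked \llt\ and $I$, the value $I$ compares against at line~\ref{vlt-reread} is exactly this $r\info$. Thus $r.\info$ contained $r\info$ when the linked \llt\ executed line~\ref{ll-read} and contains a different value when $I$ executes line~\ref{vlt-reread}; since $r.\info$ is modified only by a successful \fcas\ at line~\ref{help-fcas} (as used in the proof of Lemma~\ref{lem-no-aba-info}), at least one successful \fcas\ on $r$ occurs in this interval. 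Let $U$ be the \op\ to which the first such \fcas\ belongs and let $S$ be its creator; it remains to show $S$ exists. By Lemma~\ref{lem-no-steps-belong-to-dummy-op} no \fcas\ belongs to the dummy \op, so $U$ is not the dummy \op\ and must therefore have been created by an invocation $S$ of \sct\ at line~\ref{sct-create-op}. This is precisely what Definition~\ref{defn-blame-vlt} requires, so $I$ blames $S$ for $r$.

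The only mildly delicate step — the one I would take care to spell out — is the bookkeeping argument that the $r\info$ used at line~\ref{vlt-reread} genuinely equals the value read from $r.\info$ at line~\ref{ll-read} of the linked \llt\ and has not been overwritten by an intervening operation of $p$; this is handled entirely by the second clause of the definition of ``linked to'' together with the fact that only \llt\ writes to $p$'s local table (\sct\ and \vlt\ merely read it). Everything else is a direct appeal to Lemma~\ref{lem-no-steps-belong-to-dummy-op} and the structure of the code, mirroring the argument already used for \llt\ in Lemma~\ref{lem-if-llt-fail-then-blame-sct}.
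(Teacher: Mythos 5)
Your proof is correct and follows essentially the same route as the paper's: identify the \rec\ $r$ failing the test at line~\ref{vlt-reread}, trace $r\info$ back to line~\ref{ll-read} of the linked \llt\ via the precondition and Definition~\ref{defn-llt-linked-to-sct}, deduce a successful \fcas\ on $r$ in the interval, and apply Lemma~\ref{lem-no-steps-belong-to-dummy-op} to rule out the dummy \op. Your extra care about the local table entry not being overwritten is a slightly more explicit rendering of a step the paper leaves implicit, but the argument is the same.
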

\begin{proof}
Since $I$ returns \false, it sees $ r\info \neq r.\info $ at line~\ref{vlt-reread}, for some $r$ in $V$.
Let $p$ be the process that performs $I$.
By line~\ref{vlt-info}, $r\info$ is a copy of $r$'s \info\ value in $p$'s local table of \llt \ results.
By the precondition of \vlt \ and the definition of an \llt$(r)$ linked to $I$, this value is read from $r.\info$ at line~\ref{ll-read} by the \llt$(r)$ linked to $I$.
Therefore, $r.\info$ must change between when the \llt$(r)$ linked to $I$ performs line~\ref{ll-read} and when $I$ sees $ r\info \neq r.\info $ at line~\ref{vlt-reread}.
Thus, there must be a successful \fcas \ belonging to some \op \ $U$ on $r$ between these two times.
Since a \fcas \ belongs to $U$, Lemma~\ref{lem-no-steps-belong-to-dummy-op} implies that $U$ is not the dummy \op, so $U$ must have been created by an invocation of \sct.
\end{proof}

\begin{defn} \label{defn-blame-scx}

Let $U$ be an \op \ created by an invocation $S$ of \sct \ that returns \false, and $U'$ be an \op \ created by an invocation $S'$ of \sct.
Consider the \rec s $r$ that are in both $U.V$ and $U'.V $, and for which there is no successful \fcas\ belonging to $U$ on $r$.
Let $r'$ be the \rec\ among these which occurs earliest in $U.V$. 
We say $S$ \textbf{blames} $S'$ \textbf{for} $r'$ if and only if there is a successful \fcas \ on $r'$ belonging to $U'$, and this \fcas \ is the earliest successful \fcas \ on $r'$ to occur between when the \llt$(r')$ linked to $S$ reads $r'.\info$ at line~\ref{ll-read} and the first \fcas \ belonging to $U$ on $r'$.
\end{defn}

\begin{lem} \label{lem-if-abort-then-blame-different-scx}
Let $U$ be an \op\ created by an invocation $S$ of \sct.
If $S$ returns \false, then it blames some other invocation of \sct.
\end{lem}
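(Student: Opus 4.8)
The plan is to run $S$'s abort through Lemma~\ref{lem-abort-fcas-flow} to pull out a distinguished \rec\ that $S$ fails to freeze, and then to identify the \sct\ whose helper stole that \rec\ from $S$. Concretely, since $S$ returns \false, Lemma~\ref{lem-help-false}.\ref{claim-help-false-abort} gives an \astep\ belonging to $U$; let $astep$ be the first such step and let $\langle r_1,\dots,r_l\rangle$ be the sequence of \rec s in $U.V$. Apply Lemma~\ref{lem-abort-fcas-flow} to obtain an index $k$ for which: (i) a \fcas\ belonging to $U$ on $r_k$ occurs before $astep$; (ii) no successful \fcas\ belongs to $U$ on $r_k$; (iii) for each $i<k$ there is a successful \fcas\ belonging to $U$ on $r_i$; and (iv) $r_k.\info$ changes between when the $\llt(r_k)$ linked to $S$ reads $r_k.\info$ at line~\ref{ll-reread} and the first \fcas\ belonging to $U$ on $r_k$. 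The \rec\ $r_k$ will play the role of $r'$ in Definition~\ref{defn-blame-scx}.

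Next I would locate $S'$. A change to $r_k.\info$ can only be caused by a successful \fcas\ (Lemma~\ref{lem-no-aba-info} and the code of \help), so by (iv) there is at least one successful \fcas\ on $r_k$ strictly after the $\llt(r_k)$ linked to $S$ performs line~\ref{ll-reread} and before the first \fcas\ belonging to $U$ on $r_k$. Moreover, by Observation~\ref{obs-op-invariants}.\ref{inv-unfrozen} the linked $\llt(r_k)$ sees the same value in $r_k.\info$ at line~\ref{ll-read} and line~\ref{ll-reread}, so (using Lemma~\ref{lem-no-aba-info}) no successful \fcas\ on $r_k$ occurs between those two lines; hence the earliest successful \fcas\ on $r_k$ to occur between line~\ref{ll-read} of the linked \llt\ and the first \fcas\ belonging to $U$ on $r_k$ is exactly the earliest one produced by (iv). Call this \fcas\ $fcas'$, let $U'$ be the \op\ to which it belongs, and let $S'$ be the \sct\ that created $U'$; $S'$ exists because $fcas'$ belongs to $U'$, so by Lemma~\ref{lem-no-steps-belong-to-dummy-op} $U'$ is not the dummy \op.

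Finally I would check the clauses of Definition~\ref{defn-blame-scx}. Since $fcas'$ belongs to $U'$ and operates on $r_k.\info$, the code of \help\ (line~\ref{help-fcas}) forces $r_k\in U'.V$, and certainly $r_k\in U.V$. By (ii) there is no successful \fcas\ belonging to $U$ on $r_k$, which also shows $U'\neq U$ and hence $S'\neq S$. By (iii) every \rec\ strictly preceding $r_k$ in $U.V$ has a successful \fcas\ belonging to $U$, so $r_k$ is the \rec\ occurring earliest in $U.V$ that both lies in $U'.V$ and lacks a successful \fcas\ belonging to $U$; thus $r'=r_k$. And $fcas'$ is, by construction, the earliest successful \fcas\ on $r_k$ in the required window and belongs to $U'$. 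Therefore $S$ blames $S'$ for $r_k$. The main obstacle is purely the bookkeeping of matching the various ``earliest'' qualifiers and windows in Definition~\ref{defn-blame-scx} (in particular line~\ref{ll-read} versus line~\ref{ll-reread} of the linked \llt), together with the already-flagged subtlety that, if $U.V$ may contain duplicate \rec s, the statements of Lemma~\ref{lem-abort-fcas-flow} and Definition~\ref{defn-blame-scx} should be read at the level of loop iterations rather than \rec s.
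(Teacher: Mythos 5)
Your proposal is correct and follows essentially the same route as the paper's proof: invoke Lemma~\ref{lem-help-false} to obtain the \astep, apply Lemma~\ref{lem-abort-fcas-flow} to extract the distinguished \rec\ $r_k$, take the earliest successful \fcas\ on $r_k$ in the relevant window to identify $U'$ and $S'$, and verify the clauses of Definition~\ref{defn-blame-scx} (with $U'\neq U$ and hence $S'\neq S$ following from the absence of a successful \fcas\ belonging to $U$ on $r_k$ and Lemma~\ref{lem-no-steps-belong-to-dummy-op}). Your explicit reconciliation of the line~\ref{ll-read} versus line~\ref{ll-reread} window boundaries via Observation~\ref{obs-op-invariants} and Lemma~\ref{lem-no-aba-info} is a minor refinement of bookkeeping the paper glosses over, not a different argument.
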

\begin{proof}
Since $S$ returns \false, Lemma~\ref{lem-help-false}.\ref{claim-help-false-abort} implies that an \astep \ belongs to $U$.
By Lemma~\ref{lem-abort-fcas-flow}, 
there is a \rec \ $r_k$ in $U.V$ such that there is a \fcas\ belonging to
$U$ on $r_k$, but no successful one.
Moreover,
$r_k.\info$ changes after time $t_1$, when the \llt$(r_k)$ linked to $S$ reads $r_k.\info$ at line~\ref{ll-reread}, and before time $t_2$, when the first \fcas\ belonging to $U$ on $r_k$ occurs.
Since the \llt$(r_k)$ linked to $S$ terminates before $U$ is created (at line~\ref{sct-create-op} of $S$), and a \fcas\ belonging to $U$ can only occur after $U$ is created, we know $t_1 < t_2$.
Let $t_0$ be the time when the \llt$(r_k)$ performs line~\ref{ll-read}.
Note that $t_0 < t_1 < t_2$.
Since $r_k.\info$ can only be changed by a successful \fcas, there must be a successful \fcas \ on $r_k$  during $(t_1, t_2)$.
Let {\it fcas} the the earliest successful \fcas\ on $r_k$  during $(t_0, t_2)$, and let $ U' $ be the \op\ to which it belongs.
Since 
{\it fcas} occurs before the first \fcas \ belonging to $U$ on $r_k$, we know that $U \neq U' $.
Let \[\rho = \{ r \mid r \mbox{ is in } U.V \mbox{ and } r \mbox{ is in } U'.V \mbox{ and } \nexists \mbox{ successful \fcas \ belonging to } U \mbox{ on } r \}.\]
By the code of \help, a \fcas \ belonging to $ U' $ can only modify a \rec\ in $ U'.V $. 
Thus, $r_k \in \rho$.

We now show $r_k$ is the element of $\rho$ that occurs earliest in $U.V$. 
Suppose, to derive a contradiction, that some $r_i \in \rho$ comes before $r_k$ in $U.V$. 
By Lemma~\ref{lem-abort-fcas-flow}.\ref{abort-fcas-flow-claim-fcas-on-rk} and Lemma~\ref{lem-abort-fcas-flow}.\ref{abort-fcas-flow-claim-no-succ-fcas-on-rk}, there is an unsuccessful \fcas \ belonging to $U$ on $r_k$.
By Lemma~\ref{lem-fcas-on-ri-only-after-succ-fcas-on-previous}, before this unsuccessful \fcas, there must be a successful \fcas\ belonging to $U$ on $r_i$.
However, this implies $r_i \notin \rho$, which is a contradiction.

Let $S'$ be the invocation of \sct \ that creates $U'$.
Thus far, we have shown that $S$ blames $S'$.
It remains to show that $S \neq S'$.
By Lemma~\ref{lem-no-steps-belong-to-dummy-op}, a \fcas \ or \astep \ cannot belong to the dummy \op.
Therefore, neither $U$ nor $ U' $ can be the dummy \op.
Since $U \neq U'$, $U$ and $U'$ must be created by different invocations of \sct.
\end{proof}

We now prove that an invocation of \llt$(r)$ can return \fail \ only under certain circumstances.

\begin{defn} \label{defn-threatening-section}
Let $U$ be an \op \ created by an invocation $S$ of \sct.
The \textbf{threatening section} of $S$ begins with the first \fcas \ belonging to $U$, and ends with the first \cstep \ or \astep \ belonging to $U$.
\end{defn}

\begin{lem} \label{lem-succ-fcas-or-upcas-only-during-sct}
Let $U$ be an \op \ created by an invocation $S$ of \sct.
The threatening section of $S$ lies within $S$, and every successful \fcas\ or \upcas\ belonging to $U$ occurs during $S$'s threatening section.
\end{lem}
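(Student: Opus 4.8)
The plan is to prove the two assertions of the lemma — that the threatening section of $S$ lies within $S$, and that every successful \fcas\ or \upcas\ belonging to $U$ falls inside it — by reducing both to facts about the relative order of the first \fcas, first \fstep, first \upcas, first \astep\ and first \cstep\ belonging to $U$, and about where these steps can occur relative to $S$. First I would locate the left endpoint: the threatening section begins with the first \fcas\ belonging to $U$, and every \fcas\ belonging to $U$ is performed at line~\ref{help-fcas} of some invocation of \help$(ptr)$ with $ptr$ pointing to $U$; since $U$ is created at line~\ref{sct-create-op}, which lies in $S$, every such \fcas\ — in particular the first — occurs after $S$ begins. The same observation, together with the straight-line shape of \help\ (whose first executable step is always the \fcas\ at line~\ref{help-fcas}, as $U.V$ is nonempty), shows that the first \fcas\ belonging to $U$ precedes every \fstep, \markstep, \upcas, \cstep, \astep\ and \fcstep\ belonging to $U$, and is at or before every \fcas\ belonging to $U$. (If no \fcas\ belongs to $U$, then no \upcas\ does either — reaching line~\ref{help-upcas} requires completing the freezing loop over the nonempty $U.V$ — so the lemma is vacuous; henceforth assume a \fcas\ belongs to $U$.)

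Next I would locate the right endpoint, the first \cstep\ or \astep\ belonging to $U$. If $S$ does not terminate, ``within $S$'' only demands that the section begin after $S$ begins, which was just shown. If $S$ terminates, then so does $S$'s invocation of \help$(ptr)$, which can only return at line~\ref{help-return-true-loop}, line~\ref{help-return-false}, or line~\ref{help-return-true}: in the last two cases an \astep\ (line~\ref{help-astep}) or \cstep\ (line~\ref{help-cstep}) belonging to $U$ has already been performed in that invocation, and in the first case Lemma~\ref{lem-no-return-true-in-loop-until-uass} guarantees that a \cstep\ belonging to $U$ has occurred before the return. Hence a \cstep\ or \astep\ belonging to $U$ occurs before $S$ returns, so the first one does, establishing the first assertion.

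For the second assertion I would treat \fcas\ and \upcas\ steps separately. For a successful \fcas\ belonging to $U$: the first \cstep\ or \astep\ belonging to $U$ is preceded by, or equal to, the first \fstep\ or \astep\ belonging to $U$, since a \cstep\ is preceded by a \fstep\ inside its own \help\ invocation; Lemma~\ref{lem-no-succ-fcas-after-fass-or-bcas} says no \fcas\ belonging to $U$ succeeds after the first \fstep\ or \astep\ belonging to $U$, so every successful \fcas\ belonging to $U$ occurs before the right endpoint, and by the first paragraph it is at or after the left endpoint. For a successful \upcas\ belonging to $U$: by Lemma~\ref{lem-only-first-upcas-can-succeed} it must be the first \upcas\ belonging to $U$, so it suffices to place that single step. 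It is preceded in its own \help\ invocation by a \fstep\ (line~\ref{help-fstep}), so a \fstep\ belongs to $U$, whence by Lemma~\ref{lem-fass-then-no-bcas} no \astep\ belongs to $U$ and the right endpoint is the first \cstep\ belonging to $U$ (or the end of the execution, if none). The \help\ invocation that performs the first \cstep\ belonging to $U$ executes line~\ref{help-upcas} before line~\ref{help-cstep}, so a \upcas\ belonging to $U$ precedes the first \cstep\ belonging to $U$; therefore the first \upcas\ belonging to $U$ does too, and it lies after the left endpoint, hence inside the threatening section.

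The lemma is not deep given the earlier results; the step I expect to require the most care is the endpoint bookkeeping — correctly identifying the right endpoint as ``first \cstep\ or \astep'' (including the case, handled via the interval convention adopted earlier for an interval whose terminating event never happens, where $S$ crashes partway through so that neither step occurs), and rigorously justifying from the straight-line structure of \help\ that the first \fcas\ belonging to $U$ genuinely precedes all other steps belonging to $U$. Everything else is a direct appeal to Lemmas~\ref{lem-no-return-true-in-loop-until-uass}, \ref{lem-no-succ-fcas-after-fass-or-bcas}, \ref{lem-fass-then-no-bcas} and \ref{lem-only-first-upcas-can-succeed}.
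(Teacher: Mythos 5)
Your proof is correct and follows essentially the same route as the paper's: both place the left endpoint after the start of $S$ because every step belonging to $U$ occurs in a \help\ invocation on an \op\ created inside $S$, both use Lemma~\ref{lem-no-return-true-in-loop-until-uass} to show a \cstep\ or \astep\ occurs before $S$'s own \help\ call returns, and both bound successful \fcas\ steps via Lemma~\ref{lem-no-succ-fcas-after-fass-or-bcas} and the successful \upcas\ via Lemma~\ref{lem-only-first-upcas-can-succeed} together with the line ordering in \help. Your treatment is somewhat more explicit about the vacuous/non-terminating cases, but the argument is the same.
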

\begin{proof}
Let $ptr$ be a pointer to $U$, and $t_0$ and $t_1$ be the times when $S$'s threatening section begins and ends, respectively.
Since each \fcas, \upcas, \astep\ or \cstep\ belonging to $U$ occurs in an invocation of \help $ (ptr) $, and $S$ creates $U$, we know that these steps can only occur after $S$ begins.
Hence, $t_0$ is after $S$ begins.
Clearly every \fcas \ belonging to $U$ occurs after $t_0$.
From the code of \help, the first \upcas \ belonging to $U$ occurs between $t_0$ and $t_1$.
By Lemma~\ref{lem-only-first-upcas-can-succeed}, this is the only \upcas \ belonging to $U$ that can succeed.
By Lemma~\ref{lem-only-first-fcas-can-succeed}, no \fcas \ belonging to $U$ can succeed after the first \fstep \ or \astep \ belonging to $U$.
From the code of help, the first \fstep \ belonging to $U$ must occur before the first \cstep \ belonging to $U$.
Thus, every successful \fcas \ belonging to $U$ occurs between $t_0$ and $t_1$.
From the code of \sct, $S$ performs an invocation $H$ of \help $ (ptr) $ before it returns, and $H$ will perform either a \cstep \ or \astep \ belonging to $U$, so long as it does not return from line~\ref{help-return-true-loop}.
By Lemma~\ref{lem-no-return-true-in-loop-until-uass}, $H$ cannot return from line~\ref{help-return-true-loop} until after the first \cstep \ belonging to $U$.
Therefore, a \cstep \ or \astep \ belonging to $U$ must occur before $S$ terminates, so $t_1$ is before $S$ terminates.
\end{proof}

\begin{obs} \label{obs-if-fcas-then-r-in-V}
Let $S$ be an invocation of \sct$(V, R, fld, new)$, and $U$ be the \op \ it creates.
If there is a \fcas \ belonging to $U$ on $r$, then $r$ is in $V$.
\end{obs}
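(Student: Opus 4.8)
The plan is to prove this by a direct inspection of the pseudocode of \help, together with the bookkeeping already established about how an \op's fields are set. There is essentially no subtlety here; the observation is a structural fact about where \fcas\ steps can occur.

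First I would unfold the relevant definitions. By Definition~\ref{defn-belongs}, a \fcas\ belonging to $U$ is one performed by a process that is helping $U$, and by Definition~\ref{defn-helping} such a process is executing \help$(ptr)$ with $ptr$ pointing to $U$. So suppose some \fcas\ on $r$ (i.e., a \cas\ on $r.\info$, per the definition of ``\fcas\ on'') belongs to $U$; it occurs during an invocation of \help$(ptr)$ with $ptr$ pointing to $U$.

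Next I would point to the code of \help. The only place a \fcas\ is executed is line~\ref{help-fcas}, and that line sits inside the loop header ``for each $r$ in $scxPtr.V$ enumerated in order''. Hence a \fcas\ operating on $r.\info$ can be performed by this invocation only if $r$ is one of the elements enumerated, i.e., $r$ appears in $scxPtr.V$. Since $scxPtr$ points to $U$, this gives $r \in U.V$.

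Finally I would close the gap between $U.V$ and $V$. The \op\ $U$ is created at line~\ref{sct-create-op} of $S$, where it is initialized with the tuple $(V, R, fld, new, old, \freezing, \false, \llresults)$, so $U.V = V$ at creation; and, as already noted in the proof of Observation~\ref{obs-op-invariants}, no field of an \op\ other than $state$ ever changes after initialization. Therefore $U.V = V$ throughout the execution, and $r \in V$, as required. The only thing to be careful about is simply citing the right line numbers and the ``fields of an \op\ do not change'' fact; there is no real mathematical obstacle.
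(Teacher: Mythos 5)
Your proof is correct and follows essentially the same route as the paper's: read off from the code of \help\ that the only \fcas\ is at line~\ref{help-fcas} inside the loop over $scxPtr.V$, then use line~\ref{sct-create-op} (and the immutability of an \op's fields other than $state$) to identify $U.V$ with $V$. The paper's version is just a one-line compression of this argument.
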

\begin{proof}
From the code of \help, there will only be a \fcas \ belonging to $U$ on $r$ if $r$ is in $U.V$, and line~\ref{sct-create-op} implies that $r$ in $V$.
\end{proof}

\begin{lem} \label{lem-llx-can-only-fail-if-concurrent-scx}
An invocation $I$ of \llt $ (r)$ can return \fail \ only if it overlaps the threatening section of some invocation of \sct $ (V, R, fld, new)$ with $r$ in $V$.
\end{lem}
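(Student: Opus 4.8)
The plan is to show that whenever an invocation $I$ of $\llt(r)$ reaches line~\ref{ll-return-fail}, one can exhibit an \op\ $Z$, created by an invocation $S$ of $\sct(V,R,fld,new)$ with $r$ in $V$, whose threatening section meets the time interval of $I$. Throughout I use the following. Since $r.\info$ is changed only by a \fcas\ at line~\ref{help-fcas}, whose new value is the helper's \op\ pointer, any change of $r.\info$ to point to an \op\ $Z$ is effected by a successful \fcas\ belonging to $Z$ on $r$; then, by Lemma~\ref{lem-no-steps-belong-to-dummy-op}, $Z$ is not the dummy \op. I will repeatedly combine Observation~\ref{obs-if-fcas-then-r-in-V} (a \fcas\ belonging to $Z$ on $r$ certifies $r\in S.V$) with Lemma~\ref{lem-succ-fcas-or-upcas-only-during-sct} (every successful \fcas\ belonging to $Z$ lies in $S$'s threatening section). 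So it suffices, in each case, to produce either a successful \fcas\ belonging to some $Z$ on $r$ that occurs during $I$, or, for the \op\ $Z$ whose pointer $I$ reads at line~\ref{ll-read}, a moment during $I$ at which $Z$'s first \fcas\ has already occurred but no \cstep\ or \astep\ belonging to $Z$ has.

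I split on whether $I$ enters the if-block at line~\ref{ll-check-frozen}. If it does, then, since $I$ returns \fail\ rather than a snapshot, its reread at line~\ref{ll-reread} finds $r.\info$ different from the value $r\info$ read at line~\ref{ll-read}; the most recent \fcas\ on $r$ before line~\ref{ll-reread} is therefore a successful \fcas\ belonging to the (non-dummy) \op\ $Z$ that $r.\info$ then points to, it is on $r$, and it necessarily occurs strictly after line~\ref{ll-read} (otherwise $r.\info$ would still hold $r\info$ at line~\ref{ll-reread}), hence during $I$. The two general facts above finish this case.

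If $I$ does not enter the if-block, let $Z$ be the \op\ that $r.\info$ points to when $I$ performs line~\ref{ll-read}, at time $t_0$. Then $Z$ is not the dummy, so a \fcas\ belonging to $Z$ on $r$ precedes $t_0$, and by Lemma~\ref{lem-fcas-on-ri-only-after-succ-fcas-on-previous} the \emph{first} \fcas\ belonging to $Z$ precedes $t_0$. Failing the test at line~\ref{ll-check-frozen} means the value $state = Z.state$ read at line~\ref{ll-read-state} is not \retry, and that if $state=\done$ then $marked_2=\true$. If $Z.state=\freezing$ at that read, then no \cstep\ or \astep\ belonging to $Z$ has yet occurred, so $t_0$ lies in $S$'s still-open threatening section, and we are done. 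If $Z.state=\done$, then $Z$'s unique \cstep\ $e_0$ has already occurred and, by Lemma~\ref{lem-fass-then-no-bcas}, no \astep\ belongs to $Z$, so $e_0$ ends $S$'s threatening section; if $e_0$ is at or after the start of $I$, then that threatening section still meets $I$. The remaining case, where $Z$ committed strictly before $I$ began but $I$ nonetheless fails, is the crux.

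In that case I first rule out $r\in R$: then, by Lemma~\ref{lem-if-fass-belongs-to-op-then}, $r$ is \permafrozen\ for $Z$ from $e_0$ on, hence $r.\info$ points to $Z$ and $r$ is marked throughout $I$, so at line~\ref{ll-check-finalized} both $r\info.state=\done$ and $marked_1=\true$ and $I$ would return \finalized --- a contradiction. Hence $r\notin R$. Since $I$ did not enter the if-block while $Z.state=\done$, it read $marked_2=\true$, so $r$ gets marked during $I$, and the first \markstep\ $m$ on $r$ is a successful \markstep\ belonging to some \op\ $W$; as $r\notin R$, $W\neq Z$, and $r$ lies in the $R$ sequence of $W$'s creating \sct. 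A \fstep\ belongs to $W$ ( \markstep s follow it in \help), so, by Lemmas~\ref{lem-records-frozen}, \ref{lem-finalized-forever-frozen} and~\ref{lem-if-fass-belongs-to-op-then}, $r.\info$ points to $W$ from the first \fcas\ belonging to $W$ on $r$ onward; since $r.\info$ points to $Z\neq W$ at $t_0$, this \fcas\ occurs after $t_0$, and since it precedes $m$, which precedes $I$'s read at line~\ref{ll-read-marked2}, it occurs during $I$. It is the required successful \fcas\ and $W$'s creating \sct\ is the witness --- note this is \emph{not} the \op\ whose pointer $I$ read at line~\ref{ll-read}. This last case is the main obstacle: the tempting claim that $I$ always overlaps the threatening section of the \sct\ recorded in $r.\info$ when $I$ read it is simply false, and one must instead follow how a concurrent \sct\ marks $r$ during $I$ and extract one of its \fcas\ steps inside $I$'s interval.
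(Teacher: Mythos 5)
Your proof is correct and follows essentially the same route as the paper's: the same split on whether $I$ enters the if-block at line~\ref{ll-check-frozen}, the same sub-cases on the value of $state$, and the same key observation that when the \op\ read from $r.\info$ has already committed before $I$ began, the witness must be the concurrent \sct\ whose \markstep\ marks $r$ between lines~\ref{ll-read-marked1} and~\ref{ll-read-marked2} (the paper places that \markstep\ itself inside $W$'s threatening section, where you instead extract $W$'s successful \fcas\ on $r$ inside $I$; both work, and the paper merely wraps the initial step in its ``blame'' machinery). One small citation quibble: to conclude that $r$ is \permafrozen\ for $Z$ after its \cstep\ you invoke Lemma~\ref{lem-if-fass-belongs-to-op-then}, whose stated hypothesis also requires a terminating invocation of \help, which is not guaranteed here; the fact you need follows directly from Lemmas~\ref{lem-records-frozen} and~\ref{lem-finalized-forever-frozen}, so nothing is lost.
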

\begin{proof}
By Lemma~\ref{lem-if-llt-fail-then-blame-sct}, $I$ blames an invocation $S$ of \sct.
Let $U$ be the \op \ created by $S$, and $ ptr $ be a pointer to $U$.
By Definition~\ref{defn-blame-llx}, $I$ reads a pointer to $U$ from $r.\info$.
Since $U$ is not the dummy \op, $r.\info$ can only point to $U$ after a successful \fcas \ belonging to $U$ on $r$.
By Observation~\ref{obs-if-fcas-then-r-in-V}, $r$ is in $V$.
We now show that $I$ overlaps the threatening section of $S$.
Consider the two cases of Definition~\ref{defn-blame-llx}.

\textbf{Case I:} $I$ enters the if-block at line~\ref{ll-check-frozen}, and reads a pointer to $U$ from $r.\info$ at line~\ref{ll-reread}.
In this case, from the code of \llt, we know that $r.\info$ changes between when $I$ performs line~\ref{ll-read} and when $I$ performs line~\ref{ll-reread}.
Since $r.\info$ can only be changed to point to $U$ by a successful \fcas\ belonging to $U$, there must be a successful \fcas \ belonging to $U$ during $I$.
By Lemma~\ref{lem-succ-fcas-or-upcas-only-during-sct}, $I$ must overlap the threatening section of $S$.

\textbf{Case II:} $I$ does not enter the if-block at line~\ref{ll-check-frozen}.
Since $I$ reads a pointer to $U$ from $r.\info$ at line~\ref{ll-read}, we know that $r\info$ is a pointer to $U$.
By the test at line~\ref{ll-check-frozen}, either $state= \done $ and $marked_2=\true $, or $state = \freezing $.

Suppose $state= \freezing$.
Then, $U.state = \freezing $ when $I$ performs line~\ref{ll-read-state}.
Since $U$ is not the dummy \op, a pointer to $U$ can appear in $r.\info$ only after a successful \fcas \ belonging to $U$.
By Corollary~\ref{cor-state-transitions-respect-figure}, $U.state $ can only be \freezing \ before the first \cstep \ or \astep \ belonging to $U$.
Therefore, Definition~\ref{defn-threatening-section} implies that $I$ performs line~\ref{ll-read-state} during the threatening section of $S$.

Now, suppose $state = \done $ and $marked_2=\true$.
By Corollary~\ref{cor-state-transitions-respect-figure}, $U.state =$ \done \ at all times after $I$ performs line~\ref{ll-read-state}.
We consider two sub-cases.
If $marked_1= \true $, then $I$ will return \finalized \ if it reaches line~\ref{ll-check-finalized}.
Since we have assumed that $I$ returns \fail, this case is impossible.
Otherwise, a \markstep\ $mstep$ belonging to some \op \ $W$ changes $r.marked $ to \true\ between line~\ref{ll-read-marked1} and line~\ref{ll-read-marked2}.
It remains only to show that $mstep$ occurs during the threatening section of the invocation of \sct \ that created $W$.
Since $r.marked $ is initially \false, and is never changed from \true \ to \false, $mstep$ must be the first \markstep\ belonging to $W$ on $r$.
From the code of \help, a \fstep \ belonging to $W$ must precede $mstep$.
Therefore, Lemma~\ref{lem-fass-then-no-bcas} implies that no \astep\ belonging to $W$ ever occurs.
From the code of \help, $mstep$ must occur after the first \fcas \ belonging to $W$, and before the first \cstep \ belonging to $W$.
By Definition~\ref{defn-threatening-section}, $mstep$ occurs during the threatening section of the invocation of \sct \ that created $W$.
\end{proof}

We now prove that an invocation of \sct \ or \vlt \ can return \false \ only under certain circumstances.

\begin{defn} \label{defn-vulnerable-section}
The \textbf{vulnerable interval} of an invocation $I$ of \sct \ or \vlt \ begins at the earliest starting time of an \llt$(r)$ linked to $I$, and ends when $I$ ends.
\end{defn}

\begin{lem} \label{lem-if-sct-or-vlt-blames-then-succ-fcas}
Let $I$ be an invocation of \sct \ or \vlt, and $U$ be an \op \ created by an invocation $S$ of \sct.
If $I$ blames $S$ for a \rec\ $r$, then a successful \fcas \ belonging to $U$ on $r$ occurs during $I$'s vulnerable interval.
\end{lem}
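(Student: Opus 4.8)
The plan is to argue by cases on whether $I$ is an invocation of \vlt\ or of \sct, unpacking the appropriate definition of ``blames'' (Definition~\ref{defn-blame-vlt} in the first case, Definition~\ref{defn-blame-scx} in the second). In both cases ``$I$ blames $S$ for $r$'' already exhibits a specific successful \fcas\ belonging to $U$ on $r$ lying strictly between two named steps, so the whole claim reduces to verifying that both of those steps occur within $I$'s vulnerable interval, which by Definition~\ref{defn-vulnerable-section} runs from the earliest start of an \llt\ linked to $I$ until $I$ terminates. Note that in either case $I$ returns \false, since the notion of blaming a \sct\ for a \rec\ is only defined for an unsuccessful \vlt\ or \sct.

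For the \vlt\ case this is immediate. By Definition~\ref{defn-blame-vlt}, the \fcas\ in question is the first successful \fcas\ on $r$ occurring after the \llt$(r)$ linked to $I$ executes line~\ref{ll-read} and before $I$ itself sees $r\info \neq r.\info$ at line~\ref{vlt-reread}. The earlier step is performed during the \llt$(r)$ linked to $I$, hence at or after the start of $I$'s vulnerable interval; the later step is performed by $I$, hence before $I$ ends. So the \fcas\ lies inside the vulnerable interval.

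For the \sct\ case, let $U_I$ be the \op\ created by $I$. By Definition~\ref{defn-blame-scx}, the successful \fcas\ belonging to $U$ on $r$ is the earliest successful \fcas\ on $r$ between line~\ref{ll-read} of the \llt$(r)$ linked to $I$ and the first \fcas\ belonging to $U_I$ on $r$; exactly as in the \vlt\ case, its left endpoint is at or after the start of $I$'s vulnerable interval, so it remains only to show that the first \fcas\ belonging to $U_I$ on $r$ happens before $I$ ends. Here I would reuse the analysis in the proof of Lemma~\ref{lem-if-abort-then-blame-different-scx}: $r$ is the record $r_k$ that Lemma~\ref{lem-abort-fcas-flow} produces for $U_I$ (since $I$ returns \false), and the invocation of \help\ that performs the first \astep\ belonging to $U_I$ performs, earlier in the same loop iteration, a \fcas\ on $r_k$; hence the first \fcas\ belonging to $U_I$ on $r_k$ precedes the first \astep\ belonging to $U_I$. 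Finally, because $I$ returns \false\ there is no \cstep\ belonging to $U_I$ (by Lemma~\ref{lem-help-false} together with Lemma~\ref{lem-fass-then-no-bcas} and the pseudocode of \help), so $I$'s threatening section (Definition~\ref{defn-threatening-section}) ends precisely at the first \astep\ belonging to $U_I$; and by Lemma~\ref{lem-succ-fcas-or-upcas-only-during-sct} that threatening section lies within $I$. Chaining these, the first \fcas\ belonging to $U_I$ on $r$ --- and therefore the successful \fcas\ belonging to $U$ on $r$ that precedes it --- occurs before $I$ ends, completing the case.

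The main obstacle is precisely this upper bound in the \sct\ case: converting ``before the first \fcas\ belonging to $U_I$ on $r$'' into ``before the end of $I$.'' A direct argument stumbles because a slow helper could, a priori, perform a \fcas\ on one of $U_I$'s records after $I$ has returned; the remedy is to locate the relevant \fcas\ inside $I$'s threatening section, which for an aborting \sct\ closes at its first \astep, and to invoke Lemma~\ref{lem-succ-fcas-or-upcas-only-during-sct} to keep that section within $I$.
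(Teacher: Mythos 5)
Your proof is correct and follows the same route as the paper's: unpack the two blame definitions and check that both endpoints of the interval known to contain the successful \fcas\ lie inside the vulnerable interval. In fact you are more careful than the paper on the one delicate point --- the paper's proof of the \sct\ case simply cites Lemma~\ref{lem-succ-fcas-or-upcas-only-during-sct} to place the first \fcas\ belonging to $U_I$ on $r$ inside $I$, even though that lemma speaks only of \emph{successful} \fcas s and the \fcas\ in question is unsuccessful; your detour through Lemma~\ref{lem-abort-fcas-flow} and the threatening section supplies exactly the missing justification.
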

\begin{proof}
Suppose $I$ is an invocation of \sct.
Let $U_I$ be the \op \ created by $I$.
By Definition~\ref{defn-blame-scx}, a successful \fcas \ belonging to $U$ on $r$ occurs between when the \llt$(r)$ linked to $I$ performs line~\ref{ll-read}, and the first \fcas \ {\it fcas} belonging to $U_I$ on $r$.
By Lemma~\ref{lem-succ-fcas-or-upcas-only-during-sct}, {\it fcas} occurs during $I$.
Now, suppose $I$ is an invocation of \vlt.
Then, by Definition~\ref{defn-blame-vlt}, a successful \fcas \ belonging to $U$ on $r$ occurs between when the \llt$(r)$ linked to $I$ performs line~\ref{ll-read}, and when $I$ sees $ r.\info \neq r\info $ at line~\ref{ll-reread}.
\end{proof}

\begin{obs} \label{obs-if-sct-or-vlt-blames-for-r-then-r-in-blaming-V}
If an invocation $I$ of \sct$ (V, R, fld, new)$ or \vlt$ (V)$ blames an invocation of \sct \ for a \rec \ $r$, then $r$ is in $V$.
\end{obs}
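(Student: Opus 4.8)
The plan is to split on whether $I$ is an invocation of \vlt\ or of \sct, and in each case simply unwind the relevant definition of ``blames $\ldots$ for $r$'', observing that the blamed \rec\ is, by construction, an element of the sequence $V$ (in the \sct\ case, after accounting for the fact that the \op\ created by $I$ stores a copy of $V$).

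First I would handle the \vlt\ case. If $I$ is an invocation of \vlt$(V)$ that blames an invocation of \sct, then in particular $I$ returns \false, and Definition~\ref{defn-blame-vlt} fixes the blamed \rec\ $r$ to be ``the \rec\ in $V$ for which $I$ sees $r.\info \neq r\info$ at line~\ref{vlt-reread}.'' So $r$ is in $V$ directly from the definition, and there is nothing more to prove in this case.

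Next I would handle the \sct\ case. Suppose $I$ is an invocation of \sct$(V, R, fld, new)$ that blames some invocation $S'$ of \sct\ for a \rec\ $r$, and let $U$ be the \op\ created by $I$. By Definition~\ref{defn-blame-scx}, $r$ is chosen among the \rec s lying in both $U.V$ and $U'.V$ (where $U'$ is the \op\ created by $S'$), so in particular $r$ is in $U.V$. It remains to note that $U.V = V$: $U$ is created at line~\ref{sct-create-op} of $I$ with its $V$ component set to the argument sequence $V$, and (as recorded in the proof of Observation~\ref{obs-op-invariants}) no field of an \op\ except $state$ ever changes after creation. Hence $r$ is in $V$.

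This observation is essentially a definitional unfolding, so I do not expect any genuine obstacle. The only subtlety worth stating explicitly is the indirection in the \sct\ case: Definition~\ref{defn-blame-scx} refers to $U.V$ rather than to $V$ itself, so the argument must cite the fact that $U.V$ is a copy of $V$ taken at $U$'s creation and never subsequently modified.
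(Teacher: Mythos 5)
Your proof is correct and follows essentially the same route as the paper's: a case split on whether $I$ is a \vlt\ (where the claim is immediate from Definition~\ref{defn-blame-vlt}) or an \sct\ (where Definition~\ref{defn-blame-scx} places $r$ in $U.V$, and $U.V$ equals $V$ because the \op's fields other than $state$ never change after creation at line~\ref{sct-create-op}). Nothing further is needed.
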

\begin{proof}
Suppose $I$ is an invocation of \sct.
Let $U$ be the \op \ created by $I$.
By Definition~\ref{defn-blame-scx}, $r$ is in $U.V $.
Since $ U.V $ does not change after $U$ is created at line~\ref{sct-create-op} of $I$, $r$ is in $V$.
Now, suppose $I$ is an invocation of \vlt.
In this case, the claim is immediate from Definition~\ref{defn-blame-vlt}.
\end{proof}

\begin{obs} \label{obs-if-sct-or-vlt-blames-for-r-then-r-in-blamed-V}
If an invocation $I$ of \sct\ or \vlt\ blames an invocation $S$ of \sct$(V, R, fld, new)$ for a \rec \ $r$, then $ r$ is in $V$.
\end{obs}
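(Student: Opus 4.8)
The plan is to reduce this statement to the existence of a \fcas\ belonging to the \op\ created by $S$, and then to invoke the earlier observation that ties such \fcas\ steps to the $V$ sequence.

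First I would let $U$ be the \op\ created by $S$, so that $V$ is exactly the sequence $U.V$ (which does not change after $U$ is created at line~\ref{sct-create-op}). Since $I$ (whether an invocation of \sct\ or of \vlt) blames $S$ for $r$, Lemma~\ref{lem-if-sct-or-vlt-blames-then-succ-fcas} applies and yields a \emph{successful} \fcas\ belonging to $U$ on $r$. In particular, there is a \fcas\ belonging to $U$ on $r$.

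Second, I would apply Observation~\ref{obs-if-fcas-then-r-in-V} to this \fcas: since there is a \fcas\ belonging to $U$ on $r$ and $U$ was created by the invocation $S$ of \sct$(V, R, fld, new)$, we conclude $r$ is in $V$, as required.

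There is no real obstacle here: both prerequisite results are already established, and the argument is a two-step chaining. The only point that needs a moment's care is making sure we appeal to the \emph{blamed} operation's sequence rather than the blaming operation's sequence (which is the content of the companion Observation~\ref{obs-if-sct-or-vlt-blames-for-r-then-r-in-blaming-V}); here the key is that Lemma~\ref{lem-if-sct-or-vlt-blames-then-succ-fcas} produces a \fcas\ belonging to $U$, the \op\ of $S$, so Observation~\ref{obs-if-fcas-then-r-in-V} gives membership in $S$'s sequence $V$.
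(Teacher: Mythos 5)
Your proposal is correct and matches the paper's own proof exactly: let $U$ be the \op\ created by $S$, obtain a successful \fcas\ belonging to $U$ on $r$ from Lemma~\ref{lem-if-sct-or-vlt-blames-then-succ-fcas}, and conclude via Observation~\ref{obs-if-fcas-then-r-in-V}. Nothing to add.
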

\begin{proof}
Let $U$ be the \op \ created by $S$.
By Lemma~\ref{lem-if-sct-or-vlt-blames-then-succ-fcas}, there is a successful \fcas \ belonging to $U$ on $r$.
The claim then follows from Observation~\ref{obs-if-fcas-then-r-in-V}.
\end{proof}

\begin{lem} \label{lem-sct-or-vlt-false-only-if}
An invocation $I$ of \sct$(V, R, fld, new)$ or \vlt$(V)$ ending at time $t$ can return \false \ only if its vulnerable interval overlaps the threatening section of some other \sct$(V', R', fld', new')$,
where some \rec\ appears in both $V$ and $V'$.
\end{lem}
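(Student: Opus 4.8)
The plan is to assemble the blame machinery developed in the preceding lemmas, splitting on whether $I$ is a \vlt\ or an \sct. If $I$ is an invocation of \vlt$(V)$ that returns \false, then Lemma~\ref{lem-if-vlt-false-then-blame-sct} tells us that $I$ blames some invocation $S$ of \sct, say for a \rec\ $r$; since $I$ is not itself an \sct, $S$ is automatically an \emph{other} \sct. If $I$ is an invocation of \sct\ that returns \false, then Lemma~\ref{lem-if-abort-then-blame-different-scx} gives an invocation $S \neq I$ of \sct\ that $I$ blames for some \rec\ $r$. In either case, let $U$ be the \op\ created by $S$ and let $\langle V', R', fld', new'\rangle$ be the arguments of $S$.

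Next I would locate a single \fcas\ step that lies in both time intervals of interest. By Lemma~\ref{lem-if-sct-or-vlt-blames-then-succ-fcas}, a successful \fcas\ belonging to $U$ on $r$ occurs during $I$'s vulnerable interval. By Lemma~\ref{lem-succ-fcas-or-upcas-only-during-sct}, every successful \fcas\ belonging to $U$ occurs during $S$'s threatening section. Hence that one \fcas\ step lies simultaneously in $I$'s vulnerable interval and in $S$'s threatening section, so these two intervals overlap. Since $I$ ends at time $t$ and the overlap occurs no later than $t$, this is exactly the overlap the statement asks for. Finally, for the record-sharing condition: by Observation~\ref{obs-if-sct-or-vlt-blames-for-r-then-r-in-blaming-V}, $r$ is in the sequence $V$ passed to $I$, and by Observation~\ref{obs-if-sct-or-vlt-blames-for-r-then-r-in-blamed-V}, $r$ is in $V'$. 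Thus $r$ witnesses a common \rec\ of $V$ and $V'$, and $S$ is the required other invocation of \sct.

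I expect no genuine obstacle here: the lemma is essentially a repackaging of the blame definitions together with the ``threatening section'' / ``vulnerable interval'' bookkeeping already established. The only point requiring some care is the \sct\ case, where we must invoke Lemma~\ref{lem-if-abort-then-blame-different-scx} specifically (rather than a generic ``blames an \sct'' statement) to guarantee that the blamed invocation $S$ is distinct from $I$; without that, the conclusion ``some \emph{other} \sct'' would not be justified. Everything else follows by chaining the cited results.
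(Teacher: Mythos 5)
Your proposal is correct and follows essentially the same route as the paper's proof: blame via Lemmas~\ref{lem-if-vlt-false-then-blame-sct} and \ref{lem-if-abort-then-blame-different-scx}, a common \fcas\ located in both intervals via Lemmas~\ref{lem-if-sct-or-vlt-blames-then-succ-fcas} and \ref{lem-succ-fcas-or-upcas-only-during-sct}, and the shared \rec\ via Observations~\ref{obs-if-sct-or-vlt-blames-for-r-then-r-in-blaming-V} and \ref{obs-if-sct-or-vlt-blames-for-r-then-r-in-blamed-V}. Your explicit note that distinctness of $S$ needs Lemma~\ref{lem-if-abort-then-blame-different-scx} only in the \sct\ case is a correct reading of the same argument.
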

\begin{proof}
Suppose $I$ returns \false.
By Lemma~\ref{lem-if-vlt-false-then-blame-sct} and Lemma~\ref{lem-if-abort-then-blame-different-scx}, $I$ blames an invocation $S$ of \sct$(V',$ $R',$ $fld',$ $new')$, where $ I \neq S $, for a \rec \ $r$.
Let $U$ be the \op \ created by $S$, and $U_I$ be the \op \ created by $I$.
By Lemma~\ref{lem-if-sct-or-vlt-blames-then-succ-fcas}, a successful \fcas \ {\it fcas} belonging to $U$ on $r$ occurs during $I$'s vulnerable interval.
By Lemma~\ref{lem-succ-fcas-or-upcas-only-during-sct}, {\it fcas} occurs during the threatening section of $S$.
Therefore, $I$'s vulnerable interval overlaps the threatening section of $S$.
By Observation~\ref{obs-if-sct-or-vlt-blames-for-r-then-r-in-blaming-V} and Observation~\ref{obs-if-sct-or-vlt-blames-for-r-then-r-in-blamed-V}, $r$ is in both $V$ and $V'$.
\end{proof}

We now prove bounds on the number of invocations of \llt, \sct \ and \vlt \ that can blame an invocation of \sct.

\begin{lem} \label{lem-uass-or-bcas-before-llx-returns-fail}
Let $I$ be an invocation of \llt$(r)$ that returns \fail, and $U$ be the \op\ created by the invocation of \sct \ that is blamed by $I$.
A \cstep \ or \astep \ belonging to $U$ occurs before $I$ returns.
\end{lem}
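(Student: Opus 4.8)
The plan is to argue by contradiction. Suppose no \cstep\ or \astep\ belonging to $U$ occurs before $I$ returns. Since $I$ blames an invocation of \sct\ (Lemma~\ref{lem-if-llt-fail-then-blame-sct}), $U$ is not the dummy \op, so $U$ is created at line~\ref{sct-create-op} with $U.state = \freezing$, and from the code of \help\ the only steps that change $U.state$ are a \cstep\ belonging to $U$ (which sets it to \done) and an \astep\ belonging to $U$ (which sets it to \retry). Hence, under the supposition, $U.state = \freezing$ at every moment up to and including the return of $I$.

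The first step I would carry out is an auxiliary observation: every terminating invocation $H$ of \help$(scxPtr)$ with $scxPtr$ pointing to $U$ has a \cstep\ or \astep\ belonging to $U$ occur before $H$ returns. This is a short case analysis on the line at which $H$ returns: if $H$ returns at line~\ref{help-return-true}, it has just performed a \cstep\ belonging to $U$ at line~\ref{help-cstep}; if $H$ returns at line~\ref{help-return-false}, it has just performed an \astep\ belonging to $U$ at line~\ref{help-astep}; and if $H$ returns at line~\ref{help-return-true-loop}, then Lemma~\ref{lem-no-return-true-in-loop-until-uass} already guarantees a \cstep\ belonging to $U$ occurred before. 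Since \help\ is wait-free (Lemma~\ref{lem-wait-free}), any invocation of \help\ that $I$ starts terminates before $I$ returns, so it suffices to show that $I$ invokes \help\ on $U$ before it returns.

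To do this I would split into the two cases of Definition~\ref{defn-blame-llx}. Suppose first that $I$ enters the if-block at line~\ref{ll-check-frozen}, so $U$ is the \op\ pointed to by $r.\info$ when $I$ performs line~\ref{ll-reread}. Because $I$ returns \fail, it does not return a snapshot at line~\ref{ll-return}, so it reaches line~\ref{ll-help-fail}. A first-change argument using Lemma~\ref{lem-no-info-change-while-freezing} and the fact that $U.state = \freezing$ throughout shows that $r.\info$ still points to $U$ at every point between line~\ref{ll-reread} and the return of $I$; in particular, when $I$ performs line~\ref{ll-help-fail} it reads $r.\info.state = \freezing$ and invokes $\help(r.\info) = \help(U)$, contradicting the auxiliary observation. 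In the remaining case, $U$ is the \op\ pointed to by $r.\info$ when $I$ performs line~\ref{ll-read}, so the local variable $r\info$ points to $U$. Since $I$ returns \fail, it evaluates the test at line~\ref{ll-check-finalized}, and since $U.state = \freezing$ at each read of $r\info.state$ there, the test evaluates $\help(r\info) = \help(U)$, again contradicting the auxiliary observation. Either way we obtain a contradiction, completing the proof.

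The main obstacle I anticipate is the bookkeeping in the first case: carefully arguing that $r.\info$ cannot drift away from $U$ between line~\ref{ll-reread} and the return of $I$. This needs Lemma~\ref{lem-no-info-change-while-freezing} applied at the first hypothetical moment of change, together with the contradiction hypothesis that $U.state$ never leaves \freezing. Everything else is a straightforward reading of the pseudocode of \llt\ and \help.
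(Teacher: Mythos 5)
Your proof is correct and rests on the same essential ingredients as the paper's own argument: that $r.\info$ cannot move away from $U$ while $U.state$ is still \freezing\ (Lemma~\ref{lem-no-info-change-while-freezing}), and that any terminating call to \help\ on $U$ witnesses a \cstep\ or \astep\ (via the code of \help\ and Lemma~\ref{lem-no-return-true-in-loop-until-uass}). The only real difference is the contrapositive packaging: the paper argues directly and therefore must separately dispose of the case where $r.\info$ has already changed by the time line~\ref{ll-help-fail} is reached (there Corollary~\ref{cor-if-succ-fcas-then-point-u-until-bcas-or-uass} immediately yields the required \cstep\ or \astep), a case your contradiction hypothesis renders vacuous.
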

\begin{proof}
By Definition~\ref{defn-blame-llx}, we know that $I$ reads a pointer to $U$ from $r.\info$.
By Lemma~\ref{lem-if-llt-fail-then-blame-sct}, $U$ is not the dummy \op.
This implies that a pointer to $U$ can only appear in $r.\info$ after a successful \fcas\ belonging to $U$ on $r$.
By Corollary~\ref{cor-if-succ-fcas-then-point-u-until-bcas-or-uass}, $r.\info$ points to $U$ at all times after the first \fcas\ belonging to $U$ on $r$, and before the first \cstep\ or \astep\ belonging to $U$.
Thus, if $r.\info$ changes after a pointer to $U$ is read from $r.\info$ at line~\ref{ll-read} or line~\ref{ll-reread}, and before either of the two times that $r.\info$ is read at line~\ref{ll-help-fail}, then we know that a \cstep\ or \astep\ belonging to $U$ has already occurred.
Otherwise, any read of $r.\info$ at line~\ref{ll-help-fail} returns a pointer to $U$.
Hence, $I$ checks whether $U.state = $ \freezing \ at line~\ref{ll-help-fail} and, if so, $I$ helps $U$.
We consider two cases.

\textbf{Case I:}
$I$ sees $U.state =$ \freezing \ at line~\ref{ll-help-fail}.
In this case, $I$ helps $U$ before returning.
From the code of \help, if $I$'s invocation of \help  \ returns \false, then $I$ performs an \astep \ belonging to $U$ during its invocation of \help.
Otherwise, by Lemma~\ref{lem-if-fass-belongs-to-op-then}.\ref{claim-help-true-then-returns-after-uass}, a \cstep \ belonging to $U$ occurs before $I$'s invocation of \help\ returns.

\textbf{Case II:}
$I$ sees $U.state \neq$ \freezing \ at line~\ref{ll-help-fail}.
In this case, $U.state $ must be \done \ or \retry.
Since $U$ is not the dummy \op, we know that $U.state $ is initially \freezing.
Therefore, a \cstep \ or \astep \ belonging to $U$ must occur before line~\ref{ll-help-fail}.
\end{proof}

\after{Is it actually true that two llts by a proc can blame the same sct, or can we tighten the next lemma to say $at most one$ instead of $at most two$?}

\begin{lem} \label{lem-sct-only-blamed-by-one-llt-per-process}
Each invocation of \sct \ can be blamed by at most two invocations of \llt\ per process.
\end{lem}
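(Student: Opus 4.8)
The plan is to argue by contradiction. Suppose some process $p$ performs three invocations $I_1$, $I_2$, $I_3$ of \llt, each returning \fail\ and each blaming the same invocation $S$ of \sct. Since a process executes its operations sequentially, we may assume $I_1$ terminates before $I_2$ begins and $I_2$ terminates before $I_3$ begins. Let $U$ be the \op\ created by $S$; blaming is well defined by Lemma~\ref{lem-if-llt-fail-then-blame-sct}. By Lemma~\ref{lem-uass-or-bcas-before-llx-returns-fail}, a \cstep\ or \astep\ belonging to $U$ occurs before each $I_j$ returns; let $t^*$ be the time of the \emph{first} such step. Then $I_1$ terminates after $t^*$, so $I_2$ and $I_3$ both begin after $t^*$.

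The core of the proof is a structural claim about any invocation $I$ of \llt$(r)$ that begins after $t^*$ and blames $S$: in that situation $t^*$ must be a \cstep, $r$ lies in $U.V$ but not in $U.R$, $I$ does not enter the if-block at line~\ref{ll-check-frozen}, $I$ reads a pointer to $U$ from $r.\info$ at line~\ref{ll-read}, and $r.\info$ stops pointing to $U$ at some point during $I$ (hence, by Lemma~\ref{lem-no-aba-info}, never points to $U$ again). I would prove this as follows. After $t^*$ the state of $U$ is permanently \done\ or \retry\ (Corollary~\ref{cor-state-transitions-respect-figure}), and every successful \fcas\ belonging to $U$ occurs before $t^*$ (Lemma~\ref{lem-succ-fcas-or-upcas-only-during-sct}). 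If $I$ entered the if-block and returned \fail, then by the code it saw $r.\info$ point, at line~\ref{ll-reread}, to an \op\ different from the one read at line~\ref{ll-read}; for $I$ to blame $S$ that later \op\ would have to be $U$, contradicting that no successful \fcas\ can put $U$ into $r.\info$ after $t^*$. So $I$ does not enter the if-block and reads a pointer to $U$ at line~\ref{ll-read}. Since $U.state \ne \freezing$ after $t^*$, the test at line~\ref{ll-check-frozen} can fail only if $state = \done$ and $marked_2 = \true$; this rules out $t^*$ being an \astep\ (which would leave $U.state = \retry$, making that test succeed), so $t^*$ is a \cstep. For $I$ to return \fail\ rather than \finalized\ at line~\ref{ll-check-finalized}, we need $marked_1 = \false$, so $r$ becomes marked between $I$'s line~\ref{ll-read-marked1} and its line~\ref{ll-read-marked2}; the first \markstep\ on $r$ belongs to some \op\ $W$ for which $r$ is frozen at that moment (Corollary~\ref{cor-marksteps}). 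If $W = U$ then $r \in U.R$ and, since $t^*$ is a \cstep, a \markstep\ belonging to $U$ on $r$ precedes $t^*$ and hence precedes line~\ref{ll-read-marked1}, contradicting $marked_1 = \false$; so $W \ne U$ and $r \notin U.R$. The \fcas\ that froze $r$ for $W$ therefore moves $r.\info$ off $U$, and it lies between $I$'s line~\ref{ll-read} and line~\ref{ll-read-marked2}, which completes the claim.

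Finally I would apply the claim to $I_2$ and $I_3$. Both begin after $t^*$, so each is governed by the claim: $I_2$ acts on some $r \in U.V \setminus U.R$ whose \info\ field is moved off $U$ during $I_2$, and $I_3$ acts on some $r' \in U.V \setminus U.R$ with $r'.\info = U$ at its line~\ref{ll-read}. The case $r' = r$ is immediate, since $r.\info$ never again equals $U$ once $I_2$ has terminated. The remaining case $r' \ne r$ is the main obstacle, and I would discharge it by combining the structural claim with the facts that a \cstep\ belonging to $U$ is preceded by successful \fcas s freezing all of $U.V$ (Lemma~\ref{lem-if-fass-then-all-succ-fcas} together with Lemma~\ref{lem-records-frozen}) and with the usage constraint on the sequences passed to \sct\ from Section~\ref{constraints}: these force the \op s finalizing $r$ and $r'$ to interleave with $U$ in a controlled way, so that once $p$ has witnessed (and the helpers of $W$ have completed) the finalization of $r$ during $I_2$, no further \info\ field of $U.V$ still pointing to $U$ can be caught by $p$ in the narrow window the claim requires. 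Nailing down this last step — showing that after $p$'s first post-$t^*$ blamer no second such blamer of $S$ is possible — is where essentially all the care is needed; the \astep\ value of $t^*$ and the same-record subcase are routine and already give at most one blamer per process there.
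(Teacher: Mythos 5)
Your analysis of the single-record situation is sound and essentially reproduces the paper's argument: both proofs place a \cstep\ or \astep\ belonging to $U$ before the first blamer returns (Lemma~\ref{lem-uass-or-bcas-before-llx-returns-fail}), observe that every later blamer must fail the test at line~\ref{ll-check-frozen} and so read a pointer to $U$ at line~\ref{ll-read}, deduce $marked_1 = \false$ and $marked_2 = \true$, attribute the intervening \markstep\ to a foreign \op\ $W$, and then rule out a further blamer on the \emph{same} \rec\ via Lemma~\ref{lem-no-aba-info}. Up to that point your structural claim is correct and provable as you sketch it.

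The gap is exactly where you locate it, and your proposal does not close it. For the case $r' \neq r$ you appeal to the ordering constraint of Section~\ref{constraints} to force the finalizing \op s to ``interleave in a controlled way,'' but that constraint cannot do this work: it is hypothesized only for executions containing a configuration after which no field of any \rec\ ever changes again, whereas here $r$ and $r'$ are by assumption being concurrently finalized by other \sct s; moreover nothing in it relates the \op s that mark $r$ and $r'$ to each other or to $U$. Your own structural claim shows why the window is not ``narrow'' in any exploitable sense: after the \cstep, every $r_i$ in $U.V$ that is not in $U.R$ independently retains $r_i.\info = U$ until some later \sct\ freezes it, so a priori each such $r_i$ could be caught once by $p$ in exactly the way you describe. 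The paper's proof never confronts this case --- it silently treats all three blaming invocations as reading the same \rec 's \info\ field, which is why its count comes out to two --- so you have correctly isolated the delicate point, but as submitted your argument establishes at most the weaker per-process bound of roughly $|V|$ plus the pre-\cstep\ blamers. That weaker bound still suffices for the in-degree argument in Lemma~\ref{lem-sct-progress}, but it is not the lemma as stated, and the tools you cite do not yield the constant two.
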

\begin{proof}
Let $S$ be an invocation of \sct.
To derive a contradiction, suppose there is some process $p$ that blames $S$ for three failed
invocations of \llt, $I'', I'$ and $I$ (which are performed by $p$ in this order).
By Definition~\ref{defn-blame-llx}, $I$, $I'$ and $I''$ each read a pointer to $U$ from $r.\info$, either at line~\ref{ll-read} or at line~\ref{ll-reread}.
Since $r.\info$ points to $U$ at some point during $I''$, and again at or after the time $I'$ performs line~\ref{ll-read}, we know from Lemma~\ref{lem-no-aba-info} that $r.\info$ points to $U$ when $I'$ performs line~\ref{ll-read}.
By the same argument,  $r.\info$ points to $U$ when $I$ performs line~\ref{ll-read}.
Thus, in both $I'$ and $I$, the local variable $r\info$ is a pointer to $U$.

By Lemma~\ref{lem-uass-or-bcas-before-llx-returns-fail}, a \cstep \ or \astep \ belonging to $U$ occurs prior to the termination of $I''$, which is before the start of $I'$.
By Corollary~\ref{cor-state-transitions-respect-figure}, $r\info.state $ does not change after it is set to \done \ or \retry \ by this \cstep \ or \astep.
Since $I'$ returns \fail, when $I'$ performs line~\ref{ll-check-finalized}, either $r\info.state = \done$ and $marked_1=\false$, or $r\info.state = $ \retry.

Suppose $r\info.state = \retry$ when $I'$ performs line~\ref{ll-check-finalized}.
Then, $r\info.state$ was \retry\ when $I'$ performed line~\ref{ll-read-state}, and $I'$ passed the test at line~\ref{ll-check-frozen}, and entered the if-block.
Since we have assumed that $I'$ blames $S$, $I'$ must read a pointer to $U$ from $r.\info$ when it performs line~\ref{ll-reread}.
However, since $I'$ returns \fail, 
the value $I'$ reads from $r.\info$ at line~\ref{ll-reread} is different from the value read at line~\ref{ll-read}, which is a contradiction.
Hence, this case is impossible.

\after{I got lost in the following paragraph (Eric)}
Now, suppose $r\info.state = \done$ and $marked_1=\false$ when $I'$ performs line~\ref{ll-check-finalized}.
Then, $r\info.state$ was \done\ at all times since $I'$ began, so $state = \done$.
If $marked_2$ was also \false\ when $I'$ performed line~\ref{ll-check-frozen}, then $I'$ passed the test at line~\ref{ll-check-frozen}, and entered the if-block, so we obtain the same contradiction as above.
Hence, this case, too, is impossible.
Thus, $marked_2$ must have been \true\ when $I'$ performed line~\ref{ll-check-frozen}, which means $r$ was marked when $I'$ performed line~\ref{ll-read-marked2}.
Since a \cstep\ belonging to $U$ had already occurred when $I'$ performed line~\ref{ll-read-marked1}, the code of \help \ implies that, if $r$ were in $U.R$, then the first \markstep\ belonging to $U$ on $r$ would already have occurred before $I'$ performed line~\ref{ll-read-marked1}.
Since a $marked$ bit is never changed from \true \ to \false, $r$ would be marked when $I'$ performed line~\ref{ll-read-marked1}, which is incompatible with our assumption that $marked_1=\false$.
\after{This last sentence doesn't make sense.  You assumed $marked_1=\false$ inside $I'$ not $I$! (Eric)}
Therefore, $r$ is not in $U.R$, so no \markstep\ belonging to $U$ on $r$ can ever occur.
Since $r$ was marked when $I'$ performed line~\ref{ll-read-marked2}, there must have been a successful \markstep\ $mstep$ belonging to some other \op \ $W$ on $r$ after $I'$ performed line~\ref{ll-read-marked1}, and before $I'$ performed line~\ref{ll-read-marked2}.
By Lemma~\ref{lem-frozen-forever-after-markstep}, $r.\info$ points to $W$ when $mstep$ occurs.
However, since $r.\info$ points to $U$ during $I''$, which is before $mstep$, and again during $I$, which is after $mstep$, at some point after $mstep$, $r.\info$ must be changed to a value ($U$) that has previously appeared there, which contradicts Lemma~\ref{lem-no-aba-info}.
Thus, it is impossible for three or more invocations of \llt\ by $p$ to blame the same invocation of \sct.
\end{proof}


\begin{lem} \label{lem-sct-only-blamed-by-v-scts-or-vlts-per-process}
Each invocation of \sct$(V, R, fld, new)$ can be blamed by at most $|V|$ invocations of \sct \ or \vlt \ per process.
\end{lem}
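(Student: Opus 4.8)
The plan is to reduce the claim to a per-\rec\ statement: for each \rec\ $r$ in $V$ and each process $p$, at most one invocation of \sct\ or \vlt\ performed by $p$ blames $S$ for $r$. Granting this, the lemma follows at once, because by Definition~\ref{defn-blame-scx} and Definition~\ref{defn-blame-vlt} every invocation of \sct\ or \vlt\ that blames $S$ blames it for exactly one \rec, and by Observation~\ref{obs-if-sct-or-vlt-blames-for-r-then-r-in-blamed-V} that \rec\ lies in $V$; hence a single process performs at most $|V|$ invocations blaming $S$.

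To establish the per-\rec\ statement, let $U$ be the \op\ created by $S$ and fix $r$ in $V$. By Lemma~\ref{lem-if-sct-or-vlt-blames-then-succ-fcas}, any invocation of \sct\ or \vlt\ that blames $S$ for $r$ witnesses a successful \fcas\ belonging to $U$ on $r$, and by Lemma~\ref{lem-only-first-fcas-can-succeed} there is a unique such step, which I call \textit{fcas}. Suppose for contradiction that a process $p$ performs two invocations $I_1$ and $I_2$ of \sct\ or \vlt, with $I_1$ completing before $I_2$ begins, and both blaming $S$ for $r$. By Observation~\ref{obs-if-sct-or-vlt-blames-for-r-then-r-in-blaming-V}, $r$ belongs to the $V$ sequence of $I_1$, so $I_1$ is an operation of $p$ that involves $r$. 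Let $L_2$ be the \llt$(r)$ linked to $I_2$. Were $L_2$ to precede $I_1$, then $I_1$ --- an operation of $p$ involving $r$ --- would lie between $L_2$ and $I_2$, contradicting Definition~\ref{defn-llt-linked-to-sct}; hence $L_2$ starts after $I_1$ ends, and in particular $L_2$'s read of $r.\info$ at line~\ref{ll-read} occurs after $I_1$ ends.

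I will now locate \textit{fcas} on both sides of ``the end of $I_1$'' to derive a contradiction. On the one hand, the blame relation for $I_2$ (Definition~\ref{defn-blame-scx} if $I_2$ is an \sct, Definition~\ref{defn-blame-vlt} if $I_2$ is a \vlt) forces \textit{fcas} to occur after $L_2$ reads $r.\info$ at line~\ref{ll-read}, hence after $I_1$ ends. On the other hand, I claim \textit{fcas} occurs before $I_1$ ends. If $I_1$ is a \vlt, Definition~\ref{defn-blame-vlt} puts \textit{fcas} before the step at line~\ref{vlt-reread} at which $I_1$ detects the change to $r.\info$, and $I_1$ returns there. If $I_1$ is an \sct\ with \op\ $U_1$, Definition~\ref{defn-blame-scx} puts \textit{fcas} before the first \fcas\ belonging to $U_1$ on $r$; since $I_1$ returns \false, Lemma~\ref{lem-help-false} gives an \astep\ belonging to $U_1$ within $I_1$, and, as in the proof of Lemma~\ref{lem-if-abort-then-blame-different-scx} via Lemma~\ref{lem-abort-fcas-flow}, $r$ is precisely the first \rec\ of $U_1.V$ with no successful \fcas\ belonging to $U_1$, so a \fcas\ belonging to $U_1$ on $r$ (necessarily unsuccessful, using Lemma~\ref{lem-fcas-on-ri-only-after-succ-fcas-on-previous} and Lemma~\ref{lem-only-first-fcas-can-succeed}) precedes that \astep; thus the first \fcas\ belonging to $U_1$ on $r$, and therefore \textit{fcas}, occurs before $I_1$ ends. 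This contradicts the conclusion of the previous sentence, which proves the per-\rec\ statement and hence the lemma.

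The main obstacle I anticipate is the \sct\ subcase of the ``\textit{fcas} before $I_1$ ends'' step: one must argue that the \rec\ $r$ for which $I_1$ blames $S$ is exactly the \rec\ singled out by Lemma~\ref{lem-abort-fcas-flow} --- equivalently, the first position of $U_1.V$ whose \fcas\ by $U_1$ fails --- so that a \fcas\ of $U_1$ on $r$ is guaranteed to precede the \astep\ of $U_1$. This requires combining the uniqueness clauses in Definition~\ref{defn-blame-scx} with the ordering of \fcas\ steps from Lemma~\ref{lem-fcas-on-ri-only-after-succ-fcas-on-previous}, much as in the proof of Lemma~\ref{lem-if-abort-then-blame-different-scx}.
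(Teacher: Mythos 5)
Your proof is correct and takes essentially the same approach as the paper's: both reduce the claim to showing that each process can blame $S$ for a given $r$ in $V$ at most once, both use Definition~\ref{defn-llt-linked-to-sct} to force the second blaming invocation's linked \llt$(r)$ to begin only after the first blaming invocation has ended, and both derive a contradiction from the uniqueness of the successful \fcas\ belonging to $S$'s \op\ on $r$ (Lemma~\ref{lem-only-first-fcas-can-succeed}). The only difference is presentational: the paper cites Lemma~\ref{lem-if-sct-or-vlt-blames-then-succ-fcas} to place that unique \fcas\ in two disjoint intervals, whereas you re-derive the \emph{before $I_1$ ends} half directly from Definitions~\ref{defn-blame-vlt} and~\ref{defn-blame-scx} together with Lemmas~\ref{lem-abort-fcas-flow} and~\ref{lem-help-false}.
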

\begin{proof}
By Observation~\ref{obs-if-sct-or-vlt-blames-for-r-then-r-in-blamed-V}, if an invocation of \sct \ or \vlt \ blames an invocation of \sct $ (V, R, fld, new) $ for $r$, then $r$ is in $V$.
Thus, it suffices to prove that an invocation $S$ of \sct$ (V, R, fld, new) $
cannot be blamed for any $r$ in $V$ by more than one invocation of \sct \ or \vlt\ performed by process $p$. 

Let $I$ and $I'$ be invocations of \sct \ or \vlt \ performed by process $p$, and $U$, $U'$ and $U_S$ be the \op s created by $I$, $I'$ and $S$, respectively.
Without loss of generality, let $I'$ occur after $I$.
Suppose, in order to derive a contradiction, that $I$ and $I'$ both blame 
$S$ 
for the same \rec \ $r$.
Let $t_0$ ($t_0'$) be the time when the \llt$(r)$ linked to $I$ ($I'$) performs line~\ref{ll-read}, and $t_1$ ($t_1'$) be the time when $I$ ($I'$) finishes.
By Lemma~\ref{lem-if-sct-or-vlt-blames-then-succ-fcas}, a successful \fcas \ belonging to $U_S$ occurs between $t_0$ and $t_1$, and a successful \fcas \ belonging to $U_S$ occurs between $t_0'$ and $t_1'$.
If we can show $ t_0 < t_1 < t_0' < t_1' $, then we shall have demonstrated that there must be two such \fcas s, which contradicts Lemma~\ref{lem-only-first-fcas-can-succeed}.

Since the \llt$(r)$ linked to $I$ ($I'$) terminates before $I$ ($I'$), we know $t_0 < t_1$ ($t_0' < t_1'$).
By Observation~\ref{obs-if-sct-or-vlt-blames-for-r-then-r-in-blaming-V}, $r$ is in the $V$ sequences of invocations $I$ and $I'$.
Hence, Definition~\ref{defn-llt-linked-to-sct}.\ref{prop-no-sct-or-vlt-between-linked-llt-and-sct-or-vlt} implies that $t_1 \notin [t_0', t_1']$, and $t_1' \notin [t_0, t_1]$.
Since $I'$ occurs after $I$, $ t_1 < t_1'$.
Therefore, $ t_0 < t_1 < t_0' < t_1'$.
\end{proof}

We now define the blame graph and prove a number of its properties.

\begin{defn} \label{defn-blame-graph}
We define the \textbf{blame graph} for an execution to be a directed graph whose nodes are the invocations of \llt, \vlt\ and \sct, with an edge from an invocation $I$ to another invocation $I'$ if and only if $I$ blames $I'$.  (Note that only the nodes corresponding to invocations of \sct\ can have incoming edges.)
\end{defn}

The next property we prove is that, for each execution, there is a bound on the length of the longest path in the blame graph.
%
As mentioned in Section \ref{sec-operations}, we require the following constraint in order to prove this bound exists.

\begin{con} \label{con-partial-order}
If there is a configuration $C$ after which the value of no
field of any \rec\ changes,
then
 there is a total order $\prec$
on all \rec s created during the execution such that,
if \rec\ $r_1$ appears before data
\rec\ $r_2$ in the sequence $V$ passed to an invocation 
of \sct\ whose linked \llt s begin after $C$,
then $r_1 \prec r_2$.
\end{con}

\begin{lem} \label{lem-after-fcas-previous-recs-frozen}
Let $U$ be an \op\ created by an invocation of \sct\ whose linked \llt s begin after the configuration $C$ that is specified in Constraint~\ref{con-partial-order}.
Immediately after a successful \fcas\ belonging to $U$ on $r$, $r.\info$ points to $U$ and, for each $ r'$ in $U.V $, where $r' \prec r$, $r'.\info$ points to $U$ and a successful \fcas \ belonging to $U$ on $r'$ has occurred.
\end{lem}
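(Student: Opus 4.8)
\textbf{Proof proposal for Lemma~\ref{lem-after-fcas-previous-recs-frozen}.}

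The plan is to argue by induction on the \rec s of $U.V$ in the order $\prec$, combining the structural fact that \help\ iterates over $U.V$ in sequence order (Lemma~\ref{lem-fcas-on-ri-only-after-succ-fcas-on-previous}) with the consistency guarantee of Constraint~\ref{con-partial-order}. First I would fix the successful \fcas\ {\it fcas} belonging to $U$ on $r$. By Lemma~\ref{lem-no-steps-belong-to-dummy-op}, $U$ is not the dummy \op, so it was created at line~\ref{sct-create-op} by an invocation $S$ of \sct; by hypothesis the \llt s linked to $S$ begin after $C$. Immediately after {\it fcas}, $r.\info$ points to $U$ by definition of \fcas. Now let $\langle r_1,\ldots,r_l\rangle$ be the sequence $U.V$, and suppose $r = r_j$. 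By Observation~\ref{obs-if-fcas-then-r-in-V}, $r$ is in $V$. The key point is that Constraint~\ref{con-partial-order} guarantees the sequence $V$ passed to $S$ is consistent with $\prec$: so for every $r'$ in $U.V$ with $r' \prec r$, $r'$ occurs in $U.V$ at a position $i < j$. (Here I would note that $V$ has no repeated elements among those compared under $\prec$, since $\prec$ is a total order and $V$ is consistent with it.)

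Next I would invoke Lemma~\ref{lem-fcas-on-ri-only-after-succ-fcas-on-previous}: since {\it fcas} is a \fcas\ belonging to $U$ on $r_j$ with $j \geq 2$, it can only occur after a successful \fcas\ belonging to $U$ on $r_{j-1}$; iterating this argument down the chain $r_j, r_{j-1}, \ldots, r_1$, there is a successful \fcas\ belonging to $U$ on each $r_i$ with $i \le j$ occurring before {\it fcas}. In particular, for each $r' \prec r$ in $U.V$ (necessarily $r' = r_i$ for some $i < j$), a successful \fcas\ belonging to $U$ on $r'$ has occurred at or before the moment of {\it fcas}. It remains to show $r'.\info$ still points to $U$ immediately after {\it fcas}. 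For this I would use Corollary~\ref{cor-if-succ-fcas-then-point-u-until-bcas-or-uass}: after the (first, by Lemma~\ref{lem-only-first-fcas-can-succeed}) successful \fcas\ belonging to $U$ on $r'$, $r'.\info$ points to $U$ at all times before the first \astep\ or \cstep\ belonging to $U$. So I need {\it fcas} to precede the first \astep\ or \cstep\ belonging to $U$. But by Lemma~\ref{lem-no-succ-fcas-after-fass-or-bcas}, no \fcas\ belonging to $U$ is successful after the first \fstep\ or \astep\ belonging to $U$, and each \cstep\ is preceded by a \fstep\ in the code of \help; since {\it fcas} is successful, it occurs strictly before the first \fstep, hence strictly before the first \cstep\ or \astep, belonging to $U$. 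Therefore $r'.\info$ points to $U$ immediately after {\it fcas}, completing the argument.

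The main obstacle I anticipate is the bookkeeping around duplicate entries in $V$: the lemma statement quietly assumes that "$r' \prec r$ in $U.V$" picks out a well-defined position earlier in the sequence, and Lemma~\ref{lem-fcas-on-ri-only-after-succ-fcas-on-previous} is phrased in terms of positions $r_{i-1}$ rather than $\prec$-predecessors. Constraint~\ref{con-partial-order} is precisely what bridges this gap — it forces the sequence $V$ of any \sct\ whose linked \llt s begin after $C$ to respect $\prec$, so sequence-position order and $\prec$-order agree on the elements in question, and in particular $V$ cannot contain the same \rec\ twice among $\prec$-comparable elements. I would make this consistency step explicit at the start, since everything downstream (the chaining of Lemma~\ref{lem-fcas-on-ri-only-after-succ-fcas-on-previous}) depends on it; the remaining steps are then routine applications of the already-established freezing lemmas.
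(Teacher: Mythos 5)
Your proposal is correct and follows essentially the same route as the paper's proof: use Constraint~\ref{con-partial-order} to translate $r' \prec r$ into sequence order in $U.V$, chain Lemma~\ref{lem-fcas-on-ri-only-after-succ-fcas-on-previous} to obtain the earlier successful \fcas\ on $r'$, and then combine Corollary~\ref{cor-if-succ-fcas-then-point-u-until-bcas-or-uass} with Lemma~\ref{lem-no-succ-fcas-after-fass-or-bcas} to conclude that both \info\ fields still point to $U$ immediately after {\it fcas}. Your explicit attention to duplicates in $V$ and to the iteration of Lemma~\ref{lem-fcas-on-ri-only-after-succ-fcas-on-previous} down the sequence is slightly more careful than the paper's one-line invocation, but it is the same argument.
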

\begin{proof}
Let {\it fcas} be a successful \fcas \ belonging to $U$ on $r$, and let $r'$ be any \rec\ in $U.V$ that satisfies $ r' \prec r $.
By Constraint~\ref{con-partial-order}, $r'$ must occur before $r$ in the sequence $U.V$.
By Lemma~\ref{lem-fcas-on-ri-only-after-succ-fcas-on-previous}, a successful \fcas\ {\it fcas$'$} belonging to $U$ on $r'$ occurs prior to {\it fcas}.
Thus, Corollary~\ref{cor-if-succ-fcas-then-point-u-until-bcas-or-uass} implies that $r'.\info$ points to $U$ at all times after {\it fcas$'$} and before the first \cstep \ or \astep \ belonging to $U$.
Similarly, $r.\info$ points to $U$ at all times after {\it fcas} and before the first \cstep \ or \astep \ belonging to $U$.
By Lemma~\ref{lem-no-succ-fcas-after-fass-or-bcas}, {\it fcas} must precede the first \fstep\ or \astep \ belonging to $U$.
From the code of \help, the first \fstep\ belonging to $U$ must precede the first \cstep\ belonging to $U$.
Hence, {\it fcas} and {\it fcas$'$} both precede the first \cstep \ or \astep \ belonging to $U$.
Therefore, immediately after {\it fcas}, the \info\ fields of $r$ and $r'$ both point to $U$.
\end{proof}

\begin{lem} \label{lem-if-blame-chain-then-recs-ordered}
Let $U_1$, $U_2$ and $U_3$ be \op s respectively created by invocations $S_1$, $S_2$ and $S_3$ of \sct\ whose linked \llt s begin after the configuration $C$ that is specified in Constraint~\ref{con-partial-order}, and $r$ and $r'$ be \rec s.
If $S_1$ blames $S_2$ for $r$, and $S_2$ blames $S_3$ for $r'$, then $r \prec r'$.
\end{lem}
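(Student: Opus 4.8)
The plan is to unwind the two \textbf{blames} relations using Definition~\ref{defn-blame-scx}, extract the two facts that are actually needed, and then exploit the fact that \fcas\ es belonging to $U_2$ proceed left-to-right through $U_2.V$.

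First I would read off what blaming gives. From the hypothesis that $S_1$ blames $S_2$ for $r$, Definition~\ref{defn-blame-scx} (with $U = U_1$ and $U' = U_2$) tells us that $r$ is in $U_2.V$ and that there is a successful \fcas\ belonging to $U_2$ on $r$. From the hypothesis that $S_2$ blames $S_3$ for $r'$, Definition~\ref{defn-blame-scx} (with $U = U_2$ and $U' = U_3$) tells us that $r'$ is in $U_2.V$ and that there is \emph{no} successful \fcas\ belonging to $U_2$ on $r'$. (Both statements presuppose that $S_1$ and $S_2$ returned \false, which is part of what it means to blame.) Since $S_2$'s linked \llt s begin after the configuration $C$, Constraint~\ref{con-partial-order} implies that the sequence $U_2.V$ is consistent with the total order $\prec$; because $\prec$ is strict, $U_2.V$ has no repeated \rec s, so we may write $U_2.V = \langle r_1, \dots, r_l\rangle$ with $r_1 \prec r_2 \prec \cdots \prec r_l$, and there are unique indices $j$ and $j'$ with $r = r_j$ and $r' = r_{j'}$.

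Next I would rule out $j \ge j'$. We cannot have $j = j'$, since $r_j$ has a successful \fcas\ belonging to $U_2$ while $r_{j'}$ does not. If $j > j'$, then $j \ge 2$, and applying Lemma~\ref{lem-fcas-on-ri-only-after-succ-fcas-on-previous} repeatedly — starting from the successful (hence existing) \fcas\ belonging to $U_2$ on $r_j$ — yields a successful \fcas\ belonging to $U_2$ on $r_{j-1}$, then on $r_{j-2}$, and so on down to $r_1$; in particular there is a successful \fcas\ belonging to $U_2$ on $r_{j'} = r'$, contradicting the previous paragraph. Hence $j < j'$, i.e., $r$ precedes $r'$ in $U_2.V$, and a final appeal to Constraint~\ref{con-partial-order} (for $S_2$, whose linked \llt s begin after $C$) gives $r \prec r'$, as required.

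I do not expect a serious obstacle: the proof is essentially a translation of the blame definitions into "$r$ has a successful \fcas\ of $U_2$, while $r'$ does not," combined with the monotone (left-to-right) progress of freezing within $U_2.V$. The one point requiring care is ensuring that $U_2.V$ has no duplicate \rec s, so that the indices $j, j'$ are well defined and the inductive descent via Lemma~\ref{lem-fcas-on-ri-only-after-succ-fcas-on-previous} makes sense; this is exactly where strictness of $\prec$ from Constraint~\ref{con-partial-order} is used, and it is also why the lemma restricts attention to \sct s whose linked \llt s begin after $C$.
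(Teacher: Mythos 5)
Your proposal is correct and follows essentially the same route as the paper's proof: extract from Definition~\ref{defn-blame-scx} that $r$ has a successful \fcas\ belonging to $U_2$ while $r'$ does not, conclude via Lemma~\ref{lem-fcas-on-ri-only-after-succ-fcas-on-previous} that $r$ precedes $r'$ in $U_2.V$, and finish with Constraint~\ref{con-partial-order}. Your explicit observation that the constraint forces $U_2.V$ to be duplicate-free (so the indices are well defined) is a detail the paper leaves implicit, but it does not change the argument.
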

\begin{proof}
Since $S_1$ blames $S_2$ for $r$, we know from Definition~\ref{defn-blame-scx} that $r$ is in $U_2.V$.
Similarly, since $S_2$ blames $S_3$ for $r'$, we know $r'$ is in $U_2.V$.
Furthermore, a successful \fcas\ belonging to $U_2$ on $r$ occurs, and no successful \fcas\ belonging to $U_2$ on $r'$ occurs.
By Lemma~\ref{lem-fcas-on-ri-only-after-succ-fcas-on-previous}, $r$ must occur before $r'$ in the sequence $U_2.V$.
Thus, Constraint~\ref{con-partial-order} implies $r \prec r'$.
\end{proof}

\begin{lem} \label{lem-one-to-one-correspondence-upcas-and-sct}
There can be only as many successful \upcas s as there are invocations of \sct \ that either return \true, or do not terminate.
\end{lem}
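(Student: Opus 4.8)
The plan is to exhibit an injection from the set of successful \upcas\ steps (taken over the whole execution) into the set of invocations of \sct\ that return \true\ or never terminate, and then the bound follows. The structural facts needed are all already in hand. Each \upcas\ step belongs to exactly one \op: by Definition~\ref{defn-belongs} it belongs to the \op\ being helped by the process that performs it, and (as noted after Definition~\ref{defn-helping}) a process cannot be helping two invocations of \sct\ at once. By Lemma~\ref{lem-no-steps-belong-to-dummy-op}, no \upcas\ belongs to the dummy \op, so the unique \op\ to which a successful \upcas\ belongs is created by some invocation of \sct\ at line~\ref{sct-create-op}, and distinct \op s are created by distinct invocations of \sct. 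Hence the map $\phi$ sending a successful \upcas\ to the invocation of \sct\ that created the \op\ to which it belongs is well defined.

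First I would check that $\phi$ is injective. If $\phi$ maps two successful \upcas\ steps to the same invocation $S$ of \sct, then both belong to the \op\ $U$ created by $S$; but Lemma~\ref{lem-only-one-succ-upcas} says at most one successful \upcas\ belongs to $U$, so the two \upcas\ steps coincide.

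Second I would show that every invocation of \sct\ in the image of $\phi$ returns \true\ or does not terminate. Let $S$ be such an invocation, creating the \op\ $U$, and suppose a successful \upcas\ belongs to $U$ (which, by Lemma~\ref{lem-first-upcas-succ}, is then the first \upcas\ belonging to $U$). Suppose, for contradiction, that $S$ terminates and returns \false. Then Lemma~\ref{lem-help-false}.\ref{claim-help-false-no-upcas} implies that no \upcas\ belongs to $U$, contradicting the assumption. Since \sct\ returns exactly the Boolean value produced by its call to \help\ at line~\ref{sct-call-help}, a terminating $S$ returns either \true\ or \false; having excluded \false, it must return \true. Thus $S$ returns \true\ or does not terminate. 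Combining the two parts, $|\{\text{successful \upcas s}\}| \le |\{S : S \text{ is an \sct\ returning \true\ or not terminating}\}|$, which is the claim.

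I do not expect any real obstacle here: the lemma is a bookkeeping consequence of Lemma~\ref{lem-only-one-succ-upcas} (one successful \upcas\ per \op) and Lemma~\ref{lem-help-false} (a \false-returning \sct\ produces no \upcas). The only point requiring a little care is ensuring every successful \upcas\ is associated to a genuine \sct\ rather than the dummy \op, which is exactly what Lemma~\ref{lem-no-steps-belong-to-dummy-op} provides, and noting that a single step belongs to a single \op.
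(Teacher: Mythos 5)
Your proof is correct and follows essentially the same route as the paper's: associate each successful \upcas\ with the (non-dummy) \op\ it belongs to and hence with the \sct\ that created that \op, use the one-successful-\upcas-per-\op\ fact to get injectivity, and use Lemma~\ref{lem-help-false} to rule out \false-returning \sct s from the image. The only cosmetic difference is that you cite Lemma~\ref{lem-only-one-succ-upcas} directly where the paper cites Lemma~\ref{lem-only-first-upcas-can-succeed}; both give the needed uniqueness.
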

\begin{proof}
From the code, an \upcas \ can only occur in an invocation of \help$ (ptr) $, where $ptr$ points to an \op \ $U$.
Further, from the code of \help, there is at least one \fcas \ belonging to $U$ or \fstep \ belonging to $U$.
Hence, Lemma~\ref{lem-no-steps-belong-to-dummy-op} implies that $U$ is not the dummy \op.
Thus, $U$ is created by an invocation of \sct \ at line~\ref{sct-create-op}.
By Lemma~\ref{lem-only-first-upcas-can-succeed}, only the first \upcas \ belonging to an \op \ can succeed.
By Lemma~\ref{lem-help-false}.\ref{claim-help-false-no-upcas}, no \upcas \ belongs to an \op\ created by an unsuccessful invocation of \sct.
\end{proof}

\after{I'm still a bit uncomfortable about what entry points really mean, formally (Eric).}

We think of processes as accessing such a data structure via
a fixed number of special \rec s called \textit{entry points}, each of which has a single mutable pointer to a \rec.
We assume there is always some \rec\ reachable by following pointers from an entry point that is not finalized. 
(This assumption that entry points cannot be finalized is not crucial, but it 
simplifies the statement of some progress guarantees.)

\begin{defn} \label{defn-rec-initiated}
A \rec \ is \textbf{initiated} at all times after it first becomes reachable by following \rec \ pointers from an entry point.
\end{defn}

\begin{obs} \label{obs-only-upcas-can-initiate}
The only step in an execution that can cause a \rec\ to become initiated is a successful \upcas.
\end{obs}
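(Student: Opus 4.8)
The plan is to track reachability from the entry points as a function of the \rec-pointer fields, and show that the only step that can change this function so as to make a new \rec\ reachable is a successful \upcas.

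First I would fix a step $s$ and suppose, for the sake of the argument, that $s$ causes some \rec\ $r$ to become initiated: $r$ is not reachable by following \rec\ pointers from any entry point in the configuration immediately before $s$, but is reachable in the configuration immediately after $s$. The set of \rec s reachable from the entry points in a given configuration is completely determined by the current contents of the fields that hold \rec\ pointers, namely the mutable pointer of each entry point together with the (user-defined) mutable and immutable fields of every currently-allocated \rec\ that hold \rec\ pointers. (The $\info$, $marked$, $state$, $\freezingdone$ and $\llresults$ fields never hold a pointer to a \rec, so they are irrelevant here; which of the user-defined fields hold \rec\ pointers is fixed by the \rec's type.) Since $r$'s reachability status differs between the two configurations, $s$ must either change the contents of one such pointer field of an already-allocated \rec\ or entry point, or allocate a fresh \rec.

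Next I would rule out allocation. When a \rec\ is allocated, the safe garbage-collection assumption guarantees that its address was not reachable by following pointers, so in particular the new \rec\ is not reachable from any entry point at the moment it is created; moreover an allocation alters no field of any previously-allocated \rec\ or entry point, so it changes no reachability relationships among the old \rec s. Hence an allocation step cannot cause any \rec\ to become initiated, and so $s$ must change the contents of a \rec-pointer field of an already-allocated \rec\ or of an entry point. Since entry points are themselves \rec s (each with a single mutable pointer to a \rec), in every case $s$ writes to a field of a \rec. By Observation~\ref{obs-immutable-fields-do-not-change}, $s$ does not write to an immutable field, so $s$ writes to a mutable field of a \rec, and by Observation~\ref{obs-only-upcas-modifies-records} the only step that does this is a successful \upcas. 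This yields the claim.

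I expect the main obstacle to be purely one of bookkeeping rather than of substance: one must be careful to pin down (i) exactly which fields are ``\rec\ pointers'' for the purpose of the reachability relation, (ii) that \rec\ creation does not count, which requires invoking the garbage-collection assumption, and (iii) the (implicit) convention that an entry point is a \rec\ whose single pointer field is mutable, so that the two cited observations apply to it. Once those points are settled, the argument is immediate, since every step modifies at most one memory location and reads and unsuccessful \cas\ steps modify none.
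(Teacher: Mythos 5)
Your proof is correct and follows essentially the same route as the paper, which simply states that the observation ``follows immediately from Observation~\ref{obs-only-upcas-modifies-records}.'' Your additional bookkeeping (ruling out allocation via the garbage-collection assumption, and invoking Observation~\ref{obs-immutable-fields-do-not-change} for the immutable pointer fields) just makes explicit the details the paper leaves implicit.
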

\begin{proof}
Follows immediately from Observation~\ref{obs-only-upcas-modifies-records}.
\end{proof}

\begin{lem} \label{lem-if-blame-for-r-then-r-already-initiated}
Let $S_1$ and $S_2$ be invocations of \sct, and let $r$ be a \rec.
If $S_1$ blames $S_2$ for $r$, then $r$ was initiated before the start of $S_1$, and before the start of $S_2$.
\end{lem}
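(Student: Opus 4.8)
The plan is to unpack Definition~\ref{defn-blame-scx}: the hypothesis that $S_1$ blames $S_2$ for $r$ yields two linked invocations of \llt$(r)$, one occurring before $S_1$ and one before $S_2$, and it then suffices to show that $r$ is already reachable from an entry point — hence initiated — by the time of each of these \llt s.

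First I would record what the hypothesis gives. Let $U_1$ and $U_2$ be the \op s created by $S_1$ and $S_2$. By Definition~\ref{defn-blame-scx}, $r$ lies in both $U_1.V$ and $U_2.V$; since every \rec\ in the $V$ sequence of an \sct\ must have a linked \llt, there is an invocation $I_1$ of \llt$(r)$ linked to $S_1$ and an invocation $I_2$ of \llt$(r)$ linked to $S_2$. Because an \sct\ must be set up by its linked \llt s before it is invoked, and by Definition~\ref{defn-llt-linked-to-sct}, $I_1$ terminates before $S_1$ begins and $I_2$ terminates before $S_2$ begins. So it is enough to prove the claim: whenever a process performs an invocation $I$ of \llt$(r)$, $r$ is initiated before $I$ begins; applying this to $I_1$ and $I_2$ then gives the lemma.

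To prove the claim I would invoke the standing assumption of the data-structure setting — that a process accesses a \rec\ only after reaching it by following \rec\ pointers from an entry point — and trace how the process obtained the pointer to $r$ that it passes to $I$. If $r$ is an entry point, it is reachable from itself in every configuration, hence initiated throughout the execution, and the claim holds trivially. Otherwise the process reached $r$ by reading the pointer to $r$ from a mutable field of a previously reached \rec\ $r'$. The reachability argument underlying the third part of Proposition~\ref{searches-work} — by induction on the \rec s a process reaches (entry points are always reachable; and when a pointer to $r$ is read from a field of a previously reached \rec\ $r'$, either $r'$ already pointed to $r$ while $r'$ was reachable, or an \sct\ wrote the pointer to $r$ into $r'$ at its \upcas, and just after that \upcas\ both $r'$ and $r$ are reachable) — then shows that there is a time during the process's execution, necessarily before it reads the pointer to $r$ and hence before $I$ begins, at which $r$ is reachable from an entry point. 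By Definition~\ref{defn-rec-initiated}, $r$ is initiated from that time onward, so in particular before $I$ begins.

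I expect the main obstacle to be the inductive step just sketched, namely showing that immediately after an \upcas\ writes a pointer to $r$ into a field of $r'$, the \rec\ $r'$ is still reachable from an entry point (so that $r$ is too). This is exactly what Proposition~\ref{searches-work} requires, and I anticipate combining the standing assumption that each linearized \sct\ removes precisely the \rec s in its $R$ sequence with Corollary~\ref{cor-upcas-only-modifies-frozen} (the \upcas\ occurs while $r'$ is frozen for the updating \op) and the characterization of frozen \rec s in Figure~\ref{fig-freezing-table}, to rule out $r'$ having already been removed from the structure when that \upcas\ occurs. Everything else reduces to bookkeeping with the event order ``$I_i$ ends before $S_i$ begins''.
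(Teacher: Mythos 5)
Your decomposition matches the paper's: from Definition~\ref{defn-blame-scx} you get that $r$ is in both $V$ sequences, hence (via Observation~\ref{obs-op-invariants}) there are invocations of \llt$(r)$ linked to $S_1$ and to $S_2$, each of which terminates before the corresponding \sct\ begins, so it remains only to see that $r$ is initiated before each linked \llt. At that point, however, the paper simply cites the (implicit) precondition of \llt: the caller may only invoke \llt$(r)$ on a \rec\ that has already been initiated. This is an obligation on the \emph{user} of the primitives, discharged separately for each application (e.g.\ in Lemma~\ref{lem-multiset-constraints-invariants}.\ref{claim-multiset-llt-sct-preconditions} for the multiset), not something to be proved inside the primitives' progress argument. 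So the paper's proof ends in one line where yours is just getting started.

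The substitute argument you propose for that last step is where the trouble lies. You derive initiation from the reachability induction behind the third bullet of Proposition~\ref{searches-work} (equivalently Lemma~\ref{lem-if-rec-traversed-then-rec-in-data-structure}), but that proposition is stated and proved only under the hypothesis that each linearized \sct$(V,R,fld,new)$ removes precisely the \rec s in $R$ --- Constraint~\ref{con-mark-all-removed-recs}, an \emph{additional} constraint introduced in the ``Additional properties'' section, not one of the constraints of Section~\ref{constraints} that the progress guarantees assume. Lemma~\ref{lem-if-blame-for-r-then-r-already-initiated} feeds into Lemma~\ref{lem-if-no-succ-sct-after-time-t-then} and ultimately the non-blocking property P4, which must hold for any correct use of the primitives; your proof would establish it only for data structures satisfying the extra finalization constraint. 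You correctly identify the ``main obstacle'' (showing $r'$ is still reachable when the \upcas\ installs the pointer to $r$), but the only way you offer to clear it imports a hypothesis that is not in force here. The fix is simply to observe that initiation before the linked \llt\ is a precondition of \llt\ (the same one invoked in Lemma~\ref{lem-rec-in-data-structure-just-before-llt}), and drop the reachability machinery entirely.
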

\begin{proof}
Let $U_1$ and $U_2$ be the \op s created by $S_1$ and $S_2$, respectively.
By Definition~\ref{defn-blame-scx}, $r$ is in both $U.V$ and $U'.V$.
By Observation~\ref{obs-op-invariants}.\ref{inv-llresults}, there are invocations of \llt$(r)$ linked to $S_1$ and $S_2$, respectively.
By the precondition of \llt, $r$ must be initiated before the \llt$(r)$ linked to $S_1$, and before the \llt$(r)$ linked to $S_2$.
Finally, the \llt$(r)$ linked to $S_1$ must terminate before $S_1$ begins, and the \llt$(r)$ linked to $S_2$ must terminate before $S_2$ begins.
\end{proof}

\after{Instead of defining $\sigma$, define its complement, since we mostly
talk about things {\it not} in $\sigma$}

\begin{lem} \label{lem-if-no-succ-sct-after-time-t-then}
If no \sct\ is linearized after some time $t$, then the following hold.
\begin{enumerate}
\item A finite number $N$ 
of \rec s  are ever initiated in the execution.%
\label{claim-if-no-succ-sct-after-time-t-then-finite-number-of-initiated-rec}
\item Let $\sigma$ be the set of invocations of \sct \ in the execution whose vulnerable intervals start at or before $t$.  The longest path in the blame graph consisting entirely of invocations of \llt, \sct, and \vlt \ that are not in $\sigma$ has length at most $N+2$
\label{claim-if-no-succ-sct-after-time-t-then-no-path-longer-than}
\end{enumerate}
\end{lem}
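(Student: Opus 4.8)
The plan is to prove the two claims in order, the first supplying the constant $N$ used in the second.

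For Claim~\ref{claim-if-no-succ-sct-after-time-t-then-finite-number-of-initiated-rec}, I would argue that a \rec\ can become initiated only through a successful \upcas\ (Observation~\ref{obs-only-upcas-can-initiate}), so it suffices to bound the number of successful \upcas\ steps in the execution. By Lemma~\ref{lem-first-upcas-succ} and Lemma~\ref{lem-only-one-succ-upcas}, every successful \upcas\ is the first and unique successful \upcas\ belonging to its \op, and by the linearization rule of Section~\ref{sec-corr-lin} it is exactly the linearization point of the invocation of \sct\ that created that \op; distinct successful \upcas\ steps belong to distinct \op s, hence to distinct \sct s (cf. Lemma~\ref{lem-one-to-one-correspondence-upcas-and-sct}). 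Since no \sct\ is linearized after $t$, every successful \upcas\ occurs at or before $t$, and only finitely many steps occur up to any configuration, so there are finitely many of them. Adding the finitely many \rec s reachable from an entry point in the initial configuration bounds the number of ever-initiated \rec s by a finite constant $N$.

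For Claim~\ref{claim-if-no-succ-sct-after-time-t-then-no-path-longer-than}, the same reasoning shows no successful \upcas\ occurs after $t$, so by Observation~\ref{obs-only-upcas-modifies-records} and Observation~\ref{obs-immutable-fields-do-not-change} no user-defined field of any \rec\ changes after $t$; hence the configuration $C$ at time $t$ satisfies the hypothesis of Constraint~\ref{con-partial-order}, which supplies a total order $\prec$ on all \rec s. Now take any path $I_0 \to I_1 \to \cdots \to I_m$ in the blame graph all of whose nodes avoid $\sigma$. Since only \sct\ nodes have incoming edges, $I_1, \dots, I_m$ are all invocations of \sct, so each edge $I_j \to I_{j+1}$ with $1 \le j \le m-1$ is a blame of one \sct\ by another and is therefore ``for'' some \rec\ $r_{j+1}$ (Definition~\ref{defn-blame-scx}). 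Because $I_j \notin \sigma$, its linked \llt s begin after $t=C$, so Lemma~\ref{lem-if-blame-chain-then-recs-ordered} applies to each consecutive triple $(I_j, I_{j+1}, I_{j+2})$ of \sct\ nodes and yields $r_2 \prec r_3 \prec \cdots \prec r_m$ (with an extra initial term $r_1 \prec r_2$ when $I_0$ too is an \sct). The $\prec$-chain forces these $m-1$ \rec s to be pairwise distinct, and by Lemma~\ref{lem-if-blame-for-r-then-r-already-initiated} each of them is initiated, so $m-1 \le N$; hence the path has at most $N+2$ nodes (equivalently, length at most $N+1$), as claimed.

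The main obstacle is identifying $C$ correctly. One is tempted to want ``no field of any \rec\ changes after $C$'' to cover the $\info$ and $marked$ fields as well, but these need not stabilize: aborting \sct s can keep performing successful \fcas\ steps after $t$, so $\info$ fields may change forever. The correct reading is that Constraint~\ref{con-partial-order} concerns the user-defined fields, and the substance of the argument is precisely that ``no \sct\ linearized after $t$'' pins down the last change to every user-defined field, via the \upcas/linearization correspondence established for Claim~\ref{claim-if-no-succ-sct-after-time-t-then-finite-number-of-initiated-rec}. Once $C$ is in place, the remainder is the routine blame-graph counting above, resting on the fact that every node of a blame-graph path except possibly the first is an \sct, together with Lemma~\ref{lem-if-blame-chain-then-recs-ordered} and Lemma~\ref{lem-if-blame-for-r-then-r-already-initiated}.
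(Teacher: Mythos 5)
Your proof is correct and follows essentially the same route as the paper's: Claim 1 via the correspondence between successful \upcas\ steps and linearized \sct s, and Claim 2 by extracting from any $\sigma$-avoiding blame path a strictly $\prec$-increasing chain of initiated \rec s (via Lemma~\ref{lem-if-blame-chain-then-recs-ordered} and Lemma~\ref{lem-if-blame-for-r-then-r-already-initiated}) and counting. The only differences are presentational: you argue directly where the paper argues by contradiction, and you make explicit the identification of the stable configuration $C$ with time $t$ (and the reading of Constraint~\ref{con-partial-order} as concerning the user-defined fields), a point the paper leaves implicit.
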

\begin{proof}
Claim~\ref{claim-if-no-succ-sct-after-time-t-then-finite-number-of-initiated-rec} follows immediately from Observation~\ref{obs-only-upcas-can-initiate} and Lemma~\ref{lem-one-to-one-correspondence-upcas-and-sct}.

We now prove claim~\ref{claim-if-no-succ-sct-after-time-t-then-no-path-longer-than}.
Suppose, in order to derive a contradiction, that there is a path of length at least $N +3$ in the blame graph consisting entirely of invocations of \llt, \sct, and \vlt \ that are not in $\sigma$.
Since only invocations of \sct \ can be blamed, at least $N+2 $ of the nodes on this path must correspond to invocations of \sct.
Let $ S_1, S_2,..., S_{N+2}$ be invocations of \sct \ corresponding to any $N+2 $ consecutive nodes on this path, and let $U_1, U_2, ..., U_{N+2}$ be the \op s they created, respectively.
For each $ i \in\{ 1, 2,...,N+1\} $, let $r_i$ be the \rec \ for which $S_i$ blames $S_{i+1}$.
Since no invocation of \sct \ is linearized after $t$, and the vulnerable sections of $ S_1, S_2,..., S_{N+2}$ all start after $t$, no invocation of \sct\ is linearized after the first $\llt(r)$ linked to any of these invocations of \sct.
Therefore, from Lemma~\ref{lem-if-blame-chain-then-recs-ordered} and the fact that, for each $i \in \{1, 2, ..., N\}$, $S_i$ blames $S_{i+1}$ for $r_i$ and $S_{i+1}$ blames $S_{i+2}$ for $r_{i+1}$, we obtain $r_i \prec r_{i+1}$.
%
By Lemma~\ref{lem-if-blame-for-r-then-r-already-initiated}, before any invocation of \sct \ in \{$ S_1, S_2,..., S_{N+2} $\} begins, $r_1, r_2, ..., r_{N+1}$ have all been initiated.
Therefore, some \rec \ $r$ appears twice in $\{r_1, r_2, ..., r_{N+1}\}$.
Since the $\prec$ relation is transitive, we obtain $r \prec r$, which is a contradiction.
\end{proof}

We now prove the main progress property for \sct.

\begin{lem} \label{lem-sct-progress}
If invocations of \sct \ complete infinitely often, then invocations of \sct \ succeed infinitely often.
\end{lem}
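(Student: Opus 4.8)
The plan is to proceed by contradiction, assuming that only finitely many invocations of \sct\ succeed. Since every successful \sct\ is linearized (Lemma~\ref{lem-if-fass-belongs-to-op-then}) and, because each process runs one operation at a time, there is at most one non-terminating \sct\ per process, only finitely many \sct s are ever linearized, each at a finite step; so there is a time $t$ after which no \sct\ is linearized. A record field is changed only by a successful \upcas\ (Observation~\ref{obs-only-upcas-modifies-records}), which is the linearization point of an \sct, so no record field changes after $t$. Thus the configuration at time $t$ is an admissible choice of the configuration $C$ in Constraint~\ref{con-partial-order}, giving a total order $\prec$ on records consistent with the $V$-sequences of all \sct s whose linked \llt s begin after $t$; in particular Lemmas~\ref{lem-if-no-succ-sct-after-time-t-then} and~\ref{lem-if-blame-chain-then-recs-ordered} become available.

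Next I would set up the counting. Let $\sigma$ and $N$ be as in Lemma~\ref{lem-if-no-succ-sct-after-time-t-then}, so $N$ (the number of records ever initiated) is finite, every \sct\ not in $\sigma$ has all of its linked \llt s beginning after $t$, and no blame path consisting entirely of non-$\sigma$ invocations of \llt/\sct/\vlt\ has length exceeding $N+2$. The set $\sigma$ is itself finite: an \sct\ lies in $\sigma$ only if one of its (finitely many) linked \llt s --- themselves operations --- starts at or before step $t$, and only finitely many operations start by step $t$. Since \sct s complete infinitely often while only finitely many succeed, infinitely many \sct s return \false, hence infinitely many \sct s not in $\sigma$ return \false; call this infinite set $X$.

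The heart of the proof is a K\"onig-type argument on the blame graph (Definition~\ref{defn-blame-graph}). By Lemma~\ref{lem-if-abort-then-blame-different-scx} every element of $X$ blames exactly one \sct, so ``blames'' restricted to failing \sct s is a (partial) function; by Lemma~\ref{lem-if-blame-chain-then-recs-ordered} and transitivity of $\prec$ it has no cycles among $X$-nodes, so the subgraph induced on $X$ is a forest (edges pointing child-to-parent, parent $=$ blamee) in which every node has finite in-degree, by Lemmas~\ref{lem-sct-only-blamed-by-one-llt-per-process} and~\ref{lem-sct-only-blamed-by-v-scts-or-vlts-per-process} together with finitely many processes. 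Any \sct\ blamed by a member of $X$ that does not itself lie in $X$ is successful, non-terminating, or in $\sigma$ --- finitely many possibilities, each blamed only finitely often --- so only finitely many blame edges leave $X$, and hence the forest on $X$ has only finitely many roots. If all its trees were finite, $X$ would be finite, a contradiction; so some tree is infinite, and by K\"onig's lemma (finite branching) it has an infinite root-to-descendant path $W_0, W_1, W_2, \dots$ with $W_i$ blaming $W_{i-1}$, all lying in $X$. Then $W_{N+3} \to W_{N+2} \to \cdots \to W_0$ is a blame path of length greater than $N+2$ consisting entirely of non-$\sigma$ \sct s, contradicting Lemma~\ref{lem-if-no-succ-sct-after-time-t-then}. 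This contradiction shows that \sct s must succeed infinitely often.

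The step I expect to be the main obstacle is precisely this last combinatorial argument: organizing the blame graph as a finite-branching, acyclic forest on $X$, controlling the finitely many edges that leave $X$, and extracting an arbitrarily long blame path among non-$\sigma$ \sct s so that the path-length bound of Lemma~\ref{lem-if-no-succ-sct-after-time-t-then} can be invoked. The remaining ingredients --- existence of $t$, constancy of records after $t$, finiteness of $\sigma$ and of $N$, and the per-process blame bounds --- follow routinely from the lemmas already established.
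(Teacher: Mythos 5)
Your proposal is correct and follows essentially the same route as the paper: both argue by contradiction, fix a time $t$ after which no \sct\ is linearized, take $\sigma$ to be the (finite) set of \sct s whose vulnerable intervals start by $t$, and combine the bounded in-degree of the blame graph (Lemmas~\ref{lem-sct-only-blamed-by-one-llt-per-process} and~\ref{lem-sct-only-blamed-by-v-scts-or-vlts-per-process}) with the path-length bound of Lemma~\ref{lem-if-no-succ-sct-after-time-t-then} to bound the number of failing \sct s. The only difference is one of packaging: where you extract a too-long blame path via K\"onig's lemma on the finitely-branching forest of failing non-$\sigma$ \sct s, the paper counts the finitely many, boundedly long maximal blame paths that terminate by blaming into $\sigma$ --- two equivalent ways of expressing the same finiteness argument.
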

\begin{proof}
Suppose, to derive a contradiction, that after some time $t'$, invocations of \sct \ are performed infinitely often, but no invocation of \sct \ is successful.
Then, since we only linearize successful \sct s, and a subset of the non-terminating \sct s, there is a time $t \ge t'$ after which no invocation of \sct\ is linearized.
Let $\sigma$ be the set of invocations of \sct \ in the execution whose vulnerable intervals start at or before $t$. 
By Lemma~\ref{lem-sct-only-blamed-by-one-llt-per-process} and Lemma~\ref{lem-sct-only-blamed-by-v-scts-or-vlts-per-process},  the in-degree of each node in the blame graph is bounded.
Since $\sigma$ is finite, and the in-degree of each node in $\sigma$ is bounded, only a finite number of invocations of \sct \ can blame invocations in $\sigma$.
Now, consider any maximal path $\pi$ consisting entirely of invocations of \llt, \sct, and \vlt \ that are \textit{not} in $\sigma$.
By Lemma~\ref{lem-if-no-succ-sct-after-time-t-then}.\ref{claim-if-no-succ-sct-after-time-t-then-no-path-longer-than}, $\pi$ has length at most $N+3$.
Since no invocation of \sct \ is successful after $t$, the invocation $S$ of \sct \ corresponding to the last node on path $\pi$ must be unsuccessful.
By Lemma~\ref{lem-if-abort-then-blame-different-scx}, $S$ must blame some other invocation of \sct.
Since $\pi$ is maximal, $S$ must blame an invocation of \sct \ in $\sigma$.
Thus, there can be only finitely many of these paths (of bounded length).
However, this contradicts our assumption that invocations of \sct \ occur infinitely often.
\end{proof}

Unfortunately, since \sct\ cannot be invoked unless a sequence of invocations of \llt\ (linked to this \sct) return values different from \fail\ or \finalized, the previous result is not strong enough unless we can guarantee that processes can invoke \sct\ infinitely often.
To address this, we define \textit{setting up} an invocation of \sct.

\begin{defn} \label{defn-set-up-sct}
A process $p$ \textbf{sets up} an invocation of \sct$(V, R, fld, new)$ by invoking \llt$(r)$ for each $r$ in $V$, and then invoking \sct$(V, R, fld, new)$ if none of these \llt s return \fail\ or \finalized.
\end{defn}

\begin{thm} \label{thm-llt-sct-vlt-progress}
Our implementation of \llt/\sct/\vlt\ satisfies the following progress properties.
\begin{enumerate}
\item If operations (\llt, \sct, \vlt) are performed infinitely often, then operations succeed infinitely often.
\label{claim-progress-if-operations-io-then-succ-io}
\item If invocations of \sct \ are set up infinitely often, then invocations of \sct \ succeed infinitely often.
\label{claim-progress-if-set-up-sct-io-then-succ-io}
\item If there is always some \rec\ reachable by following pointers from an entry point that is not finalized, then invocations of \sct \ can be set up infinitely often.
\label{claim-progress-if-not-finalized-then-can-set-up-sct-io}
\end{enumerate}
\end{thm}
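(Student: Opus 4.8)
The plan is to deduce the three progress properties from the wait-freedom of the primitives (Lemma~\ref{lem-wait-free}) and from Lemma~\ref{lem-sct-progress} (invocations of \sct\ that complete infinitely often succeed infinitely often), organizing everything around a single case distinction: either only finitely many invocations of \sct\ occur in the execution, or infinitely many do.

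For claim~\ref{claim-progress-if-operations-io-then-succ-io}, I would first suppose that only finitely many \sct s are invoked. Then only finitely many \op s are ever created (one per \sct, at line~\ref{sct-create-op}), and a process calls \help\ on a given \op\ at most once: it does so only upon seeing that \op's $state = $ \freezing, and once its \help\ call returns the $state$ is no longer \freezing\ (by Lemma~\ref{lem-no-return-true-in-loop-until-uass} and inspection of \help, every return of \help\ is preceded by a corresponding \astep\ or \cstep). Since every \cas\ is performed inside \help, there is a configuration $C$ after which no \cas\ occurs; after $C$ no \info\ field changes (they change only via a \fcas\ inside \help), so every \llt\ and \vlt\ that begins after $C$ passes all of its rereads and is therefore successful. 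As operations are performed infinitely often and all but finitely many are \llt s or \vlt s, infinitely many of these begin after $C$ and complete, so infinitely many operations succeed. If instead infinitely many \sct s are invoked, then, since \sct\ is wait-free and there are finitely many processes (so only finitely many \sct s can be permanently stalled by crashes), infinitely many \sct s complete, and Lemma~\ref{lem-sct-progress} gives infinitely many successful \sct s.

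Claim~\ref{claim-progress-if-set-up-sct-io-then-succ-io} should follow from exactly this machinery: by Definition~\ref{defn-set-up-sct}, a process that finishes setting up an \sct\ then invokes it (its linked \llt s having returned values different from \fail\ and \finalized), so the hypothesis yields infinitely many \sct\ invocations; wait-freedom together with finitely many processes again gives infinitely many completed \sct s, and Lemma~\ref{lem-sct-progress} finishes.

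For claim~\ref{claim-progress-if-not-finalized-then-can-set-up-sct-io}, my plan is to exhibit a way for a process $p$ to keep reaching the point of invoking an \sct: have $p$ repeatedly traverse the data structure from an entry point (never finalized) following \rec\ pointers. By Proposition~\ref{searches-work}, each \rec\ it reaches was in the data structure earlier during the traversal, and by the hypothesis the data structure always contains a non-finalized \rec, so $p$ can reach some \rec\ $r$ and perform $\llt(r)$: if it returns a snapshot, $p$ sets up and invokes the singleton \sct$(\langle r\rangle,\langle\rangle,fld,new)$ with a freshly allocated $new$ (which satisfies the ABA constraint) and repeats; if it returns \fail, then by Lemma~\ref{lem-llx-can-only-fail-if-concurrent-scx} it overlaps the threatening section of an \sct\ with $r$ in its $V$-sequence, which was itself set up; and if it returns \finalized, then by Lemma~\ref{lem-lin-llt-finalized} a (necessarily set-up) \sct\ that finalizes $r$ has been linearized. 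In every case an \sct\ was, or is being, set up, so \sct s are set up infinitely often. The hard part will be precisely this last claim: giving operational meaning to ``\sct s \emph{can} be set up infinitely often'' (a statement about the existence of a continuation of the execution rather than about every continuation) and verifying rigorously that the traversal argument, together with Proposition~\ref{searches-work} and the hypothesis, really does deliver a \rec\ on which $\llt$ does not loop forever returning \fail\ or \finalized; claims~\ref{claim-progress-if-operations-io-then-succ-io} and~\ref{claim-progress-if-set-up-sct-io-then-succ-io} are routine bookkeeping on top of Lemmas~\ref{lem-wait-free} and~\ref{lem-sct-progress}.
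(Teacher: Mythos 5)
The central problem is in your treatment of claim~\ref{claim-progress-if-set-up-sct-io-then-succ-io}. You write that a process that finishes setting up an \sct\ then invokes it, so that the hypothesis yields infinitely many \sct\ invocations. That is not what Definition~\ref{defn-set-up-sct} says: setting up consists of performing the \llt s and then invoking the \sct\ \emph{only if} none of those \llt s returned \fail\ or \finalized. A set-up that is derailed by a \fail\ or \finalized\ return still counts as a set-up but produces no \sct\ invocation, so ``set up infinitely often'' does not reduce to ``invoked infinitely often.'' The scenario in which every attempted set-up is derailed forever is exactly what this claim exists to rule out, and your proposal never engages with it. The paper's proof does: assuming no \sct\ is invoked after some time, it uses the blame-graph machinery (Lemmas~\ref{lem-if-llt-fail-then-blame-sct}, \ref{lem-if-vlt-false-then-blame-sct}, \ref{lem-sct-only-blamed-by-one-llt-per-process} and~\ref{lem-sct-only-blamed-by-v-scts-or-vlts-per-process}) to show that only finitely many \llt s can return \fail, so eventually every \llt\ returns a snapshot or \finalized; a set-up can then be derailed only by a \finalized\ return, and since a process does not re-invoke \llt\ on a \rec\ it has already seen to be finalized and only finitely many \rec s are ever initiated in this case (Lemma~\ref{lem-if-no-succ-sct-after-time-t-then}), each process can be derailed only finitely often, contradicting the hypothesis. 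You would need to supply this argument, or an equivalent one, for claim~\ref{claim-progress-if-set-up-sct-io-then-succ-io} to go through.

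Your handling of claim~\ref{claim-progress-if-operations-io-then-succ-io} is essentially sound but follows the informal sketch in Section~\ref{sec-impl} rather than the appendix proof: in the case of finitely many \sct s you argue that eventually no \cas\ occurs and hence all later \llt s and \vlt s succeed, whereas the paper instead bounds the number of failures directly via the bounded in-degree of the blame graph. Either route works; if you keep yours, note that the quiescent configuration $C$ must also be chosen late enough that no process subsequently encounters an \op\ whose $state$ is \freezing\ (otherwise that \llt\ would call \help\ and perform further \cas\ steps), and that a \vlt\ beginning after $C$ can still fail if its linked \llt s preceded $C$ --- it is the \llt s lying wholly after $C$ that carry the conclusion. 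Your bridge from ``invoked infinitely often'' to ``completed infinitely often'' via per-process sequentiality is correct and is a detail the paper glosses over when it applies Lemma~\ref{lem-sct-progress}. For claim~\ref{claim-progress-if-not-finalized-then-can-set-up-sct-io} the paper simply declares the statement obvious --- a set-up is just a sequence of \llt\ invocations, which a process can always perform on a reachable, non-finalized \rec\ such as an entry point --- so your traversal construction is more elaborate than necessary, though not wrong.
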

\begin{proof}
Claim~\ref{claim-progress-if-not-finalized-then-can-set-up-sct-io} is obvious.
The first two claims have similar proofs, by cases.

\textbf{Proof of Claim~\ref{claim-progress-if-operations-io-then-succ-io}.}
Suppose operations are performed infinitely often.

\textbf{Case I:} invocations of \sct \ are performed infinitely often.
In this case, Lemma~\ref{lem-sct-progress} implies that invocations of \sct \ will succeed infinitely often, and the claim is proved.

\textbf{Case II:} after some time $t$, no invocation of \sct \ is performed.
In this case, the blame graph contains a finite number of invocations of \sct.
By Lemma~\ref{lem-sct-only-blamed-by-one-llt-per-process} and Lemma~\ref{lem-sct-only-blamed-by-v-scts-or-vlts-per-process},  the in-degree of each node in the blame graph is bounded.
By Lemma~\ref{lem-if-llt-fail-then-blame-sct} and Lemma~\ref{lem-if-vlt-false-then-blame-sct}, each unsuccessful invocation of \llt \ or \vlt \ blames an invocation of \sct.
Therefore, only finitely many invocations of \llt \ and \vlt \ can be unsuccessful.
Thus, eventually, every invocation of \llt\ or \vlt\ succeeds.

\textbf{Proof of Claim~\ref{claim-progress-if-set-up-sct-io-then-succ-io}.}
Suppose invocations of \sct \ are set up infinitely often.

\textbf{Case I:} invocations of \sct \ are performed infinitely often.
In this case, Lemma~\ref{lem-sct-progress} implies that invocations of \sct \ will succeed infinitely often, and the claim is proved.

\textbf{Case II:} after some time $t$, no invocation of \sct \ is performed.
Suppose, to derive a contradiction, that after some time $t$, \sct \ is never invoked.
Then, as we argued in Case II, above, only finitely many invocations of \llt \ and \vlt \ can be unsuccessful.
This implies that, after some time $t'$, every invocation of \llt \ is successful.
If a process $p$ begins setting up an invocation of \sct \ after $t'$, then all of its invocations of \llt \ will be successful, and the only way that $p$ will not invoke \sct \ is if some invocation of \llt$(r)$ by $p$ returns \finalized.
By Definition~\ref{defn-set-up-sct}, $p$ will not invoke \llt$(r)$ next time it sets up an invocation of \sct.
By Lemma~\ref{lem-if-no-succ-sct-after-time-t-then}, there are a finite number $N$ of \rec s that are ever initiated in the execution.
Therefore, eventually, $p$ will have performed an invocation of \llt$(r')$ that returned \finalized\ for every \rec\ $r'$ that is ever initiated in the execution.
After this, $p$ will no longer be able to set up invocations of \sct.
This contradicts our assumption that invocations of \sct \ are set up infinitely often.
\end{proof}

\newpage
\section{Additional properties of \llt/\sct/\vlt} \label{sec-properties}

In this section we prove some additional properties of \llt/\sct/\vlt\ that are intended to simplify the design of certain data structures.
At this level, a \textit{configuration} consists of the state of each process, and a collection of \rec s (which have only mutable and immutable fields).
A \textit{step} is either a \func{Read}, or a linearized invocation of \llt, \sct\ or \vlt.

\begin{defn} \label{defn-rec-in-added-removed}
A \rec \ $r$ is \textbf{in the data structure} in some configuration $C$ if and only if $r$ is reachable by following pointers from an entry point.
We say a \rec\ $r$ is \textbf{removed (from the data structure) by} some step $s$ if and only if $r$ is in the data structure immediately before $s$, and $r$ is not in the data structure immediately after $s$.
We say a \rec \ $r$ is \textbf{added (to the data structure) by} some step $s$ if and only if $r$ is not in the data structure immediately before $s$, and $r$ is in the data structure immediately after $s$.
\end{defn}

Note that a \rec \ can be removed from or added to the data structure only by a linearized invocation of \sct.

If the following constraint is satisfied, then the results of this section apply.

\begin{con} \label{con-mark-all-removed-recs}
For each linearized invocation $S$ of \sct$(V, R, fld, new)$, $R$ contains precisely the \rec s that are removed from the data structure by $S$.
\end{con}

\begin{lem} \label{lem-if-initiated-rec-not-in-data-structure-then-does-not-change}
If a \rec \ $r$ is removed from the data structure for the first time by step $s$, then no linearized invocation of \sct$(V, R, fld, new)$, where $fld$ is a mutable field of $r$, occurs at or after $s$.
(Hence, $r$ does not change at or after $s$.)
\end{lem}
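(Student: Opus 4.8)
The plan is to derive the result from Constraint~\ref{con-mark-all-removed-recs} together with the linearizability facts already established, chiefly Lemma~\ref{lem-lin-sct-vlt} and Lemma~\ref{lem-lin-llt-finalized-if-after-sct}. First I would note that, since a \rec\ can be removed from the data structure only by a linearized \sct, the step $s$ that first removes $r$ is a linearized invocation $S_0$ of \sct$(V_0,R_0,fld_0,new_0)$, and by Constraint~\ref{con-mark-all-removed-recs} we have $r\in R_0$. A preliminary observation I would record is that $fld_0$ is a mutable field of some \rec\ \emph{other than} $r$: the linearization point of $S_0$ changes only the field $fld_0$, and changing a field of $r$ itself can only alter which \rec s $r$ points to, not whether $r$ is reachable from an entry point (a shortest path from an entry point to $r$ never uses an outgoing pointer of $r$); since $r$ is in the data structure immediately before $s$ and not immediately after, $fld_0$ cannot be a field of $r$. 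In particular, $S_0$ is itself not an \sct\ of the form the lemma forbids.

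Next I would argue by contradiction. Suppose some linearized invocation $S_1$ of \sct$(V_1,R_1,fld_1,new_1)$, with $fld_1$ a mutable field of $r$, is linearized at or after $s$. By Observation~\ref{obs-op-invariants}.\ref{inv-fld} we get $r\in V_1$, and by the preliminary observation $S_1\neq S_0$. Let $I$ be the \llt$(r)$ linked to $S_1$, which exists by the precondition of \sct; by Definition~\ref{defn-llt-linked-to-sct}.\ref{prop-returns-value-different-from-fail-or-finalized}, $I$ returns neither \fail\ nor \finalized. Applying Lemma~\ref{lem-lin-sct-vlt} to $S_1$ (with $r\in V_1$), no \sct\ whose $V$-sequence contains $r$ is linearized strictly between $I$ and $S_1$; since $S_0$ is such an \sct\ (as $r\in R_0\subseteq V_0$), it is linearized either at or before the linearization point of $I$, or at or after that of $S_1$. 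The second alternative, together with the hypothesis that $S_1$ is linearized at or after $S_0$, would force $S_0$ and $S_1$ to be linearized at the same step; but that step is a single successful \upcas, which belongs to exactly one \op, so $S_0=S_1$, a contradiction. Hence $S_0$ is linearized before the linearization point of $I$ — strictly, since an \upcas\ step and the read step at line~\ref{ll-reread} cannot coincide. Then $I$ is a terminating \llt$(r)$ linearized after the linearized \sct\ $S_0$ with $r\in R_0$, so by Lemma~\ref{lem-lin-llt-finalized-if-after-sct} it returns \finalized, contradicting that $I$ is linked to $S_1$. This rules out $S_1$.

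For the parenthetical claim that $r$ does not change at or after $s$, I would observe that a mutable field of $r$ changes only via a successful \upcas\ (Observation~\ref{obs-only-upcas-modifies-records}), every successful \upcas\ is the first \upcas\ belonging to the \op\ of some linearized \sct\ (by the linearization rule for \sct\ and the fact that only the first \upcas\ of an \op\ can succeed), and the $fld$ argument of that \sct\ is precisely the field being modified; the argument above then shows no such \sct\ (with that field lying in $r$) is linearized at or after $s$, while immutable fields never change (Observation~\ref{obs-immutable-fields-do-not-change}).

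The only real obstacle I anticipate is the careful handling of linearization points: checking that $s=S_0$ is not itself one of the forbidden invocations — which is exactly what the reachability observation in the first step provides — and excluding the degenerate possibility $S_0=S_1$. Once those are pinned down, the remainder is a direct appeal to Lemma~\ref{lem-lin-sct-vlt} and Lemma~\ref{lem-lin-llt-finalized-if-after-sct}, with no further case analysis needed.
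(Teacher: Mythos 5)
Your proof is correct and follows essentially the same route as the paper's: obtain $r\in R_0$ from Constraint~\ref{con-mark-all-removed-recs}, then derive a contradiction from the \llt$(r)$ linked to the offending \sct, splitting on whether $S_0$ is linearized before that \llt\ (forcing it to return \finalized, by Lemma~\ref{lem-lin-llt-finalized-if-after-sct}) or falls between the \llt\ and $S_1$ (contradicting Lemma~\ref{lem-lin-sct-vlt}). You are in fact slightly more careful than the paper, which asserts without argument that the removing \sct\ modifies a field of a \rec\ different from $r$ and does not explicitly exclude the degenerate case $S_0=S_1$.
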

\begin{proof}
The only step that can change $r$ is a linearized invocation of \sct.
The invocation $S'$ of \sct$(V',$ $R',$ $fld',$ $new')$ that removes $r$ modifies a mutable field of some \rec\ different from $r$.
Thus, $fld'$ is not a field of $r$.
Since this is the only change to the data structure when $s$ occurs, $r$ does not change when $s$ occurs.
Suppose, to derive a contradiction, that an invocation $S$ of \sct$(V, R, fld, new)$, where $fld$ is a mutable field of $r$, occurs after $s$.
Then, since $r$ is in $V$, the precondition of \sct\ implies that an invocation $I$ of $\llt(r)$ linked to $S$ must occur before $S$.
By Constraint~\ref{con-mark-all-removed-recs}, $r$ is in $R'$.
Thus, if $I$ occurs after $S'$, then it returns \finalized, which contradicts Definition~\ref{defn-llt-linked-to-sct}.
Otherwise, $S'$ occurs between $I$ and $S$, so $S$ cannot be linearized, which contradicts our assumption.
\end{proof}

\begin{lem} \label{lem-rec-in-data-structure-just-before-llt}
If an invocation $I$ of $\llt(r)$ returns a value different from \fail\ or \finalized, then $r$ is in the data structure just before $I$ is linearized.
\end{lem}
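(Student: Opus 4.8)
The plan is to argue by contradiction, pinning down the linearization point of $I$ and then using Theorem~\ref{thm-llt-sct-vlt-correct}'s characterization of when a linearized \llt\ returns \finalized, together with Constraint~\ref{con-mark-all-removed-recs}.

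First I would fix the linearization point of $I$. Since $I$ returns a value different from \fail\ or \finalized, the linearization rules (see Theorem~\ref{thm-llt-sct-vlt-correct} and the code of \llt) place the linearization of $I$ at line~\ref{ll-reread}; call this step/time $t$. The goal is then to show that $r$ is in the data structure in the configuration immediately preceding $t$. Next I would establish the easy half-claim that $r$ is in the data structure at \emph{some} time before $t$: the process performing $I$ can only have obtained a pointer to $r$ by following pointers from an entry point (or $r$ is itself an entry point, which is always in the data structure), so $r$ is initiated before $I$ begins, hence before $t$, and therefore by Definition~\ref{defn-rec-initiated} $r$ was reachable --- i.e., in the data structure --- at some time before $t$. (Alternatively this is immediate from Proposition~\ref{searches-work}.)

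The contradiction is the core step. Suppose $r$ is not in the data structure just before $t$. Combined with the half-claim above, there must be a step $s$ occurring before $t$ that removes $r$ from the data structure. As noted right after Definition~\ref{defn-rec-in-added-removed}, only a linearized invocation of \sct\ can remove a \rec, so $s$ is a linearized invocation of \sct$(V',R',fld',new')$; and by Constraint~\ref{con-mark-all-removed-recs}, $R'$ consists exactly of the \rec s removed by $s$, so $r$ is in $R'$. Since $s$ is linearized strictly before $t$, the invocation $I$ --- which is linearized at $t$ --- is linearized after an \sct$(V',R',fld',new')$ with $r$ in $R'$. The "if" direction of Theorem~\ref{thm-llt-sct-vlt-correct}, claim~3, then forces $I$ to return \finalized, contradicting the hypothesis on $I$. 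Hence $r$ is in the data structure just before $t$.

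I do not expect a genuinely hard obstacle here; the content is really (a) correctly placing $I$'s linearization point at line~\ref{ll-reread} and (b) the bookkeeping translation "$r$ removed from the data structure" $\Rightarrow$ "$I$ is linearized after an \sct\ with $r\in R$", which is exactly what Constraint~\ref{con-mark-all-removed-recs} plus the remark after Definition~\ref{defn-rec-in-added-removed} provide. The one place requiring a little care is the timing argument: we must know the removal step $s$ is strictly before the linearization point $t$ of $I$, which holds because $r$ is reachable at some earlier moment (from being initiated) yet unreachable in the configuration just before $t$. (Lemma~\ref{lem-if-initiated-rec-not-in-data-structure-then-does-not-change} is available if one wants to reason about the \emph{first} removal of $r$, but it is not strictly needed for this argument.)
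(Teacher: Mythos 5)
Your proposal is correct and follows essentially the same route as the paper: the paper's proof also argues by contradiction that a removal of $r$ before $I$'s linearization point would be a linearized \sct\ with $r\in R$ (by Constraint~\ref{con-mark-all-removed-recs}), forcing $I$ to return \finalized. Your extra bookkeeping (explicitly placing the linearization point at line~\ref{ll-reread} and citing the ``if'' direction of the correctness theorem) just makes explicit what the paper leaves implicit.
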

\begin{proof}
By the precondition of \llt, $r$ is initiated and, hence, in the data structure, at some point before $I$.
Suppose, to derive a contradiction, that $r$ is not in the data structure just before $I$ is linearized.
Then, some linearized invocation of \sct$(V, R, fld, new)$ must remove $r$ before $I$ is linearized.
By Constraint~\ref{con-mark-all-removed-recs}, $r$ is in $R$.
However, this implies that $I$ must return \finalized, which is a contradiction.
\end{proof}

\begin{lem} \label{lem-rec-in-data-structure-after-linearized-sct}
If $S$ is a linearized invocation of \sct$(V, R, fld, new)$, where $new$ is a \rec, then $new$ is in the data structure just after $S$.
\end{lem}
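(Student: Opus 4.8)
The plan is to produce, at the moment just after $S$ is linearized, an explicit witness that $new$ is reachable from an entry point. Let $U$ be the \op\ created by $S$. Since $S$ is linearized, there is an \upcas\ belonging to $U$, and $S$ is linearized at the first such \upcas; let $t$ denote its time. By Lemma~\ref{lem-first-upcas-succ} this \upcas\ succeeds and changes the field pointed to by $U.fld$ from $U.old$ to $U.new$, and by Observation~\ref{obs-all-upcas-write-new} together with Observation~\ref{obs-op-invariants}.\ref{inv-fld} this field is a mutable field $f$ of some \rec\ $r'$ in $V$, with $U.new = new$. Hence, just after $t$, the field $f$ of $r'$ contains (a pointer to) $new$. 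It therefore suffices to prove that $r'$ is in the data structure just after $t$: then extending any path that witnesses the reachability of $r'$ by the edge from $r'$ through $f$ to $new$ shows that $new$ is in the data structure just after $S$ is linearized.

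The core of the proof is to show $r'$ is in the data structure just before $t$. By the precondition of \sct\ there is an \llt$(r')$ linked to $S$, say $I$; by Definition~\ref{defn-llt-linked-to-sct}.\ref{prop-returns-value-different-from-fail-or-finalized}, $I$ returns a value other than \fail\ or \finalized, so $I$ is linearized (at line~\ref{ll-reread}). Let $t_I$ be its linearization point. Since $I$ is linked to $S$ it terminates before $S$ begins, and by Lemma~\ref{lem-lin-points-during-ops} $t$ lies inside $S$, so $t_I < t$. By Lemma~\ref{lem-rec-in-data-structure-just-before-llt}, $r'$ is in the data structure just before $t_I$. Suppose, for contradiction, that $r'$ is not in the data structure just before $t$. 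A \rec\ can leave the data structure only at a linearized \sct, so some linearized \sct\ removes $r'$ at a step before $t$; let $s$ be the \emph{first} step of the execution that removes $r'$, so that $s < t$. By Lemma~\ref{lem-if-initiated-rec-not-in-data-structure-then-does-not-change}, no linearized invocation of \sct\ that modifies a mutable field of $r'$ is linearized at or after $s$. But $S$ is a linearized \sct\ that modifies the field $f$ of $r'$, and it is linearized at $t > s$ --- a contradiction. Hence $r'$ is in the data structure just before $t$.

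Finally, the step occurring at $t$ is $S$'s \upcas, and its only effect on the data structure is to change the pointer stored in the field $f$ of $r'$, i.e., an outgoing pointer of $r'$. This cannot disconnect $r'$ from the entry points, because a path witnessing that $r'$ is in the data structure never needs to follow a pointer out of $r'$; so $r'$ is still in the data structure just after $t$, and the first paragraph completes the argument. The only point requiring care is the middle paragraph: one must invoke Lemma~\ref{lem-if-initiated-rec-not-in-data-structure-then-does-not-change} via the \emph{first} removal of $r'$ (so that its hypothesis on $s$ is met), and separately observe that $S$'s own \upcas, touching only a field of $r'$, cannot itself remove $r'$; everything else is bookkeeping with facts already established.
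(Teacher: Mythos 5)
Your proof is correct and follows essentially the same route as the paper's: establish that the \rec\ $r'$ containing $fld$ was in the data structure earlier (the paper uses the \llt\ precondition and Definition~\ref{defn-rec-initiated}, you use Lemma~\ref{lem-rec-in-data-structure-just-before-llt}), then rule out its removal before or at $S$ via Lemma~\ref{lem-if-initiated-rec-not-in-data-structure-then-does-not-change}, and conclude $new$ is reachable through $r'$. You spell out two steps the paper leaves implicit (that the \upcas\ only alters an outgoing pointer of $r'$ and so cannot disconnect it, and the final reachability of $new$ via $r'$), which is a harmless elaboration rather than a different argument.
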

\begin{proof}
Note that $fld$ is a mutable field of a \rec\ $r$ in $V$.
We first show that $r$ is in the data structure at some point before $S$.
By the precondition of \sct, before $S$, there is an $\llt(r)$ linked to $S$.
By the precondition of \llt, $r$ must be initiated when this linked \llt\ occurs.
Thus, Definition~\ref{defn-rec-initiated} and Definition~\ref{defn-rec-in-added-removed} imply that $r$ is in the data structure at some point before $S$.
Suppose, to derive a contradiction, that $r$ is not in the data structure just after $S$.
Then, $r$ must either be removed by $S$, or by some previous step.
However, this directly contradicts Lemma~\ref{lem-if-initiated-rec-not-in-data-structure-then-does-not-change}.
\end{proof}

Let $C_1$ and $C_2$ be configurations in the execution.
We use $C_1 < C_2$ to mean that $C_1$ precedes $C_2$ in the execution.
We say $C_1 \le C_2$ precisely when $C_1 = C_2$ or $C_1 < C_2$.
We denote by $[C_1, C_2]$ the set of configurations $\{C \mid C_1 \le C \le C_2\}$.

\after{I think the proof of the following lemma could be simplified by first proving
a version of the second part of the claim (for any field f), then taking the special case of a pointer
field to prove the first part.  May require some rewording of the statement, though.}

\begin{lem} \label{lem-if-rec-traversed-then-rec-in-data-structure}
Let $r_1,r_2,...,r_l$ be a sequence of \rec s, where $r_1$ is an entry point, and $C_1,C_2,...,C_{l-1}$ be a sequence of configurations satisfying $C_1 < C_2 < ... < C_{l-1}$.
If, for each $i \in \{1, 2, ..., l-1\}$, a field of $r_i$ points to $r_{i+1}$ in configuration $C_i$, then $r_{i+1}$ is in the data structure in some configuration in $[C_1, C_i]$.
Additionally, if a mutable field $f$ of $r_l$ contains a value $v$ in some configuration $C_l$ after $C_{l-1}$ then, in some configuration in $[C_1, C_l]$, $r_l$ is in the data structure and $f$ contains $v$.
\end{lem}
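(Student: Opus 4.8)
The plan is to prove the first assertion by induction on $i$, and then deduce the additional assertion by applying the case $i = l-1$ together with a short argument about how the field $f$ came to hold $v$.

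For the base case $i = 1$: since $r_1$ is an entry point it is reachable from an entry point (itself) in every configuration, so $r_1$ is in the data structure in $C_1$; as some field of $r_1$ points to $r_2$ in $C_1$, $r_2$ is reachable there and hence in the data structure in $C_1 \in [C_1, C_1]$. For the induction step I would assume $r_i$ is in the data structure in some configuration $C \in [C_1, C_{i-1}]$, and let $f$ be a field of $r_i$ that points to $r_{i+1}$ in $C_i$. I would case on whether $f$ already points to $r_{i+1}$ in $C$. If it does --- which is automatic when $f$ is immutable, by Observation~\ref{obs-immutable-fields-do-not-change} --- then $r_{i+1}$ is reachable from an entry point through $r_i$ in $C$, so $r_{i+1}$ is in the data structure in $C \in [C_1, C_{i-1}] \subseteq [C_1, C_i]$. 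If it does not, then $f$ changed between $C$ and $C_i$; since a field is changed only by a linearized \sct, the last linearized invocation $S$ of \sct\ that writes (a pointer to) $r_{i+1}$ into $f$ before $C_i$ has its linearization point strictly after $C$ and no later than $C_i$, and Lemma~\ref{lem-rec-in-data-structure-after-linearized-sct} (with $new$ a pointer to the \rec\ $r_{i+1}$) shows $r_{i+1}$ is in the data structure in the configuration immediately after $S$, which lies in $[C_1, C_i]$.

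For the additional assertion I would apply the first assertion with $i = l-1$ to place $r_l$ in the data structure in some configuration $C \in [C_1, C_{l-1}]$, and then case on whether $f$ contains $v$ in $C$. If so we are done, since $C \in [C_1, C_{l-1}] \subseteq [C_1, C_l]$ and $r_l$ is in the data structure there. Otherwise $f$ changed between $C$ and $C_l$, so some linearized invocation $S$ of \sct$(V,R,fld,new)$ with $fld$ pointing to $r_l$'s field $f$ and $new = v$ has linearization point after $C$ and no later than $C_l$; pick the last such before $C_l$. Because $S$ modifies $r_l$, Lemma~\ref{lem-if-initiated-rec-not-in-data-structure-then-does-not-change} (using Constraint~\ref{con-mark-all-removed-recs}) forbids $r_l$ from being removed from the data structure before $S$; since $r_l$ is in the data structure in $C$ (before $S$), it is therefore in the data structure in every configuration from $C$ up to $S$, and in particular just before $S$. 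As $S$ modifies $r_l$ it does not remove it, so $r_l$ is still in the data structure in the configuration immediately after $S$, where $f$ now holds $v$; this configuration is in $[C_1, C_l]$.

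I expect the main obstacle to be the bookkeeping in the two ``the field has changed'' cases: verifying that the configuration immediately after the responsible \sct\ really lies inside $[C_1, C_i]$ (respectively $[C_1, C_l]$), and, for the additional assertion, establishing that $r_l$ stays in the data structure throughout the interval from $C$ to $S$ --- which rests on the fact that once a \rec\ is removed it is finalized and never modified again, so it cannot be the target of $S$. Lemma~\ref{lem-no-aba-on-mutable-fields} (no ABA on mutable fields) is also convenient here to keep the case analysis on the value of $f$ clean, though it is not strictly essential.
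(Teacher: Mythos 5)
Your proof is correct, but your inductive step and your handling of the final field $f$ take a genuinely different route from the paper's. The paper anchors its argument at the last moment $r_i$ (resp.\ $r_l$) is in the data structure: if $r_i$ is absent at $C_i$, it rewinds to the configuration just before the \emph{first removal} of $r_i$ and uses Lemma~\ref{lem-if-initiated-rec-not-in-data-structure-then-does-not-change} to argue that $r_i$'s fields were already in their $C_i$-values there, so $r_{i+1}$ was reachable through $r_i$ at that earlier moment; the second part is handled identically. You instead anchor at the moment the relevant value was \emph{installed}: either the pointer to $r_{i+1}$ (resp.\ the value $v$) was already present at the witness configuration $C$ supplied by the induction hypothesis, or else the \sct\ that wrote it is linearized strictly inside $(C, C_i]$ (resp.\ $(C, C_l]$), and you then invoke Lemma~\ref{lem-rec-in-data-structure-after-linearized-sct} to place $r_{i+1}$ in the data structure immediately after that \sct, or Lemma~\ref{lem-if-initiated-rec-not-in-data-structure-then-does-not-change} to show $r_l$ cannot yet have been removed when that \sct\ modifies it. Both arguments rest on the same two facts about removal and linearized \sct s, but yours reuses Lemma~\ref{lem-rec-in-data-structure-after-linearized-sct} (which the paper proves separately for Proposition~\ref{searches-work} and does not use here), whereas the paper's is more self-contained. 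A side benefit of your version is that it avoids the paper's slightly delicate claim that the first removal step of $r_i$ lies in $(C_1, C_i)$, which the paper asserts with little justification; the cost is the extra case analysis on the field's value, where, as you note, Lemma~\ref{lem-no-aba-on-mutable-fields} is convenient but not essential since only the \emph{last} write at or before $C_i$ (resp.\ $C_l$) matters.
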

\begin{proof}
We prove the first part of this result by induction on $i$.

Since each entry point is always in the data structure, and $r_1$ points to $r_2$ in configuration $C_1$, $r_2$ is in the data structure in $C_1$.
Thus, the claim holds for $i=1$.

Suppose the claim holds for $i$, $1 \le i \le l-2$.
We prove it holds for $i+1$.
If $r_i$ is in the data structure when it points to $r_{i+1}$ in $C_i$, then $r_{i+1}$ is in the data structure in $C_i$, and we are done.
Suppose $r_i$ is \textit{not} in the data structure in $C_i$.
By the inductive hypothesis, $r_i$ is in the data structure in some configuration in $[C_1, C_{i-1}]$.
Let $s$, $C_1 < s < C_i$, be the first step such that $r_i$ is removed from the data structure by $s$.
In the configuration $C$ just before $s$, $r_i$ is in the data structure.
By Lemma~\ref{lem-if-initiated-rec-not-in-data-structure-then-does-not-change}, $r_i$ does not change at or after $s$. 
Thus, $r_i$ does not change after $C$.
Since $C$ occurs before $C_i$, and $r_i$ points to $r_{i+1}$ in $C_i$, $r_i$ must point to $r_{i+1}$ in $C$.
Therefore, in $C$ (which satisfies $C_1 \le C < C_i$), $r_i$ is in the data structure and points to $r_{i+1}$.

The second part of the proof is quite similar to the inductive step we just finished.
Suppose $f$ contains $v$ in $C_l$.
If $r_l$ is in the data structure in $C_l$, then we are done.
Suppose $r_l$ is not in the data structure in $C_l$.
We have shown above that $r_l$ is in the data structure in some configuration in $[C_1, C_l]$.
Let $s'$, $C_1 < s' < C_l$, be the first step such that $r_l$ is removed from the data structure by $s'$.
In the configuration $C'$ just before $s'$, $r_l$ is in the data structure.
By Lemma~\ref{lem-if-initiated-rec-not-in-data-structure-then-does-not-change}, $r_l$ does not change at or after $s'$. 
Thus, $r_l$ does not change after $C'$.
Since $C'$ occurs before $C_l$, and $f$ contains $v$ in $C_l$, $f$ must contain $v$ in $C'$.
Therefore, in $C'$ (which satisfies $C_1 \le C' < C_l$), $r_l$ is in the data structure and $f$ contains $v$.
\end{proof}

\newpage
\section{Complete Proof of Multiset Implementation}
\label{app-multiset}

The full pseudocode for the multiset algorithm appears in Fig.~\ref{code-list}.
Initially, the data structure contains a $head$ entry point, containing a single mutable field $next$ that points to a sentinel \listrec\ with a special key $\infty$ that is larger than any key that can appear in the multiset.

In the following, we define the \textbf{response} of a \search\ to be a step at which a value is returned.
Note that we specify Lemma~\ref{lem-multiset-constraints-invariants}.\ref{claim-multiset-finalized-before-removed}, instead of directly proving the considerably simpler statement in Constraint~\ref{con-mark-all-removed-recs}, so that we can reuse the intermediate results when proving linearizability.

\begin{lem} \label{lem-multiset-constraints-invariants}
The multiset algorithm satisfies the following properties.
\begin{enumerate}
\item    Every invocation of \llt \ or \sct \ has valid arguments, and satisfies its preconditions.
\label{claim-multiset-llt-sct-preconditions}
\item    Every invocation of \search\ satisfies its postconditions.
\label{claim-multiset-search-postconditions}
%
\item    Let $S$ be an invocation of \sct$(V, R, fld, new)$ performed by an invocation $I$ of \ins\ or \del, and $p$, $r$ and $rnext$ refer to the local variables of $I$.
         If $I$ performs $S$ at line~\ref{multiset-insert-sct1}, then no \rec\ is added or removed by $S$, and $R = \emptyset$.
         If $I$ performs $S$ at line~\ref{multiset-insert-sct2}, then only $new$ is added by $S$, no \rec\ is removed by $S$, and $R = \emptyset$.
         If $I$ performs $S$ at line~\ref{multiset-delete-sct1}, then only $new$ is added by $S$, only $r$ is removed by $S$, and $R = \{r\}$.
         If $I$ performs $S$ at line~\ref{multiset-delete-sct2}, then only $new$ is added by $S$, only $r$ and $rnext$ are removed by $S$, and $R = \{r, rnext\}$.
\label{claim-multiset-finalized-before-removed}
\item    The $head$ entry point always points to a \listrec, the $next$ pointer of each \listrec\ with $key \neq \infty$ points to some \listrec\ with a strictly larger key, and the $next$ pointer of each \listrec\ with $key = \infty$ is \nil. 
\label{claim-multiset-sorted-list}
\end{enumerate}
\end{lem}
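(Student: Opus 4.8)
The plan is to prove Claims~\ref{claim-multiset-llt-sct-preconditions}--\ref{claim-multiset-sorted-list} \emph{simultaneously}, by induction on the length of a prefix of the execution (equivalently, on the sequence of procedure calls and linearized steps so far). A simultaneous induction is forced by the mutual dependence of the claims: checking the preconditions of an \sct\ (Claim~\ref{claim-multiset-llt-sct-preconditions}) uses the shape of the list (Claim~\ref{claim-multiset-sorted-list}) and the \search\ postcondition (Claim~\ref{claim-multiset-search-postconditions}); identifying the records added/removed by an \sct\ (Claim~\ref{claim-multiset-finalized-before-removed}) again uses Claim~\ref{claim-multiset-sorted-list} at the \sct's linearization point; and preservation of Claim~\ref{claim-multiset-sorted-list} across an \sct\ step uses Claim~\ref{claim-multiset-finalized-before-removed}. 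As part of the induction hypothesis I will also carry the auxiliary fact that every write to a $count$ field strictly increases it, since this is what rules out an ABA problem on counts. Throughout, I will use Theorem~\ref{thm-llt-sct-vlt-correct} (so that only a linearized \sct\ can change a mutable field, and \llt/reads return the stated values), but I will \emph{not} invoke Proposition~\ref{searches-work} or Lemma~\ref{R-sets-correct}, since those rest on Constraint~\ref{con-mark-all-removed-recs}, which is itself a consequence of Claim~\ref{claim-multiset-finalized-before-removed}; to avoid circularity, all reachability facts needed here will be re-derived from the induction.

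For Claim~\ref{claim-multiset-llt-sct-preconditions}: the \llt\ arguments are single records, and $r \neq \nil$ follows in \search\ from Claim~\ref{claim-multiset-sorted-list} (the list is \nil-terminated only at the $\infty$-sentinel, which \search\ never passes) and in \ins/\del\ from the \search\ postcondition together with the fact that an \llt\ returning a snapshot yields a valid $next$ value. For each of the four \sct\ call sites I will read off from the pseudocode that (i)~for every $r$ in the $V$ argument a linked \llt$(r)$ was performed earlier in the same loop iteration with no intervening conflicting operation (matching Definition~\ref{defn-llt-linked-to-sct}); (ii)~$new$ is not the initial value of $fld$; and (iii)~no earlier \sct\ with the same $fld$ and $new$ was linearized before the linked \llt$(r)$ was linearized. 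Parts (ii) and (iii) are the ABA constraint of Section~\ref{sec-impl-aba}, and constitute the main obstacle (addressed below). Claim~\ref{claim-multiset-search-postconditions} is then easy: whenever \search\ returns, it started at $head$ (a node, by Claim~\ref{claim-multiset-sorted-list}) and followed $next$ pointers along which, by Claim~\ref{claim-multiset-sorted-list}, keys strictly increase at every instant, so it halts at the first node $r$ with $r.key \geq key$; since the $key$ fields are immutable, the returned pair satisfies $p.key < key \leq r.key$ even though the two components were read at different instants.

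For Claim~\ref{claim-multiset-finalized-before-removed} I will case on the four \sct\ sites, using Claim~\ref{claim-multiset-sorted-list} just before the \sct's linearization point and the checks performed by the caller. Line~\ref{multiset-insert-sct1} writes only a $count$ field, changing no reachability, so nothing is added or removed and $R=\langle\rangle$. Line~\ref{multiset-insert-sct2} sets $p.next$ to a freshly created node whose $next$ is $r$; since the caller verified $r = localp.next$ and, by the linked-\llt\ property, $p.next$ still equals $r$ at the linearization point, afterwards $new$ is reachable and nothing became unreachable, and $R=\langle\rangle$. Line~\ref{multiset-delete-sct1} sets $p.next$ to a fresh node $r'$ with $r'.next = localr.next$, splicing $r'$ in place of $r$, so $r'$ is added, $r$ is removed, nothing else changes, and $R=\langle r\rangle$. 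Line~\ref{multiset-delete-sct2} sets $p.next$ to a fresh copy of $rnext$, so $r$ and $rnext$ become unreachable, the copy becomes reachable, and $R=\langle r, rnext\rangle$. For Claim~\ref{claim-multiset-sorted-list}: the base case holds since initially $head\to tail$, $tail.key=\infty$, $tail.next=\nil$; in the inductive step, only a linearized \sct\ changes a mutable field, and each of the four updates preserves the three conditions (the $count$-only update trivially, and each $next$-pointer update installs or removes a node with key strictly between $p.key$ and $r.key$ by the \search\ postcondition and the key tests in \ins/\del). One subtlety is that $p$ might have been removed from the list by a concurrent \sct\ before our \sct's linearization point; but then the linked \llt$(p)$ is linearized after that removal, and, by the inductive use of Claim~\ref{claim-multiset-finalized-before-removed} together with Claim~\ref{claim-llt-sct-vlt-correctness-llt-finalized-if-lin-after-sct} of Theorem~\ref{thm-llt-sct-vlt-correct}, $p$ would already be finalized, so the \llt\ would return \finalized\ and the caller would not reach the \sct\ --- a contradiction; hence $p$ is in the list at the linearization point, which is what the argument needs.

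The hard part is discharging the ABA constraint, Claim~\ref{claim-multiset-llt-sct-preconditions}(ii),(iii). For the $next$ field of a node, every \sct\ that writes it writes a \emph{newly allocated} \listrec\ (lines~\ref{multiset-insert-sct2}, \ref{multiset-delete-sct1}, \ref{multiset-delete-sct2}), so that value has never appeared in the field and in particular is not its initial value. For the $count$ field, I will argue it is monotonically non-decreasing over a node's lifetime: a node is created with some count, and thereafter the only \sct\ that writes its $count$ in place is at line~\ref{multiset-insert-sct1}, which adds $count>0$ to the value read by the linked \llt; since by the linked-\llt\ property $r.count$ is unchanged between that \llt\ and the \sct, each successive write is strictly larger than every earlier value of the field, so no value repeats and $new$ is never the initial value. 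This monotonicity is exactly the auxiliary invariant folded into the simultaneous induction, so it is available when needed. Once the ABA constraint is in hand, Claim~\ref{claim-multiset-llt-sct-preconditions} is complete, and with it the whole lemma.
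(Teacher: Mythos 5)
Your overall structure --- a simultaneous induction on the execution prefix over all four claims, the same four-way case split on the \sct\ call sites, fresh allocation discharging the ABA constraint for $next$ fields, and strict monotonicity of $count$ discharging it for $count$ fields --- is the paper's proof; making the $count$-monotonicity invariant an explicit member of the induction hypothesis is, if anything, cleaner than the paper's terse in-line derivation. The substantive omission is in Claim~\ref{claim-multiset-llt-sct-preconditions}. The precondition of \llt\ that must be verified is not only $r \neq \nil$: the argument must be an \emph{initiated} \rec, i.e.\ one that has at some point been reachable from an entry point (this is relied on elsewhere, e.g.\ in Lemma~\ref{lem-if-blame-for-r-then-r-already-initiated}). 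Verifying it for the \llt s in \ins\ and \del\ means showing that the nodes returned by \search\ --- and the node read from $r.next$ before line~\ref{multiset-delete-llt-rnext} --- were in the data structure at some earlier time, even though \search\ uses plain reads and may walk through already-removed nodes. That is exactly the content of Lemma~\ref{lem-if-rec-traversed-then-rec-in-data-structure}, which you explicitly decline to use, promising to ``re-derive all reachability facts from the induction''; but your Claim~\ref{claim-multiset-llt-sct-preconditions} paragraph then establishes only non-nilness and never supplies the re-derivation. That re-derivation is not free: it needs the fact that a removed \rec\ never changes again (Lemma~\ref{lem-if-initiated-rec-not-in-data-structure-then-does-not-change}), which needs Constraint~\ref{con-mark-all-removed-recs}. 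The circularity you are guarding against is illusory: in the inductive step, inductive Claim~\ref{claim-multiset-finalized-before-removed} already yields Constraint~\ref{con-mark-all-removed-recs} for the prefix of the execution preceding the current step, so the appendix lemmas may legitimately be applied to that prefix. That is how the paper proceeds; either adopt it or actually carry out the inline re-derivation.

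A smaller slip: in your Claim~\ref{claim-multiset-sorted-list} argument you assert that if $p$ were removed before the linearization point of $S$, ``then the linked \llt$(p)$ is linearized after that removal.'' That is only one of two sub-cases. The removal could instead be linearized \emph{between} the linked \llt$(p)$ and $S$; that sub-case is ruled out not by a \finalized\ return value but by Theorem~\ref{thm-llt-sct-vlt-correct}.\ref{claim-llt-sct-vlt-correctness-sct-vlt}, since the removing \sct\ has $p$ in its $V'$ and $S$ is assumed linearized. The paper packages both sub-cases into Lemma~\ref{lem-if-initiated-rec-not-in-data-structure-then-does-not-change}. Everything else --- the \search\ loop invariant, the identification of added and removed \rec s at each call site using sortedness of the list at the linearization point, and the preservation of sortedness --- matches the paper's argument in both substance and level of detail.
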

\begin{proof}
We prove these claims by induction on the sequence of steps taken in the execution.
The only steps that can affect these claims are invocations of \llt \ and \sct, and responses of \search es.
\textbf{Base case.}
Clearly, 
Claim~\ref{claim-multiset-llt-sct-preconditions}, Claim~\ref{claim-multiset-search-postconditions} 
and Claim~\ref{claim-multiset-finalized-before-removed} hold before any such step occurs.
Before the first \sct, the data structure is in its initial configuration.
Thus, 
Claim~\ref{claim-multiset-sorted-list} holds before any step occurs.
\textbf{Inductive step.}
Suppose these claims hold before some step $s$.
We prove they hold after $s$.

\textbf{Proof of Claim~\ref{claim-multiset-llt-sct-preconditions}.}
The only steps that can affect this claim are invocations of \llt \ and \sct.

Suppose $s$ is an invocation of \llt.
By inductive Claim~\ref{claim-multiset-finalized-before-removed} and Observation~\ref{obs-multiset-satisfies-con-mark-all-removed-recs}, Constraint~\ref{con-mark-all-removed-recs} is satisfied at all times before $s$ occurs.
The only places in the code where $s$ can occur are at lines~\ref{multiset-insert-llt-r}, \ref{multiset-insert-llt-p}, \ref{multiset-delete-llt-p}, \ref{multiset-delete-llt-r} and \ref{multiset-delete-llt-rnext}.
Suppose $s$ occurs at line~\ref{multiset-insert-llt-r}, \ref{multiset-insert-llt-p}, \ref{multiset-delete-llt-p} or \ref{multiset-delete-llt-r}.
Then, by inductive Claim~\ref{claim-multiset-search-postconditions}, argument to $s$ is non-\nil.
We can apply Lemma~\ref{lem-if-rec-traversed-then-rec-in-data-structure} to show that the argument to $s$ is in the data structure and, hence, initiated, at some point during the last \search\ before $s$.
Now, suppose $s$ occurs at line~\ref{multiset-delete-llt-rnext} (so $localr.next$ is the argument to $s$).
Then, $key = r.key$ when line~\ref{multiset-delete-return-false} is performed so, by the precondition of \del, $r.key \neq \infty$.
By inductive Claim~\ref{claim-multiset-sorted-list}, $r.next \neq \nil$ when \llt$(r)$ is performed at line~\ref{multiset-delete-llt-r}, so $localr.next \neq \nil$.
We can apply Lemma~\ref{lem-if-rec-traversed-then-rec-in-data-structure} to show that $localr.next$ is in the data structure and, hence, initiated, at some point between the start of the last \search\ before $s$ and the last \llt$(r)$ before $s$ (which reads $localr.next$ from $r.next$).

Suppose $s$ is a step that performs an invocation $S$ of \sct$(V, R, fld, new)$.
Then, the only places in the code where $s$ can occur are at lines~\ref{multiset-insert-sct1}, \ref{multiset-insert-sct2}, \ref{multiset-delete-sct1} and \ref{multiset-delete-sct2}.
It is a trivial exercise to inspect the code of \ins\ and \del, and argue that the process that performs $s$ has done an $\llt(r)$ linked to $S$ for each $r \in V$, that $R \subseteq V$, and that $fld$ points to a mutable field of a \rec \ in $V$.
It remains to prove that Precondition~\presctinfo\ and Precondition~\presctfld\ of \sct\ are satisfied.
Let $I$ be the invocation of $\llt(r)$ linked to $S$.
Suppose $s$ occurs at line~\ref{multiset-insert-sct1}.
The only step that can affect the claim is a linearized invocation $s$ of \sct$(V, R, fld, new)$ performed at line~\ref{multiset-insert-sct1}.
From the code, $fld$ is $r.count$.
Since $s$ is linearized, no invocation of \sct$(V'', R'', fld'', new'')$ with $r \in V''$ is linearized between $I$ and $s$.
Thus, $r.count$ does not change between when $I$ and $s$ are linearized.
From the code, $new$ is $count$ plus the value read from $r.count$ by $I$.
Therefore, $new$ is strictly larger than $r.count$ was when $I$ was linearized, which implies that $new$ is strictly larger than $r.count$ when $s$ is linearized.
This immediately implies Precondition~\presctinfo\ and Precondition~\presctfld\ of \sct. 
Now, suppose $s$ occurs at line~\ref{multiset-insert-sct2}, \ref{multiset-delete-sct1} or \ref{multiset-delete-sct2}.
Then, $new$ is a pointer to a \listrec\ that was created after $I$.
Thus, no invocation of \sct$(V', R', fld, new)$ can even \textit{begin} before $I$.
We now prove that $new$ is not the initial value of the field pointed to by $fld$.
From the code, $fld$ is $p.next$.
If $p.next$ is initially \nil, then we are done.
Otherwise, $p.next$ initially points to some \listrec\ $r'$.
Clearly, $r'$ must be created before $p$.
Hence, $r'$ must be created before the invocation of \search\ followed a pointer to $p$.
Since $new$ is a pointer to a \listrec\ that is created after this invocation of \search, $new \neq r'$.

\textbf{Proof of Claim~\ref{claim-multiset-search-postconditions}.}
To affect this claim, $s$ must be the response of an invocation of \search$(key)$.
We prove a loop invariant that states $r$ is a \listrec, and either $p$ is a \listrec\ and $p.key < key$ or $p = head$.
Before the loop, $p = head$ and $r = head.next$.
By inductive Claim~\ref{claim-multiset-sorted-list} $head.next$ is always a \listrec, so the claim holds before the loop.
Suppose the claim holds at the beginning of an iteration.
Let $r$ and $p$ be the respective values of local variables $r$ and $p$ at the beginning of the iteration, and $r'$ and $p'$ be their values at the end of the iteration.
From the code, $p' = r$ and $r'$ is the value read from $r.next$ at line~\ref{multiset-search-advance-r}.
By the inductive hypothesis, $p'$ is a \listrec.
Since the loop did not exit before this iteration, $key > p'.key$.
Further, since \search$(key)$ is invoked only when $key < \infty$ (by inspection of the code and preconditions), $p'.key < \infty$.
By inductive Claim~\ref{claim-multiset-sorted-list}, $p'.next = r.next$ always points to a \listrec, so $r'$ is a \listrec, and the inductive claim holds at the end of the iteration.
Finally, the exit condition of the loop implies $key \le r'.key$, so \search\ satisfies its postcondition.

\textbf{Proof of Claim~\ref{claim-multiset-finalized-before-removed}.}
Since a \rec \ can be removed from the data structure only by a change to a mutable field of some other \rec, this claim can be affected only by linearized invocations of \sct.
Suppose $s$ is a linearized invocation of \sct$(V, R, fld, new)$.
Then, $s$ can occur only at line~\ref{multiset-insert-sct1}, \ref{multiset-insert-sct2}, \ref{multiset-delete-sct1} or \ref{multiset-delete-sct2}.
Let $I$ be the invocation of \ins\ or \del\ in which $s$ occurs.
We proceed by cases.

Suppose $s$ occurs at line~\ref{multiset-insert-sct1}.
Then, $fld$ is a pointer to $r.count$. 
Thus, $s$ changes a $count$ field, \textit{not} a $next$ pointer.
Since this is the only change that is made by $s$, no \rec\ is removed by $s$, and no \rec\ is added by $s$.
Since $R = \emptyset$, the claim holds.

Suppose $s$ occurs at line~\ref{multiset-insert-sct2}.
Then, $fld$ is a pointer to $p.next$, and $new$ is a pointer to a new \listrec.
Before performing $s$, $I$ performs an invocation $L_1$ of \llt$(p)$, which returns a value different from \fail, or \finalized, at line~\ref{multiset-delete-llt-p}.
Just after performing $L_1$, $I$ sees that $localp.next = r$.
Note that $L_1$ is linked to $s$.
Since $s$ is linearized, and $p \in V$, $p.next$ does not change in between when $L_1$ and $s$ are linearized.
Therefore, $s$ changes $p.next$ from $r$ to point to a new \listrec\ whose $next$ pointer points to $r$.
Since $s$ is linearized, Lemma~\ref{lem-if-initiated-rec-not-in-data-structure-then-does-not-change} implies that $p$ must be in the data structure just before $s$ (and when its change occurs).
Since this is the only change that is made by $s$, no \rec\ is removed by $s$, and $new$ points to the only \rec\ that is added by $s$.
Since $R = \emptyset$, the claim holds.

Suppose $s$ occurs at line~\ref{multiset-delete-sct1} or line~\ref{multiset-delete-sct2}.
Then, $fld$ is a pointer to $p.next$, and $new$ is a pointer to a new \listrec.
Before performing $s$, $I$ performs invocations $L_1$ and $L_2$ of \llt$(p)$ and \llt$(r)$, respectively, which each return a value different from \fail, or \finalized.
Note that $L_1$ and $L_2$ are linked to $s$.
Just after performing $L_1$, $I$ sees that $localp.next = r$.
Since $s$ is linearized, and $p \in V$, $p.next$ does not change in between when $L_1$ and $s$ are linearized.
Similarly, since $r \in V$, $r.next$ does not change between when $L_1$ and $s$ are linearized.
Before $s$, $I$ sees $key = r.key$ at line~\ref{multiset-delete-return-false}.
By the precondition of \del, $r.key \neq \infty$.
Thus, inductive Claim~\ref{claim-multiset-sorted-list} (and the fact that keys do not change) implies that $r.next$ points to some \listrec\ $rnext = localr.next$ at all times between when $L_2$ and $s$ are linearized.
We consider two sub-cases.

\textit{Case I:}
$s$ occurs at line~\ref{multiset-delete-sct1}.
Therefore, $s$ changes $p.next$ from $r$ to point to a new \listrec\ whose $next$ pointer points to $rnext$ and, when this change occurs, $r.next$ points to $rnext$.
Since $s$ is linearized, Lemma~\ref{lem-if-initiated-rec-not-in-data-structure-then-does-not-change} implies that $p$ must be in the data structure just before $s$ (and when its change occurs).
Since this is the only change that is made by $s$, $r$ points to the only \rec\ that is removed by $s$, and $new$ points to the only \rec\ that is added by $s$.
Since $R = \{r\}$, the claim holds.

\textit{Case II:}
$s$ occurs at line~\ref{multiset-delete-sct2}.
Since $rnext \in V$, $rnext.next$ does not change between when the \llt\ at line~\ref{multiset-delete-llt-rnext} and $s$ are linearized.
Thus, $rnext.next$ contains the same value $v$ throughout this time.
Therefore, $s$ changes $p.next$ from $r$ to point to a new \listrec\ whose $next$ pointer contains $v$ and, when this change occurs, $p.next$ points to $r$, $r.next$ points to $rnext$, and $rnext.next$ contains $v$.
Since $s$ is linearized, Lemma~\ref{lem-if-initiated-rec-not-in-data-structure-then-does-not-change} implies that $p$ must be in the data structure just before $s$ (and when its change occurs).
Since this is the only change that is made by $s$, $r$ and $rnext$ point to the only \rec s that are removed by $s$, and $new$ points to the only \rec \ that is added by $s$.
Since $R = \{r, next\}$, the claim holds.

\textbf{Proof of Claim~\ref{claim-multiset-sorted-list}.}
This claim can be affected only by a linearized invocation of \sct\ that changes a $next$ pointer.
Suppose $s$ is a linearized invocation of \sct$(V, R, fld, new)$.
Then, $s$ can occur only at line~\ref{multiset-insert-sct2}, \ref{multiset-delete-sct1} or \ref{multiset-delete-sct2}.
We argued in the proof of Claim~\ref{claim-multiset-finalized-before-removed} that, in each of these cases, $s$ changes $p.next$ from $r$ to point to a new \listrec, and that this is the only change that it makes.
Let $I$ be the invocation of \ins\ or \del\ in which $s$ occurs.

Suppose $s$ occurs at line~\ref{multiset-delete-sct1}.
We argued in the proof of Claim~\ref{claim-multiset-finalized-before-removed} that, at all times between when the \llt$(r)$ at line~\ref{multiset-delete-llt-r} and $s$ are linearized, $p.next$ points to $r$ and $r.next$ points to some \listrec\ $rnext$.
Therefore, $new.key = r.key$ and $new.next$ points to $rnext$.
We show $r$ is a \listrec\ (and not the $head$ entry point), and $r.key \neq \infty$.
Since $r.next$ points to a \listrec\ $rnext \neq \nil$, $r \neq \nil$ and $r.key \neq \infty$ (by the inductive hypothesis).
Similarly, since $p.next$ points to $r$, either $r = \nil$ or $r$ is a \listrec, so we are done.
Since $r.next$ points to $rnext$ just before $s$ is linearized, setting $new.next$ to point to $rnext$ does not violate the inductive hypothesis.
Since $p.next$ points to $r$, the inductive hypothesis implies that either $p$ is the $head$ entry point or $p.key < r.key$.
Clearly, setting $p.next$ to point to $new$ does not violate the inductive hypothesis in either case.

Suppose $s$ occurs at line~\ref{multiset-insert-sct2}.
Then, $new.key = key$ and $new.next$ points to $r$.
Before $s$, $I$ invokes \search$(key)$ at line~\ref{multiset-insert-search}, and then sees $key \neq r.key$ at line~\ref{multiset-insert-check-key}.
By inductive Claim~\ref{claim-multiset-search-postconditions}, this invocation of \search\ satisfies its postconditions, which implies that $r$ points to a \listrec\ which satisfies $key < r.key$.
Since \search$(key)$ is invoked only when $key < \infty$ (by inspection of the code and preconditions), $key < \infty$.
Thus, setting $new.next$ to point to $r$ does not violate the inductive hypothesis.
The post conditions of \search\ also imply that either $p$ is a \listrec\ and $p.key < key$ or $p = head$.
Therefore, setting $p.next$ to point to $new$ does not violate the inductive hypothesis.

Suppose $s$ occurs at line~\ref{multiset-delete-sct2}.
We argued in the proof of Claim~\ref{claim-multiset-finalized-before-removed} that, at all times between when the \llt$(r)$ at line~\ref{multiset-delete-llt-r} and $s$ are linearized, $p.next$ points to $r$, $r.next$ points to some \listrec\ $rnext$ (pointed to by $localr.next$) and $rnext.next$ points to some \listrec\ $rnext'$.
Thus, $new.key = rnext.key$ and $new.next$ points to $rnext'$.
Since $rnext.next$ points to a \listrec\ $rnext'$, $rnext.key < \infty$ (by the inductive hypothesis).
Therefore, since $rnext.next$ points to $rnext'$ just before $s$, setting $new.next$ to point to $rnext'$ does not violate the inductive hypothesis.
By the inductive hypothesis, either $p$ is the $head$ entry point, or $p.key < r.key < rnext.key = new.key < \infty$.
Clearly, setting $p.next$ to point to $new$ does not violate the inductive hypothesis in either case.
\end{proof}

\begin{cor} \label{cor-multiset-data-structure-always-sorted-list}
The $head$ entry point always points to a sorted list with strictly increasing keys. 
\end{cor}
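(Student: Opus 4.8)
The plan is to read Corollary~\ref{cor-multiset-data-structure-always-sorted-list} off Lemma~\ref{lem-multiset-constraints-invariants}.\ref{claim-multiset-sorted-list}, which already asserts the three structural facts we need at every time: $head$ points to a \listrec; every \listrec\ with key $\neq \infty$ has a $next$ pointer leading to a \listrec\ with a strictly larger key; and every \listrec\ with key $= \infty$ has $next = \nil$. So I would fix an arbitrary configuration $C$ in the execution, set $n_0 = head$, and define $n_{i+1}$ to be the \listrec\ pointed to by $n_i.next$ as long as $n_i.key \neq \infty$. Since $n_0.key = -\infty \neq \infty$, this sequence begins, and by the second fact each $n_{i+1}$ is indeed a \listrec\ whenever $n_i$ has finite key.

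First I would argue the sequence is acyclic: by the second fact $n_0.key < n_1.key < n_2.key < \cdots$, so the $n_i$ are pairwise distinct and no node is revisited. Next I would argue it is finite: at configuration $C$ only finitely many \listrec s have been allocated so far, and an injective chain through a finite node set must be finite. Let $n_k$ be the last element, so $n_k.next$ is either $\nil$ or does not point to a \listrec; if $n_k.key \neq \infty$, the second fact of the lemma would force $n_k.next$ to point to a further \listrec, contradicting maximality, so $n_k.key = \infty$ and, by the third fact, $n_k.next = \nil$. Hence in $C$, following $next$ from $head$ traverses a finite acyclic chain $head = n_0, n_1, \dots, n_k$ with $-\infty = n_0.key < n_1.key < \cdots < n_k.key = \infty$ and $n_k.next = \nil$, i.e. $head$ points to a sorted list with strictly increasing keys; since $C$ was arbitrary, this holds at all times.

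There is essentially no hard step here — the corollary is a repackaging of Lemma~\ref{lem-multiset-constraints-invariants}.\ref{claim-multiset-sorted-list}, and the only claim needing a word beyond that lemma is finiteness of the chain, which follows at once from the fact that only finitely many \listrec s exist at any finite time together with injectivity of the chain. Alternatively, one could establish finiteness inductively on the execution, using Lemma~\ref{lem-multiset-constraints-invariants}.\ref{claim-multiset-finalized-before-removed} and the observation that every \listrec\ created by \ins\ or \del\ has a finite key while the original $\infty$-keyed sentinel is never removed, so the chain always terminates precisely at that sentinel.
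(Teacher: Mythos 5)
Your proposal is correct and follows exactly the paper's route: the paper proves this corollary simply as ``Immediate from Lemma~\ref{lem-multiset-constraints-invariants}.\ref{claim-multiset-sorted-list},'' and your argument just spells out the (routine) finiteness and acyclicity details that the strictly-increasing-keys invariant and the finitely-many-allocated-records observation supply. No gap; you have merely been more explicit than the paper.
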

\begin{proof}
Immediate from Lemma~\ref{lem-multiset-constraints-invariants}.\ref{claim-multiset-sorted-list}.
\end{proof}

\begin{obs} \label{obs-multiset-satisfies-con-mark-all-removed-recs}
Lemma~\ref{lem-multiset-constraints-invariants}.\ref{claim-multiset-finalized-before-removed} implies Constraint~\ref{con-mark-all-removed-recs}.
\end{obs}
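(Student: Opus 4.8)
The plan is to observe that this is essentially an immediate corollary of Lemma~\ref{lem-multiset-constraints-invariants}.\ref{claim-multiset-finalized-before-removed}, together with the fact that every invocation of \sct\ in the multiset code occurs at one of exactly four program locations. Concretely, by inspecting the pseudocode of \ins\ and \del\ in Figure~\ref{code-list}, every invocation $S$ of \sct$(V, R, fld, new)$ is performed inside some invocation $I$ of \ins\ or \del, and $S$ occurs at one of lines~\ref{multiset-insert-sct1}, \ref{multiset-insert-sct2}, \ref{multiset-delete-sct1} or \ref{multiset-delete-sct2}. Since Constraint~\ref{con-mark-all-removed-recs} concerns only \emph{linearized} invocations of \sct, I would fix an arbitrary linearized invocation $S$ and split into these four cases.

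In each case I would simply read off both $R$ and the set of \rec s removed from the data structure by $S$ from the corresponding clause of Lemma~\ref{lem-multiset-constraints-invariants}.\ref{claim-multiset-finalized-before-removed}, and check that they coincide. If $S$ is performed at line~\ref{multiset-insert-sct1}, the lemma gives that no \rec\ is removed by $S$ and $R = \emptyset$; if at line~\ref{multiset-insert-sct2}, again no \rec\ is removed and $R = \emptyset$; if at line~\ref{multiset-delete-sct1}, only $r$ is removed by $S$ and $R = \{r\}$; and if at line~\ref{multiset-delete-sct2}, only $r$ and $rnext$ are removed by $S$ and $R = \{r, rnext\}$. In all four cases $R$ is exactly the set of \rec s removed from the data structure by $S$, which is precisely the content of Constraint~\ref{con-mark-all-removed-recs}.

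There is no genuine obstacle here; all the work has already been done in establishing Lemma~\ref{lem-multiset-constraints-invariants}. The only points needing a moment's care are (i) confirming, by inspection of the code of \ins\ and \del, that those four lines really are the only places \sct\ is invoked in the multiset implementation, so that the case analysis is exhaustive, and (ii) noting that restricting to linearized invocations loses nothing, since the constraint is vacuous for invocations that are not linearized (the notion ``removed by $S$'' being defined relative to $S$'s linearization point). I would state both of these explicitly and then conclude.
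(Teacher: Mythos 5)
Your proposal is correct and matches the paper's treatment: the paper states this as an observation with no written proof, treating it as immediate from the four-way case analysis already carried out in Lemma~\ref{lem-multiset-constraints-invariants}.\ref{claim-multiset-finalized-before-removed}, which is exactly the unpacking you give. Your two points of care (exhaustiveness of the four call sites and the restriction to linearized invocations) are the right ones and nothing is missing.
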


We now argue that the multiset algorithm satisfies a constraint placed on the use of \llt\ and \sct.
This constraint is used to guarantee progress for \sct. 

\begin{obs} \label{obs}
Consider any execution that contains a configuration $C$ after which no field of any \rec\ changes.
There is a total order on all \rec s created during this execution such that, if \rec\ $r_1$ appears before \rec\ $r_2$ in the sequence $V$ passed to an invocation $S$ of \sct\ whose linked \llt s begin after $C$, then $r_1 < r_2$.
\end{obs}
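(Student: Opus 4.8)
The plan is to exploit the fact that, after $C$, the data structure is frozen into a single finite sorted list, and to order the \rec s by their position in that list. Since only finitely many steps occur before $C$, only finitely many \rec s exist in the configuration $C$; let $L=\langle n_0,n_1,\dots,n_k\rangle$ be the sequence of \listrec s reached by starting at $head$ and following $next$ pointers in $C$. By Corollary~\ref{cor-multiset-data-structure-always-sorted-list} this is a well-defined finite sorted list, and since no field of any \rec\ changes after $C$, $L$ is exactly the set of \rec s in the data structure in every configuration at or after $C$, with $n_i.next=n_{i+1}$ at all such configurations. Now fix any total order $\sqsubset$ on all \rec s created during the execution (for instance, order them by the step that creates them), and define $r_1<r_2$ iff either (i) $r_1,r_2\in L$ and $r_1$ precedes $r_2$ in $L$, or (ii) $r_1\in L$ and $r_2\notin L$, or (iii) $r_1,r_2\notin L$ and $r_1\sqsubset r_2$. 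This is a total order on all \rec s created during the execution.

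Next I would verify the ordering property. Let $S$ be an invocation of \sct\ all of whose linked \llt s begin after $C$, and suppose $r_1$ precedes $r_2$ in the sequence $V$ passed to $S$. By inspection of the multiset code, the only \sct\ invocations with $|V|>1$ are performed by an invocation $I$ of \del, either at line~\ref{multiset-delete-sct1} with $V=\langle p,r\rangle$ or at line~\ref{multiset-delete-sct2} with $V=\langle p,r,rnext\rangle$ (for $|V|=1$ the property is vacuous). Each \llt\ linked to $S$ returns a value other than \fail\ or \finalized, so by Lemma~\ref{lem-rec-in-data-structure-just-before-llt} its argument is in the data structure just before that \llt\ is linearized; since each linked \llt\ begins after $C$, it is linearized after $C$, and the data structure then equals $L$, so $p$, $r$ (and $rnext$) all belong to $L$. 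It remains to show $p$ precedes $r$ in $L$, and, in the second case, $r$ precedes $rnext$ in $L$. The value $localp.next$ returned by the linked $\llt(p)$ of line~\ref{multiset-delete-llt-p} is, by property C2 of Theorem~\ref{thm-llt-sct-vlt-correct}, the value of $p.next$ at that \llt's (post-$C$) linearization point, hence equals the value of $p.next$ in $L$; and before performing $S$, $I$ checks $r=localp.next$, so $p.next=r$ in $L$, i.e.\ $p$ immediately precedes $r$ in $L$. Likewise $rnext=localr.next$ is the value of $r.next$ returned by the linked $\llt(r)$ of line~\ref{multiset-delete-llt-r}, which equals the value of $r.next$ in $L$, so $r$ immediately precedes $rnext$ in $L$. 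In all cases $r_1<r_2$, as required.

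The argument is essentially a direct inspection once the order is chosen, so the care is all in the supporting facts invoked: that $L$ is finite, well-defined, and invariant after $C$ (Corollary~\ref{cor-multiset-data-structure-always-sorted-list} together with the hypothesis that no field changes after $C$); that "the value returned by a linked \llt" is exactly the field value at the \llt's linearization point (C2 of Theorem~\ref{thm-llt-sct-vlt-correct}, using that no \sct\ altering $p.next$ is linearized after $C$); and the linked-\llt\ bookkeeping, namely that the $\llt(p)$ of line~\ref{multiset-delete-llt-p} and the $\llt(r)$ of line~\ref{multiset-delete-llt-r} really are linked to the \sct s of lines~\ref{multiset-delete-sct1}/\ref{multiset-delete-sct2} — between each of these \llt s and its \sct, $I$ performs no operation on the same \rec\ (only an \llt\ on a different \rec\ and some local checks), so the link is not broken. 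Finally, Lemma~\ref{lem-rec-in-data-structure-just-before-llt} presupposes Constraint~\ref{con-mark-all-removed-recs}, which holds for the multiset by Lemma~\ref{lem-multiset-constraints-invariants}.\ref{claim-multiset-finalized-before-removed} and Observation~\ref{obs-multiset-satisfies-con-mark-all-removed-recs}, so every hypothesis used above is already established. I expect the only genuine subtlety — and thus the main obstacle — to be confirming the linked-\llt\ relations against the delicate definition of "linked"; everything else is routine.
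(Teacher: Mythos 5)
Your proof is correct and follows essentially the same route as the paper's: both hinge on the observation that, after $C$, the list reachable from $head$ is frozen and sorted, that every $V$ passed to an \sct\ whose linked \llt s begin after $C$ is a subsequence of that list in list order, and that the resulting order (yours by list position, the paper's by key --- which coincide on the list) can be taken as the required total order. Your version is merely more explicit where the paper says ``it follows immediately from the multiset code'': you justify membership in the list via Lemma~\ref{lem-rec-in-data-structure-just-before-llt} and the adjacency of $p$, $r$, $rnext$ via property C2, and you handle the extension of the order to \rec s outside the list, which the paper's ``total order on keys'' glosses over.
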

\begin{proof}
Since the \llt s linked to $S$ begin after $C$, it follows immediately from the multiset code that $V$ is a subsequence of nodes in the list.
By Corollary~\ref{cor-multiset-data-structure-always-sorted-list}, they occur in order of strictly increasing keys, so $r_1$ before $r_2$ in $V$ implies $r_1.key < r_2.key$.
Thus, we take the total order on keys to be our total order.
\end{proof}

\begin{defn} \label{defn-multiset-in-data-structure}
The number of occurrences of $key \neq \infty$ \textbf{in the data structure} at time $t$ is $count$ if there is a \rec\ $r$ in the data structure at time $t$ such that $r.key = key$ and $r.count = count$, and zero, otherwise.
\end{defn}

We call an invocation of \ins\ or \del\ \textbf{effective} if it performs a linearized invocation of \sct\ (which either returns \true, or does not terminate).
From the code of \ins\ and \del, each \textbf{effective} invocation of \ins\ or \del\ performs exactly one linearized invocation of \sct, each invocation of \ins\ that returns is effective, and each invocation of \del\ that returns \true\ is effective.
We linearize each effective invocation of \ins\ or \del\ at its linearized invocation of \sct.
The linearization point for an invocation $I$ of \del$(key, count)$ that returns \false\ is subtle.
Suppose $I$ returns \false\ after seeing $r.key \neq key$.
Then, we must linearize it at a time when the nodes $p$ and $r$ returned by its invocation $I'$ of \search\ are both in the data structure and $p.next$ points to $r$.
By Observation~\ref{obs-multiset-satisfies-con-mark-all-removed-recs}, Constraint~\ref{con-mark-all-removed-recs} is satisfied.
This means we can apply Lemma~\ref{lem-if-rec-traversed-then-rec-in-data-structure} to show that there is a time during $I'$ when $p$ is in the data structure and $p.next = r$ (so $r$ is also in the data structure).
We linearize $I$ at the last such time.
Now, suppose $I$ returns \false\ after seeing $r.count < count$.
Then, we must linearize it at a time when the node $r$ returned by its invocation $I'$ of \search\ is both in the data structure, and satisfies $r.count < count$.
As in the previous case, we can apply Lemma~\ref{lem-if-rec-traversed-then-rec-in-data-structure} to show that there is a time after the start of $I'$, and at or before when $I$ reads a value $v$ from $r.count$ at line~\ref{multiset-delete-return-false}, such that $r$ is in the data structure and $r.count = v$.
We linearize $I$ at the last such time.
Similarly, we linearize each \func{Get} at the last time after the start of the \func{Search} in \func{Get}, and at or before when the \func{Get} reads a value $v$ from $r.count$, such that $r$ is in the data structure and $r.count = v$.
Clearly, each operation is linearized during that operation.

\begin{lem} \label{lem}
At all times $t$, the multiset $\sigma$ of keys in the data structure is equal to the multiset $\sigma_L$ of keys that would result from the atomic execution of the sequence of operations linearized up to time $t$.
\end{lem}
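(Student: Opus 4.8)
The plan is to prove the lemma by induction on the sequence of configurations in the execution, tracking only those configurations at which $\sigma$ or $\sigma_L$ changes. The base case is immediate: in the initial configuration the data structure consists of $head$ and $tail$, whose keys $-\infty$ and $\infty$ are not counted by Definition~\ref{defn-multiset-in-data-structure}, and no operation has been linearized, so $\sigma = \sigma_L = \emptyset$. Note also that Corollary~\ref{cor-multiset-data-structure-always-sorted-list} guarantees the list is always sorted with strictly increasing keys, so each key is held by at most one node and $\sigma$ is well defined.

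For the inductive step I would first observe that $\sigma$ changes only through a linearized invocation of \sct\ that modifies a $next$ pointer or a $count$ field (by Observation~\ref{obs-only-upcas-modifies-records} and Lemma~\ref{lem-multiset-constraints-invariants}.\ref{claim-multiset-finalized-before-removed}), while $\sigma_L$ changes only at the linearization point of an operation. Each effective \ins\ or \del\ is linearized exactly at its unique linearized \sct, so I must check that the effect of that \sct\ on $\sigma$ matches the sequential semantics of the operation. I would do this case by case, using Lemma~\ref{lem-multiset-constraints-invariants}.\ref{claim-multiset-finalized-before-removed} (which identifies exactly which records are added/removed), the correctness properties of Theorem~\ref{thm-llt-sct-vlt-correct} (so that the values the linked \llt s observed still hold at the \sct's linearization point), and the \search\ postconditions of Lemma~\ref{lem-multiset-constraints-invariants}.\ref{claim-multiset-search-postconditions}: the \sct\ at line~\ref{multiset-insert-sct1} raises $r.count$ by $count$ while $r$ is in the data structure with $r.key = key$; the \sct\ at line~\ref{multiset-insert-sct2} splices a new node holding $(key,count)$ between $p$ and $r$, where $key \ne r.key$ and, by the sorted-list invariant, $key$ is absent just before the splice; the \sct\ at line~\ref{multiset-delete-sct1} replaces $r$ by a copy with $count$ fewer occurrences of $key$; and the \sct\ at line~\ref{multiset-delete-sct2} removes $r$ and $rnext$ and inserts a copy of $rnext$, whose net effect is to delete all $count$ occurrences of $key$. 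In each case the change to $\sigma$ agrees with the effect of the operation on $\sigma_L$, so the hypothesis is preserved.

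For a \func{Get} and for a \del\ that returns \false, the data structure does not change, so I only need that $\sigma_L$ does not change either, i.e., the return value is consistent with $\sigma$ at the linearization point. By construction these operations are linearized (using Lemma~\ref{lem-if-rec-traversed-then-rec-in-data-structure}, which applies since Constraint~\ref{con-mark-all-removed-recs} holds by Observation~\ref{obs-multiset-satisfies-con-mark-all-removed-recs}) at a configuration where the node $r$ returned by their \search\ is in the data structure and still holds the value that was read from it, and in the \del\ case with $r.key \ne key$, where additionally $p$ is in the data structure with $p.next = r$. Since \search\ returns the first node whose key is at least $key$, the sorted-list invariant then shows that $r.count$ (when $r.key = key$) equals the number of occurrences of $key$ in the data structure at that configuration, and that $key$ is absent when $r.key \ne key$; hence \func{Get} outputs the correct count and \del\ returns \false\ exactly when fewer than $count$ occurrences are present, and $\sigma_L$ is unchanged.

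The main obstacle I anticipate is the bookkeeping for the \del\ case at line~\ref{multiset-delete-sct2}: one must verify that at the \sct's linearization point the chain $p \to r \to rnext \to rnext'$ is exactly as the linked \llt s observed (so that the new copy of $rnext$ re-links the list correctly) and that $r$ is the unique node holding key $key$, so that simultaneously removing $r$ and $rnext$ and adding the copy of $rnext$ deletes precisely all occurrences of $key$ while leaving every other key's count untouched. Assembling this from Theorem~\ref{thm-llt-sct-vlt-correct}, Lemma~\ref{lem-multiset-constraints-invariants}, and the sorted-list invariant is the delicate part; the other cases are routine once this pattern is established.
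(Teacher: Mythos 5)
Your proposal is correct and follows essentially the same route as the paper: an induction over the steps of the execution, with a case analysis over the four \sct\ call sites, driven by Lemma~\ref{lem-multiset-constraints-invariants}, the sorted-list invariant, and the fact that linked \llt\ values persist to the \sct's linearization point. The only difference is organizational: the paper disposes of \func{Get} and failed \del\ by choosing their linearization points (via Lemma~\ref{lem-if-rec-traversed-then-rec-in-data-structure}) so that they provably leave $\sigma_L$ unchanged and then proves their return values correct in separate follow-up lemmas, whereas you fold that argument into this proof.
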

\begin{proof}
We prove this claim by induction on the sequence of steps taken in the execution.
Since $next$ pointers and $count$ fields can be changed only by linearized invocations of \sct \ (and $key$ fields do not change), we need only consider linearized invocations of \sct \ when reasoning about $\sigma$.
Thus, invocations of \ins\ and \del\ that are not effective cannot change the data structure.
Since invocations of \func{Get} do not invoke \sct, they cannot change the data structure.
Therefore, we need only consider effective invocations of \ins\ and \del\ when reasoning about $\sigma_L$.
Since each effective invocation of \ins\ or \del\ is linearized at its linearized invocation of \sct, the steps that can affect $\sigma$ and $\sigma_L$ are exactly the same.
\textbf{Base case.}
Before any linearized \sct\ has occurred, no $next$ pointer has been changed.
Thus, the data structure is in its initial configuration, which implies $\sigma = \emptyset$.
Since no effective invocation of \ins\ or \del\ has been linearized, $\sigma_L = \emptyset$.
\textbf{Inductive step.}
Let $s$ be a linearized invocation $S$ of \sct$(V, R, fld, new)$, $I$ be the (effective) invocation of \ins\ or \del\ that performs $S$, and $p$, $r$ and $rnext$ refer to the local variables of $I$.
Suppose $\sigma = \sigma_L$ before $s$.
Let $\sigma'$ denote $\sigma$ after $s$, and $\sigma_L'$ denote $\sigma_L$ after $s$.
We prove $\sigma' = \sigma_L'$.

Suppose $S$ is performed at line~\ref{multiset-insert-sct1}.
Then, $I$ is an invocation of \ins$(key, count)$, and $\sigma_L' = \sigma_L + \{count$ copies of $key\}$.
By Lemma~\ref{lem-multiset-constraints-invariants}.\ref{claim-multiset-finalized-before-removed}, no \rec \ is added or removed by $S$.
Before $I$ performs $S$, $I$ performs an invocation $L$ of $\llt(r)$ linked to $S$ at line~\ref{multiset-insert-llt-r}.
Since $S$ is linearized, no mutable field of $r$ changes between when $L$ and $S$ are linearized.
Therefore, the value $localr.count$ that $L$ reads from $r.count$ is equal to the value of $r.count$ at all times between when $L$ and $S$ are linearized, and line~\ref{multiset-insert-sct1} implies that $S$ changes $r.count$ from $localr.count$ to $localr.count + count$.
Since $S$ is linearized, Lemma~\ref{lem-if-initiated-rec-not-in-data-structure-then-does-not-change} implies that $r$ must be in the data structure just before $S$ is linearized.
By Lemma~\ref{lem-multiset-constraints-invariants}.\ref{claim-multiset-sorted-list}, $r$ is the only \listrec\ in the data structure with key $key$, so $\sigma$ contains exactly $v$ copies of $key$ just before $S$ is linearized.
Since this is the only change made by $S$, $\sigma' = \sigma + \{count$ copies of $key\}$, and the inductive hypothesis implies $\sigma' = \sigma_L'$.

Suppose $S$ is performed at line~\ref{multiset-insert-sct2}.
Then, $I$ is an invocation of \ins$(key, count)$, and $\sigma_L' = \sigma_L + \{count$ copies of $key\}$.
By Lemma~\ref{lem-multiset-constraints-invariants}.\ref{claim-multiset-finalized-before-removed}, no \rec \ is removed by $S$, and only $new$ is added by $S$.
From the code of \ins, $new.key = key$ and $new.count = count$.
Therefore, $\sigma' = \sigma + \{count$ copies of $key\}$, and the inductive hypothesis implies $\sigma' = \sigma_L'$.

Suppose $S$ is performed at line~\ref{multiset-delete-sct1}.
Then, $I$ is an invocation of \del$(key, count)$.
Before $I$ performs $S$, $I$ performs an invocation $L$ of $\llt(r)$ linked to $S$ at line~\ref{multiset-delete-llt-r}.
Since $S$ is linearized, no mutable field of $r$ changes between when $L$ and $S$ are linearized.
Thus, the value $localr.count$ that $L$ reads from $r.count$ is equal to the value of $r.count$ at all times between when $L$ and $S$ are linearized.
This implies that $I$ sees $r.key = key$ and $r.count \ge count$ at line~\ref{multiset-delete-return-false}.
By Lemma~\ref{lem-multiset-constraints-invariants}.\ref{claim-multiset-finalized-before-removed}, $r$ is the only \rec\ removed by $S$, and $new$ is the only \rec\ added by $S$.
By Definition~\ref{defn-rec-in-added-removed}, $r$ must be in the data structure just before $S$ is linearized.
By Lemma~\ref{lem-multiset-constraints-invariants}.\ref{claim-multiset-sorted-list}, $r$ is the only \listrec\ in the data structure with key $key$.
Hence, $\sigma$ contains exactly $localr.count$ copies of $key$ just before $S$ is linearized.
From the code of \del, $new.key = r.key$ and $new.count = localr.count - count$.
Therefore, $\sigma' = \sigma - \{count$ copies of $key\}$.
By the inductive hypothesis, $\sigma = \sigma_L$.
Thus, there are $localr.count \ge count$ copies of $key$ in $\sigma_L$.
Therefore, if $I$ is performed atomically at its linearization point, it will enter the if-block at line~\ref{multiset-delete-check-count-less}, so $\sigma_L' = \sigma_L - \{count$ copies of $key\} = \sigma'$.

Suppose $S$ is performed at line~\ref{multiset-delete-sct2}.
Then, $I$ is an invocation of \del$(key, count)$.
Before $I$ performs $S$, $I$ performs an invocation $L$ of $\llt(r)$ linked to $S$ at line~\ref{multiset-delete-llt-r}.
Since $S$ is linearized, no mutable field of $r$ changes between when $L$ and $S$ are linearized.
Thus, the value $localr.count$ that $L$ reads from $r.count$ is equal to the value of $r.count$ at all times between when $L$ and $S$ are linearized.
This implies that $I$ sees $r.key = key$ and $r.count \ge count$ at line~\ref{multiset-delete-return-false}, and $count \ge r.count$ at line~\ref{multiset-delete-check-count-less}.
Hence, $r.count = count$ at all times between when $L$ and $S$ are linearized.
Let $rnext$ be the \listrec\ pointed to by $I$'s local variable $localr.next$.
(We know $rnext$ is a \listrec, and not \nil, from $r.key = key < \infty$ and Lemma~\ref{lem-multiset-constraints-invariants}.\ref{claim-multiset-sorted-list}.)
After $L$, $I$ performs an invocation $L'$ of $\llt(rnext)$ linked to $S$ at line~\ref{multiset-delete-llt-rnext}.
By the same argument as for $r.count$, the value $v$ that $L'$ reads from $rnext.count$ is equal to the value of $rnext.count$ at all times between when $L'$ and $S$ are linearized.
By Lemma~\ref{lem-multiset-constraints-invariants}.\ref{claim-multiset-finalized-before-removed}, $r$ and $rnext$ are the only \rec s removed by $S$, and $new$ is the only \rec \ added by $S$.
By Definition~\ref{defn-rec-in-added-removed}, $r$ and $rnext$ must be in the data structure just before $S$.
By Lemma~\ref{lem-multiset-constraints-invariants}.\ref{claim-multiset-sorted-list}, $r$ is the only \listrec\ in the data structure with key $key$, and $rnext$ is the only \listrec\ in the data structure with its key.
Hence, $\sigma$ contains exactly $r.count = count$ copies of $key$, and exactly $v$ copies of $rnext.key$.
From the code of \del, $new.key = rnext.key$ and $new.count = rnext.count = v$.
Therefore, $\sigma' = \sigma - \{count$ copies of $key\}$
By the inductive hypothesis, $\sigma = \sigma_L$.
Thus, there are exactly $count$ copies of $key$ just before $I$ in the linearized execution.
From the code of \del, in the linearized execution, $I$ will enter the else block at line~\ref{multiset-delete-check-count-else}, so $\sigma_L' = \sigma_L - \{count$ copies of $key\} = \sigma'$.
\end{proof}

\begin{lem} \label{lem}
Each invocation of \func{Get}$(key)$ that terminates returns the number of occurrences of $key$ in the data structure just before it is linearized.
\end{lem}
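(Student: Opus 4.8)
The plan is to split on the outcome of the \search\ performed by the given terminating invocation $I$ of \func{Get}$(key)$, whose returned nodes I call $r$ and $p$: either $I$ sees $key = r.key$ and returns the value $v$ it reads from $r.count$, or $I$ sees $key \ne r.key$ and returns $0$. In each case I will first exhibit a configuration, occurring during $I$'s \search, at which the nodes relevant to the return value are in the data structure with the relevant field contents; the last such configuration is exactly where $I$ is linearized, according to the rule fixed just before the lemma. Since Constraint~\ref{con-mark-all-removed-recs} holds for the multiset (Observation~\ref{obs-multiset-satisfies-con-mark-all-removed-recs} together with Lemma~\ref{lem-multiset-constraints-invariants}.\ref{claim-multiset-finalized-before-removed}), the second part of Lemma~\ref{lem-if-rec-traversed-then-rec-in-data-structure} is available, and I will use it to produce that configuration; I will then read off the count of $key$ directly from the sorted-list structure guaranteed by Corollary~\ref{cor-multiset-data-structure-always-sorted-list}.

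For the first case, I would apply the second part of Lemma~\ref{lem-if-rec-traversed-then-rec-in-data-structure} to the chain of \listrec s $head = n_1 \to n_2 \to \cdots \to n_{k-1} = p \to n_k = r$ traversed by the \search, with $r_l = r$, mutable field $f = count$, and $v$ the value of $r.count$ that $I$ reads in its return statement. This yields a configuration $C$ in $[C_1, C_l]$ at which $r$ is in the data structure and $r.count = v$; $I$ is linearized at the last such $C$. At $C$, the immutable field gives $r.key = key$, and by Corollary~\ref{cor-multiset-data-structure-always-sorted-list} the data structure is a sorted list with strictly increasing keys, so $r$ is the only node in it with key $key$; hence by Definition~\ref{defn-multiset-in-data-structure} the number of occurrences of $key$ at $C$ equals $r.count = v$, which is what $I$ returns.

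For the second case, where $I$ sees $key \ne r.key$, I would apply the second part of Lemma~\ref{lem-if-rec-traversed-then-rec-in-data-structure} to the chain $head = n_1 \to \cdots \to n_{k-1} = p$ with $r_l = p$, $f = next$, and $v = r$ (the value read from $p.next$ during the \search), obtaining a configuration $C$ during the \search\ at which $p$ is in the data structure and $p.next = r$; $I$ is linearized at the last such $C$, analogously to a \del\ that returns \false\ after seeing $r.key \ne key$. Then $r$ is also in the data structure at $C$ and is the successor of $p$. Combining the postcondition of \search\ (Lemma~\ref{lem-multiset-constraints-invariants}.\ref{claim-multiset-search-postconditions}), which gives $p.key < key \le r.key$, with the case assumption $key \ne r.key$ yields $p.key < key < r.key$; since the list at $C$ is sorted with strictly increasing keys and $p, r$ are consecutive, no node in it has key $key$, so by Definition~\ref{defn-multiset-in-data-structure} the count of $key$ at $C$ is $0$, matching $I$'s return value. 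In both cases the chosen configuration lies during $I$, and since $I$'s linearization is not a step that modifies the data structure, the count is the same just before $I$ is linearized.

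I expect the main obstacle to be the linearization-point bookkeeping rather than the list reasoning: one must argue that the set of candidate configurations is non-empty (so that a last one exists) and lies within $I$, which is precisely what the second part of Lemma~\ref{lem-if-rec-traversed-then-rec-in-data-structure} is designed to supply, and one must dispatch the boundary case $p = head$, where the traversal chain degenerates to length one but $head$, being an entry point, is always in the data structure, so the argument goes through trivially.
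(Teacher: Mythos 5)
Your proof is correct and follows essentially the same route as the paper's: the same case split on whether $key = r.key$, the same application of the second part of Lemma~\ref{lem-if-rec-traversed-then-rec-in-data-structure} (to $r.count$ in the first case and to $p.next$ in the second), and the same use of Corollary~\ref{cor-multiset-data-structure-always-sorted-list} to read off the multiplicity of $key$ at the chosen linearization configuration. The only differences are that you spell out a few details the paper leaves implicit (the $p = head$ boundary case and the passage from ``at the configuration'' to ``just before the linearization point''), which is fine.
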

\begin{proof}
Consider any invocation $I$ of \func{Get}$(key)$.
Let $I'$ be the invocation of \search$(key)$ performed by \func{Get}$(key)$, and $p$ and $r$ refer to the local variables  of $I'$.
By Lemma~\ref{lem-multiset-constraints-invariants}.\ref{claim-multiset-search-postconditions}, $I'$ satisfies its postcondition, which means that $key \le r.key$, and either $p.key < key$ or $p = head$.
We proceed by cases.
Suppose $key = r.key$.
Then, after $I'$, $I$ reads a value $v$ from $r.count$ and returns $v$.
By Observation~\ref{obs-multiset-satisfies-con-mark-all-removed-recs}, Constraint~\ref{con-mark-all-removed-recs} is satisfied.
By Lemma~\ref{lem-if-rec-traversed-then-rec-in-data-structure}, there is a time after the start of $I'$, and at or before when $I$ reads $r.count$, such that $r$ is in the data structure and $r.count = v$.
$I$ is linearized at the last such time.
By Corollary~\ref{cor-multiset-data-structure-always-sorted-list}, $r$ is the only \rec\ in the list that contains key $key$.
Suppose that either $key < r.key$ and $p = head$, or $key < r.key$ and $p.key < key$.
Then, $I$ returns zero.
By Lemma~\ref{lem-if-rec-traversed-then-rec-in-data-structure}, at sometime during $I'$, $p$ was in the data structure and $p.next$ pointed to $r$.
$I$ is linearized at the last such time.
By Corollary~\ref{cor-multiset-data-structure-always-sorted-list}, the data structure contains no occurrences of $key$ when $I$ is linearized.
\end{proof}

\begin{lem} \label{lem}
Each invocation $I$ of \del$(key, count)$ that terminates returns \true \ if the data structure contains at least $count$ occurrences of $key$ just before $I$ is linearized, and \false \ otherwise.
\end{lem}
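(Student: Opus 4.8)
The plan is to split on the value returned by $I$ and, in each case, locate the linearization point of $I$ prescribed by the linearization rules stated above and count the occurrences of $key$ in the data structure there, using Definition~\ref{defn-multiset-in-data-structure} together with Corollary~\ref{cor-multiset-data-structure-always-sorted-list}, which guarantees that the data structure is the strictly increasing sorted list reachable from $head$, so at most one node has key $key$. Throughout, Lemma~\ref{lem-multiset-constraints-invariants}.\ref{claim-multiset-search-postconditions} supplies, for the \search\ performed in the iteration in which $I$ returns, nodes $p$ and $r$ with $p.key < key \le r.key$. Since a terminating invocation of \del\ returns either \true\ or \false, proving each of the two implications suffices for the claimed ``if and only if''.

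For the case $I$ returns \true: from the code of \del, $I$ performs a successful, hence linearized, invocation $S$ of \sct\ at line~\ref{multiset-delete-sct1} or line~\ref{multiset-delete-sct2}, and $I$ is linearized at $S$. Since $I$ did not return at line~\ref{multiset-delete-return-false}, we have $r.key = key$ and the value $localr.count$ read from $r.count$ by the \llt$(r)$ linked to $S$ is at least $count$. By Lemma~\ref{lem-multiset-constraints-invariants}.\ref{claim-multiset-finalized-before-removed}, $r$ is removed by $S$, so by Definition~\ref{defn-rec-in-added-removed} $r$ is in the data structure just before $S$. By correctness property C4 (Theorem~\ref{thm-llt-sct-vlt-correct}), no \sct\ with $r$ in its $V$ sequence is linearized between the linked \llt$(r)$ and $S$; since a mutable field of $r$ can change only via such an \sct, $r.count$ still equals $localr.count \ge count$ just before $S$. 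By Corollary~\ref{cor-multiset-data-structure-always-sorted-list}, $r$ is then the unique node of key $key$ in the data structure, so Definition~\ref{defn-multiset-in-data-structure} gives $localr.count \ge count$ occurrences of $key$ just before $I$ is linearized, as required.

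For the case $I$ returns \false: $I$ returns at line~\ref{multiset-delete-return-false}, so either $key \ne r.key$, or $key = r.key$ and the value $v$ read from $r.count$ by the linked \llt$(r)$ satisfies $v < count$. In the first subcase, Lemma~\ref{lem-multiset-constraints-invariants}.\ref{claim-multiset-search-postconditions} gives $key < r.key$, and $I$ is linearized at the last configuration during its \search\ at which $p$ is in the data structure and $p.next = r$; such a configuration exists by Lemma~\ref{lem-if-rec-traversed-then-rec-in-data-structure}, whose hypothesis Constraint~\ref{con-mark-all-removed-recs} holds by Observation~\ref{obs-multiset-satisfies-con-mark-all-removed-recs}. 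At that configuration $key$ lies strictly between the keys of the consecutive list nodes $p$ and $r$ (using Corollary~\ref{cor-multiset-data-structure-always-sorted-list}), so no node of key $key$ is in the data structure, giving $0 < count$ occurrences. In the second subcase, $I$ is linearized at the last configuration, at or before the linked \llt$(r)$ reads $r.count$, at which $r$ is in the data structure and $r.count = v$; this configuration again exists by Lemma~\ref{lem-if-rec-traversed-then-rec-in-data-structure}, and by Corollary~\ref{cor-multiset-data-structure-always-sorted-list} and Definition~\ref{defn-multiset-in-data-structure} the data structure then contains $v < count$ occurrences of $key$ just before $I$ is linearized.

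Most ingredients are already in place --- Lemma~\ref{lem-multiset-constraints-invariants}, Corollary~\ref{cor-multiset-data-structure-always-sorted-list}, Lemma~\ref{lem-if-rec-traversed-then-rec-in-data-structure}, and the linearization-point definitions --- so what remains is a routine case analysis mirroring the proof for \func{Get}. The main obstacle I anticipate is the bookkeeping in the \false\ subcases: one must verify that the ``last such time'' chosen by the linearization rule really is a configuration at which the witness ($p.next = r$, respectively $r.count = v$) holds and that this is the configuration to which ``just before $I$ is linearized'' refers, which is precisely the role played by the two parts of Lemma~\ref{lem-if-rec-traversed-then-rec-in-data-structure} in the \func{Get} lemma.
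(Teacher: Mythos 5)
Your proposal is correct and follows essentially the same route as the paper's proof: the same case split on the return value, the same linearization points (the successful \sct\ for \true; the last configuration supplied by Lemma~\ref{lem-if-rec-traversed-then-rec-in-data-structure} for the two \false\ subcases), and the same appeals to Lemma~\ref{lem-multiset-constraints-invariants} and Corollary~\ref{cor-multiset-data-structure-always-sorted-list}. The only cosmetic difference is that you cite correctness property C4 explicitly to argue $r.count$ is unchanged between the linked \llt\ and the \sct, where the paper states the same fact directly from $S$ being linearized.
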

\begin{proof}
\textbf{Case I:} $I$ returns \false.
In this case, $I$ satisfies $key \neq r.key$ or $localr.count < count$ at line~\ref{multiset-delete-return-false}.
Suppose $key \neq r.key$.
Then, by the postcondition of \search, $key < r.key$, and either $p.key < key$ or $p = head$.
By Observation~\ref{obs-multiset-satisfies-con-mark-all-removed-recs}, Constraint~\ref{con-mark-all-removed-recs} is satisfied.
By Lemma~\ref{lem-if-rec-traversed-then-rec-in-data-structure}, there is a time during the preceding invocation $I'$ of \search, when $p$ was in the data structure and $p.next$ pointed to $r$.
$I$ is linearized at the last such time.
Corollary~\ref{cor-multiset-data-structure-always-sorted-list} implies that there are no occurrences of $key$ in the data structure when $I$ is linearized.
By the precondition of \del, $count > 0$, so the claim is satisfied. 

Now, suppose $localr.count < count$ at line~\ref{multiset-delete-return-false}.
By Lemma~\ref{lem-if-rec-traversed-then-rec-in-data-structure}, there is a time after the start of $I'$, and before $I$'s \llt$(r)$ reads $localr.count$ from $r.count$, such that $r$ is in the data structure and $r.count = localr.count$.
$I$ is linearized at the last such time.
By Corollary~\ref{cor-multiset-data-structure-always-sorted-list}, $r$ is the only \rec\ in the list that contains key $key$, so there are $r.count < count$ occurrences of $r.key = key$ in the data structure when $I$ is linearized.

\textbf{Case II:} $I$ returns \true.
In this case, $I$ satisfies $key = r.key$ and $localr.count \ge count$ at line~\ref{multiset-delete-return-false}, and $I$ is linearized at an invocation $S$ of \sct\ at line~\ref{multiset-delete-sct1} or \ref{multiset-delete-sct2}.
In each case, Lemma~\ref{lem-multiset-constraints-invariants}.\ref{claim-multiset-finalized-before-removed} implies that $r$ is removed by $S$, so $r$ is in the data structure just before $S$ is linearized.
Hence, $r$ is in the data structure just before $I$ is linearized.
Before $I$ performs $S$, $I$ performs an invocation $L$ of $\llt(r)$ linked to $S$ at line~\ref{multiset-delete-llt-r} that reads $localr.count$ from $r.count$.
Since $S$ is linearized, no mutable field of $r$ changes between when $L$ and $S$ are linearized.
Therefore, the value of $localr.count$ is equal to the value of $r.count$ at all times between when $L$ and $S$ are linearized.
Thus, just before $I$ is linearized, $r$ is in the data structure and $r.count \ge count$.
Finally, Corollary~\ref{cor-multiset-data-structure-always-sorted-list} implies that $r$ is the only \rec\ in the list that contains key $key$, so the claim holds.
\end{proof}

We now prove that our algorithm satisfies an assumption that we made in the paper.

\begin{lem} \label{lem-multiset-only-one-finalized-llt}
No process performs more than one invocation of $\llt(r)$ that returns \finalized, for any \rec\ $r$.
\end{lem}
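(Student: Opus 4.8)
The plan is a proof by contradiction. Suppose some process $p$ performs two invocations $I_1$ and $I_2$ of $\llt(r)$ on the same \rec\ $r$, with $I_1$ entirely before $I_2$, and both return \finalized. I would first record an auxiliary fact: \emph{every \rec\ is added to the data structure at most once, so once a \rec\ is removed it is never in the data structure thereafter}. Indeed, by Lemma~\ref{lem-multiset-constraints-invariants}.\ref{claim-multiset-finalized-before-removed} the only \rec\ a linearized \sct\ adds is its $new$ argument, and in the multiset code each such argument is either an integer (line~\ref{multiset-insert-sct1}, which adds nothing) or a \listrec\ created afresh by one particular execution of line~\ref{multiset-insert-sct2}, \ref{multiset-delete-sct1} or~\ref{multiset-delete-sct2}, while the sentinels $head$ and $tail$ are never passed as $new$; so each \listrec\ is the $new$ argument of at most one \sct\ and is therefore added by at most one linearized \sct. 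Since $I_1$ returns \finalized, Lemma~\ref{lem-lin-llt-finalized} gives a linearized $S=\sct(V,R,fld,new)$ with $r\in R$ that is linearized before $I_1$, and Lemma~\ref{lem-multiset-constraints-invariants}.\ref{claim-multiset-finalized-before-removed} then tells us $S$ removes $r$. Writing $t_f$ for the linearization point of $S$, the auxiliary fact gives that $r$ is not in the data structure at any time $\ge t_f$, and $t_f$ precedes the end of $I_1$, hence the start of $I_2$.

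The next step is to show $I_1$ and $I_2$ lie in different iterations of the \texttt{while} loop of the \ins\ or \del\ operation performing them. In one iteration of \ins's loop exactly one \llt\ is invoked; in one iteration of \del's loop the \llt s are invoked on $p$, $r$ and $localr.next$, which by the sorted-list invariant (Corollary~\ref{cor-multiset-data-structure-always-sorted-list} and the \search\ postcondition, Lemma~\ref{lem-multiset-constraints-invariants}.\ref{claim-multiset-search-postconditions}) have strictly increasing keys and so are pairwise distinct. Hence no process invokes \llt\ twice on the same \rec\ inside one iteration, so the iteration containing $I_2$ --- and in particular the \search\ at its start --- begins only after $I_1$ ends, and therefore after $t_f$.

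Finally I would examine how $I_2$ obtained $r$ as its argument; in the multiset code there are only two cases. (A) $r$ is one of the two \rec s returned by the \search\ of $I_2$'s iteration (lines~\ref{multiset-insert-llt-r}, \ref{multiset-insert-llt-p}, \ref{multiset-delete-llt-p} or~\ref{multiset-delete-llt-r}). Applying Lemma~\ref{lem-if-rec-traversed-then-rec-in-data-structure} to the chain of $next$ pointers traversed by that \search\ shows $r$ is in the data structure at some configuration during the \search, hence at a time $\ge t_f$, contradicting the auxiliary fact. (B) $I_2$ is the \llt\ at line~\ref{multiset-delete-llt-rnext} with $r=localr.next$. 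To reach that line, the \llt$(r_s)$ at line~\ref{multiset-delete-llt-r} returned a snapshot, so by Lemma~\ref{lem-rec-in-data-structure-just-before-llt} the searched \rec\ $r_s$ is in the data structure when that \llt\ is linearized, at a time $t_L>t_f$ (this call lies inside $I_2$'s iteration). At $t_L$ the $next$ field of $r_s$ holds the value in the returned snapshot, namely $localr.next = r$, so $r$ is reachable from $r_s$ and is in the data structure at $t_L\ge t_f$, again contradicting the auxiliary fact. Either way we have a contradiction, so no such $I_1,I_2$ exist.

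I expect the main difficulty to be the interplay of these pieces rather than any single estimate. The subtle point is that a \search\ may legitimately return a \rec\ that is removed \emph{during} that \search, so it is not enough to know $r$ is already finalized when $I_2$ runs; one must show the \search\ feeding $I_2$ does not even start until after $r$ has permanently left the data structure. That is precisely what the ``different iterations'' step supplies --- and it in turn depends on the ``added at most once'' fact, which rules out $r$ ever re-entering the data structure. Once those are in place, the case analysis and the uses of the ``in the data structure'' lemmas of Appendix~\ref{sec-properties} are routine.
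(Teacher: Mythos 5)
Your proof is correct and follows essentially the same route as the paper's: establish that $r$, once removed by the \sct\ that finalized it, can never re-enter the data structure (because every $new$ argument is a freshly created \listrec), and then derive a contradiction from the fact that the argument of the second \llt\ must be reachable from $head$ via a \search\ (or a snapshot taken) after that removal. You are in fact somewhat more careful than the paper in two places --- explicitly justifying that two \llt s on the same \rec\ cannot occur within a single loop iteration (via distinctness of $p$, $r$ and $localr.next$), and treating the \llt\ on $localr.next$, which is not itself returned by the \search, as a separate case.
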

\begin{proof}
Let $r$ be a \rec.
Suppose, to derive a contradiction, that a process $p$ performs two invocations $L$ and $L'$ of $\llt(r)$ that return \finalized.
Without loss of generality, let $L$ occur before $L'$.
From the code of \ins\ and \del, $p$ must perform an invocation of \search, $L$, another invocation $I$ of \search, and then $L'$.
Since $L$ returns \finalized, it is linearized after an invocation $S$ of \sct$(V, R, fld, new)$ with $r \in R$.
By Lemma~\ref{lem-multiset-constraints-invariants}.\ref{claim-multiset-finalized-before-removed}, $r$ is removed from the data structure by $S$.
We now show that $r$ cannot be added back into the data structure by any subsequent invocation of \sct.
From the code of \ins\ and \del, each invocation of \sct$(V', R', fld', new')$ that changes a $next$ pointer is passed a newly created \listrec, that is not known to any other process, as its $new'$ argument.
This implies that $new'$ is not initiated, and cannot have previously been removed from the data structure.
Therefore, $r$ is not in the data structure at any point during $I$.
By Observation~\ref{obs-multiset-satisfies-con-mark-all-removed-recs}, Constraint~\ref{con-mark-all-removed-recs} is satisfied.
By Lemma~\ref{lem-if-rec-traversed-then-rec-in-data-structure}, $r$ is in the data structure at some point during $I$, which is a contradiction.
\end{proof}

\begin{lem} \label{lem-multiset-progress}
If operations (\ins, \del\ and \func{Get}) are invoked infinitely often, then operations complete infinitely often.
\end{lem}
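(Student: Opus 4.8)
The plan is to derive a contradiction from the assumption that operations are invoked infinitely often but only finitely many complete. First I would observe that, by the code of \ins, \del\ and \func{Get}, each operation that takes infinitely many steps without completing must execute its main loop (or its single \search) infinitely often. The key structural observation is that if only finitely many operations complete, then in particular only finitely many \emph{effective} invocations of \ins\ or \del\ occur (since every \ins\ that returns and every \del\ that returns \true\ is effective, and a non-terminating operation performs at most one linearized \sct), and hence only finitely many linearized invocations of \sct\ occur. Since $next$ pointers and $count$ fields change only at linearized \sct s (by Lemma~\ref{lem-multiset-constraints-invariants}.\ref{claim-multiset-finalized-before-removed} and the argument in Lemma~\ref{lem}), there is a configuration $C$ after which no field of any \rec\ changes.

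Next I would use $C$ to pin down the behaviour of the operations that run forever. Since the list stops changing after $C$, every invocation of \search\ that begins after $C$ traverses a fixed finite list and therefore terminates; consequently every \func{Get} terminates, contradicting the assumption unless all the non-terminating operations are \ins\ or \del\ invocations. For any such \ins\ or \del\ invocation $I$, consider any iteration of its \texttt{while} loop that begins after $C$. Its call to \search\ returns nodes $p$ and $r$; by Lemma~\ref{lem-if-rec-traversed-then-rec-in-data-structure} together with Observation~\ref{obs-multiset-satisfies-con-mark-all-removed-recs} (so Constraint~\ref{con-mark-all-removed-recs} holds), there is a time during the \search\ at which $p$ is in the data structure and $p.next = r$, and since nothing changes after $C$, $p$ and $r$ remain in the data structure and $p.next = r$ throughout. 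Hence the \llt s that $I$ performs on $p$, $r$, and (in the \del\ case) $localr.next$ cannot return \finalized, because a finalized \rec\ is never in the data structure (Lemma~\ref{lem-rec-in-data-structure-just-before-llt}, or directly from Constraint~\ref{con-mark-all-removed-recs} and the definition of finalization).

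Then I would invoke the progress guarantees of the primitives. By property P2 (Theorem~\ref{thm-llt-sct-vlt-progress}.\ref{claim-progress-if-operations-io-then-succ-io}), since \llt\ operations are invoked infinitely often, they succeed infinitely often; combined with the fact that, after $C$, the consistency checks $r = localp.next$ inside \ins\ and \del\ always pass (because the list is stable), each non-terminating \ins\ or \del\ sets up and invokes an \sct\ infinitely often. I must also check that Constraint~\ref{con-partial-order} is satisfied so that property P4 applies; this is exactly Observation~\ref{obs}, which shows that after $C$ the sequences $V$ passed to \sct\ are ordered by key. Therefore, by Theorem~\ref{thm-llt-sct-vlt-progress}.\ref{claim-progress-if-set-up-sct-io-then-succ-io} (equivalently property P4, via Lemma~\ref{lem-sct-progress}), infinitely many of these \sct s succeed. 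But a successful \sct\ by an \ins\ or \del\ makes that operation return, contradicting the assumption that only finitely many operations complete.

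The main obstacle I expect is the bookkeeping needed to rule out spurious \finalized\ returns: I need the traversed nodes $p$, $r$ (and $localr.next$) to be genuinely in the data structure at the relevant moments, which requires carefully citing Lemma~\ref{lem-if-rec-traversed-then-rec-in-data-structure} with the right configurations and confirming its constraint (Constraint~\ref{con-mark-all-removed-recs}) via Observation~\ref{obs-multiset-satisfies-con-mark-all-removed-recs}. A secondary subtlety is making sure the stabilization of the list after $C$ genuinely forces \emph{all} \search\ calls, and hence all \func{Get}s, to terminate, so that the only remaining non-terminating operations are \ins/\del\ loops to which the \sct\ progress property can be applied; and verifying Constraint~\ref{con-partial-order} holds (via Observation~\ref{obs}) so that property P4 is available. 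The rest is a straightforward application of the already-established progress properties P2 and P4.
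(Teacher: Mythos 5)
Your proposal is correct and follows essentially the same argument as the paper: assume for contradiction that only finitely many operations complete, deduce that the list stops changing, conclude that every \search\ (hence every \func{Get}) terminates, show that the remaining \ins/\del\ invocations set up \sct s infinitely often, and then apply the \sct\ progress guarantee to reach a contradiction. The only cosmetic differences are that you derive the stabilization of the list by counting linearized \sct s (the paper gets it by first applying Lemma~\ref{lem-sct-progress}) and that you rule out \finalized\ returns directly via Lemma~\ref{lem-if-rec-traversed-then-rec-in-data-structure}, exactly as in the paper's Section~\ref{sec-multiset} sketch, whereas the appendix delegates that issue to the ``set up'' machinery of Theorem~\ref{thm-llt-sct-vlt-progress}.
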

\begin{proof}
Suppose, to derive a contradiction, that operations are invoked infinitely often but, after some time $t$, no operation completes.
If \sct s are performed infinitely often, then they will succeed infinitely often and, hence, operations will succeed infinitely often.
Thus, there must be some time $t' \ge t$ after which no \sct\ is performed.
Then, after $t'$, the data structure does not change, and only a finite number of nodes with keys different from $\infty$ are ever added to the data structure.
Consider an invocation $I$ of \search$(key)$ that is executing after $t'$.
Each time $I$ performs line~\ref{multiset-search-advance-r}, it reads a \listrec\ $rnext$ from $r.next$, and $rnext.key > r.key$.
Therefore, by Corollary~\ref{cor-multiset-data-structure-always-sorted-list}, $I$ will eventually see $r.key = \infty$ at line~\ref{multiset-search-loop}.
This implies that every invocation of \func{Get} eventually completes.
Therefore, \ins\ and \del\ must be invoked infinitely often after $t'$.
From the code of \ins\ (\del), in each iteration of the while loop, a \search\ is performed, followed by a sequence of \llt s.
If these \llt s all return values different from \fail\ or \finalized, then an invocation of \sct\ is performed.
Since every invocation of \search\ eventually completes, Definition~\ref{defn-set-up-sct} implies that invocations of \sct\ are set up infinitely often.
Thus, invocations of \sct\ succeed infinitely often.
From the code of \ins\ and \del, after performing a successful invocation of \sct, an invocation of \ins\ or \del\ will immediately return.
\end{proof}

\end{document}